\begin{document}

\newenvironment{dedication}
{
   \cleardoublepage
   \thispagestyle{empty}
   \vspace*{\stretch{1}}
   \hfill\begin{minipage}[t]{0.66\textwidth}
   \raggedright
}%
{
   \end{minipage}
   \vspace*{\stretch{3}}
   \clearpage
}

\begin{titlepage}

\begin{center}

\includegraphics[width=0.5\textwidth]{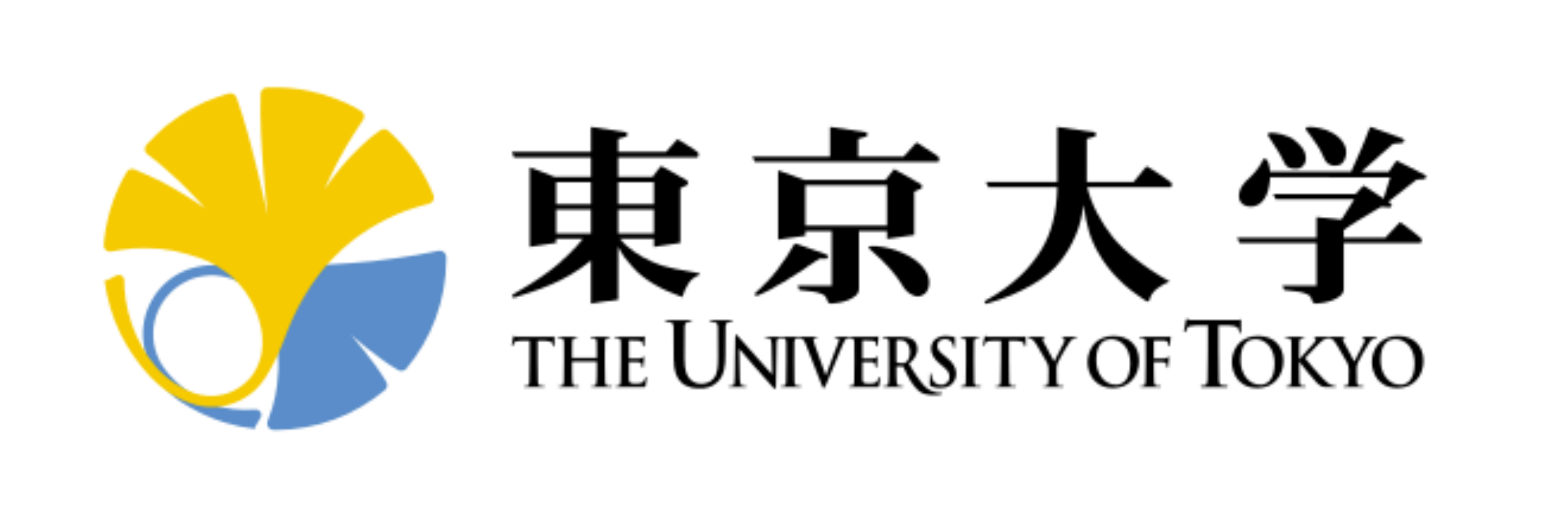}~\\[1cm]

{\LARGE Doctorate Thesis\\[1.5cm]}

\HRule \\[0.4cm]
{ \LARGE \bfseries Multiple-particle diffusion processes
  from the viewpoint of \\Dunkl operators: \\relaxation to the steady state\\[0.4cm] }

\HRule \\[0.5cm]


\vfill

{\large Submitted on December 2013 for the degree of Doctor (Science)\\[1.0cm]}

{\large The University of Tokyo, Graduate School of Science\\[0.5cm]
Department of Physics\\[1.0cm]}

{\large \textsc{Andraus Robayo}, Sergio Andr{\'e}s\\}

\end{center}

\end{titlepage}

\newpage
\thispagestyle{empty}
\mbox{}

\begin{dedication}
To Katja, Liam and Ana{\"i}s,\\
for reminding me of the merits of simplicity,\\
for creating order out of the chaos around you,\\
and for always making my day with a smile.
\end{dedication}


\begin{abstract}

In this thesis, two families of stochastic interacting particle systems, the interacting Brownian motions and the interacting Bessel processes, are defined as extensions of Dyson's Brownian motion models and the eigenvalue processes of the Wishart and Laguerre processes. This is achieved by considering the parameter $\beta$ from random matrix theory as a real positive number. These systems consist of several particles which evolve as individual Brownian motions and Bessel processes, and that repel mutually through a logarithmic potential. The interacting Brownian motions and Bessel processes are realized as special cases of Dunkl processes, which are a broad family of multivariate stochastic processes defined by using the differential-difference operators known as Dunkl operators. One of the tools provided by Dunkl operator theory, the intertwining operator, relates spatial partial derivatives with Dunkl operators. It also maps multidimensional Brownian motions into Dunkl processes, but its explicit form is unknown in general. Therefore, the properties of all types of Dunkl processes can be examined by studying the characteristics of the intertwining operator. In this thesis, the steady state under an appropriate scaling and and the freezing ($\beta\to\infty$) regime of the interacting Brownian motions and Bessel processes are studied, and it is proved that the scaled steady-state distributions of these processes converge in finite time to the eigenvalue distributions of the $\beta$-Hermite and $\beta$-Laguerre ensembles of random matrices. Moreover, it is shown that the scaled final positions of the particles in these processes become fixed at the zeroes of the Hermite and Laguerre polynomials in the freezing limit. These results are obtained as the consequence of two more general results proved in this thesis. The first is that Dunkl processes in general converge in finite time to a scaled steady-state distribution that only depends on the type of Dunkl process considered. The second is that in the freezing limit, their scaled final position is fixed to a set of points called the peak set, which is the set of points which maximizes their steady-state distribution. In order to obtain these results, previously unknown relations involving the intertwining operator are derived for Dunkl processes in general, and in the case of the interacting Brownian motions and Bessel processes, the effect of the intertwining operator on symmetric polynomials is derived.

\end{abstract}

\chapter*{Acknowledgments}

I would like to express my gratitude toward all those people who, in one way or another, made this thesis possible. While this work is intended to be the fruit of my individual efforts, it would have been impossible for me to reach this point without the help of all those who gave me their support throughout my time as a graduate student.

First and foremost, I would like to express my gratitude to Professors Seiji Miyashita and Makoto Katori. I could have never reached this stage in my career without their help and guidance. For all the support, advice and knowledge they have given me, I am forever grateful. 

Secondly, I would like to thank Professors Yutaka Matsuo, Atsuo Kuniba, Tomohiro Sasamoto and Tomio Kobayashi for their insight and comments on this work. In particular, I would like to thank Professor Naomichi Hatano for his detailed reading of this manuscript, which has contributed greatly to its improvement.

I would also like to thank the members of the Miyashita research group who shared their time with me in the last three years. It was a real privilege to be in the company of Dr. Saito, Dr. Mohakud, Dr. Matsui, Mr. Kamatsuka, Mr. Nakada, Mr. Shirai, Mr. Endo and Mr. Futami. In particular I would like to thank Dr. Takashi Mori for the time he gave me for discussions and for his sharp insights on physics and mathematics. Also, I extend my thanks to the secretaries of Miyashita group, Mrs. Hiromi Okuzawa and Ms. Keiko Yashima for all of their help and support throughout these years.

Thanks to Dr. Michal Hajdu{\v s}ek, I had an excuse to have a laugh and an espresso every afternoon. I am tremendously grateful for his advice and his moral support during some of my hardest trials as a doctorate student.

Finally, I would like to thank my friends and extended family, who in their own way have encouraged me in spite of the distance, and in particular my wife and children, who have always supported me in following my dreams.

The author was supported by the Monbukagakusho: MEXT scholarship for foreign graduate students and by the Elements Strategy Initiative Center for Magnetic Materials.

\tableofcontents
\listoffigures
\listoftables
 
\chapter{Introduction}

\section{Motivation from random matrix theory in physics}

The topic of random matrix theory pioneered in nuclear physics by Wigner \cite{wigner55} is a widely studied subject that has seen great growth in the last decades (see, e.g., \cite{bohigasweidenmuller11} for a historical review). The  applications in physics of this theory have expanded to areas such as 2D quantum gravity \cite{difrancescoginspargzinnjustin95}, string theory \cite{dijkgraafvafa02, klemmmarinotheisen03, marino04}, quantum chromodynamics (QCD) \cite{thooft74, brezinitzyksonparisizuber78, verbaarschotwettig00}, quantum wires and quantum dots \cite{imry86, altshulershklovskii86, beenakker97, gorkoveliashberg65, dentonmuhlschlegelscalapino71}, resonance scattering \cite{fyodorovsommers97}, quantum and classical optics \cite{mellopereyrakumar88, beenakker98}, quantum entanglement \cite{majumdarbohigaslakshminarayan08}, topological insulators \cite{schnyderryufurusakiludwig08}, directed polymers \cite{degennes68, essamguttmann95, johansson00}, random growth models \cite{baikdeiftjohansson99, prahoferspohn00, baikrains01, sasamoto05}, and quantum chaos and graphs \cite{bohigasgiannonischmit84, gutzwiller90, haake10} among many others. Random matrix theory has also been applied in several fields outside of physics, such as wireless communications \cite{tulinoverdu04}, mathematical finance \cite{pottersbouchaudlaloux05} and RNA folding \cite{rivaseddy99}. While most of the applications of random matrix theory are theoretical, there have been observations of the statistical properties of random matrices in the interface growth of liquid crystals undergoing a phase transition \cite{takeuchisano10}.

By definition, random matrices are matrices whose components are random variables and which obey certain symmetries depending of the physical situation in which they are to be applied \cite{mehta04}. The main idea is that if a quantum system is sufficiently complex, its Hamiltonian may be replaced by a series of random matrices placed in a diagonal block form. Each block represents a set of conserved quantum numbers, and the type of random matrix chosen for each block depends on whether the system has time-reversal invariance and whether the system is in an integer or half-odd integer spin state. 

When the block under consideration corresponds to a time-reversal invariant system with integer spin, one chooses an \emph{orthogonal} random matrix ensemble, labeled by the parameter $\beta=1$, which is defined to be statistically invariant under orthogonal transformations. These matrices are real and symmetric. Similarly, for a time-reversal invariant system with half-odd integer spin, one chooses a \emph{symplectic} random matrix ensemble, labeled by the parameter $\beta=4$. In this case, the ensemble is statistically invariant under symplectic transformations and the matrices in the ensemble have quaternionic entries and are quaternion self-dual. Finally, when the system has no time-reversal symmetry, the corresponding ensemble is a \emph{unitary} ensemble, a set of Hermitian matrices with complex entries, with statistical invariance under unitary transformations. This ensemble is given the parameter $\beta=2$. It must be noted that the labels $\beta=1,2,4$ are not arbitrary, and they arise naturally in the study of the eigenvalue statistics of their corresponding ensembles.

The most famous ensembles of random matrices are the Gaussian ensembles \cite{dyson62B} and the Wishart ensembles \cite{wishart28}. The Gaussian ensembles of random matrices are the sets of real symmetric, complex Hermitian or quaternion self-dual matrices $H$ of size $N\times N$ whose entries are random variables that obey the normal distribution. That is, their entries $\{h_{ij}\}_{1\leq i,j\leq N}$ satisfy the symmetry requirement
\[h_{ij}=h_{ji}^*,\ i\leq j,\]
where ${}^*$ denotes complex or quaternion conjugation for complex or quaternion entries, and $h_{ij}^*=h_{ij}$ for real entries. These ensembles are called the Gaussian orthogonal, unitary and symmetric ensembles (GOE, GUE and GSE, respectively). Similarly, the Wishart ensembles are the sets of matrices $L$ of the form
\[L=Q^\dagger Q,\]
where $Q^\dagger$ is the conjugate transpose of $Q$, and $Q$ itself is a matrix of size $N\times M$ with real, complex or quaternion random variable entries which obey the normal distribution. There are many other ensembles of random matrices that exhibit properties which make them suitable for particular physical applications, and they have been classified by Altland and Zirnbauer \cite{zirnbauer96, altlandzirnbauer97}.

From this point of view, random matrices can basically be applied to any physical system with sufficient complexity, and the ensemble must be chosen taking into account the symmetries of the system. While their properties make random matrices applicable in a wide variety of fields, this applicability can be extended by considering matrix-valued stochastic processes. That is, instead of random-variable entries, one may formulate random matrices with \emph{stochastic processes} as entries, and consider the corresponding eigenvalue processes. 

Perhaps the most well-known example of this idea is Dyson's Brownian motion model (henceforth called Dyson model) \cite{dyson62}, where the entries of the Gaussian ensembles of random matrices are replaced by independent one-dimensional Brownian motions up to symmetry constraints. The particular case $\beta=2$ of this model has found applications in many branches of physics and mathematics due to its relationship with the vicious walker model introduced by Fisher \cite{fisher84}. Specifically, the vicious walker model is a discrete model where multiple random walkers move in a one-dimensional lattice, annihilating each other if they meet at the same lattice point. This model has been useful for the description of interface walls and melting transitions in two dimensions \cite{katori00}. It was proved by Katori and Tanemura that the Dyson model is the scaling limit of the vicious walker model \cite{katoritanemura02}, in the sense that the eigenvalue process of the Dyson model corresponds to a series of Brownian motions in one dimension constrained to never collide. For this reason, this particular case of the Dyson model is called the non-colliding Brownian motion \cite{katoritanemura07}. This problem had been considered also by de Gennes \cite{degennes68} in the context of lipid-water systems where non-crossing chain-like structures appear, and the properties of these systems were modeled as a \emph{fermionic gas} in one dimension evolving in time. In addition, the Dyson model for $\beta=2$ has found applications in polymer physics \cite{essamguttmann95}, the polynuclear growth model \cite{johansson03}, and traffic flow problems \cite{baikborodindeiftsuidan06}.

The Wishart \cite{bru91} (for the $\beta=1$) and Laguerre \cite{konigoconnell01} (for $\beta=2$) processes are also examples of matrix-valued processes of interest in physics. These are the processes obtained by putting independent Brownian motions in the entries of the matrix $Q$ used to define the Wishart ensemble, and the main object of study in this case is the resulting eigenvalue process. These processes are related to the chiral ensembles of random matrices \cite{katoritanemura04}, which themselves are used in QCD \cite{verbaarschotzahed93, verbaarschot94}. Much like the Dyson model, the case $\beta=2$ also has a non-colliding interpretation \cite{katoritanemura11} for which it is called the non-colliding Bessel process.

The objective of this work is to investigate the nature of these multivariate stochastic processes (the Dyson model and the Laguerre and Wishart processes) beyond the discrete values of the parameter $\beta=1,2,4.$ To motivate this extension of these matrix-valued processes to $\beta>0$ continuous, let us introduce some details of random matrix theory. It is a well-known fact \cite{mehta04} that the eigenvalues $\{\lambda_i\}_{i=1}^N$ of the Gaussian ensembles obey the joint distribution
\[\frac{1}{Z_\beta}\exp\Big[-\beta\Big(\sum_{i=1}^N\frac{\lambda_i^2}{2}-\sum_{1\leq i<j\leq N}\log|\lambda_i-\lambda_j|\Big)\Big],\]
where $Z_\beta$ is a partition function. Similarly, the eigenvalues $\{\lambda_i>0\}_{i=1}^M$ of the Wishart ensembles obey the joint distribution
\[\frac{1}{Z_\beta^\prime}\exp\Big[-\beta\Big(\sum_{i=1}^M\frac{\lambda_i}{2}-\frac{a}{2}\sum_{i=1}^M\log \lambda_i -\sum_{1\leq i<j\leq N}\log|\lambda_i-\lambda_j|\Big)\Big],\]
where $a=N-M+1-2/\beta$, and $Z_\beta^\prime$ is the corresponding partition function.

Because the joint eigenvalue densities for the Gaussian and Wishart ensembles have the form of Boltzmann factors, it is common to identify the eigenvalues of these ensembles of random matrices as systems of charged particles of unit charge in a two-dimensional universe that are restricted to move in one dimension. In the case of the Gaussian ensembles, these particles repel each other while being confined by a harmonic background potential, while in the case of the Wishart ensembles there are two background potentials, a linear confinement potential and a logarithmic repulsion potential from the origin. In this electrostatic analogy, both systems of charged particles are in contact with a heat reservoir of inverse temperature $\beta$, meaning that the mathematical parameter $\beta$ can be understood physically as the inverse temperature. Therefore, the matrix ensembles are realizations of these charged particle systems at the inverse temperatures $\beta=1,2$ and 4.

This physical interpretation motivates the extension of these models to continuous values of $\beta>0$. This was achieved by Dumitriu and Edelman \cite{dumitriuedelman02}, who defined a series of ensembles of real tridiagonal random matrices whose eigenvalues obey the joint eigenvalue densities of the Gaussian and Wishart ensembles for $\beta>0$. These ensembles are called the $\beta$-Hermite and $\beta$-Laguerre ensembles. Furthermore, Forrester \cite[Chap.~13]{forrester10} has succeeded in calculating the correlation functions for these $\beta$-ensembles for the case where $\beta$ is an even integer.

In the case of this work, the extension of these systems of charged particles to interacting-particle stochastic processes  for $\beta>0$ is considered. It is known that the dynamics of the eigenvalues of the Dyson model is derived using Bru's theorem \cite{katoritanemura04}. Denoting a vector of $N$ independent one-dimensional Brownian motions by $\bB_t$, the eigenvalue process of the Dyson model is given by the following stochastic differential equations (SDEs) for $i=1,\ldots,N$:
\[\ud\lambda_i=\ud B_{i,t}+\frac{\beta}{2}\sum_{\substack{j=1:\cr j\neq i}}^N\frac{\ud t}{\lambda_i-\lambda_j}.\]
In the case $\beta=2$, this is a system of independent Brownian motions conditioned never to collide. As a consequence of this, their joint probability density is given by the Karlin-McGregor determinant \cite{karlinmcgregor59}. Therefore, this process is determinantal in the sense that its joint probabilities and correlation functions are given by determinants of a single function called a correlation kernel, as shown by Katori and Tanemura \cite{katoritanemura07}. 

For the case of the Wishart and Laguerre processes, the dynamics of the eigenvalues of the real symmetric or complex Hermitian matrix $L$ is given by the SDE \cite{katoritanemura04}
\[\ud \lambda_i=2\sqrt{\lambda_i}\ud B_{i,t}+\beta\Big[(N+M)+\sum_{\substack{j=1:\cr j\neq i}}^N\frac{\lambda_i+\lambda_j}{\lambda_i-\lambda_j}\Big]\ud t.\]
Like in the case of the Dyson model for $\beta=2$, the eigenvalue dynamics of the Laguerre processes ($\beta=2$) is a determinantal process in which several independent one-dimensional processes are conditioned never to collide \cite{katoritanemura11}. However, the component processes are not Brownian motions, but squared Bessel processes. The SDE of a squared Bessel process $Y$ of dimension $D$ is given by \cite{oksendal}
\[\ud Y=2\sqrt{Y}\ud \tilde{B}_t + D\ud t,\]
and it represents the stochastic process realized by the squared distance to the origin of a Brownian motion in $D$ dimensions. These are called Bessel processes because their transition density (probability of arriving at the position $y$ after a time $t$ starting from the position $x$) is given by the function \cite{katoritanemura11}
\[\frac{1}{2t}\Big(\frac{y}{x}\Big)^{\nu/2}\rme^{-(x+y)/2t}I_\nu\Big(\frac{\sqrt{xy}}{t}\Big),\]
where $I_\nu(x)$ is the modified Bessel function of the first kind, and $\nu=D/2-1$ is the Bessel index. Therefore, the Laguerre eigenvalue processes are realized as $N$ squared Bessel processes of index $\nu=N+M-1$ conditioned never to collide.

In the SDEs given above, there is no constraint that should force the parameter $\beta$ to be discrete, save for their matrix-valued nature. As a matter of fact, these processes have no known matrix-valued representation except for the cases $\beta=1,2$ and 4, so the tradeoff for extending $\beta$ to a continuous parameter is that many of the techniques from random matrix theory cannot be used in this case. Because of this fundamental difference, the interacting particle systems considered in this work will be referred to as the interacting Brownian motions and the interacting (squared) Bessel processes.

\begin{figure}
\centering
{\renewcommand{\arraystretch}{1.5}
\renewcommand{\tabcolsep}{0.2cm}
\begin{tabular}{p{0.15\textwidth}|p{0.35\textwidth}c p{0.35\textwidth}}
\hline
&$\beta$ discrete				&&$\beta>0$ continuous\\
\hline
Static random variables&Gaussian and Wishart ensembles&$\to$&${\beta}$-Hermite and $\beta$-Laguerre ensembles\\
&$\downarrow$&&\\
Stochastic processes&Dyson's Brownian motions, Wishart and Laguerre processes	&$\to$&Interacting Brownian motions and Bessel processes\\
\hline
\end{tabular}}
\caption{Dynamical and continuous-temperature extensions of the Gaussian and Wishart ensembles. The bottom right corner has no known matrix-valued formulation.}\label{TableOfExtensions}
\end{figure}

\section{Dunkl processes}

An alternative way to formulate this extension is achieved by considering a broad family of stochastic processes that includes the interacting Brownian motions and Bessel processes as particular cases. These are called \emph{Dunkl processes} due to the fact that they are defined using the differential-difference operators known as \emph{Dunkl operators} \cite{dunkl89}. Dunkl defined these operators for the study of symmetric polynomials of multiple variables, and they have been used in physics in the context of the Calogero-Moser systems \cite{calogero71, moser75, bakerdunklforrester}. The definition of Dunkl operators depends on the choice of a finite set of vectors called \emph{root system}, which generates a reflection group $W$, and the root system is invariant under the action of the elements of $W$. Within the context of the Calogero-Moser systems, Forrester has used Dunkl operators to prove the integrability of these systems \cite[Secs.~11.4.2-11.5.5]{forrester10}.

The definition of Dunkl processes is rooted in the work of R{\"o}sler, who considered and found a Green function solution of the Dunkl heat equation \cite{rosler98}. This is the generalization of the heat equation in which the spatial partial derivatives are replaced by Dunkl operators. Subsequently, R{\"o}sler and Voit \cite{roslervoit98} considered the Dunkl heat equation as a Markov semigroup, and gave the first formal definition of the Dunkl processes. Intuitively speaking, Dunkl processes are defined as a generalization of multidimensional Brownian motion as follows. It is well known \cite{mahnke09} that the transition probability density of a Brownian motion obeys the heat equation. Then, if spatial partial derivatives are replaced by the Dunkl operators $\{T_i\}_{i=1}^N$, then one can define the Dunkl generalization of the heat equation \cite{rosler98} as follows,
\[\frac{\partial}{\partial t}f(t,\bx)=\frac{1}{2}\sum_{i=1}^N T_i^2 f(t,\bx),\]
where $f(t,\bx)$ is a continuous function and $\bx$ is a vector in $N$-dimensional space. Then, Dunkl processes are defined as the stochastic processes whose transition probability densities obey the Dunkl heat equation \cite{roslervoit98}. It must be noted, however, that because the Dunkl operators are differential \emph{difference} operators, there are difference terms in the Dunkl heat equation that represent jumps in the trajectory of Dunkl processes. This means that, in general, Dunkl processes are discontinuous.
 
However, it is possible to take the continuous part of Dunkl processes, defining what are called the \emph{radial Dunkl processes}, introduced by Gallardo and Yor \cite{gallardoyor05}. Radial Dunkl processes, then, are diffusion processes with a drift that is determined by the root system considered. In this work, two particular root systems will be considered, the root system of type $A$ and the root system of type $B$. These root systems generate the symmetric group $S_N$ of permutations of the components of vectors of $N$ dimensions, and the group composed of all permutations and sign changes of the components of vectors, respectively. The reason for the choice of these two root systems is that the interacting Brownian motions and Bessel processes are realized as the radial Dunkl processes of type $A$ and the radial Dunkl processes of type $B$ respectively. This is a fact first pointed out by Demni \cite{demni08A}. Therefore, Dunkl processes can be viewed as a large family of processes which can be reduced to multivariate stochastic processes known in physics and random matrix theory by choosing a particular root system and taking their continuous part. They provide a natural formulation of the interacting Brownian motions and Bessel processes, and they have the advantage of not being bound to any particular set of values of $\beta$, but they have the disadvantage of not having a matrix-valued representation in general. Without being aware of their relationship to Dunkl processes and Dunkl operator theory, Baker and Forrester \cite{bakerforrester97} introduced a series of functions, called \emph{generalized hypergeometric functions} which make part of the transition probability density of the radial Dunkl processes. These functions are the basis for some of the results in Chapter~\ref{ParticularCases} of this work.

Perhaps the greatest merit of using Dunkl processes to study the interacting Brownian motions and Bessel processes is that Dunkl operator theory provides a powerful tool to analyze their properties, called the \emph{intertwining operator} (introduced by Dunkl in \cite{dunkl91}). This operator, denoted by $V_\beta$, is a functional that is defined by the following relation between partial derivatives in space and Dunkl operators
\[T_i V_\beta [f(\bx)]=V_\beta\Big[\frac{\partial}{\partial x_i}f(\bx)\Big],\]
where the function $f(\bx)$ is assumed to be analytical and bounded for finite $\bx$. As a consequence, the intertwining operator maps the heat equation into the Dunkl heat equation, which in turn means that $V_\beta$ provides a great part of the information about the behavior of Dunkl processes. However, the general explicit form of $V_\beta$ is still unknown in spite of recent development in the topic \cite{maslouhiyoussfi09}.

\section{Main results}

The objective of this work is to derive previously unknown expressions for $V_\beta$ and make use of these expressions as novel tools for the study of Dunkl processes in general, and the interacting Brownian motions and Bessel processes in particular. The results are the following:
\begin{itemize}
\item
Previously unknown explicit expressions for the effect of the intertwining operator on particular functions such as linear functions at finite temperature and the exponential function in the freezing ($\beta\to\infty$) limit are calculated.
\item
The convergence to the steady state of Dunkl processes on an arbitrary root system for arbitrary initial conditions is considered, and a finite lower bound for the time required for relaxation to occur is given for finite non-zero temperatures. Note that the stochastic processes considered are diffusion processes, and therefore do not have a steady state in the strict sense. Therefore, in the context of this work, the phrase \emph{steady state} refers to the steady state achieved by these processes after being transformed by a suitable time scaling. The detailed definition of the steady state will be given in Chapter~\ref{general_steady}.
\item
The behavior of Dunkl processes in the freezing limit is calculated, and the freezing positions of these processes are shown to be given by the peak sets of reflection groups \cite{dunkl89B}. In the particular case of the interacting Brownian motions and Bessel processes, the freezing positions \cite{andrauskatorimiyashita12, andrauskatorimiyashita13} are shown to be given by the Fekete points \cite{fekete1923, deift00}.
\item
The steady-state regime of the interacting Brownian motions and Bessel processes is shown to coincide with the eigenvalue density of the $\beta$-Hermite and $\beta$-Laguerre ensembles of random matrices \cite{andrauskatorimiyashita13}.
\item
Using previous knowledge about the interacting Brownian motions and Bessel processes, expressions for the effect of $V_\beta$ on symmetric polynomials are derived.  
\item
The fact that any Dunkl process can be mapped to a Calogero-Moser system evolving in imaginary time by using a variable substitution in both space and time followed by a similarity transformation is proved \cite{andrauskatorimiyashita13}. This mapping provides an indirect derivation of some of the main results of this thesis.
\end{itemize}

This thesis is arranged as follows: in Chapter~\ref{preliminaries}, the basic notations and mathematical objects required for the derivation of the main result are introduced. In particular, Dunkl operators, Dunkl processes and their relationship with the interacting Brownian motions and Bessel processes are presented in concrete mathematical terms. In Chapter~\ref{CalogeroMoserCorrespondence}, the general correspondence between Dunkl processes and Calogero-Moser systems is proved and used to give an intuitive derivation of the relaxation to the steady state and the freezing limit of Dunkl processes. In Chapter~\ref{general_steady}, the precise definition of the steady state of Dunkl processes is given, and the fact that Dunkl processes that have been properly scaled in space relax to their steady state in finite time is proved for arbitrary initial distributions. In Chapter~\ref{general_freezing}, the freezing limit of Dunkl processes is calculated. In Chapter~\ref{ParticularCases}, the steady-state and freezing regime results for the particular cases of the interacting Brownian motions and Bessel processes are addressed. Several numerical results are presented as evidence of the validity of the results from Chapters~\ref{general_steady} and \ref{general_freezing}, and the explicit form of the intertwining operator for symmetrical polynomials is derived and studied. The derivation of the freezing limit and the steady-state regime at low temperature for the interacting Brownian motions and Bessel processes is given in the final part of the Chapter. Finally, the main results of this work and some future prospects are discussed in Chapter~\ref{conclusions}.

\chapter{Dunkl operator theory and Dunkl processes}\label{preliminaries}

The multivariate stochastic processes considered here are defined through the use of Dunkl operators, which in turn depend on several mathematical objects. In this chapter, the definition of those objects and Dunkl processes themselves will be reviewed, and some of their general properties will be listed. The contents of this chapter are based on \cite{dunklxu,rosler08}.

\section{Root systems}

Consider two column vectors $\bx=(x_1,\ldots,x_N)^T$ and $\balpha=(\alpha_1,\ldots,\alpha_N)^T\in\RR^N$, and their dot product $\bx\cdot\balpha=\bx^T\balpha=\sum_{i=1}^Nx_i \alpha_i$. The reflection operator through the hyperplane defined by $\balpha$ is given by
\begin{equation}
\sigma_{\balpha}\bx=\bx-2\frac{\balpha\cdot\bx}{\balpha\cdot\balpha}\balpha.
\end{equation}
In matrix notation, assuming that $\balpha$ is a column vector, denoting its transpose by $\balpha^T$ and denoting the identity matrix by $I$, one may write
\begin{equation}
\sigma_{\balpha}=I-2\frac{\balpha\balpha^T}{\balpha^T\balpha}.
\end{equation}
For simplicity, the norm of a vector will be denoted by $\sqrt{\bx\cdot\bx}=x$ in the case where the notation does not cause confusion. A property of $\sigma_{\balpha}$ is that, for $\Theta\in O(N)$ (the group of orthogonal matrices of size $N$), it satisfies the equation
\begin{equation}\label{EquationSigmaOrthogonalTransformation}
\sigma_{\Theta\balpha}\bx=\Theta\sigma_{\balpha}\Theta^T\bx=\Theta\sigma_{\balpha}\Theta^{-1}\bx.
\end{equation}

A \emph{root system}, denoted by $R$, is defined as a set of vectors (called \emph{roots}) that is closed under reflections along its elements. That is, $R$ satisfies the relation
\begin{equation}
\sigma_{\balpha} R=\{\sigma_{\balpha}\bxi : \bxi\in R\}=R
\end{equation}
for all $\balpha\in R$. In particular, $\sigma_{\balpha}\balpha=-\balpha\in R$, meaning that every root has its negative in $R$. For this reason, $R$ can be divided into the \emph{positive} and \emph{negative subsystems} as follows: choose an arbitrary vector, say $\bmm$, such that for any root $\balpha$, $\bmm\neq c\balpha$ with $c\in\RR$ and $\balpha\cdot\bmm\neq 0$; then, construct the positive subsystem as
$R_+=\{\balpha\in R: \bmm\cdot\balpha>0\}$ and the negative subsystem $R_-$ in the same manner. A root system is called \emph{reduced} if, for every $\balpha\in R$, $r\balpha\in R$ implies that $r=\pm 1$. It will be assumed that every root system considered here is reduced. 

The reflections along the roots of $R$ with composition as group operation generate a Weyl group, denoted by $W$, of orthogonal operators. By definition, applying the elements of $W$ to any root $\balpha\in R$ produces a subset of $R$. This set is denoted by
\begin{equation}
W\balpha=\{\rho\balpha : \rho\in W\}.
\end{equation}
A \emph{multiplicity function} is a function $k: R\to \CCC$ that assigns a unique complex parameter to all the roots that belong to the subset $W\balpha$ for some root $\balpha$. That is, if for $\bxi, \bzeta, \balpha\in R$ the equation $\sigma_{\balpha}\bzeta=\bxi$ holds, then $k(\bzeta)=k(\bxi)$. In general, it will be assumed that the multiplicity function is real and positive.

In general, the action of an orthogonal operator $\Theta\in O(N)$ on a function $f(\bx)$ is given by
\begin{equation}
\Theta f(\bx)=f(\Theta^T\bx),\text{ and } \Theta^{-1} f(\bx)\stackrel{\text{notation}}=\Theta^T f(\bx)=f(\Theta\bx).
\end{equation}
In particular, the action of a succession of reflections $\sigma_{\balpha_1}\ldots\sigma_{\balpha_n}$ on $f(\bx)$, with $\balpha_j\in \RR^N$ for $1\leq j\leq n$ is given by
\begin{equation}
\sigma_{\balpha_1}\ldots\sigma_{\balpha_n}f(\bx)=f(\sigma_{\balpha_n}\ldots\sigma_{\balpha_1}\bx),
\end{equation}
because $\sigma_{\balpha}$ is represented by a symmetric matrix. A function is called $W$-invariant if it satisfies the equation
\begin{equation}\label{EquationWActionOnFunctions}
\rho f(\bx)=f(\rho^T\bx)=f(\bx)
\end{equation}
for all $\rho\in W$.

\section{Dunkl operators}

The $i$th \emph{Dunkl operator} $T_i$, $i=1,\ldots,N$, is written as
\begin{equation}
T_i f(\bx) = \frac{\partial}{\partial x_i}f(\bx)+\sum_{\balpha\in R_+}k(\balpha)\frac{(1-\sigma_{\balpha})f(\bx)}{\balpha\cdot\bx}\alpha_i.\label{DunklOperatorR}
\end{equation}
More generally, using the gradient vector $\bnabla=(\partial/\partial x_1,\ldots,\partial/\partial x_N)^T$, the Dunkl operator in the direction $\bxi\in\RR^N$ is defined as
\begin{equation}
T_{\bxi} f(\bx) = \sum_{i=1}^N \xi_i T_i f(\bx) = \bxi\cdot\bnabla f(\bx)+\sum_{\balpha\in R_+}k(\balpha)\frac{(1-\sigma_{\balpha})f(\bx)}{\balpha\cdot\bx}\bxi\cdot\balpha.
\end{equation}
Note that, if $f(\bx)$ is a polynomial of degree $n$, then $(1-\sigma_{\balpha})f(\bx)/\balpha\cdot\bx$ is a polynomial of degree $n-1$. To see this, consider without loss of generality a particular monomial of $f(\bx)$, and assume that the coordinate system of the Euclidean space is set up so that $\balpha=\be_1$, where $\be_i$ is the $i$th canonical base vector. Then the expression becomes
\begin{equation}
\frac{(1-\sigma_{\be_1})c\prod_{i=1}^N x_i^{p_i}}{x1}=\frac{x_1^{p_1}-(-x_1)^{p_1}}{x_1}c\prod_{i=2}^N x_i^{p_i},
\end{equation}
and this is 0 for $p_1$ even and $2x_1^{p_1-1}c\prod_{i=2}^N x_i^{p_i}$ for $p_i$ odd. In both cases, this ratio yields a monomial, and if the same strategy is followed for every monomial, the ratio on the r.h.s. of \eqref{DunklOperatorR} becomes a polynomial of degree $n-1$. This means that $T_i$ is a homogeneous operator of degree $-1$, like a partial derivative. In addition, for a fixed multiplicity function the operators $\{T_i\}_{i=1}^N$ commute. Also, when at least one of the functions $f,g$ is $W$-invariant, the Dunkl operators obey the product rule,
\begin{equation}\label{DunklProductRule}
T_i[f(\bx)g(\bx)]=g(\bx)T_if(\bx)+f(\bx)T_ig(\bx).
\end{equation}

The Dunkl Laplacian is a generalization of the Laplacian in which every partial derivative is replaced by a Dunkl operator,
\begin{equation}\label{ReplacementDerivativeDunklOperator}
\Delta:=\sum_{i=1}^N \frac{\partial^2}{\partial x_i^2}\quad\to\quad \sum_{i=1}^N T_i^2.
\end{equation}
Dunkl \cite{dunklxu} proved that the explicit form of the Dunkl Laplacian is
\begin{equation}\label{EquationDunklLaplacian}
\sum_{i=1}^N T_i^2 f(\bx)=\Delta f(\bx)+2\sum_{\balpha\in R_+}k(\balpha)\Big[\frac{\balpha\cdot\bnabla f(\bx)}{\balpha\cdot\bx}-\frac{\alpha^2}{2}\frac{(1-\sigma_{\balpha})f(\bx)}{(\balpha\cdot\bx)^2}\Big].
\end{equation}
Consider an arbitrary orthonormal base of $\RR^N$, $\{\btheta_i\}_{i=1}^N$. Denote the $j$th component of $\btheta_i$ by $\theta_{ji}$, and the matrix formed by the basis vectors as columns by $\Theta$. Then, one has
\begin{multline}\label{EquationDunklLaplacianBaseIndependence}
\sum_{i=1}^N T_{\btheta_i}^2 f(\bx)=\sum_{1\leq i,j,l\leq N} \theta_{ji}\theta_{li} T_{j}T_{l} f(\bx)=\sum_{1\leq j,l\leq N} [\Theta^T\Theta]_{jl} T_{j}T_{l} f(\bx)\\
= \sum_{j=1}^NT_{j}^2 f(\bx).
\end{multline}
Therefore, the Dunkl Laplacian is independent of the orthonormal basis chosen to calculate it.

\section{Dunkl processes}

Dunkl processes are defined \cite{roslervoit98} as the Markov processes which obey the Dunkl heat equation 
\begin{equation}
\frac{\partial}{\partial t}-\frac{1}{2}\sum_{i=1}^NT_i^2=0
\end{equation}
as their backward Kolmogorov equation (BKE). Denoting the transition density of a Dunkl process going from the position $\bx$ to the position $\by$ in a time $t$ by $p(t,\by|\bx)$, the BKE is given explicitly by
\begin{multline}
\frac{\partial}{\partial t}p(t,\by|\bx)=\frac{1}{2}\sum_{i=1}^NT_{i}^2p(t,\by|\bx)\\
=\frac{1}{2}\Delta p(t,\by|\bx)+\sum_{\balpha\in R_+}k(\balpha)\Big[\frac{\balpha\cdot\bnabla}{\balpha\cdot\bx}p(t,\by|\bx)-\frac{\alpha^2}{2}\frac{1-\sigma_{\balpha}}{(\balpha\cdot\bx)^2}p(t,\by|\bx)\Big].\label{DunklBackward}
\end{multline}
All the operands act on the variable $\bx$ in this equation. Some of the general properties of these processes can be read off from each of the terms in this equation: the first term is a simple diffusion term, while the second term is a drift term which drives the process in the directions given by the roots of $R$. The third term is a difference term which generates a probability flow from the point $\bx$ to the points $\{\sigma_{\balpha}\bx\}_{\balpha\in R+}$. This means that Dunkl processes are discontinuous, and that they \emph{jump} to any one of the reflected positions generated by the root system. The corresponding forward Kolmogorov equation (FKE) is given by
\begin{multline}
\frac{\partial}{\partial t}p(t,\by|\bx)=\frac{1}{2}\Delta^{(y)} p(t,\by|\bx)-\sum_{\balpha\in R_+}k(\balpha)\frac{\balpha\cdot\bnabla^{(y)} p(t,\by|\bx)}{\balpha\cdot\by}\\
+\sum_{\balpha\in R_+}k(\balpha)\frac{\alpha^2}{2}\frac{p(t,\by|\bx)+p(t,\sigma_{\balpha}\by|\bx)}{(\balpha\cdot\by)^2}.\label{DunklForward}
\end{multline}

Though there is lack of physical intuition or interpretation for the jumps in Dunkl processes, there are some well-known facts about them \cite{gallardoyor08}. The most important is the fact that at every infinitesimal time increment, the process jumps at most once, and if the positions of the process before and after the jump are $\bx$ and $\by$, respectively, one can always find a root $\balpha$ such that $\by=\sigma_{\balpha}\bx$. This means that the jumps of the Dunkl processes can be eliminated by considering the \emph{``radial''} part of the trajectory of the process \cite{gallardoyor05}. 

\begin{figure}[!ht]
\centering
\includegraphics[width=\textwidth]{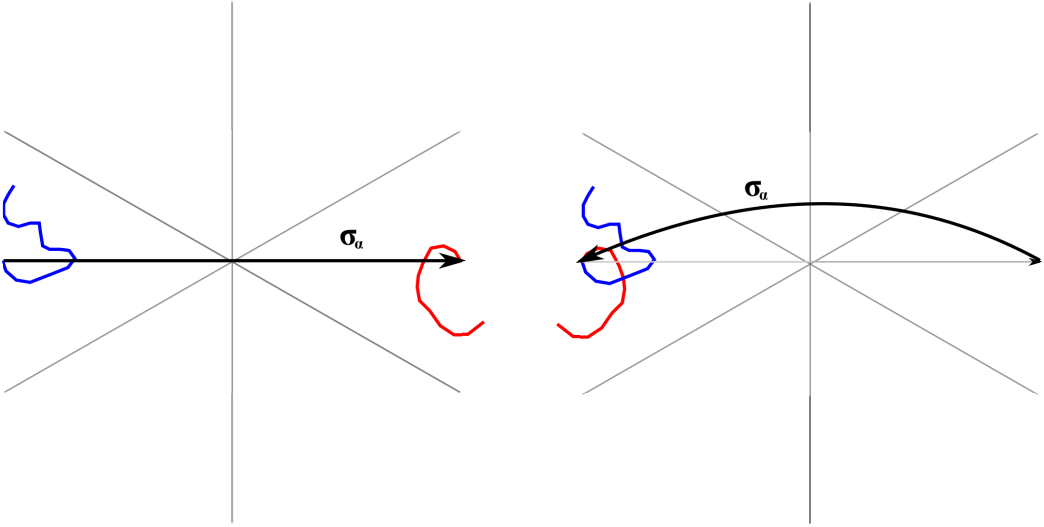}
\caption{Example path of a Dunkl process on the root system $A_2$. On the left part of the figure, the process completes the blue path before making a jump, and continues moving along the red path in a different region of space. On the right part of the figure, the radial or continuous part of the process is obtained by applying the same reflection that provoked the jump ($\sigma_{\balpha}$ in this case), bringing the process back to the end of the blue curve and forcing it to describe a continuous curve. The wedge on the left side of both sides of the figure is the Weyl chamber $C$ in this case.}\label{DunklPath}
\end{figure}

Intuitively, radial Dunkl processes are obtained by projecting the path of a Dunkl process onto a subset of $\RR^N$ called the Weyl chamber, defined by
\begin{equation}
C=\{\bx\in\RR^N:\bx\cdot\balpha>0\ \forall \balpha\in R_+\}.
\end{equation}
The procedure is as follows: the Dunkl process is started from a point within $C$, and its path is followed until a jump occurs. After the jump, the process is projected back to the Weyl chamber by reflecting the path using the correct operator $\sigma_{\balpha}$ (see Fig.~\ref{DunklPath}). Repeating this procedure for every jump gives a trajectory that is continuous and contained in $C$. Consequently, the transition probability density of a radial Dunkl process is $W$-invariant. That is, the transition probability density of a radial Dunkl process, denoted by $P(t,\by|\bx)$, is related to the transition probability density of a non-radial Dunkl process by the equation
\begin{equation}\label{RegularToRadial}
P(t,\by|\bx)=\sum_{\rho\in W}p(t,\by|\rho\bx).
\end{equation}
The reason for this relation is that the jumps generate a probability flow from a point $\bx$ to the set of points $W\bx$, and the sum on the r.h.s. balances the flow of probability into and out of $C$. One way to see this is that $p(t,\by|\bx)$ is defined to be normalized over $\RR^N$, and by construction, $P(t,\by|\bx)$ is normalized over $C$.

The BKE of a radial Dunkl process is obtained from Eqs.~\eqref{DunklBackward} and \eqref{RegularToRadial}. Using the operator $\sum_{\rho\in W}\rho$ on Eq.~\eqref{DunklBackward}, and making it act on $\bx$, one obtains
\begin{multline}
\frac{\partial}{\partial t}P(t,\by|\bx)=\frac{\partial}{\partial t}\sum_{\rho\in W}p(t,\by|\rho\bx)=\frac{1}{2}\sum_{i=1}^N\sum_{\rho\in W}\rho[T_{i}^2p(t,\by|\bx)]\\
=\frac{1}{2}\sum_{i=1}^NT_{i}^2\sum_{\rho\in W} p(t,\by|\rho\bx)=\frac{1}{2}\Delta P(t,\by|\bx)+\sum_{\balpha\in R_+}k(\balpha)\frac{\balpha\cdot\bnabla}{\balpha\cdot\bx}P(t,\by|\bx).
\end{multline}
The third equality follows from Eq.~\eqref{EquationDunklLaplacianBaseIndependence} and the last equality follows from the $W$-invariance of $P(t,\by|\bx)$.

It is a known fact \cite{demni08A} that the interacting Brownian motions and the interacting Bessel processes are realized as the radial Dunkl processes of type $A$ and $B$ respectively, provided that the multiplicity function $k(\balpha)$ is chosen appropriately. Consider first the interacting Brownian motions. The root system of type $A$ is given by 
\begin{equation}
A=\{\balpha_{ij}=\be_i-\be_j:1\leq i\neq j\leq N\}.
\end{equation}
The positive subsystem is chosen as
\begin{equation}
A_{+}=\{\balpha_{ij}=\be_i-\be_j:1\leq j<i\leq N\}\label{positivesubsystemtypeA}
\end{equation}
where $\be_i$ denotes the $i$th unit base vector. Let us use the notation $\sigma_{\alpha_{ij}}=\sigma_{ij}$ for this particular root system. Note that the effect of $\sigma_{ij}$ on an arbitrary vector $\bx$ is that of exchanging its $i$th and $j$th components. To see this, compute the $l$th component of $\sigma_{ij}\bx$:
\begin{equation}
(\sigma_{ij}\bx)_l=x_l-(x_i-x_j)(\delta_{il}-\delta_{jl}).
\end{equation}
It is easy to see that $x_l$ remains unchanged for $l\neq i,j$, that $(\sigma_{ij}\bx)_i=x_j$ and that $(\sigma_{ij}\bx)_j=x_i$. Therefore, the group generated by the reflections along the elements of $A$ with composition as the group operation is the symmetric group $S_N$.

In view of this property of $A$, it follows that any root can be obtained from at most two reflections of any other root, a fact that is proved as follows. Consider an arbitrary root $\balpha_{ij}$ and apply to it the reflection $\sigma_{mj}$, with $m$ arbitrary. This reflection exchanges the $j$th and the $m$th components of $\balpha_{ij}$, giving $\balpha_{im}$. One more reflection using $\sigma_{il}$, with $l$ arbitrary gives $\balpha_{lm}$, as desired. Since $k(\balpha)$ is invariant under any of these reflections, one obtains
\[k(\balpha_{ij})=k(\sigma_{mj}\sigma_{il}\balpha_{ij})=k(\balpha_{lm})\]
in general, and therefore it can be concluded that $k(\balpha)$ is independent of its argument, so it is a single parameter.

Therefore, the Dunkl operators of type $A$ are given by
\begin{equation}
{}_{A}T_i f(\bx)= \frac{\partial}{\partial x_i}f(\bx)+k\sum_{\substack{j=1\\j\neq i}}^N\frac{f(\bx)-f(\sigma_{ij}\bx)}{x_i-x_j},
\end{equation}
and the corresponding BKE of the radial Dunkl process of type $A$ is given by
\begin{equation}\label{KBEInteractingBrownianMotions}
\frac{\partial}{\partial t}P_A(t,\by|\bx)=\frac{1}{2}\Delta P_A(t,\by|\bx) +k\sum_{1\leq i\neq j\leq N}\frac{1}{x_i-x_j}\frac{\partial}{\partial x_i}P_A(t,\by|\bx).
\end{equation}
Comparing this equation with the BKE of the interacting Brownian motions \cite{katoritanemura07},
\begin{equation}
\frac{\partial}{\partial t}p_\text{IBM}(t,\by|\bx)=\frac{1}{2}\Delta p_\text{IBM}(t,\by|\bx) +\frac{\beta}{2}\sum_{1\leq i\neq j\leq N}\frac{1}{x_i-x_j}\frac{\partial}{\partial x_i}p_\text{IBM}(t,\by|\bx),
\end{equation}
it follows that the radial Dunkl processes of type $A$ are equivalent (in distribution) to the interacting Brownian motions, provided one sets $k=\beta/2$. Henceforth, the transition probability density of the interacting Brownian motions will be denoted by $P_A(t,\by|\bx)$ while assuming that $k=\beta/2$.

Let us consider now the interacting Bessel processes. The root system of type $B$ is given by
\begin{equation}
B=\{\pm \be_i : 1\leq i\leq N\}\cup\{\pm(\be_i-\be_j),\pm(\be_i+\be_j):1\leq j<i\leq N\}.
\end{equation}
The positive subsystem is chosen as
\begin{equation}
B_{+}=\{ \be_i : 1\leq i\leq N\}\cup\{\be_i-\be_j,\be_i+\be_j:1\leq j<i\leq N\}.
\end{equation}
The reflections defined by the roots of $B$ are given by
\begin{eqnarray}
\sigma_{\pm\be_i}\bx&=&(x_1,\ldots,x_{i-1},-x_i,x_{i+1},\ldots,x_N),\nonumber\\
\sigma_{\pm(\be_i-\be_j)}\bx&=&(x_1,\ldots,x_{i-1},x_j,x_{i+1},\ldots,x_{j-1},x_i,x_{j+1},\ldots,x_N),\nonumber\\
\sigma_{\pm(\be_i+\be_j)}\bx&=&(x_1,\ldots,x_{i-1},-x_j,x_{i+1},\ldots,x_{j-1},-x_i,x_{j+1},\ldots,x_N).\IEEEeqnarraynumspace
\end{eqnarray}
The first reflection changes the sign of the $i$th component, the second reflection exchanges the $i$th and $j$th components and the third reflection exchanges the $i$th and $j$th components and changes their signs. Let us set the notations
\begin{IEEEeqnarray}{rCl}
\sigma_{\pm\be_i}&=&\hat{\sigma_i},\nonumber\\
\sigma_{\pm(\be_i-\be_j)}\bx&=&\sigma_{ij},\nonumber\\
\sigma_{\pm(\be_i+\be_j)}\bx&=&\hat{\sigma_i}\hat{\sigma_j}\sigma_{ij}.
\end{IEEEeqnarray}
Therefore, the reflection group $W_B$ contains all the permutations and sign changes that can be applied to a vector in $\RR^N$.

There are two multiplicities associated with $B$ because the roots $\{\pm\be_i\}_{i=1}^N$ and the roots $\{\pm\be_i\pm\be_j\}_{1\leq i\neq j\leq N}$ belong to different orbits. Let us set $k(\be_i)=k_0$ and $k(\pm\be_i\pm\be_j)=k_1$. Then, the Dunkl operator of type $B$ is given by
\begin{eqnarray}
{}_{B}T_i f(\bx)&=& \frac{\partial}{\partial x_i}f(\bx)+k_0\frac{f(\bx)-f(\hat{\sigma_i}\bx)}{x_i}\nonumber\\
&&k_1\sum_{\substack{j=1\\j\neq i}}^N\Bigg[\frac{f(\bx)-f(\sigma_{ij}\bx)}{x_i-x_j}+\frac{f(\bx)-f(\sigma_{ij}\hat{\sigma_i}\hat{\sigma_j}\bx)}{x_i+x_j}\Bigg],
\end{eqnarray}
and the BKE of the radial Dunkl process of type $B$ reads
\begin{multline}\label{KBEInteractingBesselProcesses}
\frac{\partial}{\partial t}P_B(t,\by|\bx)=\frac{1}{2}\Delta P_B(t,\by|\bx) +k_0\sum_{i=1}^N\frac{1}{x_i}\frac{\partial}{\partial x_i}P_B(t,\by|\bx)\\
+k_1\sum_{\substack{j=1\\j\neq i}}^N\Big[\frac{1}{x_i-x_j}+\frac{1}{x_i+x_j}\Big]\frac{\partial}{\partial x_i}P_B(t,\by|\bx).
\end{multline}
Comparing this equation with the BKE of the interacting Bessel processes \cite{katoritanemura11}
\begin{multline}
\frac{\partial}{\partial t}p_\text{IBP}(t,\by|\bx)=\frac{1}{2}\Delta p_\text{IBP}(t,\by|\bx)+\frac{\beta}{2}\Big[\sum_{i=1}^N\frac{2\nu+1}{2x_i}\frac{\partial}{\partial x_i}p_\text{IBP}(t,\by|\bx)\\
+\sum_{1\leq i\neq j\leq N}\Big(\frac{1}{x_i-x_j}+\frac{1}{x_i+x_j}\Big)\frac{\partial}{\partial x_i}p_\text{IBP}(t,\by|\bx)\Big],
\end{multline}
it follows that for $k_0=\beta(\nu+1/2)/2$ and $k_1=\beta/2$ the radial Dunkl processes of type $B$ and the interacting Bessel processes are equivalent in distribution. Unless otherwise noted, the transition probability density of the interacting Bessel processes will be denoted $P_B(t,\by|\bx)$ with the multiplicities chosen as indicated here.

Note that for both the interacting Brownian motions and Bessel processes, the multiplicities $k(\balpha)$ are proportional to $\beta/2$. Because the freezing limit consists of taking the limit $\beta\to\infty$, it is necessary to redefine the multiplicities so that they may be proportional to the inverse temperature. This is accomplished in two steps. First, choose one particular root $\balpha_0$ and set
\begin{equation}
k(\balpha_0)=\frac{\beta}{2}>0.
\end{equation}
Second, define a new multiplicity function $\kappa(\balpha)$, using the equation
\begin{equation}
\kappa(\balpha):=\frac{k(\balpha)}{k(\balpha_0)}>0.
\end{equation}
With this, one obtains the equation
\begin{equation}
k(\balpha)=\frac{\beta}{2}\kappa(\balpha).
\end{equation}
A quantity that appears repeatedly in calculations that involve Dunkl operators is the sum of the multiplicities over $R_+$,
\begin{equation}
\sum_{\balpha\in R_+}k(\balpha).
\end{equation}
The parameter $\gamma$ is given by
\begin{equation}\label{DefinitionParameterGamma}
\gamma:=\sum_{\balpha\in R_+}\kappa(\balpha),
\end{equation}
which gives
\begin{equation}
\sum_{\balpha\in R_+}k(\balpha)=\frac{\beta}{2}\gamma.
\end{equation}

\section{Dunkl's intertwining operator}

Dunkl operators are related to partial derivatives by the \emph{intertwining operator} $V_\beta$, which is a linear operator that conserves the degree of homogeneous polynomials. It is defined by the relation
\begin{equation}
T_iV_\beta f(\bx)=V_\beta\Big[\frac{\partial}{\partial x_i}f(\bx)\Big],\label{EquationVDefinition}
\end{equation}
and it is normalized by the relation $V_\beta 1 = 1$. This operator was introduced by Dunkl in \cite{dunkl91} (see also \cite{dunklxu}), and it is a powerful tool which allows one to treat Dunkl operators in almost the same way as partial derivatives. However, the explicit general form of the  intertwining operator is unknown. Its form is known, e.g., in the one-dimensional case \cite{dunkl91} and for the root system of type $A$ in three dimensions ($A_2$) \cite{dunkl95}. While some progress has been achieved in recent years \cite{maslouhiyoussfi09}, the general explicit effect of $V_\beta$ on arbitrary functions remains an open question.

The most important properties of $V_\beta$ are listed as follows. $V_\beta$ commutes with the action of $\rho\in W$,
\begin{equation}
 V_\beta=\rho^TV_\beta \rho.
\end{equation}
This follows from the fact that the operator on the r.h.s. satisfies Eq.~\eqref{EquationVDefinition}. In addition, a theorem by R\"{o}sler \cite{rosler99} gives bounds for functions deformed by the intertwining operator. The space of functions considered for this property is denoted by $A_r$, and it is defined using several mathematical objects. Denote by $\mathcal{P}_n^N$ the set of homogeneous polynomials on $\bx\in\RR^N$ of degree $n$. Define by $K_r=\{\bx\in\RR^N: |\bx|\leq r\}$ the $N$-dimensional ball of radius $r$, and denote by $||g||_{\infty,K_r}$ the maximum value of $|g(\bx)|$ within $K_r$. Then, $A_r$ is defined as the set of all functions $g:K_r\to\mathbb{C}$ such that
\begin{equation}\label{EquationDefinitionAr1}
g(\bx)=\sum_{n=0}^\infty g_n(\bx),\textrm{ with }g_n(\bx)\in\mathcal{P}_n^N
\end{equation}
and
\begin{equation}\label{EquationDefinitionAr2}
\sum_{n=0}^\infty ||g_n||_{\infty,K_r}<\infty.
\end{equation}
The theorem is stated as follows.

\begin{proposition}\label{PositivityVBeta}
(Thm. 1.2 and Cor. 5.3 in \cite{rosler99}) for $\beta, \kappa\geq 0$ and for every $\bx\in\RR^N$, there is a unique probability measure $\mu_{\bx}$ on the Borel $\sigma$-algebra of $\RR^N$ such that
\begin{equation}
V_\beta g(\bx) = \int_{\RR^N}g(\bxi)\ud\mu_{\bx}(\bxi)
\end{equation}
for all functions $g(\bx)\in A_{|\bx|}$. This measure satisfies
\begin{equation}
\mu_{r\bx}(B)=\mu_{\bx}(r^{-1}B)\textrm{ and }\mu_{\rho\bx}(B)=\mu_{\bx}(\rho^{-1}B)
\end{equation}
for all $r>0$ and $\rho\in W$, and its support is given by
\begin{equation}
\supp (\mu_{\bx})=\co(W\bx).
\end{equation}
Here, $\co(W\bx)$ denotes the convex hull of the set $W\bx=\{\bz:{}^\exists \rho\in W, \bz=\rho \bx\}$.
\end{proposition}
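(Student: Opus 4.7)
The plan is to obtain $\mu_{\bx}$ from the Riesz--Markov--Kakutani representation theorem, once a suitable positivity property of $V_\beta$ is in hand. Concretely, I would first show that for every $\bx\in\RR^N$ the linear functional $\Lambda_{\bx}: p\mapsto V_\beta p(\bx)$, defined on the polynomial ring, is \emph{positive} in the sense that $p\geq 0$ on the compact set $\co(W\bx)$ implies $\Lambda_{\bx}(p)\geq 0$. Combined with the normalization $V_\beta 1 = 1$, this makes $\Lambda_{\bx}$ a positive linear functional of norm $1$ on the subspace of polynomials sitting inside $C(\co(W\bx))$. By Weierstrass approximation, $\Lambda_{\bx}$ extends uniquely to a positive linear functional of norm $1$ on $C(\co(W\bx))$, and Riesz--Markov--Kakutani yields a unique Borel probability measure $\mu_{\bx}$ supported in $\co(W\bx)$ with $V_\beta p(\bx)=\int p\,\ud\mu_{\bx}$ for all polynomials $p$. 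The extension to arbitrary $g\in A_{|\bx|}$ would then follow by the dominated convergence theorem applied to the partial sums in \eqref{EquationDefinitionAr1}, since the series converges uniformly on $\co(W\bx)\subset K_{|\bx|}$ by \eqref{EquationDefinitionAr2}.

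The main obstacle, and genuinely the deep step, is establishing the positivity of $\Lambda_{\bx}$ on non-negative polynomials. The approach I would take is the semigroup strategy: one considers the Dunkl heat semigroup generated by $\tfrac{1}{2}\sum_i T_i^2$ and shows that its kernel is a non-negative function. This positivity of the heat kernel is proved by an inductive argument in which the multiplicity function is turned on from $k=0$, exploiting the fact that at $k=0$ the semigroup is the ordinary Gaussian semigroup, manifestly positive, together with a perturbative estimate on how the reflection terms in the Dunkl Laplacian affect the kernel. Once the heat kernel is non-negative, the intertwining relation translates this into positivity of $V_\beta$ on homogeneous polynomials through an expansion in terms of the Dunkl kernel, which is the series kernel of $V_\beta$.

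For the scaling and equivariance properties, I would use the uniqueness part of Riesz--Markov--Kakutani together with functoriality of $V_\beta$. Since $V_\beta$ preserves degrees of homogeneous polynomials, for such a polynomial $p$ of degree $n$ one has $V_\beta p(r\bx)=r^n V_\beta p(\bx)$. Rewriting both sides as integrals against $\mu_{r\bx}$ and $\mu_{\bx}$ respectively and using $p(r\bxi)=r^n p(\bxi)$, one sees that $\mu_{r\bx}$ and the pushforward of $\mu_{\bx}$ under $\bxi\mapsto r\bxi$ agree on all polynomials and therefore on all of $C(K)$, giving $\mu_{r\bx}(B)=\mu_{\bx}(r^{-1}B)$. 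The identity $V_\beta=\rho^T V_\beta\rho$ stated before the proposition, combined with the same uniqueness argument, yields $\mu_{\rho\bx}(B)=\mu_{\bx}(\rho^{-1}B)$.

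Finally, the support statement $\supp(\mu_{\bx})=\co(W\bx)$ has two inclusions. The inclusion $\supp(\mu_{\bx})\subseteq\co(W\bx)$ is already built into the construction above, since the representing functional lives on $C(\co(W\bx))$. For the reverse inclusion, I would argue by contradiction: if there were an open set $U\subset\co(W\bx)$ with $\mu_{\bx}(U)=0$, one could use the Weyl-group equivariance just proved to symmetrize $U$ and then exhibit a non-negative polynomial $p$ supported away from $\supp(\mu_{\bx})$ but strictly positive on $W\bx$, contradicting the fact that $V_\beta p(\bx)$ must reflect the values of $p$ at the extreme points $W\bx$ of $\co(W\bx)$. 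Making this step precise is the most delicate part of the support statement and would require a careful analysis of the way $V_\beta$ reproduces polynomial averages over $W\bx$ in the limit $k\to 0$, where the measure degenerates to the uniform measure on $W\bx$ whose support is exactly the extreme points of $\co(W\bx)$.
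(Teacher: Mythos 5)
First, note that the thesis does not actually prove Proposition~\ref{PositivityVBeta}: it is imported verbatim from R\"osler (Thm.~1.2 and Cor.~5.3 of \cite{rosler99}), so there is no in-paper proof to compare yours against. Judged on its own terms, your architecture --- positivity of the functional $p\mapsto V_\beta p(\bx)$ on polynomials, extension by Weierstrass and Riesz--Markov--Kakutani, uniqueness of the representing measure to transfer $V_\beta=\rho^TV_\beta\rho$ and degree-homogeneity into the equivariance and scaling of $\mu_{\bx}$, and dominated convergence to pass from polynomials to $A_{|\bx|}$ --- is the correct skeleton of the standard proof, and the equivariance/scaling paragraph is essentially complete.

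The two load-bearing steps, however, are not established. (i) Your route to positivity is circular as stated: in the literature, and in this very thesis (Eq.~\eqref{TransitionDensityExplicit}), the non-negativity of the Dunkl heat kernel is a \emph{consequence} of the positivity of $V_\beta$, because the kernel is built from $E_\beta(\bx,\by)=V_\beta\rme^{\bx\cdot\by}$; it cannot serve as the input. A ``perturbative estimate'' turning on $k$ from the Gaussian semigroup has no reason to preserve positivity, since the reflection part of the Dunkl Laplacian is a singular, non-local perturbation. R\"osler's actual argument works degree by degree on the finite-dimensional spaces of polynomials, characterizing positivity-preserving semigroups there by a positive-minimum principle applied to the difference part $\sum_iT_i^2-\Delta$; some such finite-dimensional mechanism is indispensable and is the genuinely hard content of the theorem. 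Note also that for the Riesz step you only need global positivity ($p\geq0$ on $\RR^N$ implies $V_\beta p\geq0$) together with the bound $|V_\beta p(\bx)|\leq\|p\|_{\infty,K_{|\bx|}}$; demanding positivity relative to $\co(W\bx)$ up front is strictly stronger and quietly presupposes the support statement. (ii) The equality $\supp(\mu_{\bx})=\co(W\bx)$ is much deeper than your contradiction sketch suggests: R\"osler's Cor.~5.3 yields only the inclusion $\supp(\mu_{\bx})\subseteq\co(W\bx)$, via exponential bounds on the Dunkl kernel, and the reverse inclusion was settled only much later (Gallardo and Rejeb). Your claim that ``$V_\beta p(\bx)$ must reflect the values of $p$ at the extreme points $W\bx$'' is precisely what has to be proved, and the degeneration of $\mu_{\bx}$ as $k\to0$ to the uniform measure on $W\bx$ gives no control on the support at a fixed $k>0$.
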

The fact that $V_\beta g(\bx)$ is bounded, as remarked in \cite[p.~166]{dunklxu}, is a consequence of this theorem.
\begin{equation}
|V_\beta g(\bx)|\leq\int_{\RR^N}|g(\bxi)|\ud\mu_{\bx}(\bxi)\leq\sup_{\bxi\in\co(W\bx)}|g(\bxi)|,\label{EquationVBoundedness}
\end{equation}
for all $g(\bx)\in A_r$. This bound does not depend on $\beta>0$, which means that the limit $\lim_{\beta\to\infty}V_\beta g(\bx)=V_\infty g(\bx)$ is well-defined whenever $g(\bx)$ is bounded for finite $\bx$.

In the context of Dunkl processes, $V_\beta$ is of great importance because it deforms the BKE of a multidimensional free Brownian motion (the heat equation) into the BKE of a Dunkl process (the Dunkl heat equation) as follows. The transition probability density of a Brownian motion in $N$ dimensions is given by the heat kernel
\begin{equation}
p_\text{BM}(t,\by|\bx)=\frac{\rme^{-(\by-\bx)^2/2t}}{(2\pi t)^{N/2}},
\end{equation}
which in turn obeys the heat equation,
\begin{equation}
\Big(\frac{\partial}{\partial t} - \frac{1}{2}\Delta_x\Big)p_\text{BM}(t,\by|\bx)=0.
\end{equation}
Applying $V_\beta$ from the l.h.s. gives
\begin{equation}
V_\beta\Big(\frac{\partial}{\partial t} - \frac{1}{2}\Delta_x\Big)p_\text{BM}(t,\by|\bx)=\Big(\frac{\partial}{\partial t} - \frac{1}{2}\sum_{i=1}^NT_i^2\Big)V_\beta p_\text{BM}(t,\by|\bx)=0.
\end{equation}
This means that $V_\beta$ contains information on the distinctive features of Dunkl processes. Therefore, to understand the nature of Dunkl processes it is necessary to understand the behavior of $V_\beta$. Of particular importance is the effect of $V_\beta$ on the exponential function, also called the Dunkl kernel.

\section{The Dunkl kernel}

Define the Dunkl kernel by
\begin{equation}
E_\beta(\bx,\by):=V_\beta \rme^{\bx\cdot\by}.
\end{equation}
The Dunkl kernel is the analog of the exponential function for Dunkl operators in the following sense. By Eq.~\eqref{EquationVDefinition}, 
\begin{equation}
T_iE_\beta(\bx,\by)=T_iV_\beta \rme^{\bx\cdot\by}=V_\beta\Big(\frac{\partial}{\partial x_i} \rme^{\bx\cdot\by}\Big)=V_\beta(y_i\rme^{\bx\cdot\by})=y_iE_\beta(\bx,\by).
\end{equation}
Therefore, the action of a Dunkl operator on the Dunkl kernel is the same as the action of a partial derivative on the exponential of $\bx\cdot\by$. Note that $E_\beta(i\bx,\by)$ is bounded as
\begin{equation}\label{DunklKernelInTheTransform}
|V_\beta \rme^{\rmi \bx\cdot\by}|\leq \int_{\RR^N}|\rme^{\rmi \bxi\cdot\by}|\ud\mu_{\bx}(\bxi)=1,
\end{equation}
where $\mu_{\bx}(\bxi)$ is the measure specified in Prop.~\ref{PositivityVBeta}. Other useful properties of the Dunkl kernel are listed as follows; for $c\in\CCC$ and $\rho\in W$,
\begin{IEEEeqnarray}{rCl}
E_\beta(\bx,\by)&=&E_\beta(\by,\bx),\\
E_\beta(c\bx,\by)&=&E_\beta(\bx,c\by),\\
E_\beta(\rho\bx,\rho\by)&=&E_\beta(\bx,\by),\\
E_\beta(\bx,\by)^\dagger&=&E_\beta(\bx^\dagger,\by^\dagger),
\end{IEEEeqnarray}
where ${}^\dagger$ indicates complex conjugation. Finally, the following relation due to Dunkl (see \cite{rosler08}) is listed here for use in later chapters,
\begin{equation}\label{GaussianIntegralDunklKernel}
\frac{1}{c_\beta}\int_{\RR^N}V_\beta[\rme^{\bx\cdot\by}]V_\beta[\rme^{\bx\cdot\bz}]\rme^{-x^2/2}w_\beta(\bx)\ud\bx=\rme^{(y^2+z^2)/2}V_\beta\rme^{\by\cdot\bz}.
\end{equation}
This expression is in no way trivial, and it is also of great use in later chapters, particularly Chapter~\ref{general_steady}. The proof of Eq.~\eqref{GaussianIntegralDunklKernel} is given in Appendix~\ref{TheUsefulIntegral}. Because the Dunkl kernel with one imaginary argument is bounded, it is useful for the definition of the Dunkl generalization of the Fourier transform. 

\section{The Dunkl transform}

The Dunkl transform is a generalization of the Fourier transform given by the equation
\begin{equation}\label{DefinitionDunklTransform}
\hat{f}(\bxi):=\frac{1}{c_\beta}\int_{\RR^N}f(\bx)V_\beta\rme^{-\rmi \bxi\cdot\bx}w_\beta(\bx)\ud\bx,
\end{equation}
where the weight function $w_\beta$ and the normalization constant $c_\beta$ are given by
\begin{equation}
w_\beta(\bx):=\prod_{\balpha\in R_+}|\balpha\cdot\bx|^{\beta\kappa(\balpha)}
\end{equation}
and
\begin{equation}
c_\beta:=\int_{\RR^N}\rme^{-x^2/2}w_\beta(\bx)\ud\bx,
\end{equation}
respectively. This integral is known for many different cases, and it is known as a Selberg integral \cite{mehta04}. The inverse Dunkl transform is given almost everywhere by 
\begin{equation}\label{InverseDunklTransform}
f(\bx)\stackrel{\text{a.e.}}{=}[\hat{f}]\check{\phantom{1}}(\bx):=\frac{1}{c_\beta}\int_{\RR^N}\hat{f}(\bxi)V_\beta\rme^{\rmi \bxi\cdot\bx}w_\beta(\bxi)\ud\bxi
\end{equation}
and this equality holds for all points if $f(\bx)$ is continuous. The Dunkl transform is very similar to the Fourier transform in many respects, and it is particularly useful for solving the Dunkl heat equation. 

\section{Representations of the transition probability density}

A short derivation of the transition probability density of a Dunkl process will be given here. A more detailed and rigorous derivation is given in \cite{rosler99} and \cite{rosler08}. Consider the Fourier representation of the heat kernel,
\begin{equation}
p_\text{BM}(t,\by|\bx)=\frac{1}{(2\pi)^N}\int_{\RR^N}\rme^{-t\xi^2/2}\rme^{\rmi \by\cdot\bxi}\rme^{-\rmi \bx\cdot\bxi}\ud\bxi.
\end{equation}
This representation is, essentially, the inverse Fourier transform of the function $\rme^{-t\xi^2/2}\rme^{-\rmi \bx\cdot\bxi}$. In analogy with this formula, consider the function
\begin{equation}\label{EquationGreenFunctionGamma}
\Gamma(t,\bx,\by)=\frac{1}{c_\beta^2}\int_{\RR^N}\rme^{-t\xi^2/2}V_\beta [\rme^{\rmi \by\cdot\bxi}]V_\beta [\rme^{-\rmi \bx\cdot\bxi}] w_\beta(\bxi)\ud\bxi.
\end{equation}
By construction, this function solves the Dunkl heat equation,
\begin{multline}
\frac{\partial}{\partial t}\Gamma(t,\bx,\by)=\frac{-1}{c_\beta^2}\int_{\RR^N}\frac{\xi^2}{2}\rme^{-t\xi^2/2}V_\beta [\rme^{\rmi \by\cdot\bxi}]V_\beta [\rme^{-\rmi \bx\cdot\bxi}] w_\beta(\bxi)\ud\bxi\\
=\frac{1}{c_\beta^2}\int_{\RR^N}\rme^{-t\xi^2/2}V_\beta [\rme^{\rmi \by\cdot\bxi}]V_\beta \Big[\frac{1}{2}\Delta\rme^{-\rmi \bx\cdot\bxi}\Big] w_\beta(\bxi)\ud\bxi\\
=\frac{1}{2}\sum_{i=1}^NT_i^2\frac{1}{c_\beta^2}\int_{\RR^N}\rme^{-t\xi^2/2}V_\beta [\rme^{\rmi \by\cdot\bxi}]V_\beta [\rme^{-\rmi \bx\cdot\bxi}] w_\beta(\bxi)\ud\bxi\\
=\frac{1}{2}\sum_{i=1}^NT_i^2\Gamma(t,\bx,\by).
\end{multline}
Using Eq.~\eqref{GaussianIntegralDunklKernel} one obtains
\begin{equation}
\Gamma(t,\bx,\by)=\frac{\rme^{-(y^2+x^2)/2t}}{c_\beta t^{(\beta\gamma+N)/2}}V_\beta\rme^{\bx\cdot\by/t}.
\end{equation}

In general, $\Gamma(t,\bx,\by)$ is not normalized when it is integrated with respect to $\by$. However, $w_\beta(\by)\Gamma(t,\bx,\by)$ is normalized,
\begin{multline}
\int_{\RR^N}w_\beta(\by)\Gamma(t,\bx,\by)\ud\by=\frac{\rme^{-x^2/2t}}{c_\beta t^{(\beta\gamma+N)/2}}\int_{\RR^N}\rme^{-y^2/2t}V_\beta\rme^{\bx\cdot\by/t}w_\beta(\by)\ud\by\\
=\rme^{-x^2/2t}\rme^{x^2/2t}=1.
\end{multline}
Equation~\eqref{GaussianIntegralDunklKernel} was used to obtain the first equality in the second line. Thus, one obtains
\begin{equation}\label{TransitionDensityExplicit}
p(t,\by|\bx)=w_\beta(\by)\Gamma(t,\bx,\by)=w_\beta\left(\frac{\by}{\sqrt{t}}\right)\frac{\rme^{-(y^2+x^2)/2t}}{c_\beta t^{N/2}}V_\beta \exp \left(\frac{\bx\cdot\by}{t}\right).
\end{equation}
Equivalently, from Eq.~\eqref{EquationGreenFunctionGamma} one has
\begin{equation}\label{TransitionDensityTransform}
p(t,\by|\bx)=\frac{w_\beta(\by)}{c_\beta^2}\int_{\RR^N}\rme^{-t\xi^2/2}[V_\beta \rme^{\rmi \by\cdot\bxi}][V_\beta \rme^{-\rmi \bx\cdot\bxi}]w_\beta(\bxi)\ud\bxi.
\end{equation}
Therefore, the transition probability density of a radial Dunkl process is given by
\begin{equation}\label{TransitionDensityRadialDunkl}
P(t,\by|\bx)=w_\beta\left(\frac{\by}{\sqrt{t}}\right)\frac{\rme^{-(y^2+x^2)/2t}}{c_\beta t^{N/2}}\sum_{\rho\in W}V_\beta \exp \left(\frac{\rho\bx\cdot\by}{t}\right).
\end{equation}

The mathematical objects required to write down the transition probability densities of the interacting Brownian motions and Bessel processes are summarized in the following table. Care must be taken with the domain of definition of $P(t,\by|\bx)$, as this density is normalized to one if it is integrated over the Weyl chamber $C$, but it is normalized to $|W|$ when integrated over $\RR^N$.
\begin{table}[!ht]
\begin{center}
{\renewcommand{\arraystretch}{1.5}
\renewcommand{\tabcolsep}{0.2cm}
\begin{tabular}{ll}
\hline
Inter. Brownian Motions				&Interacting Bessel processes\\
\hline
$\displaystyle w_A(\bx)=\!\!\!\!\prod_{1\leq i<j\leq N}\!\!\!\!|x_j-x_i|^\beta$	&$\displaystyle w_B(\bx)=\prod_{i=1}^N|x_i|^{\beta(\nu+1/2)}\!\!\!\!\prod_{1\leq i<j\leq N}\!\!\!\!|x_j^2-x_i^2|^{\beta}$\\
$\gamma_A=N(N-1)/2$	&$\gamma_B=N(N+\nu-1/2)$\\
$\displaystyle c_A=\prod_{j=1}^N\frac{\sqrt{2\pi}\Gamma(1+j\frac{\beta}{2})}{\Gamma(1+\frac{\beta}{2})}$		&$\displaystyle c_B= 2^{\frac{\beta\gamma_B+N}{2}}\prod_{j=1}^N\frac{\Gamma(1+j\frac{\beta}{2})\Gamma[\frac{\beta}{2}(\nu+j-\frac{1}{2})+\frac{1}{2}]}{\Gamma(\frac{\beta}{2}+1)}$\\
$C_A=\{\bx: x_1<\ldots<x_N\}$&$C_B=\{\bx: 0<x_1<\ldots<x_N\}$\\
\hline
\end{tabular}}
\end{center}
\caption{Weight function $w_\beta(\bx)$, the sum of multiplicities $\gamma$, normalization constant $c_\beta$ and Weyl chamber $C$ for the interacting Brownian motions and Bessel processes. Here, $\bx\in\RR^N$, and $c_\beta$ is given by the Selberg integrals \cite[p.~321]{mehta04}.}
\label{RadialDunklProcessesQuantities}
\end{table}

\chapter{Correspondence with the Calogero-Moser systems}\label{CalogeroMoserCorrespondence}

The purpose of this chapter is to prove that there exists a correspondence between Dunkl processes and the Calogero-Moser (CM) systems on a given root system, and use that correspondence to obtain information concerning Dunkl processes in both the steady state and the freezing regime.

\section{Proof of the correspondence}

Under an arbitrary root system $R$, the CM systems on a line with a harmonic background potential and an inverse-square repulsion potential are given by the Hamiltonian (see, e.g., \cite{khastgir00})
\begin{equation}
\mathcal{H}_{\text{CM}}^R=-\frac{1}{2}\Delta^{(x)}+\frac{\beta}{2}\sum_{\balpha\in R_+}\frac{\alpha^2}{2}\frac{\kappa(\balpha)[\beta\kappa(\balpha)/2-\sigma_\alpha]}{(\balpha\cdot\bx)^2}+\frac{\omega^2}{2}\sum_{i=1}^Nx_i^2,\label{generalcm}
\end{equation}
where all the particles have unit mass, and $ \hbar=1$.

Dunkl operators have been used as a tool to prove the integrability of the CM systems \cite{forrester10}. It has been shown under several root systems \cite{rosler98} that after applying a similarity transformation (using the ground state eigenfunction), the Hamiltonian of the CM system is expressed as a Dunkl Laplacian plus a restoring term of the form $\bx\cdot\bnabla$. One can then find the polynomial eigenfunctions for the transformed Hamiltonian as stated in \cite{bakerdunklforrester} and shown in \cite{bakerforrester97}. The objective is to transform the FKE of a Dunkl process into the Schr{\"o}dinger equation of the CM systems. 

The diffusion-scaling transformation is defined as follows. In view of the transformation of a simple Brownian motion into a one-dimensional quantum harmonic oscillator in imaginary time proposed in \cite{katoritanemura07}, consider the substitution given by
\begin{equation}
(t,\by)\to(\tau,\bzeta)=\Bigg(\frac{\log t}{2\omega},\frac{\by}{\sqrt{2\omega t}}\Bigg).\label{substitutionr}
\end{equation}
Note that the spatial variable $\by$ is scaled by a factor of $\sqrt{t}$. Denote the density of the Dunkl process at a time $t$ for a given initial distribution by $f(t,\by)$. The diffusion-scaling transformation consists of performing the variable substitution 
\begin{equation}
f[t(\tau,\bzeta,\omega),\by(\tau,\bzeta,\omega)]=f(\tau,\bzeta), 
\end{equation}
and applying the similarity transformation
\begin{equation}
f(\tau,\bzeta)=\exp[-W(\tau,\bzeta)]U(\tau,\bzeta)\label{transformationr}
\end{equation}
with $W(\tau,\bzeta)$ given by
\begin{equation}
W(\tau,\bzeta)=\frac{1}{2}\omega\sum_{i=1}^N\zeta_i^2-\frac{\beta}{2}\sum_{\balpha\in R_+}\kappa(\balpha)\log |\balpha\cdot\bzeta|+\omega N\tau.\label{Wr}
\end{equation}
Because the scaling is isotropic, it is independent of the root system $R$.

\begin{proposition}\label{correspondencer}
The diffusion-scaling transformation given by \eqref{substitutionr}, \eqref{transformationr}, and \eqref{Wr} transforms the Dunkl process on the root system $R$ into the CM system with harmonic confinement on the same root system evolving in imaginary time.
\end{proposition}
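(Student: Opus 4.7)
The plan is to apply the change of variables~\eqref{substitutionr} and the similarity transformation~\eqref{transformationr} to the forward Kolmogorov equation~\eqref{DunklForward} and verify by direct computation that the resulting equation is $\partial_\tau U=-\mathcal{H}_{\mathrm{CM}}^R U$, the imaginary-time Schr\"odinger equation for~\eqref{generalcm}. I would proceed in three stages: chain rule, similarity transform, and collection of terms.

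For the chain rule, $\tau=(\log t)/(2\omega)$ and $\bzeta=\by/\sqrt{2\omega t}$ give
\[
\partial_t=\frac{1}{2\omega t}\partial_\tau-\frac{1}{2t}\bzeta\cdot\bnabla^{(\zeta)},\qquad \bnabla^{(y)}=\frac{1}{\sqrt{2\omega t}}\bnabla^{(\zeta)},\qquad (\balpha\cdot\by)^2=2\omega t\,(\balpha\cdot\bzeta)^2,
\]
and reflections pass through since $\sigma_\balpha$ is linear and the scaling is isotropic. Multiplying~\eqref{DunklForward} by $2\omega t$ removes the $t$-dependence and transposes a drift $\omega\bzeta\cdot\bnabla^{(\zeta)}$ onto the right-hand side. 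For the similarity step $p=e^{-W}U$ with $W$ from~\eqref{Wr}, the relevant derivatives are
\[
\bnabla W=\omega\bzeta-\sum_{\balpha\in R_+}k(\balpha)\frac{\balpha}{\balpha\cdot\bzeta},\qquad \Delta W=\omega N+\sum_{\balpha\in R_+}k(\balpha)\frac{\alpha^2}{(\balpha\cdot\bzeta)^2},
\]
using $\Delta\log|\balpha\cdot\bzeta|=-\alpha^2/(\balpha\cdot\bzeta)^2$. Because the weight $\prod_\balpha|\balpha\cdot\bzeta|^{\beta\kappa(\balpha)}$ inside $e^{-W}$ is $W$-invariant in $\bzeta$, one has $p(t,\sigma_\balpha\by|\bx)=e^{-W(\bzeta)}U(\tau,\sigma_\balpha\bzeta)$, so the jump term in~\eqref{DunklForward} produces $\sum_\balpha k(\balpha)\frac{\alpha^2}{2(\balpha\cdot\bzeta)^2}(U+\sigma_\balpha U)$; the $\sigma_\balpha U$ piece is precisely what builds the reflection-valued potential of $\mathcal{H}_{\mathrm{CM}}^R$.

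Collecting terms, the $\bnabla U$-contributions cancel identically via $\bnabla W\cdot\bnabla U+\sum_\balpha k(\balpha)\frac{\balpha\cdot\bnabla U}{\balpha\cdot\bzeta}=\omega\bzeta\cdot\bnabla U$. The quadratic-in-$\zeta$ pieces from $\frac{1}{2}|\bnabla W|^2$ and $-\omega\bzeta\cdot\bnabla W$ combine to $-\frac{1}{2}\omega^2\zeta^2$, the harmonic potential of~\eqref{generalcm}. The non-reflection inverse-square contributions from $-\frac{1}{2}\Delta W$ and the symmetric part of the jump term cancel exactly; the diagonal part of the net double sum $-\frac{1}{2}\sum_{\balpha,\balpha'\in R_+}k(\balpha)k(\balpha')\frac{\balpha\cdot\balpha'}{(\balpha\cdot\bzeta)(\balpha'\cdot\bzeta)}$ (arising from $|\bnabla W|^2$ and the gradient-drift contraction) together with the $\sigma_\balpha U$ part of the jump term assembles into $-\frac{\beta}{2}\sum_\balpha\frac{\alpha^2}{2}\frac{\kappa(\balpha)[\beta\kappa(\balpha)/2-\sigma_\balpha]}{(\balpha\cdot\bzeta)^2}$ after substituting $k(\balpha)=\beta\kappa(\balpha)/2$. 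The residual $\bzeta$-independent constant equals $E_0=\omega(N+\beta\gamma)/2$, the ground-state energy of $\mathcal{H}_{\mathrm{CM}}^R$ associated with $\Psi_0=e^{-W}|_{\tau=0}$, which is absorbed by the $\omega N\tau$ in $W$ plus an overall energy reference.

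The main obstacle is the off-diagonal remainder
\[
S(\bzeta):=\sum_{\substack{\balpha,\balpha'\in R_+\\ \balpha\neq\balpha'}}k(\balpha)k(\balpha')\frac{\balpha\cdot\balpha'}{(\balpha\cdot\bzeta)(\balpha'\cdot\bzeta)},
\]
which survives with coefficient $-\frac{1}{2}$ after combining the off-diagonal pieces from $|\bnabla W|^2$ and from $\sum_\balpha k(\balpha)\frac{\balpha\cdot\bnabla W}{\balpha\cdot\bzeta}$. I must show that $S(\bzeta)\equiv 0$ (or at worst a constant absorbable into the energy zero); this is the classical identity underpinning the integrability of the rational Calogero-Moser model on $R$ and is equivalent to $\mathcal{H}_{\mathrm{CM}}^R\Psi_0=E_0\Psi_0$. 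A pair-by-pair argument using the closure of $R$ under $\sigma_\balpha$, the relation $\balpha\cdot\sigma_\balpha\balpha'=-\balpha\cdot\balpha'$, and the $W$-invariance of $\kappa$ supplies the required cancellation of cross terms with opposite signs. Once this identity is in place, the remaining terms collapse to $\partial_\tau U=-\mathcal{H}_{\mathrm{CM}}^R U$, completing the proof.
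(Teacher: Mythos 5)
Your proposal is correct and follows essentially the same route as the paper's proof: a direct computation conjugating the forward Kolmogorov equation by the variable change and $e^{-W}$, with every term assembling into the imaginary-time CM Schr\"odinger equation except the off-diagonal double sum, which both you and the paper dispose of via the standard root-system cancellation identity (the paper cites Lemma~4.4.6 of Dunkl--Xu). The only caveat is that this cancellation is not a naive pair-by-pair matching but requires summing over the full rank-two subsystem generated by each pair of roots; since you correctly isolate it as the one nontrivial step and the paper likewise only cites the lemma, the two arguments coincide.
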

\begin{proof}
Let us transform the KFE \eqref{DunklForward}. The derivatives in time and space in terms of the new variables are given by
\begin{eqnarray}
\frac{\partial}{\partial t}&=&\frac{1}{2\omega t}\frac{\partial}{\partial \tau}-\frac{1}{2t}\bzeta\cdot\bnabla^{(\zeta)},\nonumber\\
\frac{\partial}{\partial y_i}&=&\frac{1}{\sqrt{2\omega t}}\frac{\partial}{\partial \zeta_i}.\label{dertransformation1}
\end{eqnarray}
The differential operators that result from inserting the above in \eqref{DunklForward} are transformed by \eqref{transformationr} as follows:
\begin{eqnarray}
\rme^{W}\frac{\partial}{\partial \tau}\rme^{-W}&=&\frac{\partial}{\partial \tau}-\omega N,\nonumber\\
\rme^{W}\frac{\partial}{\partial \zeta_i}\rme^{-W}&=&\frac{\partial}{\partial \zeta_i}-\omega\zeta_i+\frac{\beta}{2}\sum_{\balpha\in R_+}\frac{\kappa(\balpha)}{\balpha\cdot\bzeta}\alpha_i,\nonumber\\
\rme^{W}\Delta^{(\zeta)}\rme^{-W}&=&\Delta^{(\zeta)}+2\Bigg(\frac{\beta}{2}\sum_{\balpha\in R_+}\frac{\kappa(\balpha)}{\balpha\cdot\bzeta}\balpha-\omega\bzeta\Bigg)\cdot\bnabla^{(\zeta)}+\omega^2\zeta^2\nonumber\\
&&-(\beta\gamma+N)\omega+\frac{\beta^2}{4}\sum_{\balpha\in R_+}\sum_{\bxi\in R_+}\frac{\kappa(\balpha)\kappa(\bxi)}{(\balpha\cdot\bzeta)(\bxi\cdot\bzeta)}\balpha\cdot\bxi\nonumber\\
&&-\frac{\beta}{2}\sum_{\balpha\in R_+}\frac{\kappa(\balpha)}{(\balpha\cdot\bzeta)^2}\alpha^2.\label{differentialtransformationsr}
\end{eqnarray}
Therefore, inserting \eqref{dertransformation1} and \eqref{differentialtransformationsr} successively in \eqref{DunklForward} yields
\begin{eqnarray}
\frac{\partial}{\partial \tau}U(\tau,\bzeta)&=&\frac{1}{2}\Delta^{(\zeta)}U(\tau,\bzeta)+\frac{\omega}{2}[\beta\gamma+N-\omega\zeta^2]U(\tau,\bzeta)\nonumber\\
&&+\frac{\beta}{2}\sum_{\balpha\in R_+}\frac{\alpha^2}{2}\frac{\kappa(\balpha)}{(\balpha\cdot\bzeta)^2}U(\tau,\sigma_{\balpha}\bzeta)\nonumber\\
&&\ -\frac{\beta^2}{4}\sum_{\balpha\in R_+}\sum_{\bxi\in R_+}\frac{\balpha\cdot\bxi}{2}\frac{\kappa(\balpha)\kappa(\bxi)}{(\balpha\cdot\bzeta)(\bxi\cdot\bzeta)}U(\tau,\bzeta).\label{withdoublesum}
\end{eqnarray}
The double sum in the bottom term of the equation above can be simplified because all the terms where $\balpha\neq\bxi$ cancel each other (see Lemma~4.4.6 of \cite{dunklxu}). By denoting the ground-state energy by $E_{\text{0}}^{R}=\omega(\beta\gamma+N)/2$ and using $\HH{CM}^{R}$ with $\bzeta$ instead of $\bx$, we finally obtain
\begin{equation}
-\frac{\partial}{\partial \tau}U(\tau,\bzeta)=[\HH{CM}^{R}-E_{\text{0}}^{R}]U(\tau,\bzeta),\label{done}
\end{equation}
as desired.
\end{proof}

{\it Remark:} this proof involves only straightforward calculations, with the notable exception of the step required to simplify the double sum in \eqref{withdoublesum}. This is perhaps the most important part of the proof, and it is not trivial. The simplest case is when $R$ is the root system of type $A$ (see, e.g., \cite{forrester10}, Proposition~11.3.1). Note also that Proposition~\ref{correspondencer} only requires that $\omega>0$. If $\omega=0$, there is no need to use the diffusion scaling \eqref{substitutionr}, and one may simply apply a similarity transformation on the Dunkl process to obtain the unconfined CM system on the same root system. 

\section{The steady state}

Having established Prop.~\ref{correspondencer}, the time evolution of the function $U(\tau,\bzeta)$ given by Eq.~\eqref{done} implies that after a long time, only the ground-state eigenfunction survives. This means that $f(\tau,\bzeta)$, as defined in Eq.~\eqref{transformationr}, must converge to a non-trivial form as $\tau\to\infty$; as detailed in Chapter~\ref{general_steady}, this is the sense in which the steady state is defined in the present context. More precisely, suppose that the eigenfunctions of $\HH{CM}^{R}$ are denoted by $\{\psi_\eta(\bzeta)\}_\eta$, and that their corresponding eigenvalues are denoted by $\{E_\eta^R\}_\eta$, where $\eta$ is a discrete multi-index (see \cite{khastgir00}). Furthermore, assume that the set of eigenfunctions $\{\psi_\eta(\bzeta)\}_\eta$ is a complete basis of the Hilbert space of this system. Therefore, the following expression holds in general:
\begin{equation}
U(\tau,\bzeta)=\sum_\eta \rme^{-\tau[E_\eta^R-E_0^R]}\psi_\eta(\bzeta).
\end{equation}
Because $E_\eta^R\geq E_0^R$, after a sufficiently long time $\tau$ the function $U(\tau,\bzeta)$ reduces to the ground-state wavefunction $\psi_0(\bzeta)$, given by
\begin{equation}
\psi_0(\bzeta)=a_0\rme^{-\omega \zeta^2/2}\prod_{\balpha\in R_+}|\balpha\cdot\bzeta|^{\beta\kappa(\balpha)/2},
\end{equation}
where $a_0$ is a normalization constant. In Chapter~\ref{general_steady}, the scaled distribution
\begin{equation}
f(t,\sqrt{\beta t}\bv)(\beta t)^{N/2}
\end{equation}
will be considered. This is equivalent to the diffusion-scaling transformation with $\omega = \beta$, $\tau=(\log t)/2\beta$ and $\bzeta=\bv/\sqrt{2}$. It follows that
\begin{multline}
f(t,\sqrt{\beta t}\bv)(\beta t)^{N/2}=\beta^{N/2}U\Big(\frac{\log t}{2\beta},\frac{\bv}{\sqrt{2}}\Big)\\
\times\exp\Big[-\frac{\beta v^2}{4}+\frac{\beta}{2}\sum_{\balpha\in R_+}\kappa(\balpha)\log |\balpha\cdot\bv/\sqrt{2}|\Big]\\
\stackrel{t\to\infty}{\longrightarrow}\beta^{N/2}\exp\Big[-\frac{\beta v^2}{4}+\frac{\beta}{2}\sum_{\balpha\in R_+}\kappa(\balpha)\log |\balpha\cdot\bv/\sqrt{2}|\Big]\psi_0(\bv/\sqrt{2})\\
=\frac{a_0\beta^{N/2}}{2^{\beta\gamma/2}}\exp\Big[-\beta\Big(\frac{v^2}{2}-\sum_{\balpha\in R_+}\kappa(\balpha)\log |\balpha\cdot\bv|\Big)\Big].\label{SteadyStateDistributionPreliminary}
\end{multline}
This means that the final distribution of a Dunkl process converges to a steady-state form if the final position $\by$ is rescaled as $\sqrt{\beta t}\bv$. Henceforth, the phrase \emph{steady state of a Dunkl process} (or interacting Brownian motion or Bessel process) will refer to the steady state of the \emph{scaled process}. The convergence to the steady state in finite time will be proved rigorously in Chapter~\ref{general_steady} without the help of the quantum mechanics of the Calogero-Moser systems, and using Dunkl operator theory.

\section{The freezing regime, peak sets and Fekete points}

It is known that the Calogero-Moser systems form spin chains at very low temperatures \cite{polychronakos92, polychronakos93, frahm93, yamamototsuchiya96}. These spin chains consist of a series of particles that are fixed in space and that interact through exchange operators that are defined using the reflection operators $\sigma_{\balpha}$. The main idea is to consider the Hamiltonian $\mathcal{H}_{\text{CM}}^R$ in the limit where $\beta\to\infty$. As in the previous section, setting $\omega=\beta$ gives the following leading-order terms in $\beta$:
\begin{equation}\label{CMLeadingPotential}
V_{\text{CM}}(\bx)=\frac{1}{4}\sum_{\balpha\in R_+}\frac{\alpha^2}{2}\frac{\kappa^2(\balpha)}{(\balpha\cdot\bx)^2}+\frac{1}{2}\sum_{i=1}^Nx_i^2.
\end{equation}
As $\beta\to\infty$, the kinetic energy term of the Hamiltonian becomes negligible, and the particles of the Calogero-Moser system freeze at the minima of the potential~\eqref{CMLeadingPotential}, which is clearly positive and convex. Consider now the argument of the exponential in the last line of Eq.~\eqref{SteadyStateDistributionPreliminary},

\begin{equation}
F_{\text{DP}}(\bv)=\frac{v^2}{2}-\sum_{\balpha\in R_+}\kappa(\balpha)\log |\balpha\cdot\bv|.
\end{equation}
A straightforward calculation yields
\begin{multline}
|\bnabla^{(v)}F_{\text{DP}}|^2=v^2-2\gamma+\sum_{\balpha\in R_+}\frac{\alpha^2\kappa^2(\balpha)}{(\balpha\cdot\bv)^2}=4V_{\text{CM}}(\bv/\sqrt{2})-2\gamma.
\end{multline}
This expression is obtained through the use of Lemma~4.4.6 in \cite{dunklxu}. Taking the gradient gives
\begin{equation}
(\bnabla^{(v)}F_{\text{DP}}\cdot\bnabla^{(v)})\bnabla^{(v)}F_{\text{DP}}=2\bnabla^{(v)}[V_{\text{CM}}(\bv/\sqrt{2})].
\end{equation}
From this relation it is deduced that, if $F_{\text{DP}}(\bv)$ achieves a minimum at $\bv=\bz$, then $V_{\text{CM}}(\bv)$ achieves a minimum at $\bv=\bz/\sqrt{2}$ (the fact that $F_{\text{DP}}(\bv)$ is a convex function will be proved in Chapter~\ref{general_freezing}.) In view of Eq.~\eqref{SteadyStateDistributionPreliminary}, it is expected that as $\beta\to\infty$ the steady state distribution of a Dunkl process converges to a series of delta functions located at the minima of $F_{\text{DP}}(\bv)$. In particular, it is known that the Calogero-Moser system of type $A$ (resp. type $B$) freezes to the roots of the Hermite polynomials \cite{frahm93} (resp. Laguerre polynomials \cite{yamamototsuchiya96}). Consequently, the interacting Brownian motions (resp. interacting Bessel processes) must freeze at these points as well.

The location at which Dunkl processes freeze is determined, then, by the solutions of the equation
\begin{equation}\label{EquationSolvedByThePeakSet}
\bv=\sum_{\balpha\in R_+}\frac{\kappa(\balpha)}{\balpha\cdot\bv}\balpha.
\end{equation}
The set of vectors $\{\bos_i\}_i$ which satisfy this equation is known as the peak set of the root system $R$ \cite{dunkl89B}. Originally, the peak set was defined as the set of vectors of unit norm that maximizes the function
\begin{equation}
\prod_{\balpha\in R_+}|\balpha\cdot\bos|^{\kappa(\balpha)}.
\end{equation}
In the case of Eq.~\eqref{EquationSolvedByThePeakSet}, the norm of the vectors of the peak set is $\sqrt{\gamma}$. Intuitively, the peak set represents the set of directions in the $N$-dimensional space where the Dunkl process is most likely to be found. More concretely, for the root systems of types $A$ and $B$ the peak sets are known as Fekete points \cite[p.~132]{deift00}. These represent the set of points on the real line where a system of $N$ charged particles must be located in order to minimize its potential energy. The particles interact with each other through a logarithmic potential and with a background potential, $(N-1)Q(x)$. Concretely, the Fekete points maximize the function
\begin{equation}
\prod_{1\leq i<j\leq N}(x_j-x_i)^2\rme^{-(N-1)\sum_{i=1}^N Q(x_i)},
\end{equation}
where $(N-1)Q(x)=x^2$ for the interacting Brownian motions. In the case of the interacting Bessel processes, $(N-1)Q(y)=y-(\nu+\frac{1}{2})\ln y$ with $y_i=x_i^2$ for every $i$. This electrostatic analogy is well-known, and it gives a physically meaningful interpretation to the peak set of $R$ (see \cite[p.~366-369]{szego} and \cite{fekete1923}).

Note that the derivations in this chapter are not rigorous, but they provide information on the behavior of Dunkl processes in the steady state and in the freezing limit. In particular, taking the freezing limit of the final expression in Eq.~\eqref{SteadyStateDistributionPreliminary} gives rise to inconsistencies in the time scale of the process, because the time $\tau=(\log t)/2\beta$ should not reach infinity if $\beta$ is infinitely large. However, the results are correct, and they will be proved for Dunkl processes in general in Chapters~\ref{general_steady} and \ref{general_freezing}.

\chapter{Steady-state in an arbitrary root system}\label{general_steady}

The objective of this chapter is to derive one of the main results of this work, which is the convergence of Dunkl processes in finite time to a steady state. For this purpose, the precise definition of the steady state of Dunkl processes is given as the steady state of their corresponding \emph{time-scaled} processes. Next, the precise statement of the result is given in Theorem~\ref{TheoremSteadyState}, and the proof of the theorem is given in the final two sections.

\section{Definition of the steady state}

Dunkl processes in general, and interacting Brownian motions and Bessel processes in particular, are diffusion processes without restoring forces. This means that their evolution is such that their probability density spreads out in space without stopping. Consequently, these processes do not have a steady state in the strict sense. However, as mentioned in the previous chapter, if the distribution function of the Dunkl process is scaled suitably, then it has a limit form (in time) given by Eq.~\eqref{SteadyStateDistributionPreliminary}, and the underlying scaled process has a steady state.

The concrete form of these statements is as follows. The SDE of a radial Dunkl process is given by \cite{demni09A}
\begin{equation}\label{RadialDunklSDE}
\ud \bX_t=\ud \bB_t+\frac{\beta}{2}\sum_{\balpha\in R_+}\frac{\kappa(\balpha)\balpha}{\balpha\cdot\bX_t}\ud t.
\end{equation}
This is a semi-martingale in $N$ dimensions with a drift term that forces the process away from the origin in the directions $\balpha\in R$. Clearly, this process does not achieve a steady state, as it diffuses without bounds. Its corresponding time-scaled process is defined by $\bY_t=\bX_t/\sqrt{\beta t}$. Using It{\^o}'s formula \cite{protter05}, the SDE for $\bY_t$ is 
\begin{equation}
\ud \bY_t=\frac{\ud \bB_t}{\sqrt{\beta t}}+\frac{1}{2t}\sum_{\balpha\in R_+}\frac{\kappa(\balpha)\balpha}{\balpha\cdot\bY_t}\ud t-\frac{\bY_t}{2t}\ud t.
\end{equation}
The last term is a restoring force term, which restrains the diffusion of the process. If, furthermore, the time variable is redefined as $\tau=(\log t)/\beta$, then $\ud t=\beta t\ud \tau$ and
\begin{equation}
\ud \bY_\tau=\ud \bB_\tau+\frac{\beta}{2}\Big[\sum_{\balpha\in R_+}\frac{\kappa(\balpha)\balpha}{\balpha\cdot\bY_\tau}-\bY_\tau\Big]\ud \tau.
\end{equation}
This is a process with a harmonic restoring force, which is very similar to an Ornstein-Uhlenbeck processes. It is known that these generalized Ornstein-Uhlenbeck processes are stationary \cite[Sec.~10]{roslervoit98}, so it is clear that a properly scaled radial Dunkl process achieves a steady state. Furthermore, because the jumps of non-radial Dunkl processes preserve the distance of the process to the origin just before and after the jump (because $\sigma_{\balpha}$ is an isometry), then non-radial Dunkl processes scaled by a factor $\sqrt{\beta t}$ are stationary as well. 

From the point of view of the process distribution $f(t,\by)$, the procedure is as follows. According to Eq.~\eqref{SteadyStateDistributionPreliminary}, if the probability distribution $f(t,\by)$ of a Dunkl process is scaled as
\begin{equation}
f_t(\bv):=f(t,\sqrt{\beta t}\bv)(\beta t)^{N/2},
\end{equation}
then the steady-state distribution is proportional to
\begin{equation}
f_{\text{ss}}(\bv):=\exp\Big[-\beta\Big(\frac{v^2}{2}-\sum_{\balpha\in R_+}\kappa(\balpha)\log |\balpha\cdot\bv|\Big)\Big].
\end{equation}
Let us verify this claim. Because the Dunkl process density $f(t,\by)$ obeys the FKE \eqref{DunklForward}, then the scaled distribution obeys the following FKE:
\begin{multline}\label{ScaledDunklFKE}
2t\frac{\partial}{\partial t}f_t(\bv)=\frac{1}{\beta}\Delta f_t(\bv)-\sum_{\balpha\in R_+}\kappa(\balpha)\Big[\frac{\balpha\cdot\bnabla f_t(\bv)}{\balpha\cdot\bv}-\frac{\alpha^2}{2}\frac{(1+\sigma_{\balpha})f_t(\bv)}{(\balpha\cdot\bv)^2}\Big]\\
+\bv\cdot\bnabla f_t(\bv)+Nf_t(\bv).
\end{multline}
This is because Dunkl operators behave like partial derivatives under uniform scalings, i.e., if an arbitrary function $g(\by)$ is scaled as $\tilde{g}(\bv)=g(\sqrt{\beta t}\bv)$, then
\begin{equation}
T_i^{(y)}g(\by)=\frac{1}{\sqrt{\beta t}}T_i^{(v)}\tilde{g}(\bv),
\end{equation}
and also because the time derivative of $f_t(\bv)$ is given by
\begin{multline}
\frac{\partial}{\partial t}f_t(\bv)=\frac{\partial}{\partial t}\Big[f(t,\sqrt{\beta t}\bv)(\beta t)^{N/2}\Big]\\
=\frac{N}{2t}f_t(\bv)+(\beta t)^{N/2}\frac{\partial}{\partial t}f(t,\by)\Big|_{\by=\sqrt{\beta t}\bv}+\frac{1}{2t}\bv\cdot\bnabla f_t(\bv),
\end{multline}
meaning that
\begin{equation}
\frac{\partial}{\partial t}f_t(\bv)=\frac{N}{2t}f_t(\bv)+\frac{1}{2t}\bv\cdot\bnabla f_t(\bv)+\frac{1}{2\beta t}\sum_{i=1}^NT_i^2f_t(\bv).
\end{equation}
Inserting the explicit form of the Dunkl Laplacian into the r.h.s. and moving the factor $2t$ to the l.h.s. gives Eq.~\eqref{ScaledDunklFKE}.

Let us show that $f_\text{ss}(\bv)$ is the steady-state solution of Eq.~\eqref{ScaledDunklFKE}, which amounts to showing that the r.h.s. of Eq.~\eqref{ScaledDunklFKE} vanishes when $f_t(\bv)$ is replaced by $f_\text{ss}(\bv)$. A series of straightforward calculations yield the following expressions:
\begin{IEEEeqnarray}{rCl}
\bv\cdot\bnabla f_\text{ss}(\bv)&=&-\beta(v^2-\gamma)f_\text{ss}(\bv),\\
\balpha\cdot\bnabla f_\text{ss}(\bv)&=&-\beta\Big[\balpha\cdot\bv-\sum_{\bzeta\in R_+}\frac{\kappa(\bzeta)\bzeta\cdot\balpha}{\bzeta\cdot\bv}\Big]f_\text{ss}(\bv),\\
\frac{\partial^2 f_\text{ss}}{\partial v_j^2}&=&\beta^2\Big[v_j-\sum_{\balpha\in R_+}\frac{\kappa(\balpha)\alpha_j }{\balpha\cdot\bv}\Big]^2f_\text{ss}(\bv)\nonumber\\
&&\qquad-\beta\Big[1+\sum_{\balpha\in R_+}\frac{\kappa(\balpha)\alpha_j^2 }{(\balpha\cdot\bv)^2}\Big]f_\text{ss}(\bv).
\end{IEEEeqnarray}
Then, the r.h.s. of Eq.~\eqref{ScaledDunklFKE} for $f_\text{ss}(\bv)$ is equal to
\begin{multline}
\beta\Big[v^2-2\gamma+\sum_{\balpha,\bzeta\in R_+}\frac{\kappa(\balpha)\kappa(\bzeta)\balpha\cdot\bzeta}{(\balpha\cdot\bv)(\bzeta\cdot\bv)}\Big]f_\text{ss}(\bv)-\Big[N+\sum_{\balpha\in R_+}\frac{\kappa(\balpha)\alpha^2 }{(\balpha\cdot\bv)^2}\Big]f_\text{ss}(\bv)\\
-\sum_{\balpha\in R_+}\kappa(\balpha)\Big[-\beta+\beta\sum_{\bzeta\in R_+}\frac{\kappa(\bzeta)\bzeta\cdot\balpha}{(\balpha\cdot\bv)(\bzeta\cdot\bv)}-\frac{\alpha^2}{(\balpha\cdot\bv)^2}\Big]f_\text{ss}(\bv)\\
-\beta(v^2-\gamma)f_\text{ss}(\bv) +Nf_\text{ss}(\bv).
\end{multline}
A close inspection of this expression reveals that, indeed, all terms cancel, meaning that the steady-state distribution of the process is obtained from normalizing $f_\text{ss}(\bv)$.

\section{Setting}

With the initial condition $\mu(\bx)$, the probability distribution of a Dunkl process using the Dunkl transform representation~\eqref{TransitionDensityTransform} is given by
\begin{equation}\label{ProbabilityDensityTransform}
f(t,\by)\ud\by=\frac{w_\beta(\by)}{c_\beta^2}\int_{\RR^N}\rme^{-t\xi^2/2}[V_\beta \rme^{\rmi \by\cdot\bxi}]\Big[\int_{\RR^N}[V_\beta \rme^{-\rmi \bx\cdot\bxi}]\mu(\bx)\ud\bx\Big] w_\beta(\bxi)\ud\bxi\ud\by.
\end{equation}
The main advantage of this expression is that all the integrals are well behaved due to the fact that $V_\beta \rme^{\rmi \bx\cdot\by}$ is bounded (see Eq.~\eqref{DunklKernelInTheTransform}). Recall the function
\begin{equation}\label{PotentialR}
F_R(\bv,\kappa)=\frac{v^2}{2}-\sum_{\balpha\in R_+} \kappa(\balpha)\log |\balpha\cdot\bv|,
\end{equation}
so that
\begin{equation}
f_\text{ss}(\bv)=\rme^{-\beta F_R(\bv,\kappa)}=\rme^{-\beta v^2/2}w_\beta(\bv)
\end{equation}
and
\begin{equation}\label{PartitionR}
z_\beta=\int_{\RR^N}\rme^{-\beta F_R(\bzeta,\kappa)}\ud\bzeta=\int_{\RR^N}f_\text{ss}(\bzeta)\ud\bzeta.
\end{equation}
With these expressions, we define the steady-state distribution by
\begin{equation}
f_R(\bv):=\frac{1}{z_\beta}f_\text{ss}(\bv)=\frac{1}{z_\beta}\rme^{-\beta F_R(\bv,\kappa)}.
\end{equation}
Also, assume that $\mu(\bx)$ is a probability distribution with finite second-order moments, i.e.,
\begin{equation}
\Big|\int_{\RR^N}x_i x_j \mu(\bx)\ud\bx\Big|<\infty.
\end{equation}
The mean and variance of $\mu(\bx)$ are defined by
\begin{IEEEeqnarray}{rCl}
\bar{\bx}_\mu&:=&\int_{\RR^N}\bx\mu(\bx)\ud\bx,\label{MuFirstMoments}\\
s_\mu^2&:=&\int_{\RR^N}|\bx-\bar{\bx}_\mu|^2\mu(\bx)\ud\bx\label{MuSecondMoment}
\end{IEEEeqnarray}
respectively. In this chapter, the main goal is to prove the following.
\begin{theorem}\label{TheoremSteadyState}
Dunkl processes relax to the scaled steady state distribution
\begin{equation}
f(t,\sqrt{\beta t}\bv)(\beta t)^{N/2}\ud\bv=f_R(\bv,\beta)\ud\bv[1+O(\eta\sqrt{\beta})+O(\epsilon^2)]
\end{equation}
whenever
\begin{equation}\label{GeneralRelaxationTime}
t\gg (s_\mu^2+\bar{x}_\mu^2)\max[1,\beta]
\end{equation}
with positive numbers $\eta$ and $\epsilon$ such that 
\begin{equation}\label{EquationFinalConditionsEpsilonEta}
\epsilon^2t\gg1\quad\text{and}\quad\eta^2 t \min[1,\beta]\gg 1.
\end{equation}
\end{theorem}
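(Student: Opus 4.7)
The natural starting point is the Dunkl-transform representation~\eqref{ProbabilityDensityTransform}, and the plan is first to exhibit $f_R(\bv,\beta)$ as the leading term after the diffusion scaling, then to quantify the remainder by Taylor-expanding the transformed initial distribution near zero frequency. Setting $\by=\sqrt{\beta t}\bv$, multiplying by $(\beta t)^{N/2}$, using the homogeneity $w_\beta(c\bx)=c^{\beta\gamma}w_\beta(\bx)$ and substituting $\bxi=\bzeta/\sqrt t$ collapses the explicit $t$-dependence into a single factor inside the Dunkl transform of $\mu$, yielding
\[
f(t,\sqrt{\beta t}\bv)(\beta t)^{N/2}=\frac{\beta^{(\beta\gamma+N)/2}w_\beta(\bv)}{c_\beta^{\,2}}\int_{\RR^N}\rme^{-\zeta^2/2}\,V_\beta[\rme^{\rmi\sqrt{\beta}\bv\cdot\bzeta}]\,\mu^\ast\!\Big(\frac{\bzeta}{\sqrt t}\Big)w_\beta(\bzeta)\,\ud\bzeta,
\]
with $\mu^\ast(\bzeta):=\int V_\beta[\rme^{-\rmi\bx\cdot\bzeta}]\mu(\bx)\,\ud\bx$ and $\mu^\ast(\mathbf 0)=1$.

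Replacing $\mu^\ast(\bzeta/\sqrt t)$ by $1$ isolates the leading term, which by Eq.~\eqref{GaussianIntegralDunklKernel} (with $\by=\rmi\sqrt\beta\bv$ and $\bz=\mathbf 0$) equals $c_\beta\rme^{-\beta v^2/2}$; combined with the Selberg-type scaling identity $z_\beta=\beta^{-(\beta\gamma+N)/2}c_\beta$, this recovers exactly $f_R(\bv,\beta)$. What remains is to bound the residual integral obtained when $\mu^\ast(\bzeta/\sqrt t)$ is replaced by $\mu^\ast(\bzeta/\sqrt t)-1$, relative to $f_R(\bv,\beta)$. For this, Prop.~\ref{PositivityVBeta} yields the integral representation $V_\beta\rme^{-\rmi\bx\cdot\bzeta/\sqrt t}-1=\int(\rme^{-\rmi\bxi\cdot\bzeta/\sqrt t}-1)\,\ud\mu_{\bx}(\bxi)$ with $|\bxi|\le|\bx|$ on $\co(W\bx)$, and a second-order Taylor expansion of the phase together with Cauchy--Schwarz on $\mu$ gives the pointwise estimates
\[
|\mu^\ast(\bzeta/\sqrt t)-1|\le\frac{|\bzeta|\sqrt{s_\mu^2+\bar x_\mu^2}}{\sqrt t},\qquad|\mu^\ast(\bzeta/\sqrt t)-1-L(\bzeta)|\le\frac{|\bzeta|^2(s_\mu^2+\bar x_\mu^2)}{2t},
\]
where $L(\bzeta)$ denotes the linear-in-$\bzeta$ Taylor term.

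The delicate part is the linear contribution, which does not vanish upon integration against the Dunkl kernel. The trick is to rewrite each factor $\zeta_i$ using the intertwining relation: $\zeta_i V_\beta\rme^{\rmi\sqrt\beta\bv\cdot\bzeta}=(\rmi\sqrt\beta)^{-1}T_i^{(v)}V_\beta\rme^{\rmi\sqrt\beta\bv\cdot\bzeta}$, which allows $T_i^{(v)}$ to be pulled outside the $\bzeta$-integral and applied directly to the leading Gaussian $c_\beta\rme^{-\beta v^2/2}$ (on which the Dunkl operator reduces to $\partial/\partial v_i$ by $W$-invariance). Each derivative supplies a factor of $\beta$, so after dividing by $\sqrt\beta$ the net relative correction is of order $\sqrt{\beta/t}\,\sqrt{s_\mu^2+\bar x_\mu^2}\,|\bv|$: this is the origin of the $O(\eta\sqrt\beta)$ error, with $\eta$ playing the role of the effective first-moment cutoff scale constrained by $\eta^2 t\min[1,\beta]\gg 1$. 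The quadratic remainder, combined with the uniform bound $|V_\beta\rme^{\rmi\sqrt\beta\bv\cdot\bzeta}|\le 1$ from~\eqref{DunklKernelInTheTransform} and a second Gaussian moment in $\bzeta$, gives the $O(\epsilon^2)$ error, while the outer region $|\bzeta|>\epsilon\sqrt t$ is suppressed by the Gaussian tail once $\epsilon^2 t\gg 1$.

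The main obstacle is the precise tracking of the $\sqrt\beta$ enhancement in the first-moment correction---naively the first-order Taylor term would be only $O(1/\sqrt t)$, but the intrinsic frequency $\sqrt\beta$ of $V_\beta\rme^{\rmi\sqrt\beta\bv\cdot\bzeta}$ propagates through the integration to enhance it---together with the simultaneous balancing of the two error sources under the sharp time condition $t\gg(s_\mu^2+\bar x_\mu^2)\max[1,\beta]$. All remaining steps reduce to routine Gaussian estimates, the uniform bound~\eqref{EquationVBoundedness}, and the intertwining identity~\eqref{GaussianIntegralDunklKernel}.
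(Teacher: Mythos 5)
Your proposal is correct and follows the same overall architecture as the paper's proof: the Dunkl-transform representation \eqref{ProbabilityDensityTransform}, the diffusion scaling, identification of the leading term through the reproducing formula \eqref{GaussianIntegralDunklKernel}, a first-order correction carrying the $O(\eta\sqrt\beta)$ error and a second-order remainder carrying the $O(\epsilon^2)$ error, with inner/outer splits at $\epsilon\sqrt t$ and $\eta\sqrt t$. Two of your technical devices genuinely differ, though. First, where you Taylor-expand $\mu^\ast(\bzeta/\sqrt t)$ with an explicit remainder controlled via R\"osler's measure (Prop.~\ref{PositivityVBeta}), the paper invokes a mean-value representation $\mathcal{I}(t,\bzeta)=V_\beta[\cos(\bu_{\mathrm r}\cdot\bzeta)-\rmi\sin(\bu_{\mathrm i}\cdot\bzeta)]$, pins the effective vectors down only at $\bzeta=\bzero$ through the moments of $\mu$, evaluates the resulting $V_\beta\cosh$ and $V_\beta\sinh$ terms exactly with \eqref{GaussianIntegralDunklKernel}, and only then linearizes them (the cosine piece via the auxiliary isotropic function \eqref{ApproximationVCosine}, the sine piece via Lemma~\ref{LemmaFirstOrderV} applied in the $\bv$ variable). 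Your version is arguably tighter since the remainder is bounded pointwise from the start and no auxiliary approximant is needed. Second, your identity $\zeta_iV_\beta\rme^{\rmi\sqrt\beta\bv\cdot\bzeta}=(\rmi\sqrt\beta)^{-1}T_i^{(v)}V_\beta\rme^{\rmi\sqrt\beta\bv\cdot\bzeta}$, with $T_i^{(v)}$ then falling on the $W$-invariant Gaussian, is a clean substitute for the paper's route and produces the same $\sqrt{\beta/t}$ enhancement; note that your $L(\bzeta)$ still implicitly contains Lemma~\ref{LemmaFirstOrderV} through $V_\beta^{(x)}[\bx\cdot\bzeta]$, which is what generates the parallel/perpendicular asymmetry behind \eqref{AttentionSlowRelaxation}. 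Two cautions: the ``second Gaussian moment in $\bzeta$'' must be taken only over the inner region $\zeta<\epsilon\sqrt t$ (over all of $\RR^N$ it equals $\beta\gamma+N$ and would degrade the bound to $O(\beta/t)$); and the paper's final step --- converting the pointwise estimate into $\langle h\rangle_t=\langle h\rangle[1+O(\eta\sqrt\beta)+O(\epsilon^2)]$ by cutting the $\bv$-integral at $\eta\sqrt t$ and discarding the Gaussian tail, which is where $\eta^2t\min[1,\beta]\gg1$ actually enters --- is only gestured at in your sketch and would need to be written out.
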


In Sec.~\ref{ProofTheoremSteadyState} we prove Thm.~\ref{TheoremSteadyState}, in which the evaluation of the integrals in Eq.~\eqref{ProbabilityDensityTransform} is important. The derivations given in Sec.~\ref{IntegralOverU} are used for the calculations in the proof.

\section{Setup for the proof of Theorem~\ref{TheoremSteadyState}}\label{IntegralOverU}

With the substitutions $\by=\sqrt{\beta t}\bv$, $\bxi=\bzeta/\sqrt{t}$ and $\bx=\bu\sqrt{t}$, Eq.~\eqref{ProbabilityDensityTransform} reads
\begin{multline}\label{ScaledProbabilityDensityTransform}
f(t,\sqrt{\beta t}\bv)(\beta t)^{N/2}\ud\bv=\frac{w_\beta(\bv)}{z_\beta c_\beta}\int_{\RR^N}\rme^{-\zeta^2/2}[V_\beta \rme^{\rmi \sqrt{\beta}\bv\cdot\bzeta}]\\
\times\Big[\int_{\RR^N}[V_\beta \rme^{-\rmi \bu\cdot\bzeta}]t^{N/2}\mu(\sqrt{t}\bu)\ud\bu\Big] w_\beta(\bzeta)\ud\bzeta\ud\bv.
\end{multline}
The first quantity that must be considered is the integral over $\bu$, which is given the following notation,
\begin{equation}
\mathcal{I}(t,\bzeta):=\int_{\RR^N}[V_\beta \rme^{-\rmi \bu\cdot\bzeta}]t^{N/2}\mu(\sqrt{t}\bu)\ud\bu.
\end{equation}
By the mean value theorem, there exist vectors $\bu_\text{r}$ and $\bu_\text{i}$ such that 
\begin{equation}\label{FirstApproximationI1}
\mathcal{I}(t,\bzeta)=V_\beta[\cos(\bu_\text{r}\cdot\bzeta)-\rmi\sin(\bu_\text{i}\cdot\bzeta)].
\end{equation}
In general, $\bu_\text{r}$ and $\bu_\text{i}$ are functions of $\bzeta$. From the relations
\begin{IEEEeqnarray}{rCl}
&\int_{\RR^N}\bu t^{N/2}\mu(\sqrt{t}\bu)\ud\bu=\bar{\bx}_\mu/\sqrt{t},&\\
&\int_{\RR^N}(\bu-\bar{\bx}_\mu/\sqrt{t})^2t^{N/2}\mu(\sqrt{t}\bu)\ud\bu=s_\mu^2/t,&
\end{IEEEeqnarray}
it follows that $\mathcal{I}(t,\bzeta)$ satisfies the following equations when $\bzeta=\bzero$,
\begin{IEEEeqnarray}{rCl}
T_j\mathcal{I}(t,\bzero)&=&-\rmi\frac{\bar{x}_{\mu,j}}{\sqrt{t}},\\
\sum_{j=1}^NT_j^2\mathcal{I}(t,\bzero)&=&-\frac{1}{t}(s_\mu^2+\bar{x}_\mu^2).
\end{IEEEeqnarray}
Inserting Eq.~\eqref{FirstApproximationI1} in these two equations gives
\begin{IEEEeqnarray}{rCl}
T_jV_\beta[\cos(\bu_\text{r}\cdot\bzeta)-\rmi\sin(\bu_\text{i}\cdot\bzeta)]|_{\bzeta=\bzero}&=&-\rmi (\bu_\text{i})_j|_{\bzeta=\bzero},\\
\sum_{j=1}^NT_j^2V_\beta[\cos(\bu_\text{r}\cdot\bzeta)-\rmi\sin(\bu_\text{i}\cdot\bzeta)]|_{\bzeta=\bzero}&=&-u_\text{r}^2|_{\bzeta=\bzero}.
\end{IEEEeqnarray}
Therefore, one has
\begin{IEEEeqnarray}{rCl}
\bu_\text{I}&:=&\bu_\text{i}|_{\bzeta=\bzero}=\bar{\bx}_\mu/\sqrt{t},\label{EquationForUI}\\
u_\text{R}^2&:=&u_\text{r}^2|_{\bzeta=\bzero}=(s_\mu^2+\bar{x}_\mu^2)/t.\label{EquationForUR}
\end{IEEEeqnarray}
Thus, for $\zeta^2\ll t/(s_\mu^2+\bar{x}_\mu^2)$, the integral $\mathcal{I}(t,\bzeta)$ is approximated by
\begin{equation}\label{EquationApproximationForI}
V_\beta[\cos(\bu_\text{R}\cdot\bzeta)-\rmi\sin(\bu_\text{I}\cdot\bzeta)]+O(\zeta^2/t).
\end{equation}
The term of $O[\zeta^2/t]$ is included because this approximation is obtained from the behavior of $\mathcal{I}(t,\bzeta)$ near $\bzeta=\bzero$, and only the action of first-order Dunkl operators is reproduced exactly (only the norm of $\bu_\text{R}$ is specified).

For the estimation of the relaxation time $t_\epsilon$, it will be necessary to consider the linear approximation of the function $V_\beta \sinh(\bx\cdot\by)$. The effect of the intertwining operator on linear functions is given in the following lemma. Here, an orthonormal basis $\{\bphi_i\}_{i=1}^N$ of $\RR^N$ is defined such that the first $d_R$ vectors belong to the linear envelope of $R$ and the last $N-d_R$ vectors are orthogonal to it.
\begin{lemma}\label{LemmaFirstOrderV}
For linear polynomials, $V_\beta$ is given by
\begin{equation}
V_\beta\bx\cdot\by=\frac{1}{1+\beta\gamma/d_R}\Big[\bx\cdot\by+\frac{\beta\gamma}{d_R}\sum_{i=d_R+1}^{N}(\bx\cdot\bphi_i)(\by\cdot\bphi_i)\Big].\label{FirstOrderVNotFullRank}
\end{equation}
\end{lemma}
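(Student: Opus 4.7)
The approach is to exploit the two defining properties of $V_\beta$: the intertwining relation $T_i V_\beta = V_\beta \partial_i$ together with $V_\beta 1 = 1$, and the $W$-equivariance $V_\beta = \rho^T V_\beta \rho$. Since $V_\beta$ is linear and preserves the degree of homogeneous polynomials, $L(\bx,\by) := V_\beta(\bx\cdot\by)$ must be a bilinear form in $(\bx,\by)$, and the $W$-equivariance upgrades to $L(\rho\bx,\rho\by) = L(\bx,\by)$ for every $\rho \in W$. The plan is to use this invariance to determine the structure of $L$ up to a single scalar, and then fix that scalar by taking a trace.

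The second step is to decompose $\RR^N = V \oplus V^\perp$ with $V = \mathrm{span}(R)$, so $\dim V = d_R$ and $W$ acts trivially on $V^\perp$. Because $W$ has no non-trivial invariant vectors in $V$, the $W$-invariance of $L$ forces the cross terms between $V$ and $V^\perp$ to vanish, and Schur's lemma applied to the irreducible $W$-action on $V$ (which holds for the irreducible root systems $A$ and $B$ relevant to the paper) shows that $L|_{V\times V}$ is a scalar multiple $c$ of the standard inner product. Hence $L(\bx,\by) = c\,\bx_V\cdot\by_V + K(\bx_{V^\perp}, \by_{V^\perp})$ for some bilinear $K$ on $V^\perp$.

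To pin down $K$ and $c$, I would substitute this ansatz into the relation $T_{\bphi_j} L(\bx,\by) = V_\beta(\bphi_j\cdot\by) = \bphi_j\cdot\by$. For $\bphi_j \in V^\perp$, every root is perpendicular to $\bphi_j$, so $T_{\bphi_j}$ collapses to the usual directional derivative, and the relation immediately yields $K(\bx_{V^\perp}, \by_{V^\perp}) = \bx_{V^\perp}\cdot\by_{V^\perp}$. For $\bphi_j \in V$, the difference part of $T_{\bphi_j}$ annihilates $K$ because $\sigma_\balpha$ fixes $V^\perp$, while on the $V$-component the elementary identity $(1-\sigma_\balpha)(\bx_V\cdot\by_V) = 2(\balpha\cdot\bx)(\balpha\cdot\by)/\balpha^2$ reduces the relation to
\[c\,P\by_V = \by_V, \qquad P := I_V + 2\sum_{\balpha\in R_+}\frac{k(\balpha)}{\balpha^2}\balpha\balpha^T.\]
Since $P$ is self-adjoint and commutes with the irreducible $W$-action on $V$, Schur's lemma forces $P = \lambda I_V$; taking the trace (using $\mathrm{tr}_V(\balpha\balpha^T) = \balpha^2$) gives $\lambda d_R = d_R + 2\sum_{\balpha\in R_+}k(\balpha) = d_R + \beta\gamma$, hence $c = 1/(1+\beta\gamma/d_R)$.

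Finally, rewriting $\bx_V\cdot\by_V = \bx\cdot\by - \sum_{i=d_R+1}^N(\bx\cdot\bphi_i)(\by\cdot\bphi_i)$ and combining the pieces gives exactly the stated formula \eqref{FirstOrderVNotFullRank}. The main obstacle is the careful justification of the two Schur-type reductions (that $W$-invariant bilinear forms on $V$ form a one-dimensional space, and that $P$ has a single eigenvalue on $V$); both are automatic when $R$ is irreducible, but would require additional bookkeeping for reducible root systems with different multiplicities on distinct components. Aside from that, the argument is a compact exploitation of the two defining properties of $V_\beta$.
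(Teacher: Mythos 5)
Your proof is correct and reaches the stated formula by the same overall strategy as the paper: apply the intertwining relation to $\bx\cdot\by$, reduce to showing that the operator $I+\beta\sum_{\balpha\in R_+}\kappa(\balpha)\balpha\balpha^T/\alpha^2$ acts as the scalar $1+\beta\gamma/d_R$ on $\spn(R)$, and then reassemble the parallel and perpendicular components. The difference lies in how that key scalar identity is obtained. The paper rewrites the sum over $R_+$ as a sum over the group $W$ (with explicit orbit decomposition and double-counting coefficients) and invokes the great orthogonality theorem for the reflection representation; you instead observe that the operator $P$ commutes with the irreducible $W$-action on $\spn(R)$, apply Schur's lemma, and fix the eigenvalue by a trace. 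These are equivalent pieces of representation theory, but your route is shorter, avoids the index bookkeeping, and — usefully — makes explicit the hypothesis both arguments actually need, namely irreducibility of the reflection representation on $\spn(R)$ (equivalently, that a reducible root system with distinct multiplicities on distinct components would break the ``multiple of the identity'' conclusion). The paper's proof silently relies on the same hypothesis when it applies the orthogonality theorem to a ``faithful and reduced'' representation. Your additional opening step (Schur's lemma forcing $L|_{V\times V}$ to be a multiple of the inner product) is not needed in the paper's version, which derives the matrix $M_\beta^{-1}$ directly from the intertwining relation without an ansatz; either order of operations is fine.
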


\begin{proof}
Let $V_\beta$ be represented in the linear case by a matrix, $[M_\beta]_{ij}=m_{ij}$, such that
\begin{equation}
V_\beta\bx\cdot\by=\sum_{1\leq i,j\leq N}x_im_{ij}y_j.
\end{equation}
By the defining property of $V_\beta$, Eq.~\eqref{EquationVDefinition}, the relationship
\begin{equation}
y_i=\sum_{j=1}^N m_{ij}y_j+\frac{\beta}{2}\sum_{\balpha\in R_+}\alpha_i\frac{\kappa(\balpha)}{\balpha\cdot\bx}(1-\sigma_{\balpha})\sum_{1\leq i,j\leq N}x_lm_{lj}y_j
\end{equation}
holds. Rewritten in terms of vector and matrices, this equation is equivalent to
\begin{equation}
(1-\sigma_{\balpha})\bx^TM_\beta \by=\Big(2\frac{\balpha\cdot\bx}{\alpha^2}\balpha\Big)^T M_\beta \by=2\frac{\balpha\cdot\bx}{\alpha^2}\balpha^TM_\beta\by,
\end{equation}
which yields
\begin{equation}
\by=M_\beta\by+\beta\sum_{\balpha\in R_+}\kappa(\balpha)\frac{\balpha\balpha^T}{\alpha^2}M_\beta\by=\Bigg[I+\beta\sum_{\balpha\in R_+}\kappa(\balpha)\frac{\balpha\balpha^T}{\alpha^2}\Bigg]M_\beta \by.\label{FirstOrderV}
\end{equation}
The problem is reduced, then to calculating 
\begin{equation}\label{AlmostTheMWeWant}
M_\beta=\Bigg[I+\beta\sum_{\balpha\in R_+}\kappa(\balpha)\frac{\balpha\balpha^T}{\alpha^2}\Bigg]^{-1}.
\end{equation}

To calculate $M_\beta$, we need to calculate the sum over $\balpha$, a task that can be accomplished easily once the sum is rewritten as a sum over a group. If $\by$ is orthogonal to $\spn (R)$, then the sum over $\balpha$ is equal to zero and the result is trivial; therefore, we assume that $\by\in\spn(R)$. First, we separate terms with different multiplicities: let $n_R$ denote the number of different multiplicities assigned to the root system R, and denote by $\{\bxi_i\}_{i=1}^{n_R}$ a collection of roots such that $\kappa(\bxi_i)\neq\kappa(\bxi_j)$ for $i\neq j$. Then, the following holds:
\begin{equation}
\sum_{\balpha\in R_+}\kappa(\balpha)\frac{\balpha\balpha^T}{\alpha^2}=\sum_{i=1}^{n_R}\frac{\kappa(\bxi_i)}{|\bxi_i|^2}\sum_{\balpha\in R_+\cap W\bxi_i}\balpha\balpha^T.
\end{equation}
Here, $W\bxi_i=\{\rho\bxi_i: \rho\in W\}$ denotes the orbit of $\bxi_i$ on $W$. To obtain this equality it suffices to note that the terms in the sum with equal multiplicities must belong to the same orbit in $W$. Because $W$ is a reflection group, all of its elements are isometries, meaning that the roots of $R$ in the orbit of $\bxi_i$ must have the same norm. Then, the squared norm of $\alpha$ can be taken out of the second sum as the term $|\bxi_i|^2$. Because the vectors $\balpha$ in the second sum are elements of $W\bxi_i$, we can rewrite the sum as
\begin{equation}
\sum_{\balpha\in R_+}\kappa(\balpha)\frac{\balpha\balpha^T}{\alpha^2}=\sum_{i=1}^{n_R}\frac{\kappa(\bxi_i)}{|\bxi_i|^2}\frac{|R_+\cap W\bxi_i|}{|W|}\sum_{\rho\in W}(\rho\bxi_i)(\rho\bxi_i)^T,\label{ProjectorSumUnsolved}
\end{equation}
where the coefficient $|R_+\cap W\bxi_i|/|W|$ is included to account for double counting in the sum over $\rho$. Denote by $[\rho]_{ij}$ the $ij$th component of a faithful and reduced representation of the reflection group $W$. Then, the $jl$th component of the matrix representation of the sum over $\rho$ is
\begin{equation}
\sum_{\rho\in W}[\rho\bxi_i]_j[\rho\bxi_i]_l=\sum_{\rho\in W}\sum_{n,n^\prime=1}^{d_R}[\rho]_{jn}[\bxi_i]_n[\rho]_{ln^\prime}[\bxi_i]_{n^\prime}=\sum_{n,n^\prime=1}^{d_R}[\bxi_i]_{n^\prime}[\bxi_i]_n\sum_{\rho\in W}[\rho]_{jn}[\rho]_{ln^\prime}.
\end{equation}
By the great orthogonality theorem \cite{inuitanabeonodera96},
the sum over $\rho$ is given by
\begin{equation}
\sum_{\rho\in W}[\rho]_{jn}[\rho]_{ln^\prime}=\frac{|W|}{d_R}\delta_{jl}\delta_{nn^\prime},
\end{equation}
which in turn yields
\begin{equation}
\sum_{\rho\in W}[\rho\bxi_i]_j[\rho\bxi_i]_l=\sum_{n,n^\prime=1}^{d_R}[\bxi_i]_{n^\prime}[\bxi_i]_n\frac{|W|}{d_R}\delta_{jl}\delta_{nn^\prime}=\frac{|W||\bxi_i|^2}{d_R}\delta_{jl}.
\end{equation}
Inserting in \eqref{ProjectorSumUnsolved} gives
\begin{equation}
\sum_{\balpha\in R_+}\kappa(\balpha)\frac{\balpha\balpha^T}{\alpha^2}=\sum_{i=1}^{n_R}\kappa(\bxi_i)\frac{|R_+\cap W\bxi_i|}{d_R}I=\frac{I}{d_R}\sum_{\balpha\in R_+}\kappa(\balpha)=\frac{\gamma}{d_R}I.
\end{equation}
This result implies that, for $\by\in\spn(R)$,
\begin{equation}
M_\beta=I/(1+\beta\gamma/d_R).
\end{equation}

The complete form of $M_\beta$ is obtained by noticing that all vectors can be decomposed into the component that belongs to $\spn(R)$ and its orthogonal component:
\begin{equation}
\by=\Big[\by-\sum_{i=d_R+1}^{N}(\bphi_i\cdot\by)\bphi_i\Big]+\sum_{i=d_R+1}^{N}(\bphi_i\cdot\by)\bphi_i.
\end{equation}
Then, one has
\begin{IEEEeqnarray}{rCl}
&&\Big[I+\beta\sum_{\balpha\in R_+}\kappa(\balpha)\frac{\balpha \balpha^T}{\alpha^2}\Big]\Big[\Big(\by-\sum_{i=d_R+1}^{N}(\bphi_i\cdot\by)\bphi_i\Big)+\sum_{i=d_R+1}^{N}(\bphi_i\cdot\by)\bphi_i\Big]\nonumber\\
&&=\Big[\Big(1+\frac{\beta\gamma}{d_R}\Big)I-\frac{\beta\gamma}{d_R}\sum_{i=d_R+1}^{N}\bphi_i\bphi_i^T\Big]\by=M_\beta^{-1}\by,
\end{IEEEeqnarray}
and $M_\beta$ is given by 
\begin{equation}
M_\beta=\frac{1}{1+\beta\gamma/d_R}\Big[I+\frac{\beta\gamma}{d_R}\sum_{i=d_R+1}^{N}\bphi_i\bphi_i^T\Big]
\end{equation}
because
\begin{equation}
\Big[\Big(1+\frac{\beta\gamma}{d_R}\Big)I-\frac{\beta\gamma}{d_R}\sum_{i=1+d_R}^{N}\bphi_i\bphi_i^T\Big]\frac{1}{1+\beta\gamma/d_R}\Big[I+\frac{\beta\gamma}{d_R}\sum_{i=d_R+1}^{N}\bphi_i\bphi_i^T\Big]=I,
\end{equation}
which completes the proof.
\end{proof}
Denoting by $\bx_\perp$ the component of $\bx$ that is orthogonal to $\spn(R)$ and by $\bx_\parallel$ the component of $\bx$ which belongs to $\spn(R)$, one may write
\begin{equation}\label{LinearApproximationVSin}
V_\beta \bx\cdot\by = \frac{\bx_{\parallel}\cdot\by_\parallel}{1+\beta\gamma/d_R}+\bx_{\perp}\cdot\by_\perp.
\end{equation}
Unless otherwise noted, the subscripts $\perp$ and $\parallel$ in this equation will carry the meaning described here for the rest of the text. This last equation implies that the intertwining operator has no effect on linear functions of vectors that are orthogonal to the root system.

\section{Proof of Theorem~\ref{TheoremSteadyState}}\label{ProofTheoremSteadyState}

The first objective is to calculate the approximate value of the integral
\begin{equation}
\mathcal{J}(t,\bv):=\frac{1}{c_\beta}\int_{\RR^N}\rme^{-\zeta^2/2} \mathcal{I}(t,\bzeta) [V_\beta \rme^{\rmi \sqrt{\beta}\bv\cdot\bzeta}] w_\beta(\bzeta)\ud\bzeta,
\end{equation}
for large values of $t$. This expression is the inverse Dunkl transform of $\rme^{-\zeta^2/2} \mathcal{I}(t,\bzeta)$ evaluated at $\sqrt{\beta}\bv$ (see Eq.~\eqref{InverseDunklTransform}). In order to use the approximated form of $ \mathcal{I}(t,\bzeta) $, the positive variable $\epsilon$ is chosen with the assumption that $\sqrt{t}\epsilon\gg 1$. Then, $\mathcal{J}(t,\bv)$ is divided into two integrals. The first one, denoted $\mathcal{J}_1$, is taken over $\zeta<\sqrt{t}\epsilon$, while the second one, denoted $\mathcal{J}_2$ is taken over  $\zeta\geq\sqrt{t}\epsilon$. For $\mathcal{J}_2$, one has the following behavior:
\begin{multline}\label{EquationBoundForOuterIntegral}
|\mathcal{J}_2|\leq\frac{1}{c_\beta}\int_{\zeta\geq\sqrt{t}\epsilon}\rme^{-\zeta^2/2}  w_\beta(\bzeta)\ud\bzeta=\frac{C_\mathcal{J}}{c_\beta}\int_{\sqrt{t}\epsilon}^\infty\rme^{-\zeta^2/2} \zeta^{\beta\gamma+N-1}\ud\zeta\\
=\frac{C_\mathcal{J}2^{(\beta\gamma+N-2)/2)}}{c_\beta}\int_{t\epsilon^2/2}^\infty\rme^{-z} z^{(\beta\gamma+N-2)/2}\ud z=O(\rme^{-t\epsilon^2/2}(t\epsilon^2)^{(\beta\gamma+N-2)/2}).
\end{multline}
Here, $C_\mathcal{J}$ denotes the angular part of the integral, the substitution $z=\zeta^2/2$ was carried out in the second line, and the final step is obtained from integration by parts. Note that the first inequality follows from the fact that $\mathcal{I}(t,\bzeta)$ is bounded,
\begin{equation}
|\mathcal{I}(t,\bzeta)|\leq\int_{\RR^N}|V_\beta \rme^{-\rmi \bu\cdot\bzeta}|t^{N/2}\mu(\sqrt{t}\bu)\ud\bu\leq\int_{\RR^N}\mu(\bx)\ud\bx=1.
\end{equation}

Integrals of this form appear repeatedly in this derivation. Thus, it is convenient to have the general expression
\begin{equation}\label{EquationIntegralOutside}
\frac{1}{c_\beta}\int_{\zeta\geq\sqrt{t}\epsilon}\rme^{-\zeta^2/2} g(\bzeta) w_\beta(\bzeta)\ud\bzeta=O(\rme^{-t\epsilon^2/2}(t\epsilon^2)^{(\beta\gamma+N+r-2)/2})
\end{equation}
provided that $g(\bzeta)\sim \zeta^r$ for large $\zeta$, which is derived using the procedure that leads to Eq.~\eqref{EquationBoundForOuterIntegral}. 

Using Eq.~\eqref{EquationApproximationForI}, one may separate the integral over the region $\zeta<\epsilon\sqrt{t}$, $\mathcal{J}_1$, into the part that corresponds to the cosine ($\mathcal{J}_{\cos}$), the part that corresponds to the sine ($\mathcal{J}_{\sin}$), and the part that corresponds to the lower-order terms ($\mathcal{J}_\text{o}$). After extending the domain of integration to $\RR^N$ at the expense of the error term from Eq.~\eqref{EquationIntegralOutside} with $r=0$, the first integral reads
\begin{multline}
\mathcal{J}_{\cos}=\frac{1}{2c_\beta}\int_{\RR^N}\rme^{-\zeta^2/2}V_\beta(\rme^{\rmi \bu_\text{R}\cdot\bzeta}+\rme^{-\rmi\bu_\text{R}\cdot\bzeta})V_\beta\rme^{\rmi\sqrt{\beta}\bv\cdot\bzeta}w_\beta(\bzeta)\ud\bzeta\\
+O(\rme^{-t\epsilon^2/2}(t\epsilon^2)^{(\beta\gamma+N-2)/2}).
\end{multline}
Using Eq.~\eqref{GaussianIntegralDunklKernel}, this integral can be evaluated immediately, yielding
\begin{equation}
\mathcal{J}_{\cos}=\rme^{-(u_\text{R}^2+\beta v^2)/2}V_\beta\cosh(\sqrt{\beta} \bu_\text{R}\cdot\bv)
+O(\rme^{-t\epsilon^2/2}(t\epsilon^2)^{(\beta\gamma+N-2)/2}).
\end{equation}
The integral $\mathcal{J}_{\sin}$ can be evaluated in a similar manner as
\begin{equation}
\mathcal{J}_{\sin}=\rme^{-(u_\text{I}^2+\beta v^2)/2}V_\beta\sinh(\sqrt{\beta} \bu_\text{I}\cdot\bv)
+O(\rme^{-t\epsilon^2/2}(t\epsilon^2)^{(\beta\gamma+N-2)/2}).
\end{equation}
The final integral $\mathcal{J}_\text{o}$ is also calculated using Eq.~\eqref{EquationIntegralOutside}:
\begin{multline}
\mathcal{J}_\text{o}=\frac{O[\epsilon^2]}{c_\beta}\int_{\zeta<\sqrt{t}\epsilon}\rme^{-\zeta^2/2}V_\beta\rme^{\rmi\sqrt{\beta}\bv\cdot\bzeta}w_\beta(\bzeta)\ud\bzeta\\
=O[\epsilon^2]\Big[\frac{1}{c_\beta}\int_{\RR^N}\rme^{-\zeta^2/2}V_\beta\rme^{\rmi\sqrt{\beta}\bv\cdot\bzeta}w_\beta(\bzeta)\ud\bzeta+O(\rme^{-t\epsilon^2/2}(t\epsilon^2)^{(\beta\gamma+N-2)/2})\Big]\\
=O[\epsilon^2][\rme^{-\beta v^2/2}+O(\rme^{-t\epsilon^2/2}(t\epsilon^2)^{(\beta\gamma+N-2)/2})].
\end{multline}

For a very large value of $t\epsilon^2$, the term $O(\rme^{-t\epsilon^2/2}(t\epsilon^2)^{(\beta\gamma+N-2)/2})$ can be neglected. For this regime, the result is
\begin{multline}\label{EquationApproximationForJ}
\mathcal{J}(t,\bv)=\rme^{-\beta v^2/2}\{\rme^{-u_\text{R}^2/2}V_\beta\cosh(\sqrt{\beta} \bu_\text{R}\cdot\bv)\\
+\rme^{-u_\text{I}^2/2}V_\beta\sinh(\sqrt{\beta} \bu_\text{I}\cdot\bv)+O[\epsilon^2]\}.
\end{multline}
Let us recall from Eq.~\eqref{EquationApproximationForI} that this result holds only when the conditions
\begin{equation}\label{EquationFirstIntegralConditionForT}
t\epsilon^{2}\gg 1\quad\text{and}\quad t u_\text{R}^2\epsilon^{2}=\epsilon^{2}(s_\mu^2+\bar{x}_\mu^2)\ll1
\end{equation}
are satisfied. From Eqs.~\eqref{EquationForUI}, \eqref{EquationForUR}, and \eqref{EquationApproximationForJ}, it is clear that
\begin{equation}
\mathcal{J}(t,\bv)\stackrel{t\to\infty}{\longrightarrow}\rme^{-\beta v^2/2},
\end{equation}
as expected. 

The second objective of this derivation is, then, to find a lower bound on the time required to reach the steady state. Because $t$ is required to be large in the previous result, additional approximations can be made. The expression $V_\beta\cosh(\bu_\text{R}\cdot\bv)$ for $u_\text{R}v\ll 1$ is approximated by noticing that the function
\begin{equation}\label{ApproximationVCosine}
g(\bv,\bu_\text{R})=\exp\Big[\frac{v^2u_\text{R}^2}{2(N+\beta\gamma)}\Big]
\end{equation}
behaves roughly like $V_\beta\cosh(\bu_\text{R}\cdot\bv)$ in the following sense. Consider the expression
\begin{equation}
\sum_{j=1}^N T_j^2 V_\beta\cosh(\bu_\text{R}\cdot\bv)|_{\bzeta=\bzero}=u_\text{R}^2,
\end{equation}
where the Dunkl operators act on $\bv$. A straightforward calculation gives
\begin{equation}
\Delta_v g(\bv,\bu_\text{R})=\frac{u_\text{R}^2}{N+\beta\gamma}\Big[\frac{v^2u_\text{R}^2}{N+\beta\gamma}+N\Big]g(\bzeta,\bu_\text{R})
\end{equation}
and
\begin{equation}
\beta\sum_{\balpha\in R_+}\kappa(\balpha)\frac{\balpha\cdot\bnabla_v g(\bv,\bu_\text{R})}{\balpha\cdot\bv}=\frac{\beta\gamma u_\text{R}^2}{N+\beta\gamma}g(\bv,\bu_\text{R}),
\end{equation}
which yields
\begin{equation}
\sum_{j=1}^N T_j^2 g(\bv,\bu_\text{R})|_{\bv=\bzero}=u_\text{R}^2.
\end{equation}
It follows that $g(\bv,\bu_\text{R})$ reproduces the isotropic part of the second-order term of $V_\beta\cosh(\bu_\text{R}\cdot\bv)$ at $\bv=\bzero$. Consequently, the combination of $g(\bzeta,\bu_\text{R})$ and Eq.~\eqref{LinearApproximationVSin} gives an approximation of the hyperbolic sine and cosine terms that is only accurate up to first order, but which is still better than a simple first-order approximation:
\begin{multline}\label{EquationDoubleApproximationForJ}
\rme^{\beta v^2/2}\mathcal{J}(t,\bv)-O(\epsilon^2)=\rme^{-u_\text{R}^2/2}V_\beta\cosh( \sqrt{\beta}\bu_\text{R}\cdot\bv)
+\rme^{-u_\text{I}^2/2}V_\beta\sinh( \sqrt{\beta}\bu_\text{I}\cdot\bv)\\
=\exp\Big[-\frac{u_\text{R}^2}{2}\Big(1-\frac{\beta v^2}{N+\beta\gamma}\Big)\Big] + \rme^{-u_\text{I}^2/2}\Big[\frac{\sqrt{\beta}\bu_{\text{I}\parallel}\cdot\bv_\parallel}{1+\beta\gamma/d_R}+\sqrt{\beta}\bu_{\text{I}\perp}\cdot\bv_\perp  \Big] +O(\beta v^2/t).
\end{multline}

The choice of the function $g(\bv,\bu_\text{R})$ is motivated by the fact that if the approximation is carried out up to first order, the information concerning the second moments of $\mu(\bx)$ is lost. Approximating the function $V_\beta\cosh(\bu_\text{R}\cdot\bv)$ using an isotropic function such as $g(\bv,\bu_\text{R})$ neglects the anisotropies that correspond to the vector $\bu_\text{R}$ as well as those that correspond to the intertwining operator, but it also allows one to have an approximate form of the hyperbolic functions which carries additional information about the initial distribution.

To estimate the relaxation time, consider a test function $h(\bv)$ with a power-law asymptotic behavior $h(\bv)\sim v^r$ when $v$ is very large. Define the expectation of $h$ over the steady-state distribution by
\begin{equation}
\langle h \rangle:=\int_{\RR^N}h(\bv)f_R(\beta,\bv)\ud\bv.
\end{equation}
Then, the expectation of $h$ before the steady-state is achieved is given by
\begin{equation}\label{EquationAverageBeforeRelaxation}
\langle h \rangle_t:=\frac{1}{z_\beta}\int_{\RR^N}h(\bv)w_\beta(\bv)\mathcal{J}(t,\bv)\ud\bv.
\end{equation}
Equation~\eqref{EquationDoubleApproximationForJ} can only be used if the inequality
\begin{equation}
\frac{(s_\mu^2+\bar{x}_\mu^2)\beta v^2}{t}\ll 1
\end{equation}
is satisfied. Therefore, the integral \eqref{EquationAverageBeforeRelaxation} can be divided into the region where $v<\eta\sqrt{t}$ and the region where $v\geq\eta\sqrt{t}$, where it is assumed that $\eta\sqrt{t}\gg 1$.

Denote the integral over the former region by $\mathcal{K}_1$ and the latter by $\mathcal{K}_2$. After making the substitution $\bv^\prime=\sqrt{\beta}\bv$, the integral over the outer region is given by 
\begin{multline}
|\mathcal{K}_2|\leq\frac{1}{c_\beta}\int_{v^\prime\geq\eta\sqrt{\beta t}}|h(\bv^\prime/\sqrt{\beta})|w_\beta(\bv^\prime)\rme^{- (v^\prime)^2/2}|\{\rme^{-u_\text{R}^2/2}V_\beta\cosh(\bu_\text{R}\cdot\bv^\prime)\\
+\rme^{-u_\text{I}^2/2}V_\beta\sinh(\bu_\text{I}\cdot\bv^\prime)+O[\epsilon^2]\}|\ud\bv^\prime\\
\leq \frac{C_\mathcal{K}}{c_\beta}
\int_{\eta\sqrt{\beta t}}^\infty
(v^\prime)^{\beta\gamma+N-1}
(v^\prime/\sqrt{\beta})^r
\{\rme^{- (v^\prime-u_\text{R})^2/2}/2
+\rme^{-(v^\prime-u_\text{I})^2/2}/2\\
+\rme^{-(v^\prime)^2/2}O[\epsilon^2]\}\ud\bv^\prime.
\end{multline}
For the final expression, the asymptotic form of $h(\bv)$ was used, as well as the bounds and asymptotic forms for $v^\prime\gg 1$
\begin{IEEEeqnarray}{rCl}
|V_\beta\cosh(\bu_\text{R}\cdot\bv^\prime)|\leq \cosh(u_\text{R}v^\prime)\sim \exp(u_\text{R}v^\prime)/2,\\
|V_\beta\sinh(\bu_\text{I}\cdot\bv^\prime)|\leq \sinh(u_\text{I}v^\prime)\sim \exp(u_\text{I}v^\prime)/2.
\end{IEEEeqnarray}
The coefficient $C_\mathcal{K}$ represents the value of the angular integral. By Eq.~\eqref{EquationIntegralOutside}, one obtains
\begin{equation}\label{EquationNonSteadyStateExpectationOuterIntegral}
\mathcal{K}_2= O[\rme^{-\eta^2\beta t/2}(\eta^2 \beta t)^{(\beta\gamma+N+r-2)/2}\beta^{-r/2}(1+\epsilon^2)].
\end{equation}
Therefore, one must assume that $\eta\sqrt{\beta t}\gg 1$ in order to neglect the terms with exponential decay.

The inner integral is divided into the lower-order ($\mathcal{K}_o$), sine ($\mathcal{K}_{\sinh}$) and cosine ($\mathcal{K}_{\cosh}$) terms. Unless otherwise noted, exponentially decreasing correction terms (e.g., the correction in Eq.~\eqref{EquationNonSteadyStateExpectationOuterIntegral}) will be implicit in each of the expressions that follow. The lower-order integral is immediate,
\begin{multline}
\mathcal{K}_o=\frac{O(\epsilon^2)+O(\beta\eta^2)}{c_\beta}\int_{v^\prime<\eta\sqrt{\beta t}}h(\bv^\prime/\sqrt{\beta})w_\beta(\bv^\prime)\rme^{- (v^\prime)^2/2}\ud\bv^\prime\\
=\langle h \rangle [O(\epsilon^2)+O(\beta\eta^2)].
\end{multline}
The sine term becomes, assuming that $\eta\ll1/\sqrt{\beta \bar{x}_\mu^2}$, 
\begin{multline}\label{EquationRelaxationSinh}
\mathcal{K}_{\sinh} = \frac{1}{c_\beta}\int_{v^\prime<\eta\sqrt{\beta t}}h(\bv^\prime/\sqrt{\beta})w_\beta(\bv^\prime)\rme^{- (v^\prime)^2/2}\\
\times\rme^{-u_\text{I}^2/2}\Big[\frac{\bu_{\text{I}\parallel}\cdot\bv^\prime_\parallel}{1+\beta\gamma/d_R}+\bu_{\text{I}\perp}\cdot\bv^\prime_\perp\Big]\ud\bv^\prime.
\end{multline}
Within this region of integration, the bound
\begin{equation}\label{AttentionSlowRelaxation}
\Big|\frac{\bu_{\text{I}\parallel}\cdot\bv^\prime_\parallel}{1+\beta\gamma/d_R}+\bu_{\text{I}\perp}\cdot\bv^\prime_\perp\Big|\leq \frac{\eta\sqrt{\beta}}{\bar{x}_\mu}\Big[\frac{\bar{x}_{\mu\parallel}^2}{1+\beta\gamma/d_R}+\bar{x}_{\mu\perp}^2\Big]
\end{equation}
is satisfied. This means that
\begin{equation}
\mathcal{K}_{\sinh}=\langle h\rangle O(\eta\sqrt{\beta}),
\end{equation}
where the correction term comes from Eq.~\eqref{AttentionSlowRelaxation}, and $\langle h \rangle$ comes from Eq.~\eqref{EquationRelaxationSinh} after taking $u_\text{I}\to 0$ due to the inequality \eqref{EquationFirstIntegralConditionForT}. The cosine term becomes, after using Eq.~\eqref{ApproximationVCosine},
\begin{multline}
\mathcal{K}_{\cosh}=\frac{1}{c_\beta}\int_{v^\prime<\eta\sqrt{\beta t}}h(\bv^\prime/\sqrt{\beta})w_\beta(\bv^\prime)\exp\Big[-\frac{(v^\prime)^2}{2} \Big(1-\frac{s_\mu^2+\bar{x}_\mu^2}{t(\beta\gamma+N)}\Big)\Big]\ud\bv^\prime\\
\times\rme^{-(s_\mu^2+\bar{x}_\mu^2)/(2t)}=\langle h \rangle.
\end{multline}
For the last equality, Eq.~\eqref{EquationFirstIntegralConditionForT} has been used. Finally, adding all the terms and discarding higher orders of $\epsilon$ and $\eta$ yields
\begin{equation}
\langle h\rangle_t=\langle h \rangle [1+O(\eta\sqrt{\beta})+O(\epsilon^2)],
\end{equation}
provided that all of the following assumptions are satisfied,
\begin{multline}\label{EquationSecondIntegralConditionForT}
t\gg s_\mu^2 +\bar{x}_\mu^2,\quad \epsilon^2t\gg1,\quad \epsilon^{-2}\gg s_\mu^2+\bar{x}_\mu^2,\quad \\
\eta^2 \beta t\gg 1,\quad \eta^2 t\gg 1,\quad \eta^{-2}\gg \beta(s_\mu^2+\bar{x}_\mu^2).
\end{multline}
Therefore, the distribution must be relaxed to the steady state for $t\gg(s_\mu^2+\bar{x}_\mu^2)\max[1,\beta]$. The statement follows. \qquad\ \qquad\ \qquad\ \qquad \qquad\ \qquad\ \qquad $\square$

While the time bound in Theorem~\ref{TheoremSteadyState} guarantees the relaxation of the system to the steady state, it is not a true estimation of the relaxation time. The time bound estimated here implies that as $\beta$ grows, a longer time is required for the system to relax in general. This is not true at least in the freezing limit, where the system achieves the steady state instantaneously. This property of Dunkl processes in the freezing limit will be proved in Chapter~\ref{general_freezing}. If one assumes a smooth change of behavior from a Dunkl process as $\beta$ changes, then it seems reasonable to believe that Dunkl processes reach the steady state in a small time for large values of $\beta$. This means that there exists an estimation of the relaxation time that can be achieved by making some assumptions on the initial distribution, e.g. compact support, exponential decay at large $\bx$, etc. The result presented here, however, only requires $\mu(\bx)$ to have finite second moments, and in consequence gives a time bound that seems to exceed the actual relaxation time of the system.

It is also worth noting that  Eq.~\eqref{AttentionSlowRelaxation} is responsible for the highest-order correction. Because $1+\beta\gamma/d_R>1$, the largest correction is caused by the component of $\bar{\bx}_\mu$ that is orthogonal to the root system. This means that if the mean of $\mu(\bx)$ has a large component in the space that is orthogonal to $\spn(R)$, the Dunkl process will take a long time to reach the steady state. This fact will be illustrated using numerical simulations of the interacting Brownian motions in Chapter~\ref{ParticularCases}. In addition, if $\bar{\bx}_{\mu\perp}=\bzero$, the correction due to Eq.~\eqref{AttentionSlowRelaxation} decreases in magnitude as $\beta$ grows. Therefore, Dunkl processes on a root system of full rank converge more rapidly to the steady state at large $\beta$ than on a root system with $d_R<N$ in general.

\chapter{Freezing regime in an arbitrary root system}\label{general_freezing}

The discussion from the previous chapter was focused on how Dunkl processes behave after a long time, in which case their distribution functions do not depend on the initial distribution. Here, the focus is shifted to the situation in which the inverse temperature tends to infinity (the freezing limit). In this case, Dunkl processes attain their steady state instantly, and their distribution is given by a sum of delta functions localized in the peak set of $R$ in general \cite{dunkl89B}. In some particular cases, such as the root systems $A$ and $B$, these sets of points are identified as the solution of certain log-Fekete problems as mentioned in Chapter~\ref{CalogeroMoserCorrespondence} \cite{deift00}. 

\section{Setting}

In this case, it is convenient to use Eq.~\eqref{TransitionDensityExplicit} to express the probability distribution of the Dunkl process with the initial distribution $\mu(\bx)$,
\begin{equation}
f(t,\by)\ud\by=w_\beta\left(\frac{\by}{\sqrt{t}}\right)\frac{\rme^{-y^2/2t}}{c_\beta t^{N/2}}\int_{\RR^N}\rme^{-x^2/2t}V_\beta \exp \left(\frac{\bx\cdot\by}{t}\right)\mu(\bx)\ud\bx\ud\by.
\end{equation}
The objective in this chapter is to calculate how the scaled distribution
\begin{multline}\label{ScaledProcessDistributionFreezing}
f(t,\sqrt{\beta t}\bv)(\beta t)^{N/2}\ud\bv=\frac{w_\beta(\bv)\rme^{-\beta v^2/2}}{z_\beta}\int_{\RR^N}\rme^{-x^2/2t}V_\beta \rme^{\bx\cdot\bv\sqrt{\beta/t}}\mu(\bx)\ud\bx\ud\bv\\
=f_R(\bv)\int_{\RR^N}\rme^{-x^2/2t}V_\beta \rme^{\bx\cdot\bv\sqrt{\beta/t}}\mu(\bx)\ud\bx\ud\bv
\end{multline}
behaves as $\beta\to\infty$. As mentioned in Chap.~\ref{CalogeroMoserCorrespondence}, the \emph{peak set} of the root system $R$ is defined as the set of vectors $\{\bos_i\}_{i=1}^{|W|}$ where the function $F_R(\bv,\kappa)$ attains its minima \cite{dunkl89B}. Assume that the mean and variance of the distribution $\mu(\bx)$ are given by $\bar{\bx}_\mu$ and $s_\mu^2$ (Eqs.~\eqref{MuFirstMoments} and \eqref{MuSecondMoment} respectively). The freezing limit of a Dunkl process is given by the following.
\begin{theorem}\label{TheoremFreezingLimit}
In the limit where $\beta\to\infty$, the scaled probability distribution of a Dunkl process for $t>0$ is given by
\begin{equation}\label{GeneralFreezingLimit}
\lim_{\beta\to\infty}f(t,\sqrt{\beta t}\bv)(\beta t)^{N/2}\ud\bv=\frac{1}{|W|}\sum_{i=1}^{|W|}\delta^{(N)}(\bv-\bos_i)\ud\bv.
\end{equation}
\end{theorem}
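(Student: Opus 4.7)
The plan is to start from the factorisation $f(t,\sqrt{\beta t}\bv)(\beta t)^{N/2}=f_R(\bv)\,\mathcal{N}(\bv,\beta,t)$ displayed in \eqref{ScaledProcessDistributionFreezing}, with $\mathcal{N}(\bv,\beta,t):=\int_{\RR^N}\rme^{-x^2/2t}V_\beta \rme^{\bx\cdot\bv\sqrt{\beta/t}}\mu(\bx)\,\ud\bx$, and to first reduce to the case of a point-mass initial condition $\mu=\delta_{\bx_0}$. The reduction is justified by Fubini's theorem together with the uniform bound $V_\beta \rme^{\bx\cdot\bv\sqrt{\beta/t}}\leq \rme^{|\bx||\bv|\sqrt{\beta/t}}$ supplied by Proposition~\ref{PositivityVBeta} and the Gaussian damping $\rme^{-x^2/2t}$; since the claimed limit is independent of $\bx_0$, integrating the Dirac-mass result against $\mu$ recovers the general statement.

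For $\mu=\delta_{\bx_0}$, the density $h_\beta(\bv):=f_R(\bv)\,\rme^{-x_0^2/2t}\,V_\beta \rme^{\bx_0\cdot\bv\sqrt{\beta/t}}$ is a probability density on $\RR^N$; its normalisation to $1$ follows from the Gaussian identity \eqref{GaussianIntegralDunklKernel} applied with $\by=\bzero$ and $\bz=\bx_0/\sqrt{t}$ after the rescaling $\bxi=\sqrt{\beta}\bu$ and the relation $z_\beta=c_\beta\beta^{-(\beta\gamma+N)/2}$ obtained by the same substitution in the definition of $z_\beta$. I would then apply Laplace's method to $f_R(\bv)=\rme^{-\beta F_R(\bv,\kappa)}/z_\beta$. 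The $W$-invariant function $F_R$ attains its minima precisely on the peak set $\{\bos_i\}_{i=1}^{|W|}$, all peak points share a common minimum value $F_0$ and a common Hessian determinant $\det H$ (by transitivity of $W$), and standard Laplace asymptotics give $z_\beta\sim |W|\rme^{-\beta F_0}(2\pi/\beta)^{N/2}/\sqrt{\det H}$. Consequently, $f_R(\bv)\,\ud\bv$ converges weakly to $\tfrac{1}{|W|}\sum_i\delta^{(N)}(\bv-\bos_i)\,\ud\bv$ as $\beta\to\infty$.

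The non-trivial step, and the \emph{main obstacle}, is to establish the complementary asymptotic $\rme^{-x_0^2/2t}V_\beta \rme^{\bx_0\cdot\bv\sqrt{\beta/t}}\to 1$ when $\bv$ is at any peak point $\bos_i$. Writing $V_\beta \rme^{\bx_0\cdot\bv\sqrt{\beta/t}}=\int \rme^{\bx_0\cdot\bxi\sqrt{\beta/t}}\ud\mu_{\bv}(\bxi)$ with $\supp(\mu_{\bv})\subseteq \co(W\bv)$ from Proposition~\ref{PositivityVBeta}, the apparent $\sqrt{\beta}$-blow-up of the exponent is in fact offset by a concentration of the representing measure $\mu_{\bos_i}$ around an interior point of its support (the centroid of $\co(W\bos_i)$ in the simplest cases) with Gaussian width of order $1/\sqrt{\beta}$; a saddle-point expansion on $\mu_{\bos_i}$ then produces exactly the integral $\rme^{x_0^2/(2t)}$ that cancels the prefactor. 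A useful consistency check is the identity $\int f_R(\bv)\,V_\beta \rme^{\bx_0\cdot\bv\sqrt{\beta/t}}\,\ud\bv=\rme^{x_0^2/(2t)}$, obtained from \eqref{GaussianIntegralDunklKernel} after the same rescaling; combined with the Laplace concentration of $f_R$ it forces $\tfrac{1}{|W|}\sum_i L_i=\rme^{x_0^2/(2t)}$, where $L_i:=\lim_{\beta\to\infty}V_\beta \rme^{\bx_0\cdot\bv\sqrt{\beta/t}}\rvert_{\bv=\bos_i}$. The $W$-covariance $E_\beta(\bx,\rho\by)=E_\beta(\rho^T\bx,\by)$ of the Dunkl kernel, together with the transitivity of $W$ on the peak set and the isotropic character of the limiting representing measure, pins down each $L_i$ individually to $\rme^{x_0^2/(2t)}$.

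Combining these ingredients, for any continuous bounded test function $\psi$ one has $\int\psi(\bv)h_\beta(\bv)\,\ud\bv\to\tfrac{1}{|W|}\sum_i\psi(\bos_i)$, which is the weak-convergence statement \eqref{GeneralFreezingLimit}. The step carrying the genuinely new information is the freezing asymptotic of the Dunkl kernel on the peak set; everything else is Laplace's method combined with the positivity and symmetry of the intertwining operator provided by Proposition~\ref{PositivityVBeta} and the Gaussian identity \eqref{GaussianIntegralDunklKernel}.
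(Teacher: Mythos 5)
Your overall architecture --- Laplace asymptotics for $f_R$ concentrating on the peak set, multiplied by a separate limit for the kernel factor --- matches the paper's, and your idea of pinning down the kernel limit at the peak points through the exact reproducing identity $\int f_R(\bv)V_\beta \rme^{\bx_0\cdot\bv\sqrt{\beta/t}}\ud\bv=\rme^{x_0^2/(2t)}$ combined with the $W$-invariance of $V_\infty$ is an attractive shortcut. However, there are two genuine gaps. First, your central claim that $\rme^{-x_0^2/2t}V_\beta \rme^{\bx_0\cdot\bos_i\sqrt{\beta/t}}\to 1$ is false whenever the root system is not of full rank --- in particular for type $A$, one of the two cases the thesis is about. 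Lemma~\ref{FreezingLimitDunklKernel} shows that for $d_R<N$ the scaled kernel behaves like $\exp[\sqrt{\beta/t}\,\bx_{0\perp}\cdot\bv_\perp+x_{0\parallel}^2v_\parallel^2/2t(\gamma+\varepsilon_\beta)]$; at $\bv=\bos_i$ one has $\bv_\perp=\bzero$ and $v_\parallel^2=s^2=\gamma$, so the pointwise limit of your normalized kernel is $\rme^{-x_{0\perp}^2/2t}\neq 1$. The missing factor $\rme^{x_{0\perp}^2/2t}$ is recovered only distributionally, by completing the square in $-\beta v_\perp^2/2+\sqrt{\beta/t}\,\bx_{0\perp}\cdot\bv_\perp$ inside the product with $f_R$, which shifts the perpendicular Gaussian to $\bv_\perp=\bx_{0\perp}/\sqrt{\beta t}$ (Eqs.~\eqref{EquationFinalFreezingComment} and \eqref{FreezingLimitGeneralPerpendicularPart}). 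For the same reason your consistency identity cannot be localized at the peak set by Laplace's method alone: the kernel grows like $\rme^{\sqrt{\beta}|\bx_{0\perp}||\bv_\perp|/\sqrt{t}}$ in the orthogonal directions, the effective maximizer of $f_R\cdot(\text{kernel})$ is displaced by $O(\beta^{-1/2})$, and that displacement contributes an $O(1)$ multiplicative factor; hence $\frac{1}{|W|}\sum_iL_i=\rme^{x_0^2/(2t)}$ does not follow from concentration of $f_R$ alone.

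Second, even for full-rank systems the existence of the limits $L_i$ is not free: the bound $|V_\beta g(\bx)|\leq\sup_{\bxi\in\co(W\bx)}|g(\bxi)|$ of Proposition~\ref{PositivityVBeta} degenerates to $\rme^{\sqrt{\beta}|\bx_0||\bv|/\sqrt{t}}$ under your scaling, so it does not yield a finite limit for $V_\beta\rme^{\sqrt{\beta}\bx_0\cdot\bv/\sqrt{t}}$, let alone an isotropically concentrating representing measure; asserting Gaussian concentration of $\mu_{\bos_i}$ with width $\beta^{-1/2}$ is exactly equivalent to the statement you are trying to prove, and the measure is not explicitly known. The paper supplies this step by a term-by-term analysis of the Taylor expansion of the kernel (Lemmas~\ref{PropositionVInfinityWInvariance}, \ref{FreezingLimitDunklKernelFullRank} and \ref{FreezingLimitDunklKernel}): $W$-invariance of $V_\infty$ kills the odd-degree terms, the Dunkl Laplacian recursion fixes the decay rate $V_\beta(\bx\cdot\by)^{2m}/(2m)!\sim\beta^{-m}$, and the resulting hierarchy of first-order equations identifies the limit as $\exp[x^2y^2/2\gamma]$. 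Some quantitative control of the scaled kernel of this kind is unavoidable, and it is the piece your proposal still lacks.
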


Two remarks are required to give adequate meaning to the statement in Theorem~\ref{TheoremFreezingLimit}. The first remark is that the freezing limit depicted in Eq.~\eqref{GeneralFreezingLimit} is derived from the scaled distribution $f(t,\sqrt{\beta t}\bv)(\beta t)^{N/2}$ and not from the steady-state distribution $f_R(\bv)$. In fact, due to Eq.~\eqref{ScaledProcessDistributionFreezing} the freezing limit of $f(t,\sqrt{\beta t}\bv)(\beta t)^{N/2}$ is obtained from the freezing limit of $f_R(\bv)$ and of the integral over $\mu(\bx)$
\begin{equation}
\int_{\RR^N}\rme^{-x^2/2t}V_\beta \rme^{\bx\cdot\bv\sqrt{\beta/t}}\mu(\bx)\ud\bx.
\end{equation}
It will be shown in Lemma~\ref{SteadyStateDistributionFreezingLimit} that the freezing limit of $f_R(\bv)$ is the sum of delta functions on the r.h.s.\ of Eq.~\eqref{GeneralFreezingLimit}, while this integral tends to one when $\bv=\bos_i$ as $\beta\to\infty$.

The second remark is that, because $f(t,\sqrt{\beta t}\bv)(\beta t)^{N/2}$ and $f_R(\bv)$ are equal in the freezing limit for $t>0$, it follows that scaled Dunkl processes relax to the steady state instantaneously when $\beta\to\infty$. This statement is given a physical meaning as follows. Recalling the SDE of radial Dunkl processes, Eq.~\eqref{RadialDunklSDE}, the SDE of the scaled process $\bY_t=\bX_t/\sqrt{\beta}$ can be easily found to be
\begin{equation}
\ud \bY_t=\frac{\ud \bB_t}{\sqrt{\beta}}+\frac{1}{2}\sum_{\balpha\in R_+}\frac{\kappa(\balpha)\balpha}{\balpha\cdot\bY_t}\ud t.
\end{equation}
From the Smoluchowski-Kramers approximation \cite{nelson67,freidlin04}, this process can be interpreted as a particle of negligible mass inside a viscous fluid in $N$ dimensions. The particle interacts with an external force $-\bnabla \Phi$ with potential
\begin{equation}
\Phi(\bv)=-\log\Big[\prod_{\balpha\in R_+}|\balpha\cdot\bv|^{\kappa(\balpha)/2}\Big],
\end{equation}
and with the component particles of the background fluid through thermal (probabilistic) collisions. The second interaction is represented here by the term $\ud \bB_t/\sqrt{\beta}$. From this point of view, the freezing limit corresponds to the situation in which the thermal vibration of the particles of the background fluid is negligible. Because $\bY_t=\bX_t/\sqrt{\beta}$, the initial condition $\bX_0=\bx$ is translated to $\bY_0=\bx/\sqrt{\beta}$, and it follows that in the freezing limit $\bY_0=\bzero$ for any $\bx$ finite. This means that when $\beta\to\infty$ the process $\bY_t$ always starts from the origin, and because in this regime the motion of $\bY_t$ is deterministic, its trajectory is independent of the initial condition on $\bX_t$. To see this in Eq.~\eqref{GeneralFreezingLimit}, it suffices to make $\bu=\sqrt{t}\bv$ to obtain
\begin{equation}
\lim_{\beta\to\infty}f(t,\sqrt{\beta}\bu)(\beta)^{N/2}\ud\bu=\frac{1}{|W|}\sum_{i=1}^{|W|}\delta^{(N)}(\bu-\sqrt{t}\bos_i)\ud\bu.
\end{equation}
Therefore, the path of the process $\bY_t$ in the freezing limit is reduced to a deterministic set of curves that are independent of the initial distribution. This is because in the scale of $\bY_t$, the initial condition $\mu(\bx)$ is reduced to a delta function at the origin as $\beta\to\infty.$ The mathematical basis for this assertion is given by the procedure in Section~\ref{ProofOfTheoremFreezingLimit} using the variable substitutions $\bu=\bv/\sqrt{t}$ and $\bx=\sqrt{\beta}\bxi$, and it is omitted for brevity. After this long consideration, it can be concluded that Dunkl processes in the freezing limit relax instantaneously because at times $t>0$ an initial distribution $\mu(\bx)$ is localized close to the origin when viewed from the scale of $\bv=\by/\sqrt{\beta t}$. Consequently, the drift terms of the process are significantly stronger than the thermal fluctuations, which in turn allows the system to achieve the steady state faster as $\beta$ grows to infinity.

\section{The Dunkl kernel in the freezing regime}

Before tackling the proof of Thm.~\eqref{TheoremFreezingLimit}, it is necessary to investigate the behavior of the integral in Eq.~\eqref{ScaledProcessDistributionFreezing}, which in turn involves the behavior of the Dunkl kernel as $\beta$ tends to infinity. The objective of this section is to derive the form of the Dunkl kernel in this regime. Define the limit
\begin{equation}
V_\infty f(\bx) := \lim_{\beta\to\infty} V_\beta f(\bx)
\end{equation}
for functions $f(\bx)\in A_{|\bx|}$, where the set $A_{|\bx|}$ is defined in Eqs.~\eqref{EquationDefinitionAr1} and \eqref{EquationDefinitionAr2}.

\begin{lemma}\label{PropositionVInfinityWInvariance}
The function $V_\infty f(\bx)$ is $W$-invariant.
\end{lemma}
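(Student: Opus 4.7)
The plan is to combine the $W$-equivariance of $V_\beta$, namely $V_\beta f(\rho\bx)=V_\beta[f\circ\rho](\bx)$ for every $\rho\in W$, with the positivity and uniform bounds of Proposition~\ref{PositivityVBeta}. This identity follows directly from the representation $V_\beta f(\bx)=\int f(\bxi)\,d\mu_{\bx}(\bxi)$ and the transformation rule $\mu_{\rho\bx}(B)=\mu_{\bx}(\rho^{-1}B)$ by a single change of variables. Passing it to the freezing limit is legitimate because $\rho$ is an isometry and hence $f\circ\rho\in A_{|\bx|}$, so
\[V_\infty f(\rho\bx)=V_\infty[f\circ\rho](\bx).\]
The desired $W$-invariance $V_\infty f(\rho\bx)=V_\infty f(\bx)$ is therefore equivalent to $V_\infty[f-f\circ\rho]=0$ for every $\rho\in W$.

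Since $W$ is generated by the reflections $\{\sigma_\alpha\}_{\alpha\in R_+}$, it suffices to prove $V_\infty h=0$ for every $h\in A_{|\bx|}$ with $h\circ\sigma_\alpha=-h$. Any such $h$ vanishes on $H_\alpha=\{\bxi:\balpha\cdot\bxi=0\}$, and the absolutely convergent expansion \eqref{EquationDefinitionAr1}--\eqref{EquationDefinitionAr2} lets me factor $h(\bx)=(\balpha\cdot\bx)\tilde h(\bx)$ with $\tilde h\in A_{|\bx|}$ itself $\sigma_\alpha$-symmetric. The motivating computation is Lemma~\ref{LemmaFirstOrderV}: because $\balpha\in\spn R$ has vanishing projection on $\{\bphi_i\}_{i>d_R}$,
\[V_\beta(\balpha\cdot\bx)=\frac{\balpha\cdot\bx}{1+\beta\gamma/d_R}=O(\beta^{-1}),\]
so a factor of $\beta^{-1}$ already appears on the linear piece of the product; an elementary direct calculation in the one-dimensional case confirms moreover that $V_\beta$ sends the odd monomial $y_1^n$ to itself times a factor of order $\beta^{-\lceil n/2\rceil}$, suggesting that the decay of the antisymmetric component is genuine at every degree.

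The hard part will be lifting this decay from the linear factor to the full product $(\balpha\cdot\bx)\tilde h(\bx)$, because $V_\beta$ is not multiplicative. The approach will be a degree-by-degree analysis: since $V_\beta$ commutes with $W$ and preserves degree, it stabilizes the $\sigma_\alpha$-antisymmetric subspace of each $\mathcal{P}_n^N$, and I aim to show by induction on $n$, using the intertwining identity $T_iV_\beta=V_\beta\partial_i$ to relate $V_\beta$ on the $\sigma_\alpha$-antisymmetric part of $\mathcal{P}_n^N$ to $V_\beta$ on both parity parts of $\mathcal{P}_{n-1}^N$, that $V_\beta$ restricted to this subspace decays to zero in operator norm as $\beta\to\infty$, with constants controlled in $n$. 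Once this uniform decay is in hand, the pointwise bound $|V_\beta g(\bx)|\leq\sup_{\co(W\bx)}|g|$ from \eqref{EquationVBoundedness}, combined with the absolute summability in \eqref{EquationDefinitionAr2}, will permit a dominated-convergence argument exchanging $\lim_{\beta\to\infty}$ with $\sum_n$, yielding $V_\infty h=0$ and hence the $W$-invariance of $V_\infty f$.
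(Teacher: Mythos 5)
Your reduction is sound as far as it goes: the equivariance $V_\beta f(\rho\bx)=V_\beta[f\circ\rho](\bx)$ passes to the limit, and $W$-invariance of $V_\infty f$ is indeed equivalent to $V_\infty h=0$ for every $\sigma_{\balpha}$-antisymmetric $h$. But the proof stops exactly where the real work begins. The central claim --- that $V_\beta$ restricted to the $\sigma_{\balpha}$-antisymmetric part of $\mathcal{P}_n^N$ decays to zero in operator norm as $\beta\to\infty$, with constants controlled in $n$ well enough to interchange the limit with the sum over $n$ --- is announced (``I aim to show by induction\ldots'') but never established. The evidence you cite (the linear case from Lemma~\ref{LemmaFirstOrderV} and the one-dimensional odd monomials) is consistent with the claim but does not prove the inductive step. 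Worse, the proposed induction via $T_iV_\beta=V_\beta\partial_i$ is not obviously closable in the form you describe: that identity gives you the \emph{Dunkl derivatives} of $V_\beta p$ in terms of $V_\beta$ on degree $n-1$, and to recover $V_\beta p$ itself you must invert the Dunkl gradient, whose difference part carries the large coefficient $k(\balpha)=\beta\kappa(\balpha)/2$; moreover $\partial_i$ does not preserve $\sigma_{\balpha}$-parity, so the antisymmetric subspace is not stabilized along the induction in the way your sketch assumes. Until that estimate is actually proved, $V_\infty h=0$ is unsupported and the lemma is not established.

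The paper's proof shows that none of this machinery is needed. Write out the defining relation $T_iV_\beta f=V_\beta[\partial f/\partial x_i]$ explicitly and divide by $\beta$:
\begin{equation*}
\frac{1}{\beta}\Big[V_\beta\frac{\partial}{\partial x_i}f(\bx)-\frac{\partial}{\partial x_i}V_\beta f(\bx)\Big]=\frac{1}{2}\sum_{\balpha\in R_+}\alpha_i\,\kappa(\balpha)\,\frac{(1-\sigma_{\balpha})V_\beta f(\bx)}{\balpha\cdot\bx}.
\end{equation*}
Because $k(\balpha)=\tfrac{\beta}{2}\kappa(\balpha)$, the reflection-difference terms survive the division with an $O(1)$ coefficient, while the left-hand side is $O(1/\beta)$ by the $\beta$-uniform bound \eqref{EquationVBoundedness}. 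Letting $\beta\to\infty$ forces the right-hand side to vanish, which yields $(1-\sigma_{\balpha})V_\infty f=0$ for each $\balpha$ directly --- no parity decomposition, no factorization of $h$ by $\balpha\cdot\bx$, and no degree-by-degree operator-norm estimate. In effect, the decay you are trying to prove by induction is an immediate consequence of the fact that the difference part of $T_i$ is the only piece of the intertwining relation that scales like $\beta$. I would recommend abandoning the inductive scheme and using this observation; if you prefer to keep your reduction to antisymmetric $h$, the same division by $\beta$ applied to $h$ gives $V_\infty h=\sigma_{\balpha}V_\infty h=V_\infty[h\circ\sigma_{\balpha}]=-V_\infty h$, hence $V_\infty h=0$, in one line.
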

\begin{proof}
Consider Eq.~\eqref{EquationVDefinition} divided by $\beta$. Arranging terms, one obtains
\begin{equation}
\frac{1}{\beta}\Big[V_\beta\frac{\partial}{\partial x_i}f(\bx)-\frac{\partial}{\partial x_i}V_\beta f(\bx)\Big]=\frac{1}{2}\sum_{\balpha\in R_+}\alpha_i\kappa(\balpha)\frac{(1-\sigma_{\balpha})V_\beta f(\bx)}{\balpha\cdot\bx}
\end{equation}
for $i=1,\ldots,N$. Due to the boundedness of $V_\beta$ (Eq.~\eqref{EquationVBoundedness}), which is independent of $\beta$, the term on the l.h.s.\ tends to 0 when $\beta\to\infty$. Therefore, in the freezing limit the term on the r.h.s.\ must vanish for any multiplicity function $\kappa(\balpha)$ and any $f(\bx)\in A_{|\bx|}$. This means that $V_\infty f(\bx)=V_\infty f(\sigma_{\balpha}\bx)$ for any $\balpha$, which implies that $V_\infty f(\bx)$ must be $W$-invariant.
\end{proof}

As a consequence of Lemma~\ref{PropositionVInfinityWInvariance}, it is necessary to find first- and second-order combinations of Dunkl operators that preserve the $W$-invariance. These operators will be the main tools in the derivation of the Dunkl kernel in the freezing regime. For the first-order case, one has the following.
\begin{lemma}\label{PropositionFirstOrderWInvariantOperator}
Full-rank root systems do not have first-order operators which preserve the $W$-invariance.
For root systems that are not of full rank, any operator of the form $\sum_{i=1}^N \xi_i T_i$ with $\bxi=(\xi_1,\ldots,\xi_N)^T$ orthogonal to $\spn(R)$ preserves the $W$-invariance.
\end{lemma}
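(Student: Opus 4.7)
The plan is to exploit the fact that when $f$ is $W$-invariant, every difference term $(1-\sigma_{\balpha})f/(\balpha\cdot\bx)$ in
\[T_{\bxi} f(\bx) = \bxi\cdot\bnabla f(\bx)+\sum_{\balpha\in R_+}k(\balpha)\frac{(1-\sigma_{\balpha})f(\bx)}{\balpha\cdot\bx}\bxi\cdot\balpha\]
vanishes identically, so that $T_{\bxi} f(\bx) = \bxi\cdot\bnabla f(\bx)$ on the space of $W$-invariant functions. Hence $T_{\bxi}$ preserves $W$-invariance if and only if $\bxi\cdot\bnabla f$ is $W$-invariant for every $W$-invariant $f$.

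Next, I would differentiate the identity $f(\rho^T\bx)=f(\bx)$ to get $\bnabla f(\rho^T\bx) = \rho^T \bnabla f(\bx)$. Then applying $\rho$ to $\bxi\cdot\bnabla f$ in the sense of \eqref{EquationWActionOnFunctions} yields
\[\rho[\bxi\cdot\bnabla f](\bx) = \bxi\cdot\bnabla f(\rho^T\bx) = \bxi\cdot\rho^T\bnabla f(\bx) = (\rho\bxi)\cdot\bnabla f(\bx),\]
so the $W$-invariance requirement reduces to the condition $(\rho\bxi-\bxi)\cdot\bnabla f(\bx) \equiv 0$ for every $\rho\in W$ and every $W$-invariant $f$. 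Specializing to $\rho=\sigma_{\balpha}$ and using $\sigma_{\balpha}\bxi-\bxi = -2(\balpha\cdot\bxi/\balpha\cdot\balpha)\balpha$ collapses this to $(\balpha\cdot\bxi)(\balpha\cdot\bnabla f)(\bx) \equiv 0$. The test function $f(\bx)=\bx\cdot\bx$, which is $W$-invariant since $W$ acts by orthogonal transformations, satisfies $\balpha\cdot\bnabla f(\bx) = 2\balpha\cdot\bx \not\equiv 0$ for every $\balpha\neq\bzero$. Therefore $\balpha\cdot\bxi=0$ for every $\balpha\in R$, i.e.\ $\bxi$ lies in the orthogonal complement of $\spn(R)$.

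Finally, I would verify the converse: if $\bxi\perp\spn(R)$, then $\sigma_{\balpha}\bxi = \bxi - 2(\balpha\cdot\bxi/\balpha\cdot\balpha)\balpha = \bxi$ for every $\balpha\in R$, and since every $\rho\in W$ is a product of such reflections, $\rho\bxi=\bxi$. The identity $(\rho\bxi-\bxi)\cdot\bnabla f=0$ is then trivially satisfied for every $f$ and every $\rho\in W$, and in particular $T_{\bxi}$ maps $W$-invariant functions to $W$-invariant functions. This delivers both parts of the lemma at once: when $R$ is of full rank one has $\spn(R)^\perp=\{\bzero\}$, so only the trivial operator qualifies; otherwise $\spn(R)^\perp$ is non-trivial and any $\bxi$ in it gives a valid first-order $W$-invariance-preserving operator.

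The main (small) obstacle is choosing a $W$-invariant test function whose gradient is annihilated by no root, but $\bx\cdot\bx$ handles this cleanly; the rest is a routine chain of elementary manipulations relying only on the explicit form of $T_{\bxi}$ and the reflection formula for $\sigma_{\balpha}$.
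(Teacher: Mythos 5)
Your proof is correct and follows essentially the same route as the paper's: the difference terms vanish on $W$-invariant functions, $T_{\bxi}$ reduces to $\bxi\cdot\bnabla$, and the transformation rule $\rho[\bxi\cdot\bnabla f]=(\rho\bxi)\cdot\bnabla f$ forces $\rho\bxi=\bxi$ for all $\rho\in W$, i.e.\ $\bxi\perp\spn(R)$. Your explicit test function $f(\bx)=\bx\cdot\bx$ to justify the necessity direction is a small but welcome addition that the paper leaves implicit.
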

\begin{proof}
Suppose that the function $f(\bx)$ is $W$-invariant. The objective is to find out whether
\begin{equation}
T_{\bxi} f(\bx)=\bxi\cdot\bnabla f(\bx)+\frac{\beta}{2}\sum_{\balpha\in R_+}(\bxi\cdot\balpha) \kappa(\balpha)\frac{(1-\sigma_{\balpha})f(\bx)}{\balpha\cdot\bx}
\end{equation}
is $W$-invariant. The sum on the r.h.s.\ vanishes because $f(\bx)=f(\sigma_{\balpha}\bx)$ for $\balpha\in R$, and thus
\begin{equation}\label{EquationProofFirstOrderWInvarianceSetup}
T_{\bxi} f(\bx)=\bxi\cdot\bnabla f(\bx).
\end{equation}
From Eq.~\eqref{EquationWActionOnFunctions}, for $\rho\in W$,
\begin{equation}
\rho\Big[\frac{\partial}{\partial x_i}f(\bx)\Big]=\frac{\partial f(\rho^T\bx)}{\partial (\rho^T\bx)_i}.
\end{equation}
Then, with $\brho_i$ denoting the $i$th column of the matrix representation of $\rho$, one has the equation
\begin{equation}\label{EquationPartialDerivativeOrthogonalChangeBasis}
\rho\Big[\frac{\partial}{\partial x_i}f(\bx)\Big]=\brho_i\cdot\bnabla [\rho f(\bx)]=\brho_i\cdot\bnabla f(\bx).
\end{equation}
For the r.h.s.\ of Eq.~\eqref{EquationProofFirstOrderWInvarianceSetup} to be $W$-invariant, it is required that 
\begin{equation}
\bxi\cdot\bnabla f(\bx) = \rho[\bxi\cdot\bnabla f(\bx)] = \sum_{j=1}^N\xi_j \rho\Big[\frac{\partial}{\partial x_i}f(\bx)\Big] = [\rho\bxi]\cdot\bnabla f(\bx)
\end{equation}
for any $\rho\in W$. This means that $\bxi$ preserves the $W$-invariance of $f$ only when $\rho\bxi=\bxi$. That is, $\sigma_{\balpha}\bxi=\bxi$ (${}^\forall \balpha\in R$), which means that $\bxi$ must be orthogonal to all the roots in $R$. The result follows.
\end{proof}

Out of all the second-order combinations of Dunkl operators, the Dunkl Laplacian is the only one which has the following property.
\begin{lemma}\label{PropositionSecondOrderWInvariantOperator}
The Dunkl Laplacian preserves the $W$-invariance for any root system.
\end{lemma}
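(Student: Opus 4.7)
The plan is to combine two facts already in the excerpt: the basis-independence of the Dunkl Laplacian, Eq.~\eqref{EquationDunklLaplacianBaseIndependence}, and the $W$-covariance of individual Dunkl operators, namely $\rho T_{\bxi}\rho^{-1}=T_{\rho\bxi}$ for every $\rho\in W$ and $\bxi\in\RR^N$. This second identity is not stated explicitly earlier, so I would derive it first; the lemma then follows from a short manipulation.

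To establish the covariance, I would apply $\rho T_{\bxi}\rho^{-1}$ to an arbitrary smooth function and unpack the two pieces of $T_{\bxi}$. Using $\rho g(\bx)=g(\rho^T\bx)$, the chain rule converts $\bxi\cdot\bnabla$ into $(\rho\bxi)\cdot\bnabla$ because $\rho$ is orthogonal. The difference piece $k(\balpha)\bxi\cdot\balpha\,(1-\sigma_{\balpha})f(\bx)/(\balpha\cdot\bx)$ is handled using Eq.~\eqref{EquationSigmaOrthogonalTransformation}, which gives $\rho\sigma_{\balpha}\rho^{-1}=\sigma_{\rho\balpha}$, together with the identities $\balpha\cdot\rho^T\bx=(\rho\balpha)\cdot\bx$ and $\bxi\cdot\balpha=(\rho\bxi)\cdot(\rho\balpha)$. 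Reindexing via $\balpha'=\rho\balpha$, which permutes $R$, and noting both that the summand is even in $\balpha$ (so moving between $R_+$ and $\rho R_+$ causes no change) and that $k$ is constant on $W$-orbits by assumption, one arrives at $\rho T_{\bxi}\rho^{-1}f=T_{\rho\bxi}f$.

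With this covariance in hand, for a $W$-invariant $f$ (so that $\rho f=f$) one computes
\begin{equation}
\rho\sum_{i=1}^N T_i^2 f(\bx)=\sum_{i=1}^N (\rho T_{\be_i}\rho^{-1})^2(\rho f)(\bx)=\sum_{i=1}^N T_{\rho\be_i}^2 f(\bx).
\end{equation}
Since $\{\rho\be_i\}_{i=1}^N$ is again an orthonormal basis of $\RR^N$, Eq.~\eqref{EquationDunklLaplacianBaseIndependence} yields $\sum_i T_{\rho\be_i}^2 f=\sum_i T_i^2 f$. Hence $\rho$ leaves the Dunkl Laplacian of $f$ unchanged for every $\rho\in W$, proving the lemma.

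The main obstacle is the covariance calculation, specifically the sum-reindexing step: one must track carefully how $\rho$ permutes $R=R_+\cup(-R_+)$ and verify that the evenness of the summand in $\balpha$ together with the $W$-invariance of $k$ conspire to recover a sum over $R_+$. As an independent check, or a shortcut that bypasses the covariance identity entirely, one may start from the explicit Dunkl Laplacian \eqref{EquationDunklLaplacian}: when $f$ is $W$-invariant, the reflection term $(1-\sigma_{\balpha})f/(\balpha\cdot\bx)^2$ drops out identically, the ordinary Laplacian $\Delta$ commutes with $O(N)\supset W$, and the drift term $\sum_{\balpha\in R_+}k(\balpha)\,\balpha\cdot\bnabla f(\bx)/\balpha\cdot\bx$ is readily verified to be $W$-invariant by the same reindexing trick applied to a simpler expression.
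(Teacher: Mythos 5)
Your proof is correct, but it takes a different route from the paper's. The paper does not isolate the Laplacian at the outset: it considers a general second-order operator $\sum_{i,j}a_{ij}T_iT_j$ acting on a $W$-invariant $f$, reduces it to the surviving differential and drift pieces as in Eq.~\eqref{WInvariantSecondDegreeOperator}, and shows term by term that $W$-invariance is preserved precisely when $A=\rho^TA\rho$ for all $\rho\in W$ (Eq.~\eqref{ConditionForSecondDegreeWInvariance}); the lemma is then the special case $A=I$, and as a byproduct the paper obtains the classification of \emph{all} such second-order operators ($A=cI$ on $\spn(R)$, arbitrary symmetric on its orthogonal complement), which is what justifies the surrounding claim that the Laplacian is essentially the only one. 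Your primary argument instead establishes the equivariance $\rho T_{\bxi}\rho^{-1}=T_{\rho\bxi}$ --- your reindexing over $\rho R_+$ versus $R_+$ via the evenness of the summand in $\balpha$ and the $W$-invariance of $k$ is exactly the point that needs care, and you handle it correctly --- and then concludes from the basis-independence \eqref{EquationDunklLaplacianBaseIndependence} that $\rho\sum_iT_i^2\rho^{-1}=\sum_iT_{\rho\be_i}^2=\sum_iT_i^2$. This is cleaner and actually proves the stronger statement that the Dunkl Laplacian commutes with the full $W$-action on arbitrary functions, not merely that it preserves $W$-invariance; what it does not give you is the paper's characterization of which matrices $A$ work, so if you ever need the uniqueness assertion you would have to fall back on something like your second, "shortcut" argument, which is essentially the paper's computation restricted to $A=I$.
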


\begin{proof}
For $f(\bx)$ $W$-invariant and an arbitrary real symmetric matrix $[A]_{ij}=a_{ij}$, the following expression holds,
\begin{multline}\label{WInvariantSecondDegreeOperator}
\sum_{1\leq i,j\leq N}a_{ij}T_iT_j f(\bx)=\sum_{1\leq i,j\leq N}a_{ij}\Big[\frac{\partial^2}{\partial x_i\partial x_j}\\
+\frac{\beta}{2}\sum_{\balpha\in R_+}\alpha_i\kappa(\balpha)\frac{1-{\sigma_{\balpha}}}{\balpha\cdot\bx}\frac{\partial}{\partial x_j}\Big]f(\bx).
\end{multline}
The other terms vanish due to the $W$-invariance of $f$. The objective is to find conditions on the matrix $A$ which make the l.h.s.\ of Eq.~\eqref{WInvariantSecondDegreeOperator} $W$-invariant. Each term in the r.h.s.\ must preserve the $W$-invariance of $f$. Therefore, one may examine each term separately. 

Consider the first term in the expression. Suppose that $\rho\in W$. Using Eq.~\eqref{EquationPartialDerivativeOrthogonalChangeBasis} and the $W$-invariance of $f$, one obtains:
\begin{multline}
\sum_{1\leq i,j\leq N}a_{ij}\frac{\partial^2}{\partial x_i\partial x_j} f(\bx)=\rho\Big[\sum_{1\leq i,j\leq N}a_{ij}\frac{\partial^2}{\partial x_i\partial x_j} f(\bx)\Big]\\
=\sum_{1\leq i,j\leq N}a_{ij}\brho_i\cdot\bnabla [\brho_j\cdot\bnabla f(\bx)]\\
=\sum_{1\leq i,j\leq N}[\rho^T A \rho]_{ij}\frac{\partial}{\partial x_i}\frac{\partial}{\partial x_j}f(\bx).
\end{multline}
Then, for any $\rho\in W$,
\begin{equation}\label{ConditionForSecondDegreeWInvariance}
A=\rho^T A\rho.
\end{equation}

Consider now the second term. In general, 
\begin{multline}
\rho\Big[\sum_{\balpha\in R_+}\alpha_i\kappa(\balpha)\frac{1-{\sigma_{\balpha}}}{\balpha\cdot\bx}\frac{\partial f(\bx)}{\partial x_j}\Big]=\sum_{\balpha\in R_+}\alpha_i\kappa(\balpha)\frac{1-{\sigma_{\rho \balpha}}}{(\rho\balpha)\cdot\bx}[\brho_j\cdot\bnabla f(\bx)]\\
=\sum_{\balpha^\prime\in R_+}[\brho_i\cdot\balpha]\kappa(\balpha^\prime)\frac{1-{\sigma_{\balpha^\prime}}}{\balpha^\prime\cdot\bx}[\brho_j\cdot\bnabla f(\bx)].
\end{multline}
Here, Eq.~\eqref{EquationPartialDerivativeOrthogonalChangeBasis} has been used, and in the last line the roots $\balpha^\prime=\rho\balpha$ have been defined. The second term transformed by $\rho$, including the sum over $i$ and $j$, becomes
\begin{multline}
\frac{\beta}{2}\sum_{1\leq i,j\leq N}a_{ij}\rho\Big[\sum_{\balpha\in R_+}\alpha_i\kappa(\balpha)\frac{1-{\sigma_{\balpha}}}{\balpha\cdot\bx}\frac{\partial f(\bx)}{\partial x_j}\Big]\\
=\frac{\beta}{2}\sum_{\balpha\in R_+}\kappa(\balpha)\frac{1-{\sigma_{\balpha}}}{\balpha\cdot\bx}[\rho A\rho^T\balpha]\cdot\bnabla f(\bx).
\end{multline}
Thus, if the second term is $W$-invariant, the following condition must hold:
\begin{multline}
\frac{\beta}{2}\sum_{\balpha\in R_+}\kappa(\balpha)\frac{1-{\sigma_{\balpha}}}{\balpha\cdot\bx}[\rho A\rho^T\balpha]\cdot\bnabla f(\bx)=\frac{\beta}{2}\sum_{\balpha\in R_+}\kappa(\balpha)\frac{1-{\sigma_{\balpha}}}{\balpha\cdot\bx}[A\balpha]\cdot\bnabla f(\bx).
\end{multline}

This means that for both terms in Eq.~\eqref{WInvariantSecondDegreeOperator} to be $W$-invariant, the condition given by Eq.~\eqref{ConditionForSecondDegreeWInvariance} must hold. For root systems with $d_R=N$ this condition is only satisfied by $A=cI$ with $c\in\RR$. When $d_R<N$, one requires $A=(cI_{d_R})\otimes A^\prime$. That is, $A$ must behave like an identity matrix in the subspace $\spn(R)$ and like an arbitrary symmetric matrix in the subspace that is orthogonal to $\spn(R)$. Setting $A=I$ in Eq.~\eqref{WInvariantSecondDegreeOperator} yields the Dunkl Laplacian, which satisfies Eq.~\eqref{ConditionForSecondDegreeWInvariance} for all root systems.
\end{proof}

From the previous two lemmas, it is clear that the behavior of the Dunkl kernel depends on the rank of the root system. Consider a root system of rank less than $N$. Then, one may consider the basis used for Lemma~\ref{LemmaFirstOrderV}, $\{\bphi_i\}_{i=1}^N$. In this case, the Dunkl kernel in the freezing limit is given by the following simple form.
\begin{lemma}\label{FreezingLimitDunklKernelNonFullRank}
For root systems with $d_R<N$, the freezing limit of the Dunkl kernel is given by
\begin{equation}\label{EquationFreezingLimitDunklKernelNonFullRank}
V_\infty\rme^{\bx\cdot\by}=\lim_{\beta\to\infty}V_\beta\rme^{\bx\cdot\by}=\exp\Big[\sum_{d_R<i\leq N}(\bx\cdot\bphi_i)(\bphi_i\cdot\by)\Big].
\end{equation}
\end{lemma}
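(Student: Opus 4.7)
The plan is to combine the $W$-invariance of $V_\infty$ with the structure of first-order $W$-invariant operators, and then to reduce the remaining freedom to a full-rank statement via a factorization of $V_\beta$.

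First, by Lemma~\ref{PropositionVInfinityWInvariance} the function $g_\infty(\bx,\by):=V_\infty \rme^{\bx\cdot\by}$ is $W$-invariant in $\bx$. Second, I invoke the observation behind Lemma~\ref{PropositionFirstOrderWInvariantOperator}: for any $\bxi$ orthogonal to $\spn(R)$ one has $\bxi\cdot\balpha=0$ for every $\balpha\in R$, so the Dunkl operator $T_\bxi$ collapses to the ordinary directional derivative $\bxi\cdot\bnabla$. The Dunkl-kernel eigenrelation $T_\bxi V_\beta \rme^{\bx\cdot\by}=(\bxi\cdot\by)V_\beta \rme^{\bx\cdot\by}$ therefore becomes the $\beta$-independent first-order PDE
\[
\bxi\cdot\bnabla V_\beta \rme^{\bx\cdot\by}=(\bxi\cdot\by)V_\beta \rme^{\bx\cdot\by},\qquad \bxi\in\spn(R)^\perp,
\]
and in the limit $\beta\to\infty$ gives $\bxi\cdot\bnabla g_\infty=(\bxi\cdot\by)g_\infty$.

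Taking $\bxi=\bphi_{d_R+1},\ldots,\bphi_N$, and noting that $\bphi_i\cdot\by=\bphi_i\cdot\by_\perp$ for $i>d_R$, I integrate this overdetermined but consistent linear system to obtain
\[
g_\infty(\bx,\by)=h(\bx_\parallel,\by)\exp\!\Big[\sum_{d_R<i\leq N}(\bx\cdot\bphi_i)(\bphi_i\cdot\by)\Big],
\]
with $h$ independent of $\bx_\perp$. Because $E_\beta(\bx,\by)=E_\beta(\by,\bx)$, repeating the whole argument in the $\by$ variable shows $h$ is also independent of $\by_\perp$, so in fact $h=h(\bx_\parallel,\by_\parallel)$.

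To finish, I must show $h\equiv 1$. The natural route is a factorization property of $V_\beta$: since every $\sigma_\balpha$ with $\balpha\in R$ fixes $\spn(R)^\perp$, the Dunkl operators only couple to $\bx_\parallel$, and a direct check against the defining equations shows that $V_\beta[g(\bx_\parallel)u(\bx_\perp)]=(V_\beta^{(R)}g)(\bx_\parallel)\,u(\bx_\perp)$, where $V_\beta^{(R)}$ is the intertwining operator of $R$ viewed as a full-rank system inside $\spn(R)$. Applied to $\rme^{\bx\cdot\by}=\rme^{\bx_\parallel\cdot\by_\parallel}\rme^{\bx_\perp\cdot\by_\perp}$, this identifies $h(\bx_\parallel,\by_\parallel)=V_\infty^{(R)}\rme^{\bx_\parallel\cdot\by_\parallel}$, so the lemma reduces to the full-rank claim $V_\infty^{(R)}\rme^{\bx_\parallel\cdot\by_\parallel}=1$.

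The main obstacle is precisely this full-rank case, since Lemma~\ref{PropositionFirstOrderWInvariantOperator} offers no nontrivial first-order $W$-invariant operators when $d_R=N$. My approach would be to use the Dunkl-Laplacian eigenrelation $\sum_i T_i^2 V_\beta=y^2 V_\beta$ together with Lemma~\ref{PropositionSecondOrderWInvariantOperator}: writing $V_\beta^{(R)}\rme^{\bx_\parallel\cdot\by_\parallel}=h+O(1/\beta)$ and matching powers of $\beta$, the coefficient of $\beta$ on $W$-invariant functions forces $\sum_{\balpha\in R_+}\kappa(\balpha)(\balpha\cdot\bnabla h)/(\balpha\cdot\bx_\parallel)=0$; combined with $W$-invariance of $h$, the normalization $h(\bzero)=1$ inherited from $V_\beta 1=1$, and Chevalley's theorem on the structure of $W$-invariant polynomials, a degree-by-degree analysis of the Taylor expansion of $h$ kills every positive-degree homogeneous component. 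Alternatively, one can appeal to the probabilistic representation in Prop.~\ref{PositivityVBeta}: the measures $\mu_{\bx_\parallel}^{(\beta)}$ have first moment $V_\beta^{(R)}\bx_\parallel\to\bzero$ by Lemma~\ref{LemmaFirstOrderV} (since on $\spn(R)$ the centroid $|W|^{-1}\sum_{\rho\in W}\rho\bx_\parallel$ vanishes), are supported in the compact convex set $\co(W\bx_\parallel)$, and a tightness plus moment-concentration argument gives weak convergence to $\delta_\bzero$, whence $V_\infty^{(R)}\rme^{\bx_\parallel\cdot\by_\parallel}=\rme^{0}=1$.
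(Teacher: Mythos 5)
Your overall architecture is genuinely different from the paper's, and its first half is sound — arguably more careful than the original. The paper simply asserts that the parallel derivatives of $g_\infty$ vanish and then solves the resulting system of ODEs; you instead integrate only the perpendicular equations (which are rigorous, since $T_{\bphi_j}=\bphi_j\cdot\bnabla$ exactly for $j>d_R$ and the eigenrelation is $\beta$-independent there), isolate the residual factor $h(\bx_\parallel,\by_\parallel)$, and reduce to a full-rank statement via the factorization $V_\beta=V_\beta^{(R)}\otimes\mathrm{id}$, which is correct by uniqueness of the intertwining operator. The difficulty is that the whole proof now rests on the claim $V_\infty^{(R)}\rme^{\bx_\parallel\cdot\by_\parallel}=1$, and neither of your two closings actually establishes it.

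For route (a): the limiting equation $\sum_{\balpha\in R_+}\kappa(\balpha)(\balpha\cdot\bnabla h)/(\balpha\cdot\bx)=0$, together with $W$-invariance and $h(\bzero)=1$, does \emph{not} force $h\equiv 1$ once the root system is reducible. Take $R_+=\{\be_1,\be_2\}$ in $\RR^2$ with multiplicities $\kappa_1\neq\kappa_2$: the polynomial $p(\bx)=\kappa_2x_1^2-\kappa_1x_2^2$ is $W$-invariant, homogeneous of degree two, nonzero, and annihilated by that operator, so a degree-by-degree analysis based on the limiting PDE alone cannot exclude it. For route (b): a compactly supported measure with vanishing first moment need not concentrate at the origin (consider $\tfrac12\delta_{\bx}+\tfrac12\delta_{-\bx}$); worse, for full-rank systems every $W$-invariant measure automatically has centroid zero, so the first moment from Lemma~\ref{LemmaFirstOrderV} carries no information at all. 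What closes the argument is a quantitative decay input: either the \emph{second} moment $V_\beta[\,|\cdot|^2](\bx)$, which a short Dunkl-Laplacian computation (done factor by factor over the irreducible components of $R$) shows equals $N x^2/(N+\beta\gamma)$ in the irreducible case and hence tends to zero, giving concentration at $\bzero$ by Chebyshev; or, as the paper does in the first part of Lemma~\ref{FreezingLimitDunklKernelFullRank}, the $O(1/\beta)$ decay of the linear term from Lemma~\ref{LemmaFirstOrderV} propagated inductively through the Laplacian hierarchy. Some such estimate is indispensable, and your proposal as written omits it.
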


\begin{proof}
For this derivation, denote $V_\infty\rme^{\bx\cdot\by}$ by $g(\bx,\by)$. By Lemma~\ref{PropositionVInfinityWInvariance}, the function $g(\bx,\by)$ must be $W$-invariant. At the same time, by definition,
\begin{equation}\label{EquationDunklKernelFirstOrder}
T_{\bxi} V_\beta \rme^{\bx\cdot\by}=\bxi\cdot\by V_\beta \rme^{\bx\cdot\by}
\end{equation}
for all $\beta>0$ and $\bxi\in\RR^N$. However, by Lemma~\ref{PropositionFirstOrderWInvariantOperator}, the operator $T_{\bxi}$ does not preserve $W$-invariance unless $\bxi$ is orthogonal to $\spn(R)$. Therefore, Eq.~\eqref{EquationDunklKernelFirstOrder} only holds in the limit $\beta\to\infty$ when $\bxi$ is a linear combination of $\{\bphi_i\}_{d_R<i\leq N}$, otherwise it must be zero because $W$-invariant and non-$W$-invariant quantities cannot be identically equal.

For $d_R<j\leq N$, one has
\begin{equation}
T_{\bphi_j}=\bphi_j\cdot\bnabla+\frac{\beta}{2}\sum_{\balpha\in R_+}[\bphi_j\cdot\balpha] \kappa(\balpha) \frac{1-\sigma_{\balpha}}{\balpha\cdot\bx}=\bphi_j\cdot\bnabla,
\end{equation}
because $\bphi_j\cdot\balpha=0\ \forall \balpha\in R,$ $d_R<j\leq N$. Then, when $\beta\to\infty$, 
\begin{equation}\label{DunklKernelFirstOrderDifferentialEquation}
\bphi_j\cdot\bnabla g(\bx,\by)=[\bphi_j\cdot\by] g(\bx,\by)
\end{equation}
for every $d_R<j\leq N$. Consider the orthogonal map $Z$ such that $[Z]_{ij}=\zeta_{ij}=[\bphi_j]_i$ and define $\bu,\bv\in\RR^N$ such that
\begin{IEEEeqnarray}{rCl}
\bx=Z\bu\ \Longleftrightarrow\ \bu=Z^T\bx,\label{NiceOrthogonalChangeOfBasisX}\\
\by=Z\bv\ \Longleftrightarrow\ \bv=Z^T\by\label{NiceOrthogonalChangeOfBasisY}.
\end{IEEEeqnarray}
Define $g_Z(\bu,\bv)=g(\bx,\by)$. Then, Eq.~\eqref{DunklKernelFirstOrderDifferentialEquation} becomes
\begin{equation}
\frac{\partial}{\partial u_j} g_Z(\bu,\bv)=
\begin{cases}\label{EquationSimpleExponential}
v_j g_Z(\bu,\bv) & \text{if }d_R<j\leq N,\\
0 & \text{otherwise.}
\end{cases}
\end{equation}
Keeping in mind the condition $g(\bzero,\by)=1$, the solution is
\begin{equation}
g_Z(\bu,\bv)=\exp\Big[\sum_{d_R<i\leq N}u_iv_i\Big],
\end{equation}
and the proof is completed by transforming $\bu$ and $\bv$ back into $\bx$ and $\by$.
\end{proof}

The freezing limit of the Dunkl kernel for root systems of full rank takes a different form, and it is non-trivial only when its arguments are scaled by a factor of $\sqrt{\beta}$.
\begin{lemma}\label{FreezingLimitDunklKernelFullRank}
For root systems with $d_R=N$, $V_\infty\rme^{\bx\cdot\by}=1$. Furthermore,
\begin{equation}\label{EquationFreezingLimitKernelFullRank}
\lim_{\beta\to\infty}V_\beta \rme^{\sqrt{\beta}\bx\cdot\by}=\exp\Big[\frac{x^2y^2}{2\gamma}\Big].
\end{equation}
\end{lemma}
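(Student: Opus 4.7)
The plan has two parts matching the two assertions. For the full-rank case $V_\infty e^{\bx\cdot\by}=1$, I would exploit the integral representation $V_\beta e^{\bx\cdot\by} = \int_{\co(W\bx)} e^{\bxi\cdot\by}\,\ud\mu_{\bx}(\bxi)$ from Prop.~\ref{PositivityVBeta} and show that the probability measure $\mu_{\bx}$ concentrates at the origin as $\beta\to\infty$. The first moment of $\mu_{\bx}$ equals $V_\beta\bx$, which by Lemma~\ref{LemmaFirstOrderV} in the full-rank case is $\bx/(1+\beta\gamma/N)$ and vanishes in the limit. The second moment is $V_\beta x^2$, which is $W$-invariant (since $V_\beta$ commutes with $W$ and $x^2$ is $W$-invariant), hence proportional to $x^2$ for an irreducible full-rank root system. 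Applying the intertwining relation $T_j V_\beta x^2 = 2 V_\beta x_j$ with $T_j=\partial_j$ on $W$-invariants fixes the proportionality constant to $1/(1+\beta\gamma/N)$, so the second moment also vanishes, forcing $\mu_{\bx}\to\delta_{\bzero}$ weakly and $V_\beta e^{\bx\cdot\by}\to 1$.

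For the scaled limit, let $h_\beta(\bx,\by):=V_\beta e^{\sqrt{\beta}\bx\cdot\by}$ with limit $h_\infty$. A rescaled version of Lemma~\ref{PropositionVInfinityWInvariance}, obtained by dividing the intertwining relation $T_i h_\beta = \sqrt{\beta}\,y_i h_\beta$ by $\beta$ and letting $\beta\to\infty$, shows that $h_\infty$ is $W$-invariant in $\bx$. Next, I would apply the Dunkl Laplacian to obtain $\sum_i T_i^2 h_\beta = \beta y^2 h_\beta$. Using Eq.~\eqref{EquationDunklLaplacian} together with $W$-invariance of $h_\infty$ (so that the difference terms vanish in the limit), dividing by $\beta$ yields the first-order PDE
\[\sum_{\balpha\in R_+}\kappa(\balpha)\frac{\balpha\cdot\bnabla h_\infty}{\balpha\cdot\bx}=y^2 h_\infty.\]
A direct substitution verifies that $h_\infty=\exp[x^2y^2/(2\gamma)]$ is a solution: the left-hand side reduces to $(y^2/\gamma)\sum_{\balpha\in R_+}\kappa(\balpha)\,h_\infty=y^2 h_\infty$ by Eq.~\eqref{DefinitionParameterGamma}, and the normalization $h_\infty(\bzero,\by)=1$ follows from $V_\beta 1 = 1$.

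The main technical obstacle is justifying the interchange of limits with Dunkl operators (which themselves depend on $\beta$) and showing that subleading contributions such as $\frac{1}{\beta}\Delta h_\beta$ truly vanish rather than contributing at leading order. Uniform-in-$\beta$ bounds on $h_\beta$ and its derivatives, obtainable from Eq.~\eqref{EquationVBoundedness} together with the analyticity of the Dunkl kernel, should handle this. Uniqueness of the PDE solution is a secondary subtlety; restricting to $h_\infty$ that is $W$-invariant in both $\bx$ and $\by$ (the latter by symmetry of the Dunkl kernel) and normalized at the origin reduces the PDE to an ODE along the $W$-invariant characteristic parametrized by $x^2$, admitting a unique normalized solution.
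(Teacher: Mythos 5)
Your argument is correct in substance but reaches the result by a genuinely different route from the paper. For the unscaled limit $V_\infty\rme^{\bx\cdot\by}=1$, you work with the representing measure $\mu_{\bx}$ of Prop.~\ref{PositivityVBeta} and show its first and second moments vanish (via Lemma~\ref{LemmaFirstOrderV} and the relation $T_jV_\beta x^2=2V_\beta x_j$), so that $\mu_{\bx}\to\delta_{\bzero}$ weakly on the compact support $\co(W\bx)$; the paper instead expands the kernel as $\sum_n V_\beta(\bx\cdot\by)^n/n!$ and kills each positive-degree term. For the scaled limit, you pass directly to the limiting first-order equation $\sum_{\balpha\in R_+}\kappa(\balpha)\,\balpha\cdot\bnabla h_\infty/(\balpha\cdot\bx)=y^2h_\infty$ for the whole kernel, whereas the paper proves by induction the precise decay rates $V_\beta(\bx\cdot\by)^{2m}/(2m)!\sim\beta^{-m}$ and $V_\beta(\bx\cdot\by)^{2m+1}/(2m+1)!\sim\beta^{-m-1}$, so that the odd terms vanish after the $\sqrt{\beta}$ scaling and the surviving even limits $L_m$ obey the recursion $y^2L_{m-1}=\sum_{\balpha}\kappa(\balpha)\,\balpha\cdot\bnabla L_m/(\balpha\cdot\bx)$, which is your PDE read off degree by degree. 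Your version is shorter and makes the mechanism (loss of all non-$W$-invariant information) transparent; the paper's term-by-term induction buys the explicit $\beta$-asymptotics of each Taylor coefficient, which explains why $\sqrt{\beta}$ is the right scaling and is reused later for the finite-$\beta$ correction $\varepsilon_\beta$. Two caveats, neither fatal relative to the paper: your moment computation uses that the $W$-invariant quadratic $V_\beta x^2$ is proportional to $x^2$, which requires irreducibility of $R$ (as does the final formula itself, so this is implicit in the lemma); and your uniqueness argument is thinner than you suggest, since $W$-invariance does not by itself force $h_\infty$ to be a function of $x^2$ alone (for $N\geq 3$ there are higher basic invariants, and a single first-order PDE admits arbitrary functions constant along characteristics) --- but the paper likewise only exhibits and verifies a solution, using the boundary condition $L_m(\bzero,\by)=\delta_{0,m}$ and homogeneity, so you are no worse off on this point.
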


\begin{proof}
The first part of the statement follows from Lemma~\ref{PropositionFirstOrderWInvariantOperator} and the normalization of the intertwining operator ($V_\beta 1=1\ {}^\forall \beta$). Thus, in the expansion
\begin{equation}\label{DunklKernelExpansion}
V_\beta \rme^{\bx\cdot\by}=\sum_{n=0}^\infty V_\beta\frac{(\bx\cdot\by)^n}{n!},
\end{equation}
all terms vanish as $\beta\to\infty$ except for the zeroth order term which is equal to one.

To prove the second part of the statement, the decay with $\beta$ of each of the terms in this expansion is derived. By Lemma~\ref{LemmaFirstOrderV}, the first order term is
\begin{equation}\label{DunklKernelLinearTerm}
V_\beta\bx\cdot\by=\frac{\bx\cdot\by}{1+\beta\gamma/N}\stackrel{\beta\text{ large}}{\approx}\frac{N\bx\cdot\by}{\beta\gamma}\sim\frac{1}{\beta}.
\end{equation}
By Lemma~\ref{PropositionVInfinityWInvariance}, the freezing limit eliminates the non-$W$-invariant part of $V_\beta\exp(\bx\cdot\by)$ faster than its $W$-invariant part. Consequently, the slowest decay for each of the terms in Eq.~\eqref{DunklKernelExpansion} is obtained by using the Dunkl Laplacian, which relates higher-order terms with lower-order terms while conserving their $W$-invariance (Lemma~\ref{PropositionSecondOrderWInvariantOperator}).

In general, each term in the expansion~\eqref{DunklKernelExpansion} satisfies
\begin{multline}\label{EquationSetupFrozenDunklKernelFullRank}
\frac{y^2}{\beta}V_\beta\frac{(\bx\cdot\by)^{n-2}}{(n-2)!}=\Big[\frac{1}{\beta}\Delta+\sum_{\balpha\in R_+}\kappa(\balpha)\Big(\frac{\balpha\cdot\bnabla}{\balpha\cdot\bx}-\frac{\alpha^2}{2}\frac{1-\sigma_{\balpha}}{(\balpha\cdot\bx)^2}\Big)\Big]V_\beta\frac{(\bx\cdot\by)^{n}}{n!}
\end{multline}
for $n>1$. Here, the mathematical induction method is used. Assume that
\begin{equation}\label{DunklKernelBetaAsymptotics}
V_\beta\frac{(\bx\cdot\by)^{2m}}{(2m)!}\sim\frac{1}{\beta^m}\quad\text{and}\quad V_\beta\frac{(\bx\cdot\by)^{2m+1}}{(2m+1)!}\sim\frac{1}{\beta^{m+1}},
\end{equation}
and note that these assumptions hold for $m=0$. Because spatial partial derivatives and $\sigma_{\balpha}$ do not have an effect on $\beta$, one may write
\begin{multline}
\sum_{\balpha\in R_+}\kappa(\balpha)\Big(\frac{\balpha\cdot\bnabla}{\balpha\cdot\bx}-\frac{\alpha^2}{2}\frac{1-\sigma_{\balpha}}{(\balpha\cdot\bx)^2}\Big)V_\beta\frac{(\bx\cdot\by)^{n}}{n!}\\=\frac{1}{\beta}\Big[y^2V_\beta\frac{(\bx\cdot\by)^{n-2}}{(n-2)!}-\Delta V_\beta\frac{(\bx\cdot\by)^{n}}{n!}\Big]\\
\stackrel{\beta\text{ large}}{\approx}\frac{y^2}{\beta}V_\beta\frac{(\bx\cdot\by)^{n-2}}{(n-2)!}\sim
\begin{cases}
\frac{1}{\beta^{m+1}}&\ \text{for }n=2(m+1),\\
\frac{1}{\beta^{m+2}}&\ \text{for }n=2(m+1)+1.
\end{cases}
\end{multline}
By induction, Eq.~\eqref{DunklKernelBetaAsymptotics} holds for $m\geq 0$. Then, it follows that
\begin{equation}
V_\beta\frac{\beta^m(\bx\cdot\by)^{2m}}{(2m)!}
\end{equation}
converges to a non-zero value as $\beta\to\infty$ and that 
\begin{equation}
V_\beta\frac{\beta^{m+1/2}(\bx\cdot\by)^{2m+1}}{(2m+1)!}\sim\frac{1}{\sqrt{\beta}}\stackrel{\beta\to\infty}{\longrightarrow}0.
\end{equation}
In other words, the terms of odd orders vanish in the limit. 

Define the freezing limit of the scaled even terms of the expansion~\eqref{DunklKernelExpansion} by
\begin{equation}
L_m(\bx,\by):=\lim_{\beta\to\infty}V_\beta\frac{\beta^{m}(\bx\cdot\by)^{2m}}{(2m)!}.
\end{equation}
These functions, by Lemma~\ref{PropositionVInfinityWInvariance}, are $W$-invariant. Multiplying Eq.~\eqref{EquationSetupFrozenDunklKernelFullRank} by $\beta^{m}$ with $n=2m$ gives
\begin{multline}
y^2 V_\beta\frac{\beta^{m-1}(\bx\cdot\by)^{2(m-1)}}{(2(m-1))!}\\=\Big[\frac{1}{\beta}\Delta+\sum_{\balpha\in R_+}\kappa(\balpha)\Big(\frac{\balpha\cdot\bnabla}{\balpha\cdot\bx}-\frac{\alpha^2}{2}\frac{1-\sigma_{\balpha}}{(\balpha\cdot\bx)^2}\Big)\Big]V_\beta\frac{\beta^{m}(\bx\cdot\by)^{2m}}{(2m)!}.
\end{multline}
Taking the freezing limit of this equation yields
\begin{equation}\label{DifferentialEquationFrozenDunklKernelFullRankExpansion}
y^2L_{m-1}(\bx,\by)=\sum_{\balpha\in R_+}\kappa(\balpha)\frac{\balpha\cdot\bnabla L_{m}(\bx,\by)}{\balpha\cdot\bx}.
\end{equation}
This equation has the boundary condition
\begin{equation}\label{EquationBoundaryConditionFrozenDunklKernelFullRank}
L_m(\bzero,\by)=\delta_{0,m}.
\end{equation}
Let us assume the following solution,
\begin{equation}
L_m(\bx,\by)=\frac{1}{m!}\Big(\frac{x^2y^2}{2\gamma}\Big)^m.
\end{equation}
It satisfies the boundary condition \eqref{EquationBoundaryConditionFrozenDunklKernelFullRank}, and inserting it into Eq.~\eqref{DifferentialEquationFrozenDunklKernelFullRankExpansion} gives
\begin{multline}
\sum_{\balpha\in R_+}\kappa(\balpha)\frac{\balpha\cdot\bnabla L_{m}(\bx,\by)}{\balpha\cdot\bx}=L_{m-1}(\bx,\by)\frac{y^2}{\gamma}\sum_{\balpha\in R_+}\kappa(\balpha)=y^2L_{m-1}(\bx,\by)
\end{multline}
for all $m>0$. Thus, summing up over $m$ the Lemma is proved, i.e.,
\begin{equation}
\lim_{\beta\to\infty}V_\beta\rme^{\sqrt{\beta}\bx\cdot\by}=\sum_{m=0}^\infty L_m(\bx,\by)=\exp\Big(\frac{x^2y^2}{2\gamma}\Big).\qedhere
\end{equation}
\end{proof}

In addition to this lemma, it is possible to consider Eq.~\eqref{EquationFreezingLimitKernelFullRank} when $\beta$ is large but finite. By Lemma~\ref{PropositionVInfinityWInvariance}, the non-$W$-invariant part of $V_\beta \exp[\sqrt{\beta}\bx\cdot\by]$ can be neglected. The result of applying the Dunkl Laplacian to the scaled Dunkl kernel yields the equation
\begin{equation}\label{DifferentialEquationApproximationFrozenKernel}
y^2V_\beta\rme^{\sqrt{\beta}\bx\cdot\by}=\Big[\frac{1}{\beta}\Delta+\sum_{\balpha\in R_+}\kappa(\balpha)\Big(\frac{\balpha\cdot\bnabla}{\balpha\cdot\bx}\Big)\Big]V_\beta\rme^{\sqrt{\beta}\bx\cdot\by}
\end{equation}
after dividing by $\beta$. In view of Eq.~\eqref{EquationFreezingLimitKernelFullRank}, the form $\exp[ x^2y^2/2\eta]$ with $\eta>0$ is a reasonable approximation to the solution of this equation for large $\beta$. Inserting this form in Eq.~\eqref{DifferentialEquationApproximationFrozenKernel} gives the following equation for $\eta$,
\begin{equation}
\eta=\frac{1}{2\beta}[N+\beta\gamma+\sqrt{(N+\beta\gamma)^2+4\beta x^2y^2}]\stackrel{\beta\text{ large}}{\approx} \frac{\gamma}{2}+\sqrt{\frac{\gamma^2}{4}+\frac{x^2y^2}{\beta}}.
\end{equation}
From this relation, it follows that 
\begin{IEEEeqnarray}{rCl}
\lim_{\beta\to\infty}\eta&=&\gamma,\label{CorrectedGammaAtInfinity}\\
\eta&\stackrel{\beta,x\text{ large}}{\approx}&x y/\sqrt{\beta}.
\end{IEEEeqnarray}
Thus, for $\bx$ finite
\begin{equation}\label{ApproximationFrozenKernel}
V_\beta \exp[\sqrt{\beta}\bx\cdot\by]\stackrel{\beta\text{ large}}{\approx}\exp\Big[\frac{x^2y^2}{2(\gamma+\varepsilon_\beta)}\Big],
\end{equation}
where $\varepsilon_\beta$ is given by
\begin{equation}\label{EquationBehaviorEpsilonBeta}
\varepsilon_\beta=
\begin{cases}
x^2y^2/(\beta\gamma)&\text{when }x\ll\sqrt{\beta},\\
xy/\sqrt{\beta}&\text{when }x\gg\sqrt{\beta}.
\end{cases}
\end{equation}

The arguments of the Dunkl kernel that appears inside the integral in Eq.~\eqref{ScaledProcessDistributionFreezing} are scaled as $\bv(\beta/t)^{1/2}$. Thus, it is necessary to estimate an expression for this scaled Dunkl kernel to prove Thm.~\ref{TheoremFreezingLimit}. To estimate this limit one must be careful when replacing $\bx$ by $\sqrt{\beta}\bx$ in Eq.~\eqref{EquationFreezingLimitDunklKernelNonFullRank}. Equation~\eqref{EquationFreezingLimitKernelFullRank} results from such a scaling, and it contains $\beta$-independent terms. Therefore, it is expected that replacing $\bx$ by $\sqrt{\beta}\bx$ in Eq.~\eqref{EquationFreezingLimitDunklKernelNonFullRank} produces additional terms which correspond to $\bx_\parallel$ and $\by_\parallel$.
\begin{lemma}\label{FreezingLimitDunklKernel}
The Dunkl kernel in the freezing regime is given by the expression
\begin{equation}\label{DunklKernelGeneralFreezing}
V_\beta\rme^{\sqrt{\beta}\bx\cdot\by}\stackrel{\beta\text{ large}}{\approx}\exp\Big[\sqrt{\beta}\bx_\perp\cdot\by_\perp+\frac{x_\parallel^2y_\parallel^2}{2(\gamma+\varepsilon_\beta)}\Big].
\end{equation}
\end{lemma}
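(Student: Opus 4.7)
My plan is to reduce this mixed case to the two extremes already handled — the rank-deficient limit in Lemma~\ref{FreezingLimitDunklKernelNonFullRank} and the full-rank limit in Lemma~\ref{FreezingLimitDunklKernelFullRank} — by factorising $V_\beta\rme^{\sqrt{\beta}\,\bx\cdot\by}$ along the orthogonal splitting $\RR^N=\spn(R)\oplus\spn(R)^\perp$. The pivotal observation is that every $\rho\in W$ fixes $\spn(R)^\perp$ pointwise; combined with the $W$-equivariance $\mu_{\rho\bx}(B)=\mu_{\bx}(\rho^{-1}B)$ and the support condition $\supp\mu_{\bx}\subseteq\co(W\bx)$ from Prop.~\ref{PositivityVBeta}, this forces $\bxi_\perp=\bx_\perp$ for every $\bxi$ in the support of $\mu_{\bx}$.

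\textbf{Key step.} Writing $\bx\cdot\by=\bx_\parallel\cdot\by_\parallel+\bx_\perp\cdot\by_\perp$ and plugging into the integral representation of Prop.~\ref{PositivityVBeta} gives
\[
V_\beta\rme^{\sqrt{\beta}\,\bx\cdot\by}=\int_{\RR^N}\rme^{\sqrt{\beta}\,\bxi_\parallel\cdot\by_\parallel}\,\rme^{\sqrt{\beta}\,\bxi_\perp\cdot\by_\perp}\,d\mu_{\bx}(\bxi)=\rme^{\sqrt{\beta}\,\bx_\perp\cdot\by_\perp}\int_{\RR^N}\rme^{\sqrt{\beta}\,\bxi_\parallel\cdot\by_\parallel}\,d\mu_{\bx}(\bxi),
\]
using $\bxi_\perp=\bx_\perp$ on the support in the second equality. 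I would then identify the residual integral as the action of an intrinsic intertwining operator $V_\beta^{\parallel}$ on $\spn(R)$: the push-forward of $\mu_{\bx}$ onto $\spn(R)$ is a $W$-equivariant probability measure supported in $\co(W\bx_\parallel)$, and it intertwines partial derivatives with the restricted Dunkl operators $T_{\bphi_j}$ for $j\leq d_R$ (which, because $\balpha\cdot\bx=\balpha\cdot\bx_\parallel$ and $\sigma_{\balpha}\bx_\perp=\bx_\perp$ for $\balpha\in R$, are exactly the intrinsic Dunkl operators of $R$ viewed as a full-rank system inside $\spn(R)$). By the uniqueness portion of Prop.~\ref{PositivityVBeta}, this push-forward must coincide with the representing measure of $V_\beta^{\parallel}$, giving the factorisation
\[
V_\beta\rme^{\sqrt{\beta}\,\bx\cdot\by}=\rme^{\sqrt{\beta}\,\bx_\perp\cdot\by_\perp}\,V_\beta^{\parallel}\rme^{\sqrt{\beta}\,\bx_\parallel\cdot\by_\parallel}.
\]

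\textbf{Conclusion.} Since $\gamma=\sum_{\balpha\in R_+}\kappa(\balpha)$ is an intrinsic sum over roots and does not depend on the ambient dimension, Lemma~\ref{FreezingLimitDunklKernelFullRank} together with its large-$\beta$ refinement in Eq.~\eqref{ApproximationFrozenKernel} applies verbatim to $V_\beta^{\parallel}$, with $x$ and $y$ in \eqref{EquationBehaviorEpsilonBeta} replaced by $x_\parallel$ and $y_\parallel$. This produces the factor $\exp[x_\parallel^2 y_\parallel^2/(2(\gamma+\varepsilon_\beta))]$, and combining with the factorisation above yields the stated formula \eqref{DunklKernelGeneralFreezing}.

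\textbf{Main obstacle.} The delicate step is the identification of the push-forward measure with the intrinsic intertwining measure of the full-rank root system on $\spn(R)$. If this direct route proves cumbersome, an alternative I would pursue is a term-by-term verification on power series: expand $\rme^{\sqrt{\beta}\,\bx\cdot\by}$ via the binomial theorem into polynomials bihomogeneous in $\bx_\parallel$ and $\bx_\perp$, observe that $V_\beta$ leaves pure $\bx_\perp$-monomials unchanged (by the support argument above) and reduces to $V_\beta^{\parallel}$ on pure $\bx_\parallel$-monomials, treat mixed monomials multiplicatively using the same support argument, and then resum. Controlling convergence uniformly in $\beta$ when passing to the large-$\beta$ asymptotics is the point at which the boundedness \eqref{EquationVBoundedness} would do the essential work.
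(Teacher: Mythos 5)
Your argument is correct, but it reaches the factorisation by a genuinely different route from the one in the text. The paper works at the level of the defining equations: it rotates to the adapted orthonormal basis, shows that the transformed Dunkl operators $T_{Z,i}$ reduce to plain partial derivatives for $d_R<i\leq N$ and to the intrinsic full-rank Dunkl operators for $i\leq d_R$, and then separates variables in the eigenvalue equations $T_{\bxi}E=\sqrt{\beta}\,\bxi\cdot\by\,E$ and $\sum_iT_i^2E=\beta y^2E$, checking that the separation constant vanishes. You instead exploit R\"osler's integral representation: since $W$ fixes $\spn(R)^\perp$ pointwise, $\supp\mu_{\bx}\subseteq\co(W\bx)$ forces $\bxi_\perp=\bx_\perp$ $\mu_{\bx}$-almost surely, which yields the factor $\rme^{\sqrt{\beta}\bx_\perp\cdot\by_\perp}$ exactly and in one line --- arguably cleaner than the paper's ODE for $G_Z$. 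What your route costs is the identification of the residual integral with $V_\beta^{\parallel}\rme^{\sqrt{\beta}\bx_\parallel\cdot\by_\parallel}$: the uniqueness you invoke from Prop.~\ref{PositivityVBeta} is uniqueness of the measure \emph{given} the operator, whereas what you actually need is uniqueness of the degree-preserving operator satisfying the intertwining relation and $V_\beta 1=1$ for the full-rank system on $\spn(R)$ (Dunkl's uniqueness theorem, used implicitly but not stated in the paper). That identification is precisely the content of the paper's verification that the $T_{Z,i}$, $i\leq d_R$, are the intrinsic Dunkl operators and that $F_Z$ solves the full-rank equation, so the two proofs converge on the same appeal to Lemma~\ref{FreezingLimitDunklKernelFullRank} and Eq.~\eqref{ApproximationFrozenKernel} for the parallel factor; with the uniqueness theorem cited properly, your version is complete.
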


\begin{proof}
Consider the scaled Dunkl kernel
\begin{equation}
E(\bx,\by):=V_\beta\rme^{\sqrt{\beta}\bx\cdot\by}.
\end{equation} 
By definition, $E(\bx,\by)$ must satisfy the following two equations:
\begin{IEEEeqnarray}{rCl}
T_{\bxi}E(\bx,\by)&=&\sqrt{\beta}\bxi\cdot\by E(\bx,\by),\label{EquationDunklOperatorOnScaledKernel}\\
\sum_{i=1}^N T_i^2 E(\bx,\by)&=&\beta y^2 E(\bx,\by).\label{EquationDunklLaplacianOnScaledKernel}
\end{IEEEeqnarray}

Let us transform the vectors $\bx$ and $\by$ into $\bu$ and $\bv$ by using Eqs.~\eqref{NiceOrthogonalChangeOfBasisX} and \eqref{NiceOrthogonalChangeOfBasisY}. Define the transformed Dunkl kernel by
\begin{equation}
E_Z(\bu,\bv):=E(\bx,\by).
\end{equation}
The various operators that appear in the Dunkl operators are transformed as follows: the directional derivative becomes
\begin{IEEEeqnarray}{rCl}
\bxi\cdot\bnabla E(\bx,\by)&=&(Z^T\bxi)\cdot\bnabla_u E_Z(\bu,\bv),
\end{IEEEeqnarray}
and the reflection operator turns into
\begin{multline}
\sigma_{\balpha} E(\bx,\by)=E(\sigma_{\balpha} \bx,\by)=E_Z(Z^T\sigma_{\balpha} Z \bu,\bv)\\=E_Z(\sigma_{Z^T\balpha}\bu,\bv)=\sigma_{Z^T\balpha}E_Z(\bu,\bv).
\end{multline}
Using these transformations, the Dunkl operator $T_{\bxi}$ becomes
\begin{multline}
T_{\bxi}E(\bx,\by)=(Z^T\bxi)\cdot\bnabla_u E_Z(\bu,\bv)+\frac{\beta}{2}\sum_{\balpha\in R_+}[\bxi\cdot\balpha]\kappa(\balpha)\frac{1-\sigma_{Z^T \balpha}}{(Z^T\balpha)\cdot\bu}E_Z(\bu,\bv)\\
=(Z^T\bxi)\cdot\bnabla_u E_Z(\bu,\bv)+\frac{\beta}{2}\sum_{\balpha_Z\in R_{Z+}}[(Z^T\bxi)\cdot\balpha_Z]\kappa_Z(\balpha_Z)\frac{1-\sigma_{\balpha_Z}}{\balpha_Z\cdot\bu}E_Z(\bu,\bv).
\end{multline}
In the last line, the root system $R_{Z}$ is given by
\begin{equation}
R_Z=\{\balpha_Z=Z^T\balpha : \balpha\in R\},
\end{equation}
and its corresponding multiplicity function is defined so that $\kappa_Z(\balpha_Z)=\kappa(\balpha)$.

Because the last $N-d_R$ vectors of the basis $\{\bphi_i\}_{i=1}^N$ are orthogonal to $\spn(R)$, $[\balpha_Z]_j=0$ for $d_R<j\leq N$:
\begin{equation}\label{AlphaZPerpendicularComponents}
[\balpha_Z]_j=[Z^T\balpha]_j=\sum_{i=1}^N[Z]_{ij}\alpha_i=\sum_{i=1}^N[\bphi_j]_i\alpha_i=\bphi_j\cdot\balpha=0.
\end{equation}
Thus, if $\bxi=\bphi_i$, then $Z^T\bxi=\be_i$ and the Dunkl operator becomes
\begin{multline}
T_{\bphi_i}E(\bx,\by)=\\
T_{Z,i} E_Z(\bu,\bv)=
\Big[\frac{\partial}{\partial u_i}+\frac{\beta}{2}\sum_{\balpha_Z\in R_{Z+}}\alpha_{Z,i}\kappa_Z(\balpha_Z)\frac{1-\sigma_{\balpha_Z}}{\balpha_Z\cdot\bu}\Big]E_Z(\bu,\bv)\label{TransformedDunklOperator}
\end{multline}
when $1\leq i\leq d_R$, and the variables $u_{d_R+1},\ldots,u_N$ do not appear in the dot products $\balpha_Z\cdot\bu$.

When $d_R<i\leq N$, the Dunkl operators become partial derivatives,
\begin{equation}\label{TransformedDunklOperatorPerpendicular}
T_{Z,i} E_Z(\bu,\bv)=\frac{\partial}{\partial u_i}E_Z(\bu,\bv).
\end{equation}
Because of the property \eqref{AlphaZPerpendicularComponents}, the Dunkl operators $T_{Z,i}$ act only on the space $\spn(R)$ for $1\leq i\leq d_R$. Thus, one may use the method of separation of variables.

Define $F_Z(\bu_\parallel,\bv_\parallel)$ and $G_Z(\bu_\perp,\bv_\perp)$ with the vectors $\bu_\parallel=(u_1,\ldots,u_{d_R})^T$, $\bu_\perp=(u_{d_R+1},\ldots,u_{N})^T$ and similar expressions for $\bv_\parallel$ and $\bv_\perp$, such that
\begin{equation}
E_Z(\bu,\bv)=F_Z(\bu_\parallel,\bv_\parallel)G_Z(\bu_\perp,\bv_\perp).
\end{equation}
Then, for $d_R< i\leq N$, Eq.~\eqref{EquationDunklOperatorOnScaledKernel} reads,
\begin{equation}
\frac{\partial}{\partial u_i}G_Z(\bu_\perp,\bv_\perp)=\sqrt{\beta}\bphi_i\cdot(Z^T\bv_\perp)G_Z(\bu_\perp,\bv_\perp)=\sqrt{\beta}v_iG_Z(\bu_\perp,\bv_\perp).
\end{equation}
This is nothing but Eq.~\eqref{EquationSimpleExponential} scaled up by a factor of $\sqrt{\beta}$, 
\begin{equation}\label{SolutionGZ}
G_Z(\bu_\perp,\bv_\perp)=\exp\Big[\sqrt{\beta}\sum_{d_R<i\leq N}u_iv_i\Big]=\exp[\sqrt{\beta}\bu_\perp\cdot\bv_\perp].
\end{equation}
The corresponding solution for $F_Z(\bu_\parallel,\bv_\parallel)$ is obtained as in Lemma~\ref{FreezingLimitDunklKernelFullRank}. 

Because the Dunkl Laplacian is independent of the orthogonal basis (see Eq.~\eqref{EquationDunklLaplacianBaseIndependence}), one may write Eq.~\eqref{EquationDunklLaplacianOnScaledKernel} as
\begin{multline}
\beta v^2 E_Z(\bu,\bv)=\beta y^2 E(\bx,\by)=\sum_{i=1}^N T_{\bphi_i}^2 E(\bx,\by)=\sum_{i=1}^N T_{Z,i}^2 E_Z(\bu,\bv)\\
=\sum_{i=1}^{d_R} T_{Z,i}^2 E_Z(\bu,\bv)+\sum_{i=d_R+1}^N \frac{\partial^2}{\partial u_i^2} E_Z(\bu,\bv).
\end{multline}
Because $v^2=v_\parallel^2+v_\perp^2$,
\begin{multline}
\frac{1}{F_Z(\bu_\parallel,\bv_\parallel)}\sum_{i=1}^{d_R} T_{Z,i}^2 F_Z(\bu_\parallel,\bv_\parallel)-\beta v_\parallel^2=\\
\beta v_\perp^2 - \frac{1}{G_Z(\bu_\perp,\bv_\perp)}\sum_{i=d_R+1}^N \frac{\partial^2}{\partial u_i^2} G_Z(\bu_\perp,\bv_\perp)
=c,
\end{multline}
where $c$ is a constant. From Eq.~\eqref{SolutionGZ} it is found that $c=0$, so the equation for $F_Z(\bu_\parallel,\bv_\parallel)$ is
\begin{equation}
\sum_{i=1}^{d_R} T_{Z,i}^2 F_Z(\bu_\parallel,\bv_\parallel)=\beta v_\parallel^2 F_Z(\bu_\parallel,\bv_\parallel).
\end{equation}
By Eq.~\eqref{ApproximationFrozenKernel}, for large values of $\beta$, 
\begin{equation}
F_Z(\bu_\parallel,\bv_\parallel)\stackrel{\beta\text{ large}}{\approx}\exp\Big[\frac{u_\parallel^2v_\parallel^2}{2(\gamma+\varepsilon_\beta)}\Big],
\end{equation}
where $\lim_{\beta\to\infty}\varepsilon_\beta=0$. Reassembling $E(\bx,\by)$ yields the result.
\end{proof}

With the asymptotic form in Eq.~\eqref{DunklKernelGeneralFreezing} and using Eqs.~\eqref{PotentialR} and \eqref{PartitionR}, one may rewrite Eq.~\eqref{ScaledProcessDistributionFreezing} as follows,
\begin{multline}
f(t,\sqrt{\beta t}\bv)(\beta t)^{N/2}\ud\bv\\\stackrel{\beta\text{ large}}{\approx}\frac{\rme^{-\beta F_R(\bv,\kappa)}}{z_\beta}\int_{\RR^N}\rme^{-x^2/2t}\exp\Big[\sqrt{\frac{\beta}{t}}\bx_{\perp}\cdot\bv_{\perp}+\frac{x_\parallel^2v_\parallel^2}{2(\gamma+\varepsilon_\beta) t}\Big]\mu(\bx)\ud\bx\ud\bv.
\end{multline}

\section{Steady-state distribution as $\beta\to\infty$}

In this section, the objective will be to examine the behavior of the steady-state distribution
\begin{equation}
f_R(\bv,\beta)=\frac{\rme^{-\beta F_R(\bv,\kappa)}}{z_\beta}
\end{equation}
in the freezing regime. Intuitively, it is clear that as $\beta$ grows, this distribution will take large values near the minima of $F_R(\bv,\kappa)$ and small values everywhere else. Denote any vector where $F_R(\bv,\kappa)$ attains a minimum by $\bos$. Then, one has the following result.
\begin{lemma}\label{SteadyStateDistributionFreezingLimit}
In the freezing regime, the distribution $f_R(\bv,\beta)$ has the form
\begin{equation}
f_R(\bv,\beta)\approx \frac{\beta^{N/2}\det H}{\pi^{N/2}|W|}\sum_{\rho\in W}\rme^{-\beta (\bv-\rho\bos)^T H (\bv-\rho\bos)}\stackrel{\beta\to\infty}{\longrightarrow}\frac{1}{|W|}\sum_{\rho\in W}\delta^{(N)}(\bv-\rho\bos).
\end{equation}
\end{lemma}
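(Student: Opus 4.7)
The plan is to apply Laplace's method simultaneously to the numerator $\rme^{-\beta F_R(\bv,\kappa)}$ and to the partition function $z_\beta$ that defines $f_R(\bv,\beta)$: the claim is essentially a standard saddle-point asymptotic for a $W$-invariant potential having $|W|$ equivalent non-degenerate minima. The $W$-invariance of $F_R$ forces its set of minima to be a union of $W$-orbits, and strict convexity on each Weyl chamber will pin those minima down to the single orbit $\{\rho\bos\}_{\rho\in W}$, producing $|W|$ identical Gaussian peaks.

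First I would identify and characterize the minima. A short calculation gives
\[
\bnabla F_R(\bv,\kappa) = \bv - \sum_{\balpha\in R_+}\frac{\kappa(\balpha)\balpha}{\balpha\cdot\bv},
\]
which vanishes exactly on the peak set (cf.\ Eq.~\eqref{EquationSolvedByThePeakSet}), together with the Hessian
\[
H(\bv) = I + \sum_{\balpha\in R_+}\frac{\kappa(\balpha)\,\balpha\balpha^T}{(\balpha\cdot\bv)^2},
\]
which is the sum of the identity with manifestly positive semi-definite rank-one matrices and is therefore strictly positive definite on the interior of every Weyl chamber. Consequently $F_R$ is strictly convex on each chamber; combined with the divergence of $F_R$ at the chamber walls (logarithmic singularities) and at infinity (quadratic confinement), each chamber contains a unique interior minimum, and by $W$-invariance these form a single orbit $\{\rho\bos\}_{\rho\in W}$ on which $F_R$ takes the common value $F_R(\bos,\kappa)$ with Hessians $H(\rho\bos)=\rho H\rho^T$ (hence common determinant $\det H$, where $H:=H(\bos)$).

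With this in hand I would Taylor-expand around each minimum,
\[
F_R(\bv,\kappa) - F_R(\bos,\kappa) = \tfrac{1}{2}(\bv-\rho\bos)^T[\rho H\rho^T](\bv-\rho\bos) + O(|\bv-\rho\bos|^3),
\]
split $\RR^N$ into small neighborhoods of the $|W|$ minima and a complementary region $\Omega$ on which $F_R(\bv,\kappa)\geq F_R(\bos,\kappa)+\delta$ for some $\delta>0$, and verify that the contribution from $\Omega$ is $O(\rme^{-\beta\delta})$. Evaluating each Gaussian integral and summing over $\rho$ gives
\[
z_\beta = \rme^{-\beta F_R(\bos,\kappa)}\,|W|\,\Bigl(\frac{2\pi}{\beta}\Bigr)^{N/2}\frac{1}{\sqrt{\det H}}\,[1+O(\beta^{-1/2})],
\]
and applying the same local expansion inside $f_R(\bv,\beta)$ produces the claimed mixture of $|W|$ Gaussians centred on the orbit $\{\rho\bos\}$; the change of variables $\bv\to\rho^T\bv$ in the $\rho$-th summand replaces $\rho H\rho^T$ by $H$, yielding the stated form (the precise numerical prefactor matches the claim under the convention that absorbs the factor $\tfrac{1}{2}$ and the square root into the symbol $H$). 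The delta-function limit as $\beta\to\infty$ is then the familiar weak convergence of anisotropic Gaussians with shrinking covariance to point masses.

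The main obstacle is controlling the Laplace remainder rigorously near the Weyl chamber walls, where the logarithmic singularities of $F_R$ could conceivably interfere with the Gaussian approximation, and verifying the exponential suppression on $\Omega$. Both points reduce to a uniform quadratic lower bound $F_R(\bv,\kappa)-F_R(\bos,\kappa)\geq c\,|\bv-\rho\bos|^2$ on each chamber, which follows from strict convexity together with the uniform bound on the smallest eigenvalue of $H(\bv)$ established in the first step.
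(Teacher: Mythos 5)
Your proposal is correct and follows essentially the same route as the paper: a saddle-point (Laplace) approximation of $z_\beta$ and of the numerator around the peak-set orbit $\{\rho\bos\}_{\rho\in W}$, using the positive definiteness of the Hessian $H$ to obtain a normalized mixture of $|W|$ Gaussians that converges weakly to delta functions. You are in fact slightly more careful than the paper on two points it glosses over --- uniqueness of the minimum within each Weyl chamber (via strict convexity plus divergence at the walls and at infinity) and the covariance relation $H(\rho\bos)=\rho H\rho^T$ handled by a change of variables --- and you correctly note that the prefactor in the lemma's display ($\det H$, $\pi^{N/2}$, no $\tfrac{1}{2}$ in the exponent) differs by convention from the Gaussian form actually derived in the proof ($\sqrt{\det H}$, $(2\pi)^{N/2}$, exponent $-\beta(\bv-\rho\bos)^TH(\bv-\rho\bos)/2$).
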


\begin{proof}
When $\beta$ takes on sufficiently large values, one may use a saddle-point approximation to calculate $z_\beta$. This, of course, requires knowledge of the extrema of $F_R(\bv,\kappa)$ which occur at the solutions of
\begin{equation}\label{ExtremaPotentialR}
\frac{\partial}{\partial v_i}F_R(\bv,\kappa)=v_i-\sum_{\balpha\in R_+}\frac{\kappa(\balpha)}{\balpha\cdot\bv}\alpha_i=0,\ 1\leq i\leq N.
\end{equation}

Denote one solution vector of these equations by $\bos$,
\begin{equation}
\bos=\sum_{\balpha\in R_+}\frac{\kappa(\balpha)}{\balpha\cdot\bos}\balpha.
\end{equation}
Therefore, the vector $\bos$ belongs to the linear envelope of the root system. Note that $s^2=\gamma$ because of Eq.~\eqref{DefinitionParameterGamma}, 
\begin{equation}
s^2=\bos\cdot\bos=\sum_{\balpha\in R_+}\frac{\kappa(\balpha)}{\balpha\cdot\bos}\bos\cdot\balpha=\sum_{\balpha\in R_+}\kappa(\balpha)=\gamma.
\end{equation}

The elements of the Hessian matrix $H(\bv)$ of $F_R(\bv,\kappa)$ are given by
\begin{equation}
[H(\bv)]_{ij}=\frac{\partial^2}{\partial v_j\partial v_i}F_R(\bv,\kappa)=\delta_{ij}+\sum_{\balpha\in R_+}\frac{\kappa(\balpha)}{(\balpha\cdot\bv)^2}\alpha_i\alpha_j.
\end{equation}
$H(\bv)$ is a positive definite matrix for $\bv\cdot\balpha\neq 0$, because for $\bx\in\RR^N$,
\begin{equation}
\sum_{1\leq i,j\leq N}x_ix_j\frac{\partial^2}{\partial v_j\partial v_i}F_R(\bv,\kappa)=x^2+\sum_{\balpha\in R_+}\frac{\kappa(\balpha)}{(\balpha\cdot\bv)^2}(\balpha\cdot\bx)^2\geq 0.
\end{equation}
Therefore, all the extrema of $F_R(\bv,\kappa)$ are minima.

Taking $\rho\in W$, one has
\begin{equation}
\rho \bos=\sum_{\balpha\in R_+}\frac{\kappa(\balpha)}{\balpha\cdot\bos}\rho\balpha=\sum_{\balpha^\prime\in R_+}\frac{\kappa(\balpha^\prime)}{\rho^{-1}\balpha^\prime\cdot\bos}\balpha^\prime=\sum_{\balpha^\prime\in R_+}\frac{\kappa(\balpha^\prime)}{\balpha^\prime\cdot\rho\bos}\balpha^\prime.
\end{equation}
Here, the substitution $\balpha^\prime=\rho\balpha$ has been carried out. This means that $\rho\bos$ is also a solution of Eq.~\eqref{ExtremaPotentialR}, and consequently, its solutions are related with each other by an element of the reflection group $W$. Therefore, there are $|W|$ solutions of Eq.~\eqref{ExtremaPotentialR}. This set of solutions is called the peak set of the reflection group $W$. Because $F_R(\bv,\kappa)$ is $W$-invariant, all the minima have the same value.  

One can approximate $z_\beta$ for large values of $\beta$ as follows.
\begin{equation}
z_\beta=\int_{\RR^N}\rme^{-\beta F_R(\bv,\kappa)}\ud\bv\approx|W|\rme^{-\beta F_R(\bos,\kappa)}\int_{\RR^N}\exp[-\beta \bor^TH(\bos)\bor/2]\ud\bor,
\end{equation}
where $\bor=\bv-\bos$. Because the Hessian matrix is positive definite and symmetric, all of its eigenvalues at the minima are positive. Therefore, one can use an orthogonal coordinate transformation to solve this Gaussian integral. The result is
\begin{equation}
z_\beta\approx |W|\rme^{-\beta F_R(\bos,\kappa)} \prod_{i=1}^N\sqrt{\frac{2\pi}{\beta \lambda_i}},
\end{equation}
where the $\{\lambda_i\}_{i=1}^N$ are the eigenvalues of $H(\bos)$. Then, the steady-state distribution is approximated as a sum of Gaussians as shown below,
\begin{equation}\label{SteadyStateDistributionLargeBetaApproximation}
f_R(\bv,\beta)\approx\frac{\beta^{N/2}\sqrt{\det H}}{(2\pi)^{N/2}|W|}\sum_{\rho\in W}\rme^{-\beta (\bv-\rho\bos)^T H (\bv-\rho\bos)/2}.
\end{equation}
Note that the approximate distribution is normalized. 

Finally, as $\beta\to\infty$ each of the Gaussians tends to a delta function in the sense of distributions. Therefore, the steady-state distribution tends to a sum of delta functions centered at the peak set of $W$. 
\end{proof}

In order to prove Thm.~\ref{TheoremFreezingLimit}, it is necessary to take into account the value of the integral in Eq.~\eqref{ScaledProcessDistributionFreezing} for $\beta\to\infty$. This value can be estimated by investigating the variance of the steady-state distribution before taking the freezing limit. From Eq.~\eqref{SteadyStateDistributionLargeBetaApproximation},
\begin{multline}\label{DeviationFromS}
\int_{\RR^N}r^2f_R(\bor+\bos,\beta)\ud\bor\approx\frac{1}{|W|}\sqrt{\frac{\beta}{2\pi}}\sum_{j=1}^N\sqrt{\lambda_j}\int_{\RR}r_j^2\rme^{-\beta\lambda_jr_j^2/2}\ud r_j\\
=\frac{1}{\beta|W|}\sum_{j=1}^N\lambda_j^{-1},
\end{multline}
which means that the typical deviation from any of the elements of the peak set of $W$ is of the order of $\beta^{-1/2}$.

\section{Proof of Theorem~\ref{TheoremFreezingLimit}}\label{ProofOfTheoremFreezingLimit}

First, the large-$\beta$ asymptotics of the scaled distribution are considered. From Lemmas~\ref{FreezingLimitDunklKernel} and \ref{SteadyStateDistributionFreezingLimit}, one may write Eq.~\eqref{ScaledProcessDistributionFreezing} in the approximated form
\begin{multline}
f(t,\sqrt{\beta t}\bv)(\beta t)^{N/2}\ud\bv\approx \frac{\beta^{N/2}\sqrt{\det H}}{(2\pi)^{N/2}|W|}\sum_{\rho\in W}\rme^{-\beta (\bv-\rho\bos)^T H (\bv-\rho\bos)/2}\\
\times\int_{\RR^N}\rme^{-x^2/2t}\exp\Big[\sqrt{\frac{\beta}{t}}\bx_{\perp}\cdot\bv_{\perp}+\frac{x_\parallel^2v_\parallel^2}{2(\gamma+\varepsilon_\beta) t}\Big]\mu(\bx)\ud\bx\ud\bv.
\end{multline}
If the root system is of full rank, then $x_\perp=v_\perp=0$, $x_\parallel=x$, $v_\parallel=v$, and the integral over $\bx$ tends to
\begin{equation}\label{InitialDistributionIntegralFreezingLimitFullRank}
\int_{\RR^N}\exp\Big[-\frac{x^2}{2t}\Big(1-\frac{v^2}{\gamma+\varepsilon_\beta}\Big)\Big]\mu(\bx)\ud\bx\stackrel{\beta\to\infty}{\longrightarrow}1.
\end{equation}
The reason for this is that the approximating Gaussians of $f_R(\bv,\beta)$ have a variance of order $\beta^{-1}$, so $v^2=\gamma+O(\beta^{-1})$, as given by Eq.~\eqref{DeviationFromS}, while $\varepsilon_\beta\approx xv/\sqrt{\beta}$ at large values of $x$, which guarantees the convergence of the integral. 

On the other hand, if the root system is not of full rank, all vectors orthogonal to $\spn(R)$ are eigenvectors of $H$ with eigenvalue one. Consequently,
\begin{equation}
(\bv_\parallel+\bv_\perp-\rho\bos)^T H (\bv_\parallel+\bv_\perp-\rho\bos)=v_\perp^2+(\bv_\parallel-\rho\bos)^T H (\bv_\parallel-\rho\bos),
\end{equation}
and
\begin{multline}\label{EquationFinalFreezingComment}
f(t,\sqrt{\beta t}\bv)(\beta t)^{N/2}\ud\bv\approx \frac{\beta^{d_R/2}}{(2\pi)^{d_R/2}|W|}\prod_{i=1}^{d_R}\lambda_i^{1/2}\sum_{\rho\in W}\rme^{-\beta (\bv_\parallel-\rho\bos)^T H (\bv_\parallel-\rho\bos)/2}\\
\times\frac{\beta^{(N-d_R)/2}}{(2\pi)^{(N-d_R)/2}}\int_{\RR^N}\rme^{-\beta (\bv_\perp-\bx_\perp/\sqrt{\beta t})^2/2}\exp\Big[-\frac{x^2_\parallel}{2t}\Big(1-\frac{v_\parallel^2}{\gamma+\varepsilon_\beta}\Big)\Big]\mu(\bx)\ud\bx\ud\bv.
\end{multline}
The part of the integral that depends on $\bx_\parallel$ and $\bv_\parallel$ behaves like Eq.~\eqref{InitialDistributionIntegralFreezingLimitFullRank}. This means that the only part that needs to be calculated is the part that depends on $\bx_\perp$ and $\bv_\perp$.

Consider a test function $\phi(\bv_\perp)$ and the integral
\begin{equation}
\frac{\beta^{(N-d_R)/2}}{(2\pi)^{(N-d_R)/2}}\int_{\RR^N}\phi(\bv_\perp)\rme^{-\beta (\bv_\perp-\bx_\perp/\sqrt{\beta t})^2/2}\ud\bv_\perp.
\end{equation}
By choosing $X\gg 1$ such that $X/\sqrt{\beta}\ll 1$, the integral becomes (after setting $\by=\sqrt{\beta}\bv_\perp-\bx_\perp/\sqrt{t}$)
\begin{equation}
\frac{1}{(2\pi)^{(N-d_R)/2}}\Big[\int_{y<X}+\int_{y\geq X}\Big]\phi[{\textstyle \frac{1}{\sqrt \beta}(\by+\bx_\perp/\sqrt{t})}]\rme^{-y^2/2}\ud\by.
\end{equation}
The inner region integral becomes $\phi(\bzero)$ as $\beta\to\infty$. The outer region integral tends to zero in the freezing regime as long as $\phi(\by)$ does not grow exponentially when $|\by|$ is large,
\begin{multline}
\Big|\frac{1}{(2\pi)^{(N-d_R)/2}}\int_{y\geq X}\phi[{\textstyle \frac{1}{\sqrt \beta}(\by+\bx_\perp/\sqrt{t})}]\rme^{-y^2/2}\ud\by\Big|\\
\leq \sup_{y\geq X}|\phi[{\textstyle \frac{1}{\sqrt \beta}(\by+\bx_\perp/\sqrt{t})}]\rme^{-y^2/4}| \int_{y\geq X}\frac{\rme^{-y^2/4}}{(2\pi)^{(N-d_R)/2}}\ud\by\stackrel{\sqrt{\beta}\gg X\to\infty}{\longrightarrow}0.
\end{multline}
Thus, in the sense of distributions,
\begin{equation}\label{FreezingLimitGeneralPerpendicularPart}
\frac{\beta^{(N-d_R)/2}}{(2\pi)^{(N-d_R)/2}}\rme^{-\beta (\bv_\perp-\bx_\perp/\sqrt{\beta t})^2/2}\stackrel{\beta\to\infty}{\longrightarrow}\delta^{(N-d_R)}(\bv_\perp).
\end{equation}
Finally, one obtains 
\begin{multline}
f(t,\sqrt{\beta t}\bv)(\beta t)^{N/2}\ud\bv\stackrel{\beta\to\infty}{\longrightarrow}\frac{1}{|W|}\delta^{(N-d_R)}(\bv_\perp)\sum_{\rho\in W}\delta^{(d_R)}(\bv_\parallel-\rho\bos)\ud\bv
\end{multline}
in the sense of distributions. Because $(\rho\bos)_\parallel=(\rho\bos)$ and $(\rho\bos)_\perp=\bzero$, this completes the proof.\qquad\qquad\qquad\qquad\qquad\qquad\qquad\qquad\qquad\qquad\quad\quad\quad\ \ $\square$
\chapter{Limiting regimes for the interacting Brownian motions and Bessel processes}\label{ParticularCases}

Keeping in line with the main motivation of this work, it is of interest to investigate the results discussed in the previous chapter in the case of the root systems of type $A$ and $B$. The radial Dunkl processes in these two cases correspond to the Dyson model and to the Wishart and Laguerre processes, respectively.

\section{Statement of results and numerical evidence}
Denote by $\bh_N=(h_{N,1},\ldots,h_{N,N})$ the vector of zeroes (in ascending order) of the $N$th Hermite polynomial $H_N(x)$ defined by (see, e.g., \cite{szego})
\begin{equation}
H_N(x)=(-1)^N \rme^{x^2}\frac{\ud^N}{\ud x^N}(\rme^{-x^2}).\label{HermitePolynomial}
\end{equation}
Similarly, denote by $\bl^{(\alpha)}_{N}=(l^{(\alpha)}_{N,1},\ldots,l^{(\alpha)}_{N,N})$ the vector of zeroes of the associated Laguerre polynomial of parameter $\alpha$, $L_N^{(\alpha)}(x)$, defined by
\begin{equation}
\rme^{-x}x^{\alpha}L_N^{(\alpha)}(x)=\frac{1}{N!}\Big(\frac{\ud}{\ud x}\Big)^N(\rme^{-x}x^{N+\alpha})
\end{equation}
in ascending order. In addition, recall the function $F_R$, Eq.~\eqref{PotentialR}, for $R=A$ and $B$,
\begin{IEEEeqnarray}{rCl}
F_A(\bv)&=&\frac{v^2}{2}-\sum_{1\leq i<j\leq N}\log|v_j-v_i|,\label{EquationLogSteadyDistributionTypeA}\\
F_B(\bv,\nu)&=&\frac{v^2}{2}-\frac{2\nu+1}{2}\sum_{i=1}^N\log |v_i|-\sum_{1\leq i<j\leq N}\log|v_j^2-v_i^2|,\label{EquationLogSteadyDistributionTypeB}
\end{IEEEeqnarray}
and an additional function for the interacting Bessel processes,
\begin{IEEEeqnarray}{rCl}
\tilde{F}_B(\bv)&=&\frac{v^2}{2}-\frac{1}{2}\sum_{i=1}^N\log v_i^2-\frac{N}{2},\label{PotentialFTildeTypeB}
\end{IEEEeqnarray}
which will be used for the limit $\nu\to\infty$. Define also the constants $K_A$ and $K_B$ by
\begin{IEEEeqnarray}{rCl}
K_A&=&\frac{N}{4}(N-1)(1+\log 2)-\frac{1}{2}\sum_{i=1}^Ni\log i,\label{ConstantLargeBetaKA}\\
K_B&=&\frac{N}{2}(N+\nu-1/2)-\frac{1}{2}\sum_{i=1}^N i\log i\nonumber\\
&&-\frac{1}{2}\sum_{i=1}^N(\nu+i-1/2) \log (\nu+i-1/2).\label{ConstantLargeBetaKB}
\end{IEEEeqnarray}
For the following statements, it is assumed that the processes start from the arbitrary initial distributions $\mu_A(\bx)$ and $\mu_B(\bx)$, which have finite second moments.

\begin{proposition}\label{LimitingRegimesA}
For large values of $\beta$, the interacting Brownian motions follow the steady-state distribution
\begin{equation}
f_A(t,\sqrt{\beta t}\bv)(\beta t)^{N/2}\ud\bv=N!\Big(\frac{\beta}{2\pi}\Big)^{N/2}\rme^{-\beta [F_A(\bv)-K_A]}\ud\bv,\label{EquationRelaxationA}
\end{equation}
and in the freezing limit $\beta\to\infty$,
\begin{equation}
f_A(t,\sqrt{\beta t}\bv)(\beta t)^{N/2}\ud\bv\stackrel{\beta\to\infty}{\longrightarrow}\delta^{(N)}(\bv-\bh_N)\ud\bv.\label{EquationFreezingA}
\end{equation}
\end{proposition}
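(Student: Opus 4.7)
My plan is to establish the two statements by invoking the general theorems of Chapters~\ref{general_steady} and \ref{general_freezing} specialized to the root system of type~$A$, and then to carry out two concrete identifications: an asymptotic evaluation of the partition function $z_A$ via the Mehta integral, and the identification of the type-$A$ peak set with the zeros of the Hermite polynomial.

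For Eq.~\eqref{EquationRelaxationA}, I start from Theorem~\ref{TheoremSteadyState} specialized to $R=A$ with $\kappa\equiv 1$ and the potential $F_A$ of Eq.~\eqref{EquationLogSteadyDistributionTypeA}: after the scaling the density converges in finite time to $f_R(\bv,\beta)=\rme^{-\beta F_A(\bv)}/z_A$ on $\RR^N$, and the interacting-Brownian-motion density on the Weyl chamber $C_A$ is the $W$-symmetrization $N!\,f_R(\bv,\beta)$. It therefore suffices to show that, to leading order in $\beta$, the partition function satisfies $z_A \sim (2\pi/\beta)^{N/2}\rme^{-\beta K_A}$. To that end I change variables $\bv=\bu/\sqrt{\beta}$, using that $w_A$ is homogeneous of degree $\beta\gamma_A$, to obtain
\[z_A=\beta^{-(N+\beta\gamma_A)/2}\,c_A,\]
where $c_A$ is given in Table~\ref{RadialDunklProcessesQuantities} by the Mehta integral $c_A=(2\pi)^{N/2}\prod_{j=1}^{N}\Gamma(1+j\beta/2)/\Gamma(1+\beta/2)$. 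Stirling's formula $\log\Gamma(1+x)=x\log x-x+O(\log x)$ applied termwise, together with the identity $\sum_{j=1}^{N}j-N=\gamma_A$, yields
\[\log c_A=\tfrac{N}{2}\log(2\pi)+\tfrac{\beta}{2}\sum_{j=1}^{N}j\log j+\tfrac{\beta\gamma_A}{2}\log\tfrac{\beta}{2}-\tfrac{\beta\gamma_A}{2}+O(\log\beta),\]
from which
\[\log z_A=\tfrac{N}{2}\log(2\pi/\beta)-\beta\Big[\tfrac{\gamma_A}{2}(1+\log 2)-\tfrac{1}{2}\sum_{j=1}^{N}j\log j\Big]+O(\log\beta),\]
and the bracket is precisely $K_A$ as defined in Eq.~\eqref{ConstantLargeBetaKA}. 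Substituting this back into $N!\,\rme^{-\beta F_A(\bv)}/z_A$ produces the claimed form Eq.~\eqref{EquationRelaxationA}.

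For Eq.~\eqref{EquationFreezingA} I appeal directly to Theorem~\ref{TheoremFreezingLimit}: on $\RR^N$ the scaled distribution converges, in the sense of distributions, to $\tfrac{1}{|W|}\sum_{\rho\in S_N}\delta^{(N)}(\bv-\rho\bos)$; restricting to $C_A$ and multiplying by $|W|=N!$ leaves a single delta function supported at the unique representative of the peak set inside $C_A$. It remains to identify this representative with $\bh_N$. Specializing Eq.~\eqref{EquationSolvedByThePeakSet} to $A_+=\{\be_i-\be_j:i>j\}$ with $\kappa\equiv 1$ reduces componentwise to
\[v_k=\sum_{\substack{j=1\\ j\neq k}}^{N}\frac{1}{v_k-v_j},\qquad k=1,\dots,N.\]
On the other hand, writing $H_N(x)=c\prod_{k=1}^{N}(x-h_{N,k})$ and differentiating twice gives $H_N''(h_{N,k})=2H_N'(h_{N,k})\sum_{j\neq k}(h_{N,k}-h_{N,j})^{-1}$, which combined with the Hermite equation $H_N''-2xH_N'+2NH_N=0$ evaluated at $x=h_{N,k}$ yields $\sum_{j\neq k}(h_{N,k}-h_{N,j})^{-1}=h_{N,k}$, the classical Stieltjes relation. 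The ordered solution in $C_A$ is therefore $\bh_N$, and Eq.~\eqref{EquationFreezingA} follows.

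The main obstacle is arithmetic rather than conceptual: the Stirling asymptotic must be executed carefully enough to isolate the $O(\beta)$ contribution, so that all $O(\log\beta)$ and constant pieces are tracked and the remaining coefficient matches exactly $K_A$ of Eq.~\eqref{ConstantLargeBetaKA}. The rest reduces either to quoting the machinery of Chapters~\ref{general_steady}--\ref{general_freezing} or to invoking Stieltjes' classical characterization of the Hermite zeros.
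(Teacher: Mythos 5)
Your argument is correct, and it takes a genuinely more economical route than the one the paper writes out. The paper does remark that Props.~\ref{LimitingRegimesA} and \ref{LimitingRegimesB} ``follow from Thms.~\ref{TheoremSteadyState} and \ref{TheoremFreezingLimit}'', but the derivation it actually presents is an independent, self-contained one: it starts from the explicit transition density \eqref{TransitionDensityRadialDunkl}, inserts the large-$\beta$ asymptotics of the type-$A$ generalized Bessel function \eqref{CorrectFreezingRegimeDunklKernelA}, splits everything into components parallel and perpendicular to $\spn(A)$, and re-proves both the relaxation and the delta-function limit with explicit error terms. You instead quote the two general theorems and reduce the proposition to exactly the two pieces of type-$A$-specific input they leave open: the asymptotics of the normalization constant and the identification of the peak set. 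Your evaluation of the partition function --- pulling the $\beta$-homogeneity of $w_A$ out of the Gaussian integral to reduce to the Mehta constant $c_A$, then applying Stirling --- is in substance the same computation by which the paper extracts $K_A$ from $c_A$, and your Stieltjes-relation identification of the peak set with $\bh_N$ coincides with the paper's argument via $p_1''(z_i)=2z_ip_1'(z_i)$ and the Hermite ODE. What your route buys is that you never need the identity $F_A(\bh_N)-K_A=0$: in the paper this requires evaluating $h_N^2=\gamma_A$ and the discriminant product $\prod_{i<j}(h_{j,N}-h_{i,N})^2=2^{-N(N-1)/2}\prod_{j}j^j$, whereas in your argument the normalization of the limiting delta functions is already supplied by Lemma~\ref{SteadyStateDistributionFreezingLimit} inside Thm.~\ref{TheoremFreezingLimit}. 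What you give up is the explicit control of the corrections (the paper's version tracks how the initial data is forgotten, including the slowly relaxing $\bar{\bx}_{\mu\perp}$ direction). Two small points are worth flagging: first, since Thm.~\ref{TheoremSteadyState} concerns the Dunkl process on $\RR^N$ while the interacting Brownian motions live on $C_A$, the factor $N!$ really does need the one-line justification you give via Eq.~\eqref{RegularToRadial} and the $W$-invariance of the steady state; second, your $O(\log\beta)$ remainder in the logarithm of the partition function is in fact an $O(1)$ constant (a $\tfrac12\log N!$ from the subleading Stirling terms), so Eq.~\eqref{EquationRelaxationA} holds only to leading exponential order in $\beta$ --- but that is precisely the precision at which the paper itself states and proves it.
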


\begin{proposition}\label{LimitingRegimesB}
For large values of $\beta$, the interacting Bessel processes follow the steady-state distribution
\begin{equation}
f_B(t,\sqrt{\beta t}\bv)(\beta t)^{N/2}\ud\bv=N!(2\beta)^{N/2}\rme^{-\beta [F_B(\bv,\nu)-K_B]}\ud\bv,\label{EquationRelaxationBBeta}
\end{equation}
and in the freezing limit $\beta\to\infty$,
\begin{equation}
f_B(t,\sqrt{\beta t}\bv)(\beta t)^{N/2}\ud\bv\stackrel{\beta\to\infty}{\longrightarrow}\delta^{(N)}(\bv-\bor_{\nu-1/2,N})\ud\bv,\label{EquationFreezingB}
\end{equation}
with $(\bor_{\nu-1/2,N})^2=\bl_{\nu-1/2,N}$. In addition, as $\nu\to\infty$,
\begin{equation}
f_B(t,\sqrt{\beta \nu t}\bv)(\beta \nu t)^{N/2}\ud\bv\stackrel{\nu\to\infty}{\longrightarrow}\prod_{i=1}^N\delta(v_i-1)\ud\bv.\label{EquationNuLimit}
\end{equation}
\end{proposition}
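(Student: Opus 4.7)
The plan is to treat each of the three claims of Proposition~\ref{LimitingRegimesB} in sequence. The first two assertions are direct specializations of Theorems~\ref{TheoremSteadyState} and \ref{TheoremFreezingLimit} to the root system of type $B$; the only labor is to pin down the explicit prefactors in the first case and to identify the peak set in the second. The third assertion is then a further Laplace-type asymptotic applied to the resulting steady-state form.

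For Eq.~\eqref{EquationRelaxationBBeta}, the radial Dunkl process of type $B$ carries multiplicities $\kappa(\be_i)=\nu+\tfrac{1}{2}$ and $\kappa(\pm\be_i\pm\be_j)=1$, and with these choices $F_R(\bv,\kappa)$ reduces to $F_B(\bv,\nu)$ of Eq.~\eqref{EquationLogSteadyDistributionTypeB}. Theorem~\ref{TheoremSteadyState} then yields the steady-state density $\rme^{-\beta F_B}/z_\beta$, and the substitution $\bv=\bu/\sqrt{\beta}$ gives $z_\beta=\beta^{-(\beta\gamma_B+N)/2}c_B$ with $c_B$ the Selberg integral in Table~\ref{RadialDunklProcessesQuantities}. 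Stirling's formula applied term by term to the ratios of Gamma functions in $c_B$ extracts the leading $\beta$-dependence, producing the prefactor $N!(2\beta)^{N/2}$ (the $N!$ absorbing the restriction from $\RR^N$ to the Weyl chamber $C_B$) and the constant $\exp(\beta K_B)$ in the exponent, with $K_B$ as in Eq.~\eqref{ConstantLargeBetaKB}.

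For Eq.~\eqref{EquationFreezingB}, Theorem~\ref{TheoremFreezingLimit} reduces the problem to the identification of the peak set of $W_B$. Writing Eq.~\eqref{EquationSolvedByThePeakSet} with the type-$B$ multiplicities, combining the $(\be_i-\be_j)$ and $(\be_i+\be_j)$ contributions via $1/(v_i-v_j)+1/(v_i+v_j)=2v_i/(v_i^2-v_j^2)$, and multiplying the $i$th component by $v_i$ yields
\begin{equation}
v_i^2=\Big(\nu+\tfrac{1}{2}\Big)+\sum_{j\neq i}\frac{2v_i^2}{v_i^2-v_j^2}.
\end{equation}
Setting $y_i=v_i^2$ reproduces the classical equation characterising the zeros of $L_N^{(\nu-1/2)}$: evaluate the Laguerre differential equation $yL_N''+(\nu+\tfrac{1}{2}-y)L_N'+NL_N=0$ at a root $y_i$ and use the identity $L_N''(y_i)/L_N'(y_i)=\sum_{j\neq i}2/(y_i-y_j)$. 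Thus the unique representative of the peak set inside $C_B$ is $\bor_{\nu-1/2,N}$ with $(\bor_{\nu-1/2,N})^2=\bl_{\nu-1/2,N}$; the remaining $|W_B|-1$ elements of the peak set, obtained by signed permutations, collapse to this single delta when the density is restricted to the Weyl chamber.

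For Eq.~\eqref{EquationNuLimit}, I would begin from the steady-state form and substitute $\bv\mapsto\sqrt{\nu}\bv$, which rescales the argument of $f_B$ to $\sqrt{\beta\nu t}\bv$ and contributes a Jacobian $\nu^{N/2}$. A direct expansion gives
\begin{equation}
F_B(\sqrt{\nu}\bv,\nu)=\nu\,\tilde{F}_B(\bv)+\mathcal{C}(\nu)+R(\bv,\nu),
\end{equation}
where $\tilde{F}_B$ is the potential of Eq.~\eqref{PotentialFTildeTypeB}, $\mathcal{C}(\nu)$ is $\bv$-independent (containing the $\nu\log\nu$ and $\log\nu$ contributions from the $\log(\sqrt{\nu}|v_i|)$ and $\log|\nu(v_j^2-v_i^2)|$ terms together with the $\nu$-independent pieces of $F_B$), and $R(\bv,\nu)$ is $O(1)$ in $\nu$ and bounded near the minimum. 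The $\bv$-independent piece is absorbed into the normalization, leaving a measure proportional to $\rme^{-\beta\nu\tilde{F}_B(\bv)}\,\ud\bv$ at leading order. The gradient $v_i-1/v_i$ of $\tilde{F}_B$ vanishes uniquely on the positive orthant at $\bv=(1,\ldots,1)$ with positive-definite Hessian $2I$, so Laplace's method on $\rme^{-\beta\nu\tilde{F}_B(\bv)}$ concentrates the measure as $\nu\to\infty$ (for any $\beta>0$) at this point, yielding $\prod_{i=1}^N\delta(v_i-1)$. The main obstacle is the careful bookkeeping in this last step: one must verify that the $\nu\log\nu$ and $\log\nu$ contributions to $F_B(\sqrt{\nu}\bv,\nu)$ and to $z_{B,\beta,\nu}$ cancel in the correct way so that the exponent is genuinely $-\beta\nu[\tilde{F}_B(\bv)-\tilde{F}_B(1,\ldots,1)]$ at leading order, and that the Gaussian prefactor from the Laplace estimate reproduces the normalization of the delta-function limit without a spurious constant.
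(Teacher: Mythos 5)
Your handling of the first two limits is essentially the paper's own argument: the prefactor $N!(2\beta)^{N/2}$ and the constant $K_B$ do come out of Stirling's formula applied to the Selberg constant $c_B$ together with the Weyl-chamber normalization (note only that the chamber restriction contributes $|W_B|=2^NN!$, not $N!$; the extra $2^N$ recombines with the $2^{(\beta\gamma_B+N)/2}$ inside $c_B$ to produce the $(2\beta)^{N/2}$), and your identification of the peak set with the Laguerre zeros via $v_i^2=\nu+\tfrac12+\sum_{j\neq i}2v_i^2/(v_i^2-v_j^2)$ and the Laguerre differential equation is the paper's Stieltjes-type argument verbatim, including the collapse to a single delta function on $C_B$.

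The $\nu\to\infty$ limit is where the proposal has a genuine gap. Your decomposition $F_B(\sqrt{\nu}\bv,\nu)=\nu\tilde{F}_B(\bv)+\mathcal{C}(\nu)+R(\bv,\nu)$ with ``$R$ bounded near the minimum'' fails: the $\nu$-independent, $\bv$-dependent residual is $R(\bv,\nu)=-\sum_{i<j}\log|v_j^2-v_i^2|$, which diverges to $+\infty$ precisely at the concentration point $\bv=\bone$ where all coordinates coincide. Equivalently, the density carries the factor $\prod_{i<j}|v_j^2-v_i^2|^{\beta}$, which vanishes exactly where $\rme^{-\beta\nu\tilde{F}_B(\bv)}$ wants to concentrate, so a naive Laplace estimate does not close. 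The paper resolves this by restricting to a shrinking box $\mathcal{D}_\epsilon$ with $\epsilon\propto\nu^{-\alpha}$, $0<\alpha<1/2$, pulling the product of differences out by the mean value theorem, and checking that $\prod_{i<j}|\nu(v_{j*}^2-v_{i*}^2)|^{\beta}=O(\nu^{(1-\alpha)\beta N(N-1)/2})$ still diverges, so that the limit is identified only up to a constant, which is then fixed by normalization on $C_B$. You flag ``careful bookkeeping'' as the obstacle but locate it in the $\nu\log\nu$ cancellations, which are harmless; the real difficulty is this competing zero of the density at the limit point.

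Separately, you propose to ``begin from the steady-state form,'' but Eq.~\eqref{EquationNuLimit} concerns the distribution at finite $t$ started from an arbitrary $\mu_B$, and unlike the $\beta\to\infty$ case there is no instantaneous-relaxation result as $\nu\to\infty$ at fixed $\beta$. The integral over the initial distribution therefore cannot be dropped: one needs the $\nu\to\infty$ degeneration of the type-$B$ generalized Bessel function into a type-$A$ hypergeometric function, Eq.~\eqref{RadialDunklKernelTypeBNuInfinity}, together with the evaluation $\FZ{2/\beta}\bigl((\bx)^2/2t,\bone\bigr)=\rme^{x^2/2t}$ at the concentration point, to see that the $\mu_B$-integral contributes exactly the factor one. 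Without this step the argument does not establish the claimed limit for the time-$t$ distribution.
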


These results follow from Thms.~\ref{TheoremSteadyState} and \ref{TheoremFreezingLimit} with the exception of the limit $\nu\to\infty$ for the interacting Bessel processes. Through numerical simulations one may find evidence to support these claims. 

Note that the limit $\nu\to\infty$ for the interacting Bessel processes corresponds to the physical case of a QCD Dirac operator where the topological charge is very large \cite{katoritanemura04, verbaarschot94}.

\subsection{Simulations of the interacting Brownian motions}

First, consider the interacting Brownian motions. Denoting the positions of the particles in this system by $\{X_{i,t}\}_{i=1}^N$, and denoting by $\{B_{i,t}\}_{i=1}^N$ a set of $N$ independent one-dimensional Brownian motions, the SDEs for $\{X_{i,t}\}_{i=1}^N$ are given by
\begin{equation}
\ud X_{i,t}=\ud B_{i,t}+\frac{\beta}{2}\sum_{\substack{j=1\\j\neq i}}^N\frac{\ud t}{X_{i,t}-X_{j,t}}\label{SDETypeA}
\end{equation}
for $\beta> 0$ \cite{demni08B}. In the case where $\beta=0$, there are collisions between particles and there are local times associated with them which are not included in Eq.~\eqref{SDETypeA}. Keeping this in mind, the Euler method can be used to integrate these SDEs and find the particle density (or one-point correlation function). The results are as follows.

\begin{figure}[!ht]
\centering
\includegraphics[width=0.8\textwidth]{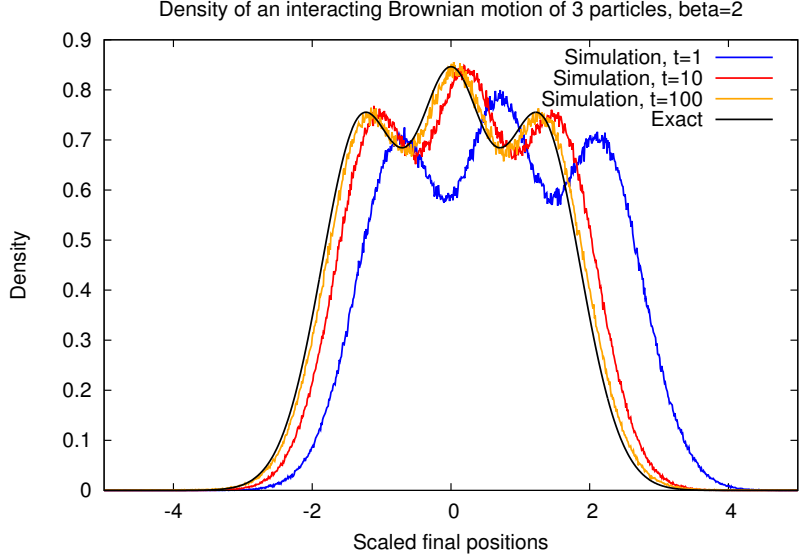}
\caption{Simulated density of three interacting Brownian motions for $\beta=2$ at varying time durations $t$, with final positions scaled down by a factor of $\sqrt{\beta t}=\sqrt{2t}$. The black solid line represents the exact density when all particles start at the origin.}\label{FigureRelaxationA}
\end{figure}

Consider the relaxation process that leads to Eq.~\eqref{EquationRelaxationA}. Figure~\ref{FigureRelaxationA} depicts the particle densities obtained by integrating the SDEs~\eqref{SDETypeA} for a system of three particles starting from the positions $\bar{\bx}_\mu=(0,1,2)^T$ with a time step of $2\times10^{-4}$ and $\beta=2$, at a resolution of $10^{-2}$. Here, the final positions were scaled down by a factor of $\sqrt{\beta t}=\sqrt{2 t}$. One can observe that as the time duration $t$ grows, the densities calculated from the simulation data tend toward the exact particle density
\begin{equation}
\sigma_N^{(A)}(y,t)=\frac{\rme^{-y^2/2t}}{2^N(N-1)!\sqrt{2\pi t}}\Big[H_N^2\Big(\frac{y}{\sqrt{2t}}\Big)-H_{N+1}\Big(\frac{y}{\sqrt{2t}}\Big)H_{N-1}\Big(\frac{y}{\sqrt{2t}}\Big)\Big],
\end{equation}
obtained from setting $x=y/\sqrt{\beta t}=y/\sqrt{2t}$ in Eq.~(6.2.10) of \cite{mehta04} (see also \cite[p.~3063]{katoritanemura04}, fourth line) and from using the Christoffel-Darboux formula \cite{szego}. This density corresponds to the case where all particles start at the origin. At $t=1$, the general shape of the simulated density function is close to the exact curve, but it is displaced because the initial position of the particles still has an effect on the state of the system. As the time duration grows, the density function converges slowly to $\sigma_N^{(A)}(y,t)$; at $t=100$, the density function shows a good agreement with the exact curve. 

In this case, Eq.~\eqref{GeneralRelaxationTime} is interpreted as follows. The initial condition considered here gives $\bar{\bx}_\mu=(0,1,2)^T$ and $s_\mu=0$. Because $A$ is a root system of rank $N-1$, one has $\bar{\bx}_{\mu\parallel}=(-1,0,1)^T$ and $\bar{\bx}_{\mu\perp}=(1,1,1)^T$. For this situation, Eq.~\eqref{GeneralRelaxationTime} requires the condition $t\gg10$ to guarantee that the system has reached the steady state.

However, it is necessary to consider the order of magnitude of the largest correction, which is given by Eq.~\eqref{AttentionSlowRelaxation}. Assuming the worst-case scenario where $t=10$, the condition for $\eta$ in Eq.~\eqref{EquationFinalConditionsEpsilonEta} reads $\eta\gg\sqrt{0.1}\approx 0.32$. With these estimations, the leading-order correction given by Eq.~\eqref{AttentionSlowRelaxation} is of order larger than $0.7$, which is comparable to 1. This estimation of the leading-order correction is quite large, and while the curves for $t=10$ and for the exact steady-state in Fig.~\ref{FigureRelaxationA} seem somewhat close, there is still a visible difference between them. By comparison, taking $t=100$ gives a best-case estimation of $\eta=0.1$ and the correction becomes of order larger than $7\eta/\sqrt{10}\approx 0.22$, a much smaller correction. This means that Eq.~\eqref{GeneralRelaxationTime} gives a very rough estimation for the relaxation time, and that the relaxation time for a given precision may surpass the value $(s_\mu^2 +\bar{x}_\mu^2)\max[1,\beta]$ greatly.

\begin{figure}[!ht]
\centering
\includegraphics[width=0.8\textwidth]{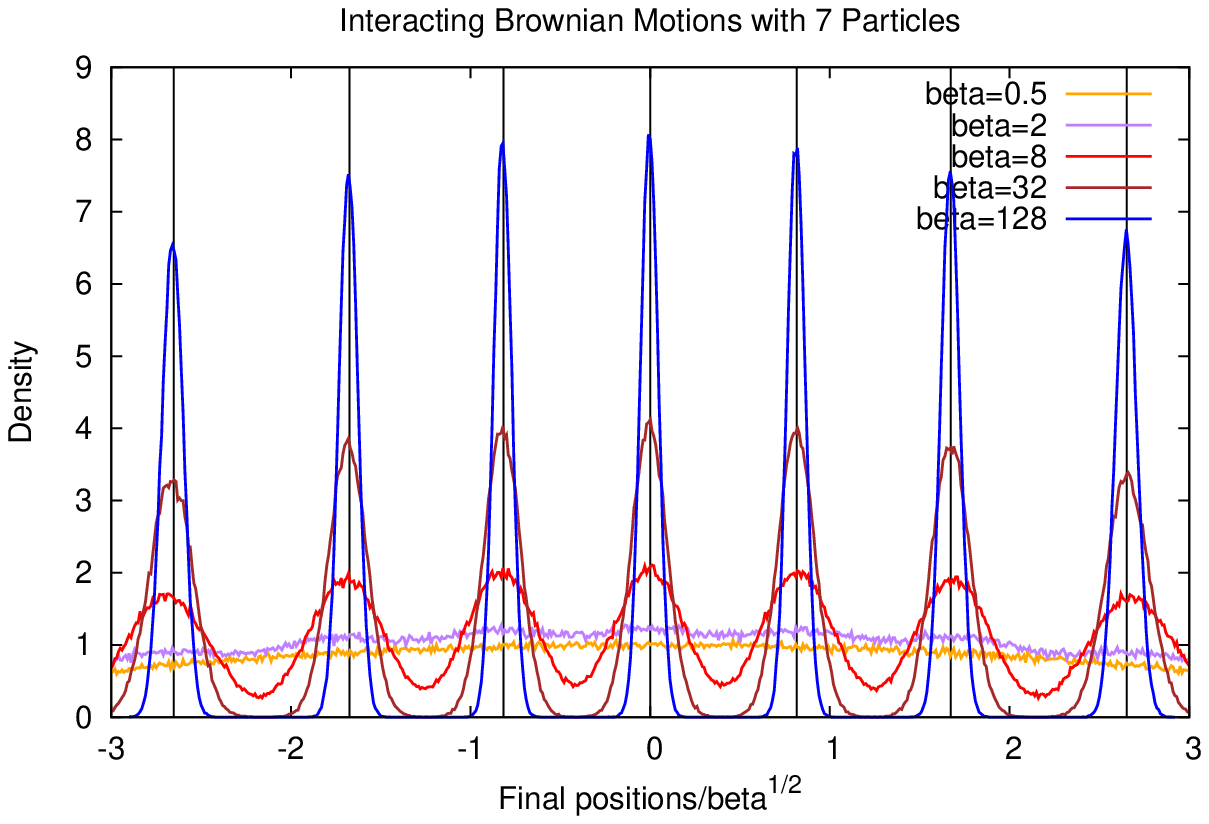}
\caption{Density of seven interacting Brownian motions for various values of $\beta$ at $t=1$, with final positions scaled down by a factor of $\sqrt{\beta t}=\sqrt{\beta}$. The vertical lines represent the zeroes of $H_7(x).$}\label{FigureFreezingA}
\end{figure}

Consider now the freezing limit $\beta\to\infty$ from Eq.~\eqref{EquationFreezingA}. Figure~\ref{FigureFreezingA} depicts the particle density of a series of interacting Brownian motions of seven particles for various values of $\beta$. To obtain these curves, the SDEs~\eqref{SDETypeA} were integrated a total of $10^6$ times for each curve, with a step size of $5\times10^{-5}$ at a resolution of $10^{-2}$ and a final time of $t=1$; the final positions were scaled down by a factor of $\sqrt{\beta t}=\sqrt{\beta}$. In order to guarantee a fast convergence to the steady state, the initial positions were chosen as $0,\pm10^{-2},\pm2\times10^{-2}$ and $\pm3\times10^{-2}$. Under these conditions, $\bar{\bx}_{\mu\perp}=\bzero$, $\bar{x}_\mu^2=2.8\times10^{-3}$ and $s_\mu=0$, which means that the interacting Brownian motions should reach the steady state for $t\gg 2.8\times10^{-3}\max[1,\beta]$, which means that for $\beta\approx 300$ or less, the distribution at $t=1$ must coincide with the steady-state distribution. This assertion is supported by the simulations performed, as the curves in Fig.~\ref{FigureFreezingA} preserve this shape at $t=2$ (not shown).

These curves are consistent with the large-$\beta$ asymptotics considered in \cite{dumitriuedelman05} after multiplying the scaled final positions by a factor of $1/\sqrt{2N}$. This fact is further evidence indicating the validity of the claim in Eq.~\eqref{EquationRelaxationA}. Furthermore, it is clear that as $\beta$ grows, the peaks in the density curves become narrower and remain centered to the vertical lines, which represent the location of the zeroes of the Hermite polynomial $H_7(x)$, $\bh_7$. This supports the claim in Eq.~\eqref{EquationFreezingA}.

\subsection{Simulations of the interacting Bessel processes}

Similarly, one may consider the SDEs for the interacting Bessel processes, denoted here by $\{Y_{i,t}\}_{i=1}^N$, for $\beta\geq 1$,
\begin{equation}
\ud Y_{i,t}=\ud B_{i,t}+\frac{\beta}{2}\Big[\frac{2\nu+1}{2}\frac{1}{Y_{i,t}}+\sum_{\substack{j=1\\j\neq i}}^N\Big(\frac{1}{Y_{i,t}-Y_{j,t}}+\frac{1}{Y_{i,t}+Y_{j,t}}\Big)\Big]\ud t.\label{SDETypeB}
\end{equation}
Note that the two terms in the sum represent a repulsive interaction between particles and their ``mirror images'' with respect to the origin, and that if the particle-particle interaction terms were absent, these SDEs would reduce to $N$ independent Bessel processes \cite{karatzasshereve91}. Note also that the processes $\{Y_{i,t}^2\}_{i=1}^N$ obey the SDEs that define the interacting squared Bessel processes which result as the eigenvalue processes of the Wishart and Laguerre processes. Therefore, the interacting Bessel processes are, for the purposes of this work, equivalent to the interacting squared Bessel processes. 

As with the interacting Brownian motions, let us consider the relaxation of the interacting Bessel processes to their steady state. To investigate this regime, the SDEs~\eqref{SDETypeB} were integrated numerically using the Euler method. The time evolution of a system of three particles starting at the positions $\bar{\bx}_\mu=(1,2,3)^T$ was simulated a total number of $10^6$ times, with $\beta 
= 2$, a time step of $2\times 10^{-4}$ and various final times; the final positions were scaled down by a factor of $\sqrt{\beta t}=\sqrt{2t}$. The resulting particle density functions (with a resolution of $10^{-2}$) are depicted in Fig.~\ref{FigureRelaxationBBeta}. In the figure, the solid black line represents the exact particle density in the case where all particles start at the origin. The curve corresponds to the function
\begin{IEEEeqnarray}{rCl}
\sigma_N^{(B)}(y,t)&=&\frac{N!}{\Gamma(\nu+N)}\Big\{N\Big[L_N^{(\nu)}\Big(\frac{y^2}{2t}\Big)\Big]^2+L_N^{(\nu)}\Big(\frac{y^2}{2t}\Big)L_{N-1}^{(\nu)}\Big(\frac{y^2}{2t}\Big)\nonumber\\
&-&(N+1)L_{N+1}^{(\nu)}\Big(\frac{y^2}{2t}\Big)L_{N-1}^{(\nu)}\Big(\frac{y^2}{2t}\Big)\Big\} \frac{2}{y}\Big(\frac{y^2}{2t}\Big)^{\nu} \rme^{-y^2/2t},
\end{IEEEeqnarray}
which is obtained from Eq.~(31) of \cite{katoritanemura04} and Eq.~(5.13) of \cite{forrester10} applied to the Laguerre case, while making the substitution $x=y^2/(\beta t)=y^2/(2t)$. 

As in Fig.~\ref{FigureRelaxationA}, the density curves obtained from the simulation data converge to the exact density as the time duration of the process grows, but they do so faster than the interacting Brownian motions. Indeed, at $t=1$ the effect of the initial positions of the particles is visible, but at $t=14$ the resulting curve is already very close to the exact density. This is in spite of the fact that Eq.~\eqref{GeneralRelaxationTime} requires $t\gg \beta(\bar{x}_\mu^2+s_\mu^2)=28$. The leading-order correction from Eq.~\eqref{AttentionSlowRelaxation} for this system and the initial condition considered is of order $2\eta/\sqrt{7}$, and because $\eta\gg t^{-1/2}=(14)^{-1/2}\approx 0.27$, this correction is of order larger than $0.2$. Note that $t=100$ was needed in the case of the interacting Brownian motions in Fig.~\ref{FigureRelaxationA} to obtain a similar correction. Therefore, it seems clear that the bound in Eq.~\eqref{GeneralRelaxationTime} may not work very well to estimate the time needed to reach the steady state in some cases. In practice, a proper estimation of the relaxation time must depend on the characteristics of the initial distribution and on the properties of the root system, because of the form of the leading-order correction from Eq.~\eqref{AttentionSlowRelaxation}.

\begin{figure}[!ht]
\centering
\includegraphics[width=0.8\textwidth]{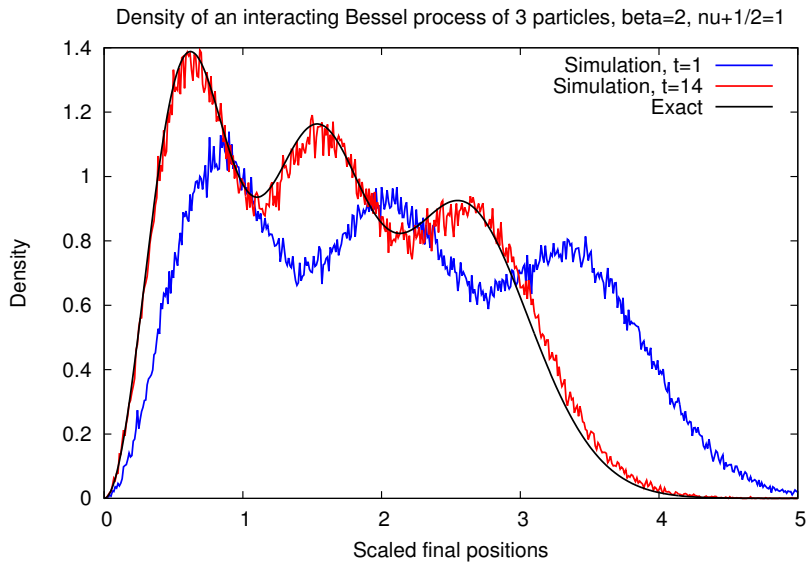}
\caption{Density of three interacting Bessel processes for $t=1$ and 14 at $\beta=2$ and $\nu=1/2$, with final positions scaled down by a factor $\sqrt{\beta t}=\sqrt{2t}$. The solid black line depicts the exact density when all particles start at the origin.}\label{FigureRelaxationBBeta}
\end{figure}

Next, the freezing limit where $\beta\to\infty$ is considered. As in Fig.~\ref{FigureFreezingA}, the initial configuration of the system is chosen so that the relaxation time is short. In this case, the initial configuration is $\bar{\bx}_\mu=(1,2,3,4,5,6,7)^T\times10^{-2}$, and the SDEs~\eqref{SDETypeB} were integrated up to $t=1$ with a step size of $2\times 10^{-4}$ and $\nu=1/2$ a total of $10^6$ times, producing a plot with a resolution of $10^{-2}$. Here, the final positions were scaled down by a factor of $\sqrt{\beta t}=\sqrt{\beta}$. The result is given in Fig.~\ref{FigureFreezingB}. From Eq.~\eqref{GeneralRelaxationTime}, it follows that the system relaxes to the steady state at $t=\beta(\bar{x}_\mu^2+s_\mu^2)=2.8\times10^{-2}\ll1$ for this particular initial configuration. Therefore, the curves depicted in the figure represent the steady-state particle density, and they are consistent with the eigenvalue density of the $\beta$-Laguerre ensembles of random matrices \cite{dumitriuedelman05}, after the variable substitution $y_i=\sqrt{\lambda_i t}$ and the parameter transformation $a=\beta(N+\nu-1/2+1/\beta)/2$. Much like in Fig.~\ref{FigureFreezingA}, one can observe that as $\beta$ grows, the peaks become narrower while staying centered around the vertical lines, which represent the square root of the zeroes of the Laguerre polynomial $L_7^{(0)}(x)$, $\bl_{0,7}$. This evidence is clearly in favor of the claim in Eq.~\eqref{EquationFreezingB}.

\begin{figure}[!ht]
\centering
\includegraphics[width=0.8\textwidth]{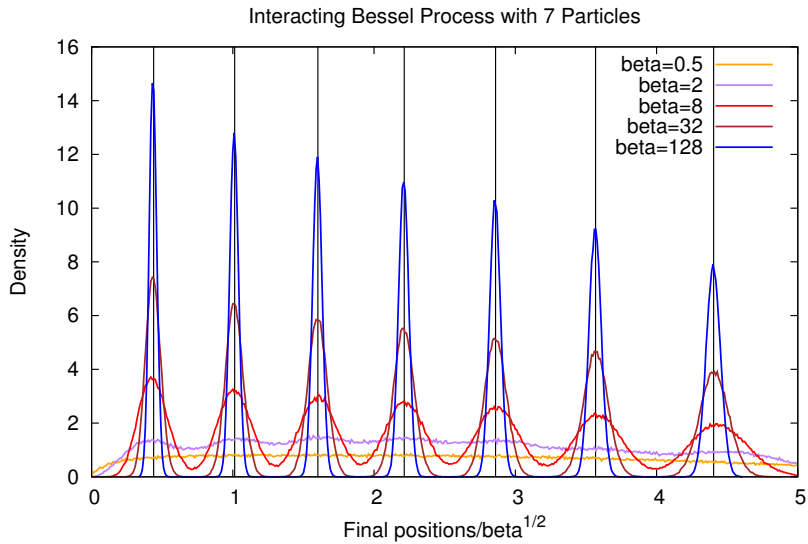}
\caption{Density of seven interacting Bessel processes for various values of $\beta$ at $t=1$ and $\nu=1/2$, with final positions scaled down by a factor of $\sqrt{\beta t}=\sqrt{\beta}$. The vertical lines represent the zeroes of $L^{(0)}_7(x).$}\label{FigureFreezingB}
\end{figure}

In the same way, one can investigate the limit $\nu\to\infty$ of the interacting Bessel processes. As before, integrating the SDEs \eqref{SDETypeB} a total of $10^6$ times, up to a time $t=1/2$ with $\beta=2$ and initial configuration $\bar{x}_{\mu,i}=i\times10^{-2}$, with time increments of $2\times10^{-4}$, Fig.~\ref{FigureLimitNuB} can be produced. The final positions were scaled down by a factor of $\sqrt{\beta \nu t}=\sqrt{\nu}$. This time, it is clear that as $\nu$ grows, all particles tend to group up at the same scaled position. This is a consequence of the fact that the interaction between particles (the two terms in the sum in Eq.~\eqref{SDETypeB}) is rendered negligible due to the large value of $\nu$, and the only part that remains is a set of independent Bessel processes with a large Bessel index.

\begin{figure}[!ht]
\centering
\includegraphics[width=0.8\textwidth]{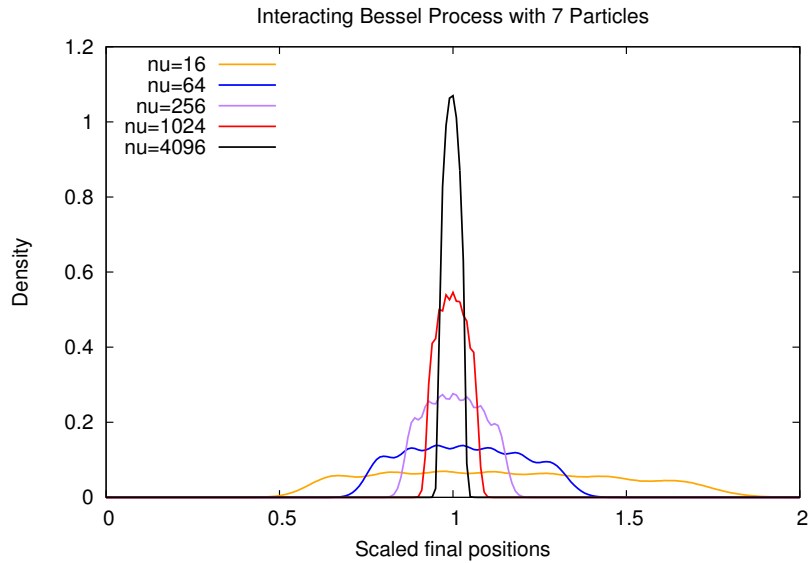}
\caption{Density of seven interacting Bessel processes for various values of $\nu$ at $t=1/2$ and $\beta=2$, with final positions scaled down by a factor of $\sqrt{\beta\nu t}=\sqrt{\nu}$.}\label{FigureLimitNuB}
\end{figure}

With the result of these simulations, the claims from Props.~\ref{LimitingRegimesA} and \ref{LimitingRegimesB} are illustrated and supported. An alternative derivation of these propositions will be tackled in the rest of this chapter.

\section{The intertwining operator for symmetric polynomials}

As stated in Chapter~\ref{general_steady}, the intertwining operator plays an important role in the relaxation process that leads to the steady-state distributions of the interacting Brownian motions and Bessel processes, as well as in their behavior in the freezing regime. Here, the expressions for the intertwining operators of type $A$ and $B$ for the case where they are applied on symmetric polynomials are derived. In addition, simple examples of their effect on symmetric polynomials are studied, and their form in the freezing limit $\beta\to\infty$ is calculated. Finally, the forms of the generalized Bessel functions of type $A$ and $B$ are calculated and shown to be consistent with the results from Chapter~\ref{general_freezing}.

\subsection{Derivation}

Using the fact that the interacting Brownian motions and Bessel processes are equivalent to the radial Dunkl processes of type $A$ and $B$, one may equate their respective transition densities to deduce that \cite{rosler98,gallardoyor05,bakerforrester97}
\begin{IEEEeqnarray}{rCl}
\sum_{\rho\in S_N}V_A\rme^{\rho\bx\cdot\by}&=&N!\FZ{2/\beta}\left(\bx,\by\right),\label{RadialDunklKernelA}\\ 
\sum_{\rho\in W_{B}}V_B \rme^{\rho\bx\cdot\by}&=&2^N N! \FF{2/\beta}\left(\frac{\beta}{2}(\nu+ N-1/2) +\frac{1}{2};\frac{(\bx)^2}{2},\frac{(\by)^2}{2}\right).\nonumber\\
\label{RadialDunklKernelB}
\end{IEEEeqnarray}
Here, the symbol $(\bx)^2$ denotes the vector $(x_1^2,\ldots,x_N^2)^T$, and the functions $\FZ{\alpha}\left(\bx,\by\right)$ and $\FF{\alpha}(b;\bx,\by)$ are the generalized hypergeometric functions, which are defined in terms of several quantities.

Consider the integer partitions, $\lambda$, $\tau$ and $\mu$, and the real parameter $\alpha>0$; also, denote by $|\lambda|$ and $l(\lambda)$ the total sum and the length of the partition $\lambda$, respectively. The monomial symmetric functions are given by
\begin{equation}\label{monomialsymmetric}
m_\tau(\bx)=\sum_\sigma \prod_{j=1}^N x_j^{\tau_{\sigma (j)}},
\end{equation}
where the sum is taken over all permutations $\sigma$ such that each monomial $\prod_{j=1}^N x_j^{\tau_{\sigma (j)}}$ is distinct.

The Jack polynomials \cite[p.~379]{macdonald} are denoted by
\begin{equation}
\PP{\lambda}{\alpha}(\bx)=\sum_{\substack{\mu:|\mu|=|\lambda|\\l(\mu)\leq N}}u_{\lambda\mu}(\alpha)m_\mu(\bx),\label{JackP}
\end{equation}
where $u_{\lambda\mu}(\alpha)$ is a triangular matrix, in the sense that $u_{\lambda\mu}(\alpha)$ is nonzero only when $\mu\leq\lambda$ in the sense of the \emph{partial ordering} of partitions defined in \cite[p.~7]{macdonald}. In particular, $u_{\mu\mu}(\alpha)=1$ for all $\alpha>0$. It is known that the Jack functions are part of the eigenfunctions of the periodic type-$A$ Calogero-Moser-Sutherland model \cite{bakerforrester97B}. They are also used to calculate the symmetric eigenfunctions of the type-$A$ Calogero-Moser system. The Jack function of parameter $\alpha>0$, $\PP{\tau}{\alpha}(\bx)$, is defined as the polynomial eigenfunction of the operator \cite{stanley89}
\begin{equation}
\left(\sum_{i=1}^Nx_i^2\frac{\partial^2}{\partial x_i^2}+\frac{2}{\alpha}\sum_{1\leq i\neq j\leq N}\frac{x_i^2}{x_i-x_j}\frac{\partial}{\partial x_i}\right)\PP{\tau}{\alpha}(\bx)=E_{\tau,\alpha}\PP{\tau}{\alpha}(\bx)
\end{equation}
with eigenvalue 
\begin{equation}
E_{\tau,\alpha}=\sum_{j=1}^N\tau_j[\tau_j-1-2(j-1)/\alpha]+|\tau|(N-1).
\end{equation}

For an integer $0\leq n\leq N$, the elementary symmetric polynomial $e_n(\bx)$ is given by
\begin{equation}\label{edefinition}
e_n(\bx)=\sum_{1\leq i_1<\cdots< i_n\leq N}\prod_{j=1}^nx_{i_j}.
\end{equation}
For example, $e_0(\bx)=1$, $e_1(\bx)=\sum_{i=1}^Nx_i$, $e_2(\bx)=\sum_{1\leq i<j \leq N}x_ix_j$ and $e_N(\bx)=\prod_{i=1}^N x_i$. When the subscript of $e$ is a partition, it is given by
\begin{equation}
e_{\tau}(\bx)=\prod_{i=1}^{l(\tau)}e_{\tau_i}(\bx).
\end{equation}

The Schur polynomial $s_{\tau}(\bx)$ is given by the Jacobi-Trudi formula \cite{fulton, macdonald}
\begin{equation}
s_{\tau}(\bx)=\frac{\det_{1\leq i,j\leq N}[x_j^{\tau_i+N-i}]}{\det_{1\leq i,j\leq N}[x_j^{N-i}]}.
\end{equation}

Depending on the value of the parameter $\alpha,$ the Jack polynomials become the Schur, monomial and elementary symmetric polynomials as summarized in Table~\ref{TableSymmetricPolynomials}.
\begin{table}[!ht]
\begin{center}
\begin{tabular}{llll}
\hline
Parameter	&Function				&Matrix					&Function name\\
\hline
$\alpha$		&$\PP{\tau}{\alpha}(\bx)$	&$u_{\tau\lambda}(\alpha)$	&Jack\\
0			&$e_{\tau^\prime}(\bx)$	&$a_{\tau\lambda}$			&Elementary symmetric\\
1			&$s_{\tau}(\bx)$		&$K_{\tau\lambda}$			&Schur\\
$\infty$		&$m_\tau(\bx)$			&$\delta_{\tau\lambda}$		&Monomial symmetric\\
\hline
\end{tabular}
\end{center}
\caption{Some special cases of the Jack polynomials. The matrices in the third column relate the polynomials of the second column with the monomial symmetric polynomials as in Eq.~\eqref{JackP}.}
\label{TableSymmetricPolynomials}
\end{table}

If $\tau$ is an integer partition, then the expression $(i,j)\in\tau$ implies that $1\leq i\leq l(\tau)$ and $1\leq j \leq \tau_i$. The functions $c_\tau (\alpha), c_\tau^\prime (\alpha)$ are given by
\begin{IEEEeqnarray}{rCl}
c_\tau(\alpha)=\prod_{(i,j)\in \tau}(\alpha(\tau_i-j)+\tau_j^\prime-i+1),\label{hooksnocorner}\\
c_\tau^\prime(\alpha)=\prod_{(i,j)\in \tau}(\alpha(\tau_i-j+1)+\tau_j^\prime-i),\label{hookscorner}
\end{IEEEeqnarray}
and the generalized Pochhammer symbol $(b)_\tau^{(\alpha)}$ (with $b>0$) is defined by
\begin{equation}\label{GPS}
(b)_\tau^{(\alpha)}=\prod_{i=1}^{l(\tau)}\frac{\Gamma(b-(i-1)/\alpha+\tau_i)}{\Gamma(b-(i-1)/\alpha)}.
\end{equation}
With this, one can write the generalized hypergeometric functions as follows.
\begin{IEEEeqnarray}{rCl}
\FF{\alpha}(b;\bx,\by)&=&\sum_{n=0}^\infty\sum_{\substack{\tau:l(\tau)\leq N\cr |\tau|=n}}\frac{c_\tau (\alpha)}{c_\tau^\prime (\alpha)}\frac{\PP{\tau}{\alpha}(\bx)\PP{\tau}{\alpha}(\by)}{(b)_\tau^{(\alpha)}(N/\alpha)_\tau^{(\alpha)}}\label{genhypergeob},\\
\FZ{\alpha}(\bx,\by)&=&\sum_{n=0}^\infty\sum_{\substack{\tau:l(\tau)\leq N\cr |\tau|=n}}\frac{c_\tau (\alpha)}{c_\tau^\prime (\alpha)}\frac{\PP{\tau}{\alpha}(\bx)\PP{\tau}{\alpha}(\by)}{(N/\alpha)_\tau^{(\alpha)}}.\label{genhypergeoa}
\end{IEEEeqnarray} 

The form of the intertwining operator is given in terms of the quantities defined above as follows:

\begin{proposition}\label{PropositionVBetaOnSymmetricPolynomials}
The intertwining operators $V_A$ and $V_B$ have the following explicit forms when they act on symmetric polynomials:
\begin{IEEEeqnarray}{rCl}
\label{VTypeASymmetric}V_A m_\lambda(\bx)=\lambda!M(\lambda,N)\sum_{\substack{\tau:l(\tau)\leq N\cr |\tau|= |\lambda|}}\frac{c_\tau (2/\beta)}{c_\tau^\prime (2/\beta)}\frac{u_{\tau\lambda}(2/\beta)}{(\beta N/2)_\tau^{(2/\beta)}}\PP{\tau}{2/\beta}(\bx),
\end{IEEEeqnarray}
\begin{IEEEeqnarray}{rCl}
&&V_B m_\lambda[(\bx)^2]=\frac{(2\lambda)!M(\lambda,N)}{2^{2|\lambda|}}\!\!\!\!\sum_{\substack{\tau:l(\tau)\leq N\cr |\tau|= |\lambda|}}\frac{c_\tau (2/\beta)}{c_\tau^\prime (2/\beta)}\nonumber\\
&&\quad\times\frac{u_{\tau\lambda}(2/\beta)\PP{\tau}{2/\beta}[(\bx)^2]}{(\beta N/2)_\tau^{(2/\beta)}(\beta[\nu+N-1/2]/2+1/2)_\tau^{(2/\beta)}}.\label{VTypeBSymmetric}
\end{IEEEeqnarray}
\end{proposition}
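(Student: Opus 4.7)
The strategy is to extract the action of $V_\beta$ on monomial symmetric polynomials from the Weyl-symmetrised Dunkl kernel identities \eqref{RadialDunklKernelA} and \eqref{RadialDunklKernelB} by Taylor-expanding both sides in the variable $\by$ and comparing coefficients in the monomial symmetric polynomial basis. The key preliminary step, independent of the root system, is the purely combinatorial identity
\[
\sum_{\sigma\in S_N} \rme^{\bx\cdot\sigma\by} = N!\sum_{\lambda:\,l(\lambda)\leq N}\frac{m_\lambda(\bx)\,m_\lambda(\by)}{M(\lambda,N)\,\lambda!},
\]
obtained from the multinomial expansion of $(\bx\cdot\sigma\by)^n$ together with the orbit--stabiliser count $\sum_{\sigma\in S_N}\prod_j y_j^{\lambda_{\sigma(j)}} = [N!/M(\lambda,N)]\,m_\lambda(\by)$, where $M(\lambda,N)$ counts the distinct compositions of length $N$ that sort to $\lambda$.

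For type $A$, apply $V_A$ to the $\bx$-variable on both sides of \eqref{RadialDunklKernelA}. Since $V_A$ is linear and the exponential lies in $A_r$ for every $r>0$ (Proposition~\ref{PositivityVBeta}), one may commute $V_A$ past the summation and the permutation action, obtaining $N!\sum_\lambda V_A m_\lambda(\bx)\,m_\lambda(\by)/[M(\lambda,N)\lambda!]$ on the left. On the right, expand each Jack polynomial in the $\by$-variable via \eqref{JackP}, $\PP{\tau}{2/\beta}(\by)=\sum_\lambda u_{\tau\lambda}(2/\beta)\,m_\lambda(\by)$. Because the monomial symmetric polynomials form a basis, equating the coefficient of $m_\lambda(\by)$ on both sides gives \eqref{VTypeASymmetric} directly.

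For type $B$, the Weyl-group sum is split into its sign-change and permutation factors. Summing $\rme^{\bx\cdot\rho\by}$ first over the sign subgroup $\{\pm 1\}^N$ yields $2^N\prod_{i=1}^N\cosh(x_i y_i)$, whose Taylor expansion contains only even powers of each $x_i y_i$; a subsequent sum over $S_N$ together with the same orbit--stabiliser argument applied to the squared variables gives
\[
\sum_{\rho\in W_B}\rme^{\bx\cdot\rho\by}=2^N N!\sum_{\mu:\,l(\mu)\leq N}\frac{m_\mu[(\bx)^2]\,m_\mu[(\by)^2]}{M(\mu,N)\,(2\mu)!},
\]
where $(2\mu)!=\prod_i(2\mu_i)!$. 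Applying $V_B$ term-by-term on the left of \eqref{RadialDunklKernelB}, expanding the Jack polynomials on the right and using the homogeneity $\PP{\tau}{2/\beta}[(\by)^2/2]=2^{-|\tau|}\PP{\tau}{2/\beta}[(\by)^2]$ to collect the factor $2^{-2|\tau|}=2^{-2|\mu|}$, and finally equating coefficients of $m_\mu[(\by)^2]$ yields \eqref{VTypeBSymmetric}.

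The main obstacle is the careful book-keeping of the combinatorial multiplicities $M(\lambda,N)$, $\lambda!$ and (in the type-$B$ case) $(2\lambda)!/2^{2|\lambda|}$ produced by the Taylor expansions, as well as justifying the term-by-term application of $V_\beta$ to an infinite series (which follows from the uniform bound in Proposition~\ref{PositivityVBeta}). Once the two combinatorial identities above are in hand, the triangularity of $u_{\tau\lambda}(2/\beta)$ from \eqref{JackP} together with the degree-preserving property of $V_\beta$ automatically restricts the final sum over $\tau$ to partitions with $|\tau|=|\lambda|$ and $\tau\geq\lambda$, consistent with the fact that $V_\beta$ preserves the space of homogeneous polynomials of each degree.
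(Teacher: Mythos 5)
Your proposal is correct and follows essentially the same route as the paper: expand the Weyl-symmetrised Dunkl kernel in monomial symmetric polynomials (with the same $N!/[M(\lambda,N)\lambda!]$ and $2^N N!/[M(\mu,N)(2\mu)!]$ multiplicities), identify it with the generalized hypergeometric series, and solve for $V_\beta m_\lambda$ by comparing coefficients in the $\by$-variable. The only difference is organisational: you equate coefficients of $m_\lambda(\by)$ directly after expanding the Jack polynomials via $u_{\tau\lambda}$, whereas the paper first converts $m_\mu(\by)$ into Jack polynomials via $(u^{-1})_{\mu\tau}$, equates coefficients of $\PP{\tau}{2/\beta}(\by)$, and then re-multiplies by $\sum_\tau u_{\tau\lambda}$ --- your version skips this double inversion and is slightly cleaner, but the content is identical.
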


\begin{proof}
Expanding both sides of Eqs.~\eqref{RadialDunklKernelA} and \eqref{RadialDunklKernelB} in terms of symmetric polynomials, and using the orthogonality relations obeyed by the Jack polynomials \cite[p.~379]{macdonald} one may extract an explicit form for both $V_A$ and $V_B$. 

Consider $V_A$ first. Using the notation $\mu!=\prod_{j=1}^{l(\mu)}\mu_j!$, the expansion of the l.h.s.\ of Eq.~\eqref{RadialDunklKernelA} reads
\begin{IEEEeqnarray}{rCl}
\sum_{\rho\in S_N}\rme^{\rho\bx\cdot\by}&=&\sum_{\rho\in S_N}\sum_{n=0}^\infty\sum_{\substack{\mu:l(\mu)\leq N\cr |\mu|=n}}\frac{1}{\mu!}\sum_{\substack{\tau\in S_N:\cr \tau(\mu)\ \textrm{distinct}}}\prod_{j=1}^N(x_{\rho(j)}y_j)^{\mu_{\tau(j)}}\nonumber\\
&=&\sum_{\mu:l(\mu)\leq N}\frac{1}{\mu!}\sum_{\substack{\tau\in S_N:\cr \tau(\mu)\ \textrm{distinct}}}\left\{\sum_{\rho\in S_N}\prod_{j=1}^Nx_{\rho(j)}^{\mu_{\tau(j)}}\right\}\prod_{j=1}^Ny_j^{\mu_{\tau(j)}}\nonumber\\
&=&\sum_{\mu:l(\mu)\leq N}\frac{1}{\mu!}\left\{\sum_{\rho^\prime\in S_N}\prod_{j^\prime=1}^Nx_{j^\prime}^{\mu_{\rho^\prime(j^\prime)}}\right\}\sum_{\substack{\tau\in S_N:\cr \tau(\mu)\ \textrm{distinct}}}\prod_{j=1}^Ny_j^{\mu_{\tau(j)}}.
\end{IEEEeqnarray}
The last line results from the substitutions $j^\prime=\rho(j)$ and $\rho^\prime(j^\prime)=\tau[\rho^{-1}(j^\prime)]$. The last term on the right is, by definition, $m_\mu(\by)$. The term inside the braces is equal to $m_\mu(\bx)$ multiplied by the number of non-distinct permutations of $\mu$. 

Let $l_j^{\mu}$ represent the multiplicity of the $j$th (distinct) part of $\mu$, where the subscript $P$ in $l_P^\mu$ refers to the number of distinct parts of $\mu$. In the cases where $l(\mu)<N$, there are $N-l(\mu)$ zero parts in $\mu$ and therefore $l_P^\mu=N-l(\mu)$ accounts for the multiplicity of zero parts in the first $N$ parts of $\mu$. For example, if $N=6$ and $\mu=(5,3,2,2)=(5,3,2,2,0,0)$, then $P=4$ and $l_1^{\mu}=1$, $l_2^{\mu}=1$, $l_3^{\mu}=2$ and $l_4^{\mu}=2$. Using this notation one may define the following multinomial coefficient:
\begin{equation}
M(\mu,N)=\frac{N!}{l_1^{\mu}!\cdots l_P^{\mu}!}.\label{functionM}
\end{equation}
This function represents the number of distinct permutations of $\mu$ when it is considered as an $N$-dimensional vector. 

Using \eqref{functionM}, the l.h.s.\ of Eq.~\eqref{RadialDunklKernelA} becomes
\begin{equation}
\sum_{\rho\in S_N}V_A\rme^{\rho\bx\cdot\by}=\sum_{\mu:l(\mu)\leq N}\frac{N!m_\mu(\by)}{\mu!M(\mu,N)}V_Am_\mu(\bx).\label{SymDKExp}
\end{equation}

The next step is to eliminate the variable $\by$ using the orthogonality of Jack polynomials. Insertion of the inverse of \eqref{JackP} in \eqref{SymDKExp} after applying $V_A$ yields
\begin{equation}
\sum_{\rho\in S_N}V_A\rme^{\rho\bx\cdot\by}=\sum_{\mu:l(\mu)\leq N}\frac{N![V_{A}m_\mu(\bx)]}{\mu!M(\mu,N)}\sum_{\substack{\nu:\nu\leq\mu\cr |\nu|= |\mu|}}(u^{-1})_{\mu\nu}(2/\beta)\PP{\nu}{2/\beta}(\by).\label{SymmDunklKP}
\end{equation}
Equating Eq.~\eqref{SymmDunklKP} and the r.h.s.\ of Eq.~\eqref{RadialDunklKernelA} gives
\begin{IEEEeqnarray}{rCl}
\sum_{\mu:l(\mu)\leq N}\frac{[V_{A}m_\mu(\bx)]}{\mu!M(\mu,N)}\sum_{\substack{\nu:\nu\leq\mu\cr |\nu|= |\mu|}}(u^{-1})_{\mu\nu}(2/\beta)\PP{\nu}{2/\beta}(\by)&&\nonumber\\
\qquad\qquad\qquad\qquad\qquad 
=\sum_{\tau:l(\tau)\leq N}\frac{c_\tau (2/\beta)}{c_\tau^\prime (2/\beta)}\frac{\PP{\tau}{2/\beta}(\bx)\PP{\tau}{2/\beta}(\by)}{(\beta N/2)_\tau^{(2/\beta)}}.&&
\end{IEEEeqnarray}
From the orthogonality of Jack functions \cite{macdonald} and the linearity of $V_A$, which acts only on $\bx$, we can equate the coefficients of the same Jack functions of $\by$ to obtain
\begin{equation}
\sum_{\substack{\mu:l(\mu)\leq N\cr |\mu|=|\tau|}}\frac{(u^{-1})_{\mu\tau}(2/\beta)}{\mu!M(\mu,N)}V_{A}m_\mu(\bx)=\frac{c_\tau (2/\beta)}{c_\tau^\prime (2/\beta)}\frac{\PP{\tau}{2/\beta}(\bx)}{(\beta N/2)_\tau^{(2/\beta)}}.\label{tobeinverted}
\end{equation}
This relation is solved for $V_Am_\lambda(\bx)$ if we apply the sum $\sum_\tau u_{\tau\lambda}(2/\beta)$ on both sides. The result is Eq.~\eqref{VTypeASymmetric}.

Similarly, an expression for $V_B$ may be extracted from Eq.~\eqref{RadialDunklKernelB}. The first step is to expand $\sum_{\rho\in W_B}\exp(\bx\cdot\rho\by)$ in terms of symmetric polynomials. All the elements of $W_B$ can be written as compositions of variable permutations and sign changes. Therefore,
\begin{equation}
\sum_{\rho\in W_B}\rme^{\bx\cdot\rho\by}=\sum_{\rho\in S_N}\sum_{\substack{\mu:l(\mu)\leq N}}\frac{1}{\mu!}\sum_{\substack{\tau\in S_N:\cr \tau(\mu)\ \textrm{distinct}}}\prod_{j=1}^N\sum_{s_j=\pm 1}s_j^{\mu_{\tau(j)}}(y_{\rho(j)} x_j)^{\mu_{\tau(j)}}.
\end{equation}
The product over $j$ vanishes when at least one of the parts of $\mu$ is odd, so it suffices to consider partitions with even parts. Then,
\begin{IEEEeqnarray}{rCl}
\sum_{\rho\in W_B}\rme^{\bx\cdot\rho\by}&=&\sum_{\substack{\mu:l(\mu)\leq N}}\frac{2^N}{(2\mu)!}\sum_{\substack{\tau\in S_N:\cr \tau(\mu)\ \textrm{distinct}}}\left\{\sum_{\rho\in S_N}\prod_{j=1}^N(y_{\rho(j)})^{2\mu_{\tau(j)}}\right\} \prod_{j=1}^Nx_j^{2\mu_{\tau(j)}}\nonumber\\
&=&\sum_{\substack{\mu:l(\mu)\leq N}}\frac{2^NN!}{(2\mu)!}\frac{m_{\mu}[(\bx)^2]m_{\mu}[(\by)^2]}{M(\mu,N)}.\label{exponentialexpansion}
\end{IEEEeqnarray}

Applying $V_B$ on this result and inserting into \eqref{RadialDunklKernelB} yields
\begin{equation}
V_B\sum_{\substack{\mu:l(\mu)\leq N}}\frac{m_{\mu}[(\bx)^2]m_{\mu}[(\by)^2]}{(2\mu)!M(\mu,N)}=\FF{2/\beta}\left(\frac{\beta}{2}(\nu+N-1/2) +\frac{1}{2};\frac{(\bx)^2}{2},\frac{(\by)^2}{2}\right).
\end{equation}
From \eqref{genhypergeob} and the fact that the Jack polynomial $\PP{\tau}{\alpha}(\bx)$ is homogeneous of degree $|\tau|$, one obtains
\begin{IEEEeqnarray}{rCl}
&&V_B\!\!\!\!\sum_{\substack{\mu:l(\mu)\leq N}}\!\!\!\frac{m_{\mu}[(\bx)^2]m_{\mu}[(\by)^2]}{(2\mu)!M(\mu,N)}=\!\!\!\!\sum_{\tau:l(\tau)\leq N}\frac{c_\tau (2/\beta)}{c_\tau^\prime (2/\beta)}\nonumber\\
&&\quad\times\frac{\PP{\tau}{2/\beta}[(\bx)^2]\PP{\tau}{2/\beta}[(\by)^2]}{2^{2|\tau|}(\beta[\nu+N-1/2]/2+1/2)_\tau^{(2/\beta)}(N\beta/2)_\tau^{(2/\beta)}}.
\end{IEEEeqnarray}

Next, using the inverse of the expansion of Jack polynomials into monomial symmetric polynomials, Eq.~\eqref{JackP}, on the l.h.s.\ and equating the coefficients of $\PP{\tau}{2/\beta}[(\by)^2]$ gives
\begin{IEEEeqnarray}{rCl}
&&\sum_{\substack{\mu:l(\mu)\leq N\cr |\mu|=|\tau|}}\!\!\frac{V_Bm_{\mu}[(\bx)^2]}{(2\mu)!M(\mu,N)}(u^{-1})_{\mu\tau}(2/\beta)\nonumber\\
&&\quad=\frac{c_\tau (2/\beta)}{c_\tau^\prime (2/\beta)}\frac{\PP{\tau}{2/\beta}[(\bx)^2/4]}{(\frac{\beta}{2}[\nu+N-1/2]+\frac{1}{2})_\tau^{(2/\beta)}(N\frac{\beta}{2})_\tau^{(2/\beta)}}.
\end{IEEEeqnarray}
Multiplying by $\sum_\tau u_{\tau\lambda}(2/\beta)$ on both sides yields, finally, Eq.~\eqref{VTypeBSymmetric}.
\end{proof}

Equations~\eqref{VTypeASymmetric} and \eqref{VTypeBSymmetric} can be used to obtain a clearer idea of the action of the intertwining operators on symmetric polynomials. In the simplest non-trivial case, one can consider the case where the partition $\lambda$ is equal to $(2,0,\ldots)$ and study the effect of $V_A$ on the equation $m_{2}(\bx)= \sum_{j=1}^N x_j^2=1$ and the effect of $V_B$ on the equation $m_{2}[(\bx)^2]= \sum_{j=1}^N x_j^4=1$. In Chap.~\ref{general_freezing}, the variables considered were often scaled up by a factor of $\sqrt{\beta}$. The same scaling will be taken into account here. The transformed equations are
\begin{IEEEeqnarray}{rCl}
&&1=V_A \beta m_{2}(\bx)=\frac{\beta(\beta+2)}{\beta N+2}\sum_{j=1}^N x_j^2+\frac{2\beta^2}{\beta N+2}\sum_{1\leq i<j\leq N}x_ix_j,\label{ExampleA}\\
&&1=V_B \beta^2 m_{2}[(\bx)^2]=\frac{3\beta^2[(\beta+2)\sum_{i=1}^Nx_i^4+2\beta\sum_{1\leq i<j\leq N}x_i^2 x_j^2]}{(\beta[\nu+N-1/2]+1)(\beta[\nu+N-1/2]+3)(\beta N+2)}.\nonumber\\\label{ExampleB}
\end{IEEEeqnarray}

The change in the surfaces defined by the l.h.s.\ of Eqs.~\eqref{ExampleA} and \eqref{ExampleB} is depicted in Fig.~\ref{FigureQuadraticExamples}.
\begin{figure}[!t]
  \centering
  \subfloat[$m_{2}(\bx)=1$]{\includegraphics[width=0.3\textwidth]{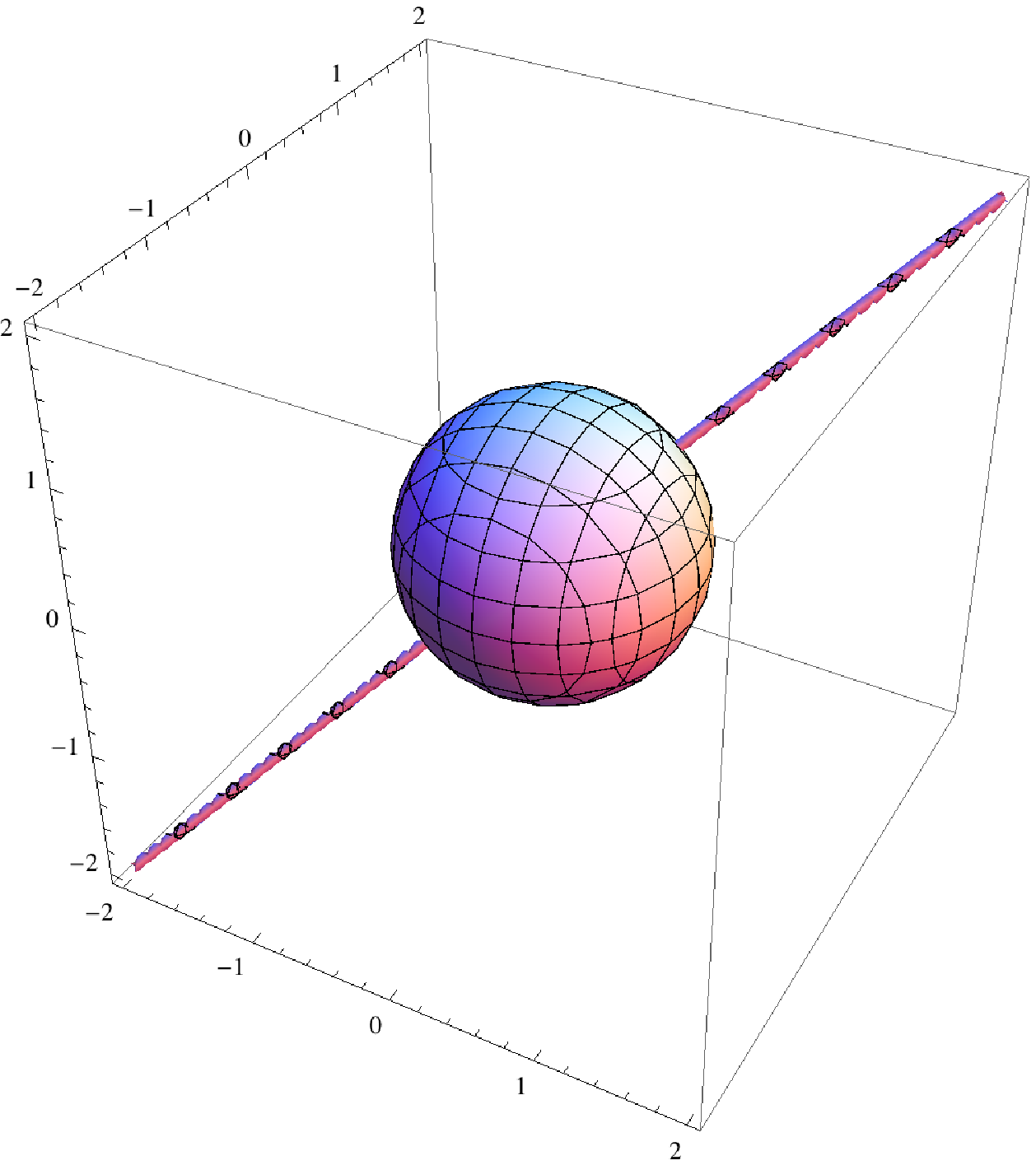}}\
  \subfloat[Type $A_2$, $\beta=2$]{\includegraphics[width=0.3\textwidth]{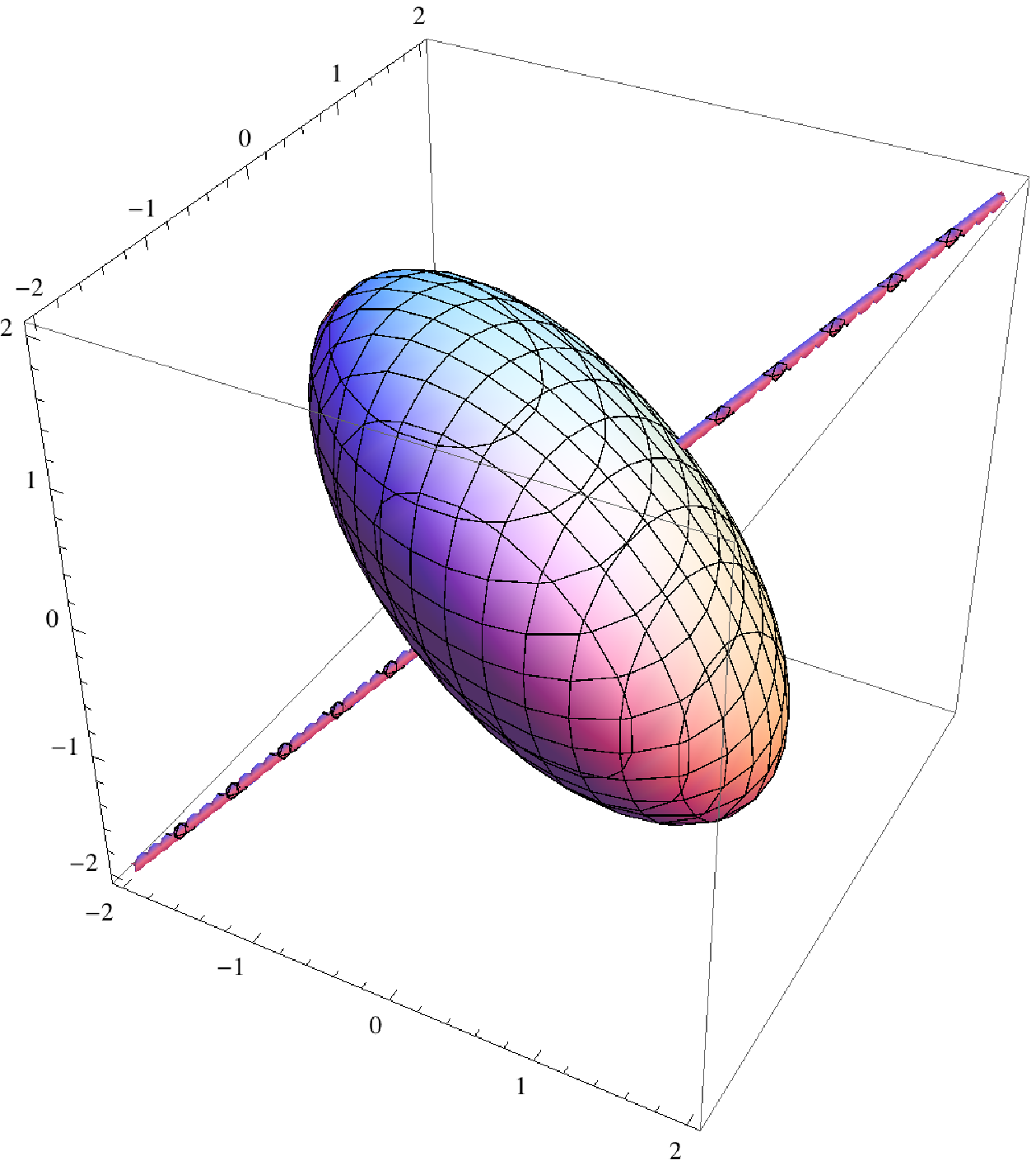}}\
  \subfloat[Type $A_2$, $\beta\to\infty$]{\includegraphics[width=0.3\textwidth]{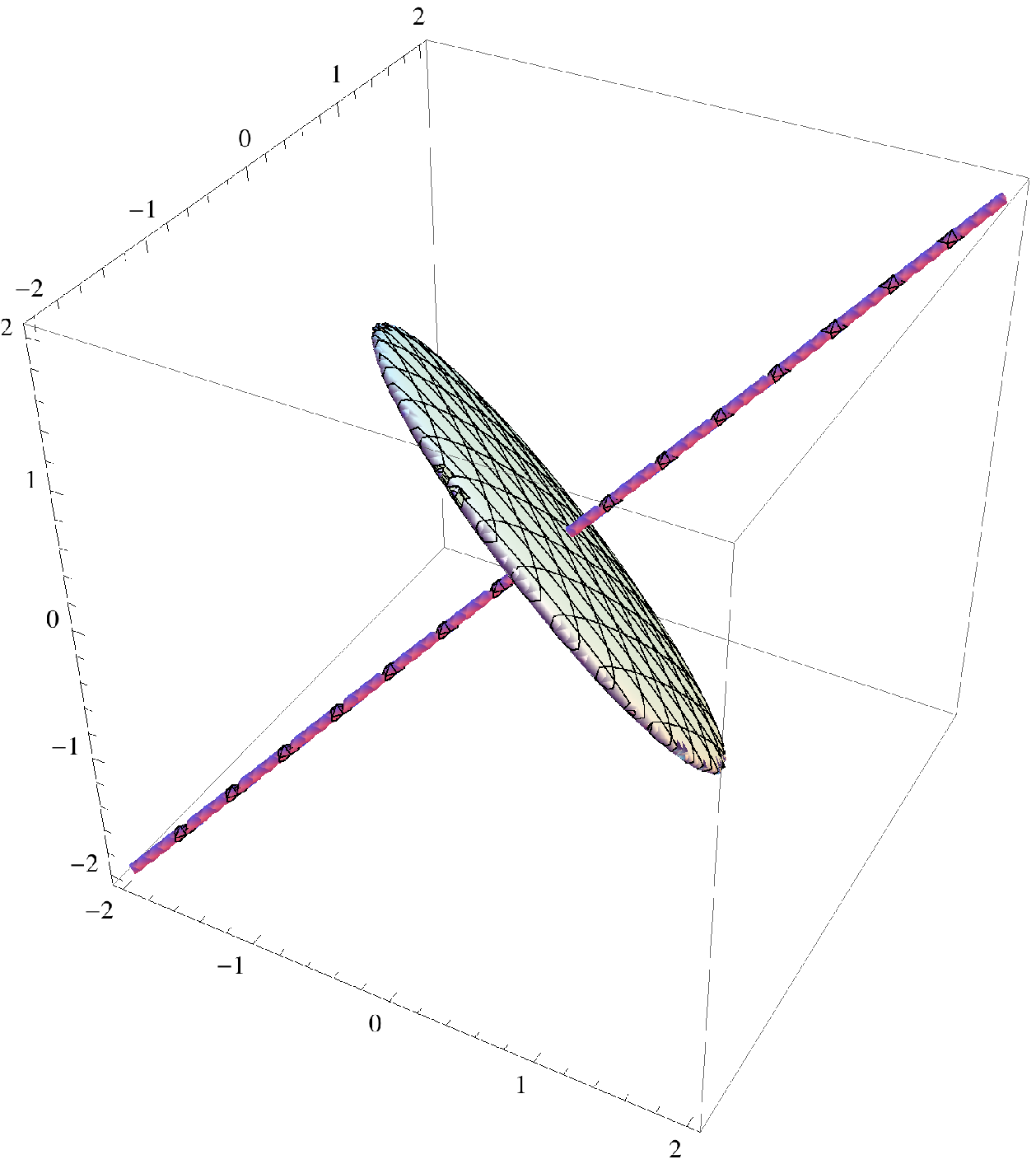}}\\

  \subfloat[$m_2{[(\bx)^2]}=1$]{\includegraphics[width=0.3\textwidth]{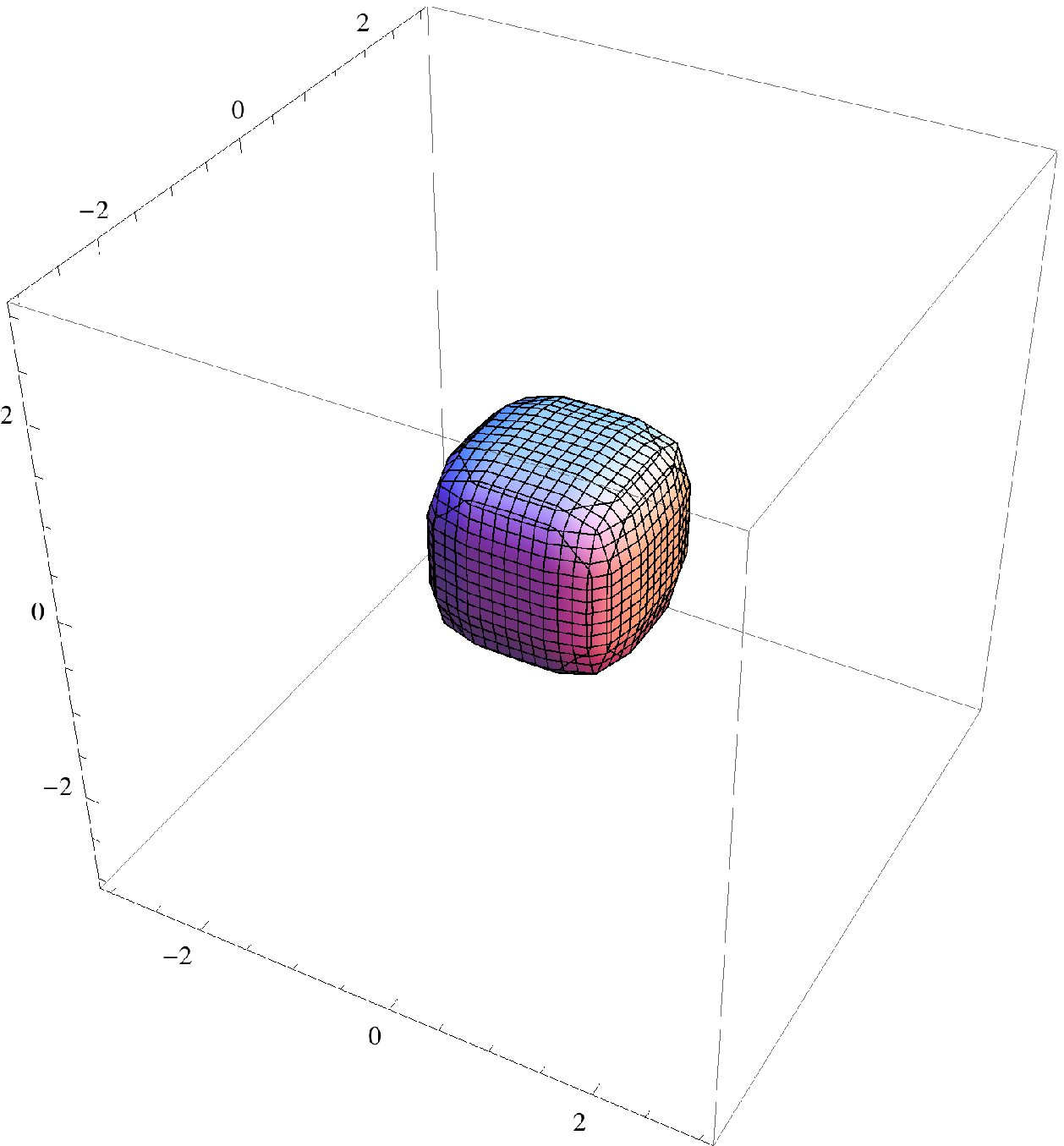}}\                 
  \subfloat[Type $B_3$, $\beta=2$]{\includegraphics[width=0.3\textwidth]{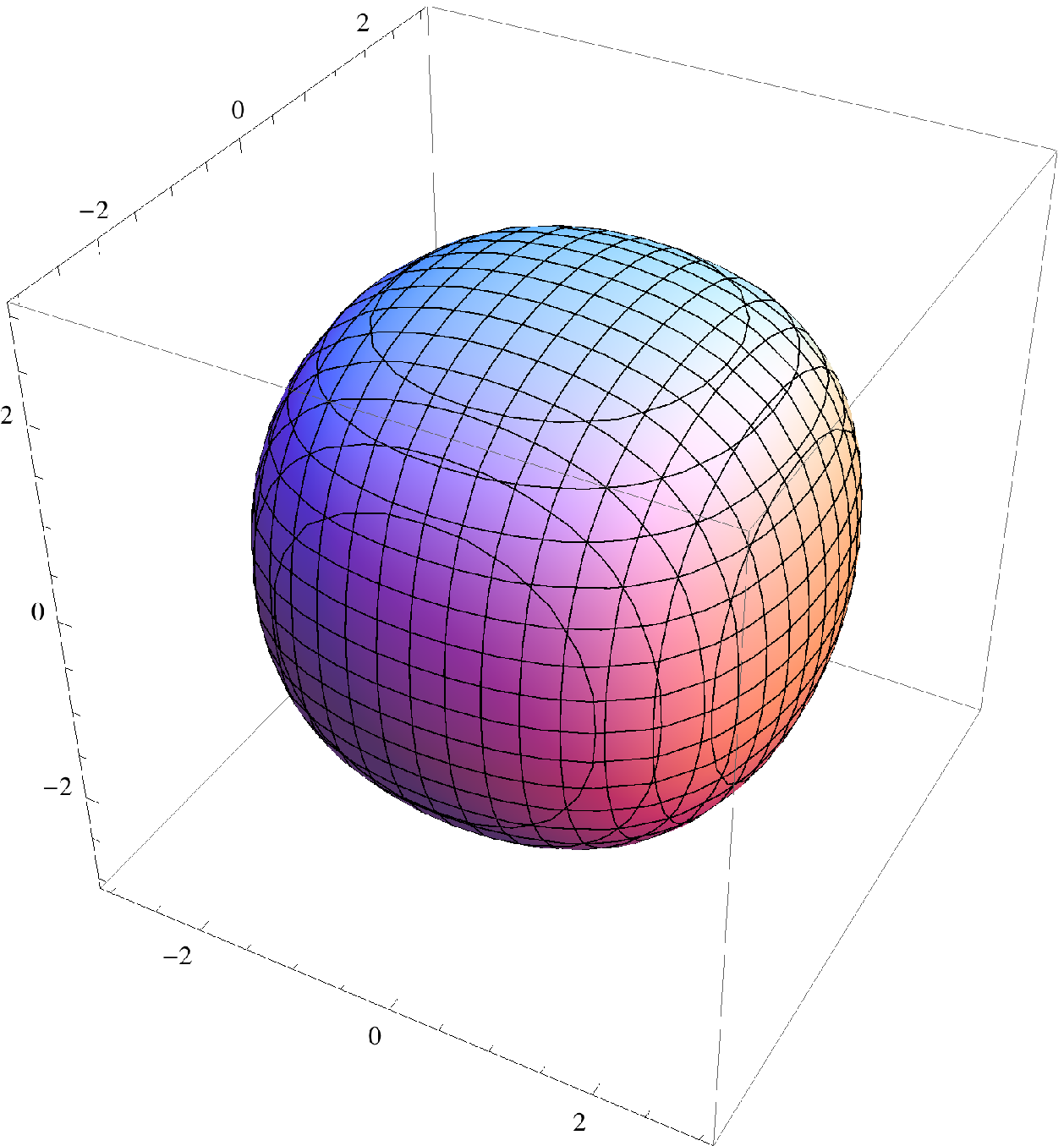}}\
  \subfloat[Type $B_3$, $\beta\to\infty$]{\includegraphics[width=0.3\textwidth]{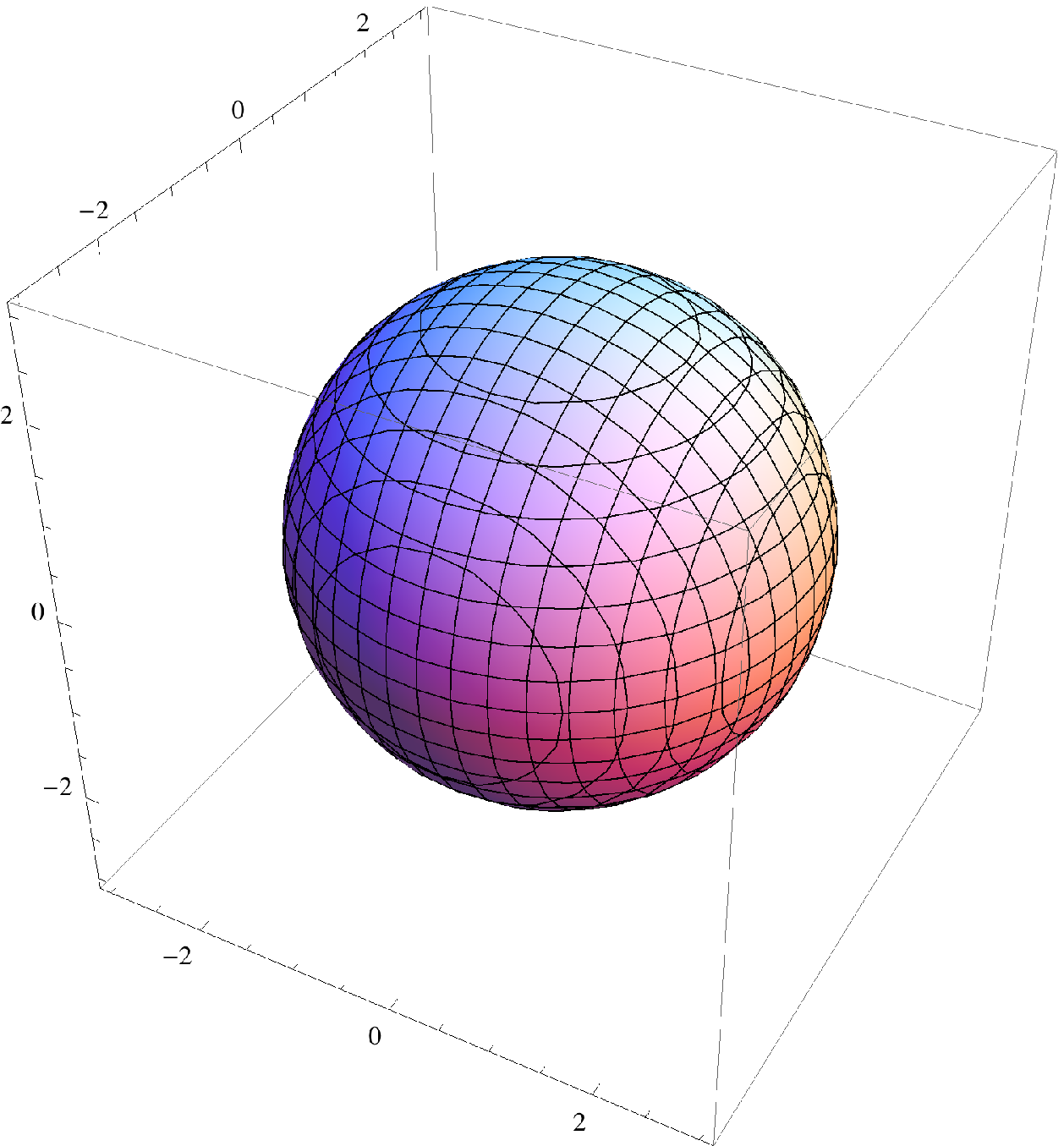}}\\
  
  \caption{First row: effect of the intertwining operator of type $A_2$ on the equation $\beta m_2(\bx)=\beta \sum_{i=1}^3x_i^2=1$. Second row: effect of the intertwining operator of type $B_3$ on the equation $\beta^2m_2[(\bx)^2]=\beta^2\sum_{i=1}^3x_i^4=1$ with $\nu=1/2$.}
  \label{FigureQuadraticExamples}
\end{figure}
From the figure, two observations can be made. 

The first is that the intertwining operators stretch the surfaces in the figure in the directions covered by the span of the related Weyl group. Clearly, $W_B$ is of full rank, and consequently, $V_B$ stretches the cube-like surface defined by the equation $\sum_{i=1}^3x_i^4=1$ into a sphere, as shown in Fig.~\ref{FigureQuadraticExamples}(f). At the same time, because $W_A=S_N$ is not of full rank, it only stretches the sphere in Fig.~\ref{FigureQuadraticExamples}(a) in the plane orthogonal to the diagonal line in Figs.~\ref{FigureQuadraticExamples}(a), (b) and (c). This line represents all the points $(a,a,a),\ a\in\RR$, which are orthogonal to the span of $S_3$. 

The second observation is that this stretching action becomes symmetric as $\beta$ grows to infinity. In Figs.~\ref{FigureQuadraticExamples}(c) and (f), the resulting surfaces are a circle and a sphere, respectively, meaning that the space within the span of the root system is mapped to a $d_R$-dimensional ball, while the space that is orthogonal to the root system remains unchanged. These observations are true in general, and they will have an effect on the form of the generalized Bessel functions as $\beta\to\infty$.

\subsection{Intertwining operators in the limiting regimes}

The intertwining operators $V_A$ and $V_B$ show the following limiting behavior.
\begin{proposition}\label{PropositionLimitsVSymmetric}
As $\beta\to\infty$, $V_A$ and $V_B$ become
\begin{IEEEeqnarray}{rCl}
\lim_{\beta\to\infty}V_A m_\lambda(\bx)&=&\frac{M(\lambda,N)}{N^{|\lambda|}}\left(\sum_{j=1}^Nx_j\right)^{|\lambda|},\label{EquationVTypeALimit}\\
\lim_{\beta\to\infty}V_B\beta^{|\lambda|}m_\lambda[(\bx)^2]&=&\frac{(2\lambda)!M(\lambda,N)}{2^{|\lambda|}\lambda!N^{|\lambda|}(\nu+N-1/2)^{|\lambda|}}\left(\sum_{j=1}^Nx_j^2\right)^{|\lambda|}\!\!\!\!\nonumber\\
&=&\frac{(2\lambda)!}{\lambda!}\lim_{\beta\to\infty}V_Am_\lambda(\bu),\label{EquationVTypeBLimitBeta}
\end{IEEEeqnarray}
with $\bu=(\bx)^2/[2(\nu+N-1/2)]$ on the r.h.s. Furthermore, as $\nu\to\infty$, 
\begin{equation}\label{EquationVTypeBLimitNu}
\lim_{\nu\to\infty}V_B \nu^{|\lambda|}m_\lambda[(\bx)^2]=\frac{(2\lambda)!}{\lambda!}V_A m_\lambda(\bu),
\end{equation}
with $\bu=(\bx)^2/(2\beta)$.
\end{proposition}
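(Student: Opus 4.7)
The three claims will be handled separately. For Eq.~\eqref{EquationVTypeALimit}, I will appeal to Lemma~\ref{FreezingLimitDunklKernelNonFullRank}: since $\mathrm{span}(A)^\perp$ is the one-dimensional line spanned by $(1,\ldots,1)/\sqrt{N}$, the freezing kernel reduces to
\[
V_\infty \rme^{\bx\cdot\by}=\exp\!\Bigl[\tfrac{1}{N}\Bigl(\sum_i x_i\Bigr)\!\Bigl(\sum_j y_j\Bigr)\Bigr].
\]
Expanding both sides as power series in $\by$ and equating coefficients of the monomials $\prod_j y_j^{n_j}/\prod_j n_j!$ yields $V_\infty \prod_j x_j^{n_j}=(\sum_i x_i)^{n_1+\cdots+n_N}/N^{n_1+\cdots+n_N}$ for every $N$-tuple $(n_1,\ldots,n_N)$ of non-negative integers. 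Summing this identity over the $M(\lambda,N)$ distinct orderings of $\lambda$ (viewed as an $N$-tuple with $N-l(\lambda)$ appended zeros) then produces Eq.~\eqref{EquationVTypeALimit}.

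For Eq.~\eqref{EquationVTypeBLimitBeta}, I will apply $V_B$ (on $\bx$) to the symmetrization identity
\[
\sum_{\rho\in W_B}\rme^{\bx\cdot\rho\by}=\sum_\mu \frac{2^N N!}{(2\mu)!M(\mu,N)}\,m_\mu[(\bx)^2]\,m_\mu[(\by)^2]
\]
of Eq.~\eqref{exponentialexpansion}, replace $\bx$ by $\sqrt{\beta}\bx$, and let $\beta\to\infty$. On the right-hand side the factor $m_\mu[(\sqrt{\beta}\bx)^2]=\beta^{|\mu|}m_\mu[(\bx)^2]$ produces $\beta^{|\mu|}V_B m_\mu[(\bx)^2]$, while on the left-hand side Lemma~\ref{FreezingLimitDunklKernelFullRank} applied to each summand (using $|\rho\by|=|\by|$) gives
\[
|W_B|\exp\!\bigl(x^2 y^2/(2\gamma_B)\bigr)=\sum_\mu\frac{2^N N!\,\lim_{\beta\to\infty}\bigl[\beta^{|\mu|}V_B m_\mu[(\bx)^2]\bigr]}{(2\mu)!M(\mu,N)}\,m_\mu[(\by)^2].
\]
Expanding the exponential as $\sum_k (x^2)^k(y^2)^k/[(2\gamma_B)^k k!]$ with $(y^2)^k=\sum_{|\nu|=k}(k!/\nu!)m_\nu[(\by)^2]$ and matching coefficients of the linearly independent basis $\{m_\mu[(\by)^2]\}_\mu$ delivers the first equality after substituting $\gamma_B=N(N+\nu-1/2)$; the second equality is then a direct evaluation of Eq.~\eqref{EquationVTypeALimit} at $\bu=(\bx)^2/[2(\nu+N-1/2)]$.

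For Eq.~\eqref{EquationVTypeBLimitNu} I will work directly from Eq.~\eqref{VTypeBSymmetric}: the sole $\nu$-dependence resides in the Pochhammer symbol $(\beta[\nu+N-1/2]/2+1/2)_\tau^{(2/\beta)}$, which by the elementary asymptotic $\Gamma(b+a)/\Gamma(b)\sim b^a$ behaves as $(\beta\nu/2)^{|\tau|}$ when $\nu\to\infty$. Since only partitions with $|\tau|=|\lambda|$ contribute to the sum, the prefactor $\nu^{|\lambda|}$ cancels this asymptotic exactly, and term-by-term comparison of the resulting expression with $V_A m_\lambda(\bu)$ from Eq.~\eqref{VTypeASymmetric}---after using the homogeneity $\PP{\tau}{2/\beta}(\bu)=(2\beta)^{-|\tau|}\PP{\tau}{2/\beta}[(\bx)^2]$ with $\bu=(\bx)^2/(2\beta)$---yields the stated identity.

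The main technical subtlety in Parts 1 and 2 is the interchange of the limit $\beta\to\infty$ with the extraction of Taylor coefficients in $\by$. This will be justified by the $\beta$-independent bound $|V_\beta g(\bx)|\leq \sup_{\bxi\in\co(W\bx)}|g(\bxi)|$ from Prop.~\ref{PositivityVBeta}, which furnishes the dominated convergence needed on each compact set in $\bx$ and $\by$. Part 3 requires no analogous argument, since Eq.~\eqref{VTypeBSymmetric} is already a finite term-by-term identity in which one simply tracks the $\nu$-asymptotic of a single Pochhammer factor; the hardest conceptual step is rather the recognition that the symmetrization identity \eqref{exponentialexpansion} plays for type $B$ the role that pointwise coefficient extraction plays for type $A$.
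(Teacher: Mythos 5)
Your proof is correct, and for the two $\beta\to\infty$ limits it takes a genuinely different route from the paper's. The paper works entirely inside the explicit formulas \eqref{VTypeASymmetric}--\eqref{VTypeBSymmetric}: it shows that the product $c_\tau(2/\beta)/[c_\tau^\prime(2/\beta)(\beta N/2)_\tau^{(2/\beta)}]$ survives the freezing limit only for partitions with $l(\tau)=1$, evaluates the surviving coefficient as $1/(N^{|\tau|}|\tau|!)$, and uses the degeneration of the Jack polynomials to elementary symmetric polynomials as $\alpha=2/\beta\to0$ together with $a_{\lambda^*\lambda}=|\lambda|!/\lambda!$. You instead import Lemmas~\ref{FreezingLimitDunklKernelNonFullRank} and \ref{FreezingLimitDunklKernelFullRank} from Chapter~\ref{general_freezing} and read off the limits of $V_A$, $V_B$ on monomials by matching coefficients of the frozen kernel; this runs the paper's logic backwards, since the paper derives Proposition~\ref{FreezingLimitOfGeneralizedBesselFunctions} \emph{from} Proposition~\ref{PropositionLimitsVSymmetric} while you derive the latter from the kernel limits. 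There is no circularity, because the kernel lemmas are proved in Chapter~\ref{general_freezing} by $W$-invariance arguments that never touch the Jack-polynomial machinery, and your route is shorter and provides an independent consistency check between the two derivations; what the paper's route buys is the explicit identification of which partitions survive, which it exploits when remarking that subleading corrections in $\beta$ are hard to extract from \eqref{EquationFilterProduct}. One refinement you should make: the dominated-convergence justification via the $\beta$-independent bound \eqref{EquationVBoundedness} works as stated for Part~1 (where $|V_\beta \rme^{\bx\cdot\by}|\leq \rme^{|\bx||\by|}$ uniformly in $\beta$), but not literally for Part~2, since the scaled kernel $V_\beta\rme^{\sqrt{\beta}\bx\cdot\by}$ and the scaled coefficients $\beta^{|\mu|}V_Bm_\mu[(\bx)^2]=V_Bm_\mu[(\sqrt{\beta}\bx)^2]$ are not uniformly bounded in $\beta$ under that estimate. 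The correct justification there is that the identity \eqref{exponentialexpansion} splits by homogeneous degree in $\by$ into \emph{finite} identities over the linearly independent set $\{m_\mu[(\by)^2]\}_{|\mu|=m}$, and the degreewise convergence $V_\beta\beta^{m}(\bx\cdot\by)^{2m}/(2m)!\to L_m(\bx,\by)$ is already established inside the proof of Lemma~\ref{FreezingLimitDunklKernelFullRank}, so limit and coefficient extraction commute trivially in each fixed degree. Your Part~3 coincides with the paper's computation: both isolate the single $\nu$-dependent Pochhammer symbol in \eqref{VTypeBSymmetric}, whose asymptotic $(\beta\nu/2)^{|\tau|}$ cancels the prefactor $\nu^{|\lambda|}$ and leaves exactly the factor $(2\beta)^{-|\tau|}$ absorbed by the homogeneity of the Jack polynomial.
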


\begin{proof}
The most important part of the proof consists of evaluating the product
\begin{multline}\label{EquationFilterProduct}
\frac{c_\tau(2/\beta)}{c_\tau^\prime (2/\beta)(\beta N/2)_\tau^{(2/\beta)}}=\\ \prod_{(i,j)\in \tau}\frac{(\tau_i-j+\beta(\tau_j^\prime-i+1)/2)}{(\beta(N-i+1)/2+j-1)(\tau_i-j+1+\beta(\tau_j^\prime-i)/2)}
\end{multline}
in the freezing limit $\beta\to\infty$. In general, this quantity tends to zero, and the only case in which it does not vanish is when $\tau_j^\prime=i$ for any $(i,j)\in\tau$. Now, if one considers the representation of the integer partition $\tau$ as a Young diagram (see, e.g., \cite{fulton}), $\tau^\prime_j$ indicates the number of boxes on the $j$th column of the diagram. This is equivalent to counting the number of rows in the diagram that have at least $j$ boxes. It follows that $\tau^\prime_j$ is the number of parts greater than or equal to $j$ in $\tau$. Thus, one can conclude that the condition $\tau_j^\prime=i\ \forall(i,j)\in\tau$ is satisfied only when $\tau_j^\prime=1$, i.e. $\tau=(|\tau|,0,\ldots)$. In other words, only partitions of length equal to one satisfy this condition. Therefore,
\begin{equation}\label{EquationFilterProduct0}
\frac{c_\tau(2/\beta)}{c_\tau^\prime (2/\beta)(\beta N/2)_\tau^{(2/\beta)}}\stackrel{\beta\to\infty}{\longrightarrow}0
\end{equation}
whenever $l(\tau)>1$ and
\begin{IEEEeqnarray}{rCl}
\frac{c_\tau(2/\beta)}{c_\tau^\prime (2/\beta)(\beta N/2)_\tau^{(2/\beta)}}&=&\prod_{j=1}^{|\tau|}\frac{(|\tau|-j+\beta/2)}{(\beta N/2+j-1)(|\tau|-j+1)}\nonumber\\
&&\stackrel{\beta\to\infty}{\longrightarrow}\prod_{j=1}^{|\tau|}\frac{1}{N(|\tau|-j+1)}=\frac{1}{N^{|\tau|}|\tau|!}\label{EquationFilterProduct1}
\end{IEEEeqnarray}
when $l(\tau)=1$. 

Consider first Eq.~\eqref{VTypeASymmetric}. As $\beta\to\infty$, $V_A$ becomes
\begin{equation}
V_A m_\lambda(\bx)\stackrel{\beta\to\infty}{\longrightarrow}\frac{\lambda!M(\lambda,N)}{N^{|\lambda|}|\lambda|!}a_{\lambda^*\lambda}(e_{1}(\bx))^{|\lambda|},\nonumber
\end{equation}
where $\lambda^*=(|\lambda|,0,\ldots)$. 
This is due to the fact that if $l(\tau)=1$, then $\tau^\prime=(1,1,\ldots,1)$, a partition composed of ones with $|\lambda|$ parts. The calculation for $V_A$ is completed by giving an explicit form of the matrix components $a_{\lambda^*\lambda}$ by expanding $(e_{1}(\bx))^{|\lambda|}$ in terms of $m_\tau(\bx)$:
\begin{equation}
(e_{1}(\bx))^{|\lambda|}=\left(\sum_{j=1}^Nx_j\right)^{|\lambda|}=\sum_{\substack{\tau:l(\tau)\leq N\cr |\tau|=|\lambda|}}\frac{|\lambda|!}{\tau!}m_\tau(\bx)=\sum_{\substack{\tau:l(\tau)\leq N\cr |\tau|=|\lambda|}}a_{\lambda^*\tau}m_\tau(\bx).
\end{equation}
Therefore,
\begin{equation}
a_{\lambda^*\lambda}=\frac{|\lambda|!}{\lambda!},\label{EquationLimitMatrix}
\end{equation}
which yields Eq.~\eqref{EquationVTypeALimit}.

Consider now Eq.~\eqref{VTypeBSymmetric} with the variables $\bx$ scaled up by a factor of $\sqrt{\beta}$. In view of Eqs.~\eqref{EquationFilterProduct0} and \eqref{EquationFilterProduct1}, it suffices to consider the limit 
\begin{IEEEeqnarray}{rCl}
\lim_{\beta\to\infty}\frac{\beta^{|\tau|}}{(\frac{\beta}{2}[\nu+N-1/2]+\frac{1}{2})_\tau^{(2/\beta)}}&=&\lim_{\beta\to\infty}\prod_{(i,j)\in\tau}\frac{\beta}{\frac{\beta}{2}(\nu+1/2+N-i)+j-\frac{1}{2}}\nonumber\\
&=&\prod_{(i,j)\in\tau}\frac{2}{\nu+1/2+N-i}.
\end{IEEEeqnarray}
However, because only the term where $l(\tau)=1$ survives as $\beta\to\infty$, the only term required is
\begin{equation}
\lim_{\beta\to\infty}\frac{\beta^{|\tau|}}{(\beta[\nu+N-1/2]/2+\frac{1}{2})_{(|\tau|,0\ldots)}^{(1/k)}}=\frac{2^{|\tau|}}{(\nu+N-1/2)^{|\tau|}}.\label{EquationPartialLimitBetaVB}
\end{equation}
Using Eqs.~\eqref{EquationFilterProduct1}, \eqref{EquationLimitMatrix} and \eqref{EquationPartialLimitBetaVB} on Eq.~\eqref{VTypeBSymmetric} yields Eq.~\eqref{EquationVTypeBLimitBeta}.

To complete the proof, it only remains to compute the limit $\nu\to\infty$ of Eq.~\eqref{VTypeBSymmetric} after scaling the vector $\bx$ by a factor of $\sqrt{\nu}$. The parameter $\nu$ only occurs in the ratio (recall that $|\lambda|=|\tau|$),
\begin{multline}
\lim_{\nu\to\infty}\frac{(\beta[\nu+N-1/2]+1)_\tau^{(2/\beta)}}{(2\nu)^{|\tau|}}=\\
\lim_{\nu\to\infty}\prod_{(i,j)\in\tau}\frac{\beta(\nu+1/2+N-i)+2j-1}{2\nu}=\Big(\frac{\beta}{2}\Big)^{|\tau|}.
\end{multline}
This yields
\begin{multline}
\lim_{\nu\to\infty}V_B \nu^{|\lambda|}m_\lambda[(\bx)^2]=\\(2\lambda)!M(\lambda,N)\sum_{\substack{\tau:l(\tau)\leq N\cr |\tau|= |\lambda|}}\frac{c_\tau (2/\beta)}{c_\tau^\prime (2/\beta)}\frac{u_{\tau\lambda}(2/\beta)}{(\beta N/2)_\tau^{(2/\beta)}}\frac{\PP{\tau}{2/\beta}[(\bx)^2]}{(2\beta)^{|\tau|}},
\end{multline}
as desired.
\end{proof}

\subsection{Limiting regimes of the Generalized Bessel functions}

After calculating the limiting behavior of $V_A$ and $V_B$, the results can be reassembled to obtain the limiting behavior of Eqs.~\eqref{RadialDunklKernelA} and \eqref{RadialDunklKernelB}.

\begin{proposition}\label{FreezingLimitOfGeneralizedBesselFunctions}
The limiting regime of the generalized Bessel function of type $A$ without scaling is given by
\begin{equation}
\lim_{\beta\to\infty}\sum_{\rho\in S_N}V_A\rme^{\rho\bx\cdot\by}=N!\exp\left(\frac{(\bx\cdot\bone)(\by\cdot\bone)}{N}\right),
\end{equation}
and the two scaled limiting regimes of the generalized Bessel function of type $B$ are given by
\begin{IEEEeqnarray}{rCl}
\lim_{\beta\to\infty}\sum_{\rho\in W_{B}}V_B \rme^{\sqrt{\beta}\by\cdot\rho\bx}&=&2^NN!\exp\left(\frac{y^2 x^2}{2N(\nu+N-1/2)}\right),\label{RadialDunklKernelTypeBFreezing}\\
\lim_{\nu\to\infty}\sum_{\rho\in W_{B}}V_B \rme^{\sqrt{\nu}\by\cdot\rho\bx}&=&2^NN!\FZ{2/\beta}\Bigg(\frac{(\bx)^2}{\sqrt{2\beta}},\frac{(\by)^2}{\sqrt{2\beta}}\Bigg).\label{RadialDunklKernelTypeBNuInfinity}
\end{IEEEeqnarray}
\end{proposition}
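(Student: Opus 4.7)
The plan is to expand each $W$-invariant radial exponential in monomial symmetric polynomials, apply the relevant intertwining operator term by term, take the freezing limit via Proposition~\ref{PropositionLimitsVSymmetric}, and recognize the resulting power series. For the type-$A$ statement I would start from Eq.~\eqref{SymDKExp},
\begin{equation*}
\sum_{\rho\in S_N}V_A\rme^{\rho\bx\cdot\by}=\sum_{\mu:l(\mu)\leq N}\frac{N!\,m_\mu(\by)}{\mu!\,M(\mu,N)}\,V_A m_\mu(\bx),
\end{equation*}
and pass to $\beta\to\infty$. Eq.~\eqref{EquationVTypeALimit} replaces $V_A m_\mu(\bx)$ by $M(\mu,N)N^{-|\mu|}(\bx\cdot\bone)^{|\mu|}$, cancelling the $M(\mu,N)$ in the denominator. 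The elementary identity
\begin{equation*}
\sum_{\mu:l(\mu)\leq N}\frac{a^{|\mu|}}{\mu!}\,m_\mu(\by)=\prod_{j=1}^N\rme^{ay_j}=\exp\!\left(a\sum_{j=1}^N y_j\right),
\end{equation*}
obtained by expanding the product and grouping $N$-tuples of exponents by partitions, then yields the claimed closed form with $a=(\bx\cdot\bone)/N$.

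For the scaled type-$B$ kernel in the $\beta\to\infty$ limit, the same approach starting from Eq.~\eqref{exponentialexpansion} and substituting $\bx\mapsto\sqrt{\beta}\bx$ gives
\begin{equation*}
\sum_{\rho\in W_B}V_B\rme^{\sqrt{\beta}\bx\cdot\rho\by}=\sum_{\mu:l(\mu)\leq N}\frac{2^N N!\,m_\mu[(\by)^2]}{(2\mu)!\,M(\mu,N)}\,V_B\beta^{|\mu|}m_\mu[(\bx)^2].
\end{equation*}
Inserting Eq.~\eqref{EquationVTypeBLimitBeta} cancels the factors $(2\mu)!$, $M(\mu,N)$ and $2^{|\mu|}$, producing $2^N N!\sum_\mu \mu!^{-1}m_\mu[(\by)^2]\,[x^2/(2N(\nu+N-1/2))]^{|\mu|}$, to which the same exponential identity, applied in the vector $(\by)^2$, collapses, reproducing Eq.~\eqref{RadialDunklKernelTypeBFreezing}.

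For the $\nu\to\infty$ limit with $\bx\mapsto\sqrt{\nu}\bx$, the analogous substitution together with Eq.~\eqref{EquationVTypeBLimitNu} yields
\begin{equation*}
\lim_{\nu\to\infty}\sum_{\rho\in W_B}V_B\rme^{\sqrt{\nu}\bx\cdot\rho\by}=2^N N!\sum_{\mu:l(\mu)\leq N}\frac{m_\mu[(\by)^2]\,V_A m_\mu\!\left[(\bx)^2/(2\beta)\right]}{\mu!\,M(\mu,N)}.
\end{equation*}
Equating Eqs.~\eqref{SymDKExp} and \eqref{RadialDunklKernelA} furnishes the representation $\FZ{2/\beta}(\bu,\bv)=\sum_\mu m_\mu(\bv)V_A m_\mu(\bu)/[\mu!\,M(\mu,N)]$, so the series on the right is exactly $2^N N!\,\FZ{2/\beta}((\bx)^2/(2\beta),(\by)^2)$. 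Exploiting the homogeneity $\PP{\tau}{2/\beta}(c\bu)=c^{|\tau|}\PP{\tau}{2/\beta}(\bu)$ then splits the factor $1/(2\beta)$ symmetrically between the two arguments, producing Eq.~\eqref{RadialDunklKernelTypeBNuInfinity}. The only non-routine point in this proof will be the interchange of the freezing limit with the infinite partition sum; this can be justified by the uniform-in-$\beta$ bound $|V_\beta g(\bx)|\leq\sup_{\bxi\in\co(W\bx)}|g(\bxi)|$ from Eq.~\eqref{EquationVBoundedness}, which supplies $\beta$-independent dominated-convergence majorants that sum to an entire exponential on any compact set in $(\bx,\by)$.
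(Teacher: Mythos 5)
Your proposal is correct and follows essentially the same route as the paper's proof: expand the radial kernel in monomial symmetric polynomials, apply the $\beta\to\infty$ (resp.\ $\nu\to\infty$) limits of $V_A$ and $V_B$ from Proposition~\ref{PropositionLimitsVSymmetric} term by term, and resum into the exponential or ${}_0F_0$ form (your use of the identity $\FZ{2/\beta}(\bu,\bv)=\sum_\mu m_\mu(\bv)V_A m_\mu(\bu)/[\mu!\,M(\mu,N)]$ in the last case is just a compact repackaging of the paper's direct Jack-polynomial reassembly). Your closing remark justifying the interchange of limit and sum via the $\beta$-independent bound \eqref{EquationVBoundedness} is a worthwhile addition that the paper leaves implicit.
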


\begin{proof} The proof of this proposition is fairly straightforward. First, each of the generalized Bessel functions is expanded in terms of symmetric polynomials. Then, the corresponding intertwining operator is applied and the parameter limit is taken using Proposition~\ref{PropositionLimitsVSymmetric}. Finally, the resulting sums are reassembled into exponential functions. 

The freezing limit $\beta\to\infty$ of Eq.~\eqref{RadialDunklKernelA} is as follows.
\begin{IEEEeqnarray}{rCl}
\lim_{\beta\to\infty}\sum_{\rho\in S_N}V_A\rme^{\rho\bx\cdot\by}&=&\sum_{\mu:l(\mu)\leq N}\frac{N!m_\mu(\by)}{\mu!M(\mu,N)}\lim_{\beta\to\infty}V_{A}m_\mu(\bx)\nonumber\\
&=&\sum_{\mu:l(\mu)\leq N}\frac{N!m_\mu(\by)}{\mu!N^{|\mu|}}\left(\sum_{j=1}^Nx_j\right)^{|\mu|}\nonumber\\
&=&\sum_{\mu:l(\mu)\leq N}\frac{N!}{\mu!}m_\mu\left(\frac{(\bx\cdot\bone)\by}{N}\right)\nonumber\\
&=&N!\exp\left(\frac{(\bx\cdot\bone)(\by\cdot\bone)}{N}\right).
\end{IEEEeqnarray}
Here, $\bone$ denotes the vector $(1,\ldots,1)$. 

The freezing limit of Eq.~\eqref{RadialDunklKernelB} is taken similarly,
\begin{IEEEeqnarray}{rCl}
\lim_{\beta\to\infty}\sum_{\rho\in W_{B}}V_B \rme^{\sqrt{\beta}\by\cdot\rho\bx}&=&2^NN!\sum_{\mu:l(\mu)\leq N}\frac{m_\mu[(\by)^2]}{2^{|\mu|}\mu!N^{|\mu|}(\nu+N-1/2)^{|\mu|}}\left(\sum_{j=1}^Nx_j^2\right)^{|\mu|}\nonumber\\
&=&2^NN!\sum_{\mu:l(\mu)\leq N}\frac{1}{\mu!}m_\mu\left[\frac{(\by)^2x^2}{2N(\nu+N-1/2)}\right]\nonumber\\
&=&2^NN!\exp\left(\frac{y^2x^2}{2N(\nu+N-1/2)}\right).
\end{IEEEeqnarray}

The limit $\nu\to\infty$ of Eq.~\eqref{RadialDunklKernelB} is taken below:
\begin{IEEEeqnarray}{rCl}
&&\lim_{\nu\to\infty}\sum_{\rho\in W_{B}}V_B \rme^{\sqrt{\nu}\by\cdot\rho\bx}\nonumber\\
&&\quad=\sum_{\substack{\mu:l(\mu)\leq N}}\frac{2^NN!}{(2\mu)!}\frac{m_{\mu}[(\by)^2]}{M(\mu,N)}\lim_{\nu\to\infty}V_B\nu^{|\mu|}m_{\mu}[(\bx)^2]\nonumber\\
&&\quad=\sum_{\substack{\mu:l(\mu)\leq N}}\!\!\!\!\frac{2^NN!m_{\mu}[(\by)^2]}{2^{2|\mu|}}\!\!\!\!\sum_{\substack{\tau:l(\tau)\leq N\cr |\tau|= |\mu|}}\frac{c_\tau (2/\beta)}{c_\tau^\prime (2/\beta)}\frac{u_{\tau\mu}(2/\beta)}{(\beta N/2)_\tau^{(2/\beta)}}\frac{\PP{\tau}{2/\beta}[(\bx)^2]}{(\beta/2)^{|\tau|}}\nonumber\\
&&\quad=2^NN!\sum_{\substack{\tau:l(\tau)\leq N}}\frac{c_\tau (2/\beta)}{c_\tau^\prime (2/\beta)}\frac{\PP{\tau}{2/\beta}[(\by)^2]\PP{\tau}{2/\beta}[(\bx)^2]}{2^{|\tau|}\beta^{|\tau|}(\beta N/2)_\tau^{(2/\beta)}}.
\end{IEEEeqnarray}
This last calculation completes the proof.
\end{proof}

From the first of the three limiting regimes proved above, it follows that the scaled low-temperature behavior of the generalized Bessel function of type $A$ is given by
\begin{equation}
\sum_{\rho\in S_N}V_A\rme^{\sqrt{\beta}\rho\bx\cdot\by}\stackrel{\beta\text{ large}}{\approx}N!\exp\Big(\sqrt{\beta}\frac{(\bx\cdot\bone)(\by\cdot\bone)}{N}\Big)
\end{equation}
to leading order in $\beta$. The next-order terms could be calculated using the procedure from the proof of Proposition~\ref{PropositionLimitsVSymmetric}. However, because there are a large number of integer partitions that produce terms of order greater than or equal to $O(\beta^0)$ from the product in Eq.~\eqref{EquationFilterProduct}, the resulting sum over partitions becomes difficult to calculate. With this in mind, the correct expression is obtained by using Lemma~\ref{FreezingLimitDunklKernel}, that is,
\begin{equation}\label{CorrectFreezingRegimeDunklKernelA}
\sum_{\rho\in S_N}V_A\rme^{\sqrt{\beta}\rho\bx\cdot\by}\stackrel{\beta\text{ large}}{\approx}N!\exp\Big(\sqrt{\beta}\frac{(\bx\cdot\bone)(\by\cdot\bone)}{N}+\frac{x_\parallel^2y_\parallel^2}{2(\gamma_A+\varepsilon_\beta)}\Big).
\end{equation}

\section{Derivation of propositions~\ref{LimitingRegimesA} and \ref{LimitingRegimesB}}

Here, an alternate derivation is proposed for the steady-state distribution and freezing limit of the interacting Brownian motions and interacting Bessel processes. These derivations use the additional assumption that the parameter $\beta$ is large but finite unless otherwise noted.

\subsection{Steady-state part}

Consider the interacting Brownian motions. Using the expression for the transition density $p_A(t,\by|\bx)$ given in Eq.~\eqref{TransitionDensityRadialDunkl} as well as the first column of Table~\ref{RadialDunklProcessesQuantities},  Eq.~\eqref{CorrectFreezingRegimeDunklKernelA} and Stirling's approximation \cite{feller}, for large $\beta$ one may write the following approximation for the scaled transition probability density,
\begin{multline}
\log [p_A(t,\sqrt{\beta t}\bv|\bx)(\beta t)^{N/2}]\\
\approx-\beta\left[F_A(\bv)+\frac{1}{2}\sum_{i=1}^Ni\log i-\frac{N}{4}(N-1)(1+\log 2)\right]\\
+\sqrt{\frac{\beta}{t}}x_\perp v_\perp+\frac{x_\parallel^2 v_\parallel^2}{2t(\gamma_A +\varepsilon_\beta)}+\frac{N}{2}\log \frac{\beta}{2}+\log N!-\frac{N}{2}\log\pi-\frac{x^2}{2t},\label{LogDensityTypeA}
\end{multline}
where $\gamma_A$ is given in Table~\ref{RadialDunklProcessesQuantities} and $x_\perp=(\bx\cdot\bone)/\sqrt{N}$, $v_\perp=(\bv\cdot\bone)/\sqrt{N}$.

Similarly, the large-$\beta$ behavior of the scaled transition probability density $p_B(t,\sqrt{\beta}\bv|\bx)(\beta t)^{N/2}$ of the interacting Bessel processes can be written as follows,
\begin{multline}
\log [p_B(t,\sqrt{\beta t}\bv|\bx)(\beta t)^{N/2}]\approx -\beta\Bigg[F_B(\bv,\nu)-\frac{N}{2}(N+\nu-1/2)\\+\frac{1}{2}\sum_{i=1}^N i\log i
+\frac{1}{2}\sum_{i=1}^N(\nu+i-1/2) \log (\nu+i-1/2)\Bigg] +\frac{N}{2}\log\beta\\
-\frac{x^2}{2t}+\frac{N}{2}\log 2+\log N!+\frac{v^2x^2}{2t(\gamma_B+\varepsilon_B)},\label{LogDensityTypeB}
\end{multline}
where $\gamma_B$ is given in Table~\ref{RadialDunklProcessesQuantities}.

Consider the initial distributions $\mu_A(\bx)$ and $\mu_B(\bx)$, defined in the Weyl chambers $C_A$ and $C_B$ respectively, which are assumed to have finite second-order moments. Then, the relations
\begin{multline}\label{ScaledDistributionTypeALargeBeta}
f_A(t,\sqrt{\beta t}\bv)(\beta t)^{N/2}\ud\bv\approx\rme^{-\beta [F_A(\bv)-K_A]}\Big(\frac{\beta}{2\pi}\Big)^{N/2}N!\\
\times\int_{C_A}\exp\Big[-\frac{x^2}{2t}+\frac{x_\parallel^2 v^2_\parallel}{2t(\gamma_A +\varepsilon_\beta)}+\sqrt{\frac{\beta}{t}}x_\perp v_\perp\Big]\mu_A(\bx)\ud\bx\ud\bv
\end{multline}
and
\begin{multline}\label{ScaledDistributionTypeBLargeBeta}
f_B(t,\sqrt{\beta t}\bv)(\beta t)^{N/2}\ud\bv\approx\rme^{-\beta [F_B(\bv,\nu)-K_B]}(2\beta)^{N/2}N!\\
\times\int_{C_B}\exp\Big[-\frac{x^2}{2t}\Big(1-\frac{v^2}{\gamma_B+\varepsilon_\beta}\Big)\Big]\mu_B(\bx)\ud\bx\ud\bv
\end{multline}
give the scaled probability distributions of the interacting Brownian motions and Bessel processes. The constants $K_A$ and $K_B$ are given by Eqs.~\eqref{ConstantLargeBetaKA} and \eqref{ConstantLargeBetaKB} respectively, and they arise naturally when one considers the leading order terms in $\beta$ of the constants $c_A$ and $c_B$ (see Table~\ref{RadialDunklProcessesQuantities}).

Let us consider the expectation of a test function $h(\bv)$ under the distributions \eqref{ScaledDistributionTypeALargeBeta} and \eqref{ScaledDistributionTypeBLargeBeta},
\begin{IEEEeqnarray}{rCl}
\langle h\rangle_{A,t}&=&\int_{C_A}h(\bv)f_A(t,\sqrt{\beta t}\bv)(\beta t)^{N/2}\ud\bv,\\
\langle h\rangle_{B,t}&=&\int_{C_B}h(\bv)f_B(t,\sqrt{\beta t}\bv)(\beta t)^{N/2}\ud\bv.
\end{IEEEeqnarray}
The objective is to show that after a suitably long time, these expectations converge to
\begin{IEEEeqnarray}{rCl}
\langle h\rangle_{A}&=&\int_{C_A}h(\bv)\frac{\rme^{-\beta F_A(\bv)}}{z_{\beta,A}}\ud\bv\approx\Big(\frac{\beta}{2\pi}\Big)^{N/2}N!\int_{C_A}h(\bv)\rme^{-\beta [F_A(\bv)-K_A]}\ud\bv,\\
\langle h\rangle_{B}&=&\int_{C_B}h(\bv)\frac{\rme^{-\beta F_B(\bv)}}{z_{\beta,B}}\ud\bv\approx(2\beta)^{N/2}N!\int_{C_B}h(\bv)\rme^{-\beta [F_B(\bv,\nu)-K_B]}\ud\bv.\IEEEeqnarraynumspace
\end{IEEEeqnarray}
It is assumed that $h(\bv)$ has, at most, polynomial growth as $v\to\infty.$

The expectation $\langle h\rangle_{B,t}$ will be examined first, as it is simpler than $\langle h\rangle_{A,t}$. Making the substitution $\bu=\bx/\sqrt{t}$ yields the integral
\begin{multline}\label{ApproximatedTExpectationTypeB}
\langle h\rangle_{B,t}=(2\beta t)^{N/2}N!\int_{C_B}\int_{C_B}h(\bv)\rme^{-\beta [F_B(\bv,\nu)-K_B]}\\
\times\rme^{-u^2/2}\rme^{u^2v^2/[2(\gamma_B+\varepsilon_\beta)]}\mu_B(\sqrt{t}\bu)\ud\bu\ud\bv.
\end{multline}

Recalling the asymptotic behavior of the function $\epsilon_\beta$ from Eq.~\eqref{EquationBehaviorEpsilonBeta}, when $uv/\sqrt{\beta}\gg 1$ the product of exponentials becomes
\begin{equation}
\exp\Big[-\beta[F_B(\bv,\nu)-K_B]+\frac{\beta v^2}{8}-\Big(u-\frac{\sqrt{\beta}v}{2}\Big)^2/2\Big].
\end{equation}
Then, because
\begin{equation}
F_B(\bv,\nu)-\frac{v^2}{8}=\frac{3v^2}{8}-\frac{2\nu+1}{2}\sum_{i=1}^N\log |v_i|-\sum_{1\leq i<j\leq N}\log|v_j^2-v_i^2|,
\end{equation}
the argument of the exponential is dominated by the terms $-3\beta v^2/8$, $-u^2/2$ and $-(u-\sqrt{\beta}v/2)^2/2$. This means that in the region where $uv/\sqrt{\beta}\gg 1$, the integrand is exponentially decreasing and this part of the integral can be neglected. 

When $uv/\sqrt{\beta}\ll1$, the expectation $\langle h\rangle_{B,t}$ is approximated by
\begin{multline}\label{ApproximatedTExpectationTypeBInside}
\langle h\rangle_{B,t}\approx(2\beta t)^{N/2}N!\int_{C_B: u<M_1}\int_{C_B: v<M_2}h(\bv)\rme^{-\beta [F_B(\bv,\nu)-K_B]}\\
\times\rme^{-u^2/2}\rme^{u^2v^2/[2(\gamma_B+\varepsilon_\beta)]}\mu_B(\sqrt{t}\bu)\ud\bu\ud\bv,
\end{multline}
where the positive real constants $M_1,M_2$ are chosen so that $M_1 M_2\ll\sqrt{\beta}$. Because $\varepsilon_\beta>0$, the following bound holds in the region of integration in Eq.~\eqref{ApproximatedTExpectationTypeBInside},
\begin{equation}
1\leq \exp\{u^2v^2/[2(\gamma_B+\varepsilon_\beta)]\}\leq \exp\{M_1^2M_2^2/2\gamma_B\}.
\end{equation}

Thus, $\langle h\rangle_{B,t}$ is bounded by the following expressions,
\begin{multline}
\int_{C_B: u<M_1}\rme^{-u^2/2}\mu_B(\sqrt{t}\bu)\ud\bu\int_{C_B: v<M_2}h(\bv)\rme^{-\beta [F_B(\bv,\nu)-K_B]}\ud\bv\\
 \lesssim\frac{\langle h\rangle_{B,t}}{(2\beta t)^{N/2}N!}\lesssim \\
\rme^{\frac{M_1^2M_2^2}{2\gamma_B}}\int_{C_B: u<M_1}\rme^{-u^2/2}\mu_B(\sqrt{t}\bu)\ud\bu\int_{C_B: v<M_2}h(\bv)\rme^{-\beta [F_B(\bv,\nu)-K_B]}\ud\bv.
\end{multline}

Also, the integral over $\bu$ is bounded as follows,
\begin{multline}\label{ApproximatedTExpectationTypeBPiece1}
\rme^{-M_1^2/2}\int_{C_B: u<M_1}\mu_B(\sqrt{t}\bu)\ud\bu\leq \\
\int_{C_B: u<M_1}\rme^{-u^2/2}\mu_B(\sqrt{t}\bu)\ud\bu\leq \\
\int_{C_B: u<M_1}\mu_B(\sqrt{t}\bu)\ud\bu,
\end{multline}
and because $\mu_B(\bx)$ is a probability measure with finite second moments,
\begin{equation}\label{ApproximatedTExpectationTypeBPiece2}
\int_{C_B: u<M_1}\mu_B(\sqrt{t}\bu)\ud\bu=\frac{1}{t^{N/2}}\{1+O[(M_1\sqrt{t})^{-(r-N+2)}]\},
\end{equation}
where $r>N-2$. 

In addition, due to Eq.~\eqref{EquationIntegralOutside} one has (neglecting the polynomial-growing coefficient of the exponential in the correction term)
\begin{equation}\label{ApproximatedTExpectationTypeBPiece3}
\int_{C_B: v<M_2}h(\bv)\rme^{-\beta [F_B(\bv,\nu)-K_B]}\ud\bv=\langle h\rangle_B [1+O(\rme^{-\beta M_2^2/2})],
\end{equation}
where $\langle h\rangle_B$ is the steady-state expectation of $h(\bv)$. 

Inserting Eqs.~\eqref{ApproximatedTExpectationTypeBPiece1}, \eqref{ApproximatedTExpectationTypeBPiece2} and \eqref{ApproximatedTExpectationTypeBPiece3} into Eq.~\eqref{ApproximatedTExpectationTypeB} gives
\begin{multline}
\langle h\rangle_B\rme^{-M_1^2/2}[1+O(\rme^{-\beta M_2^2/2})]\{1+O[(M_1\sqrt{t})^{-(r-N+2)}]\}\\
\lesssim\langle h\rangle_{B,t}\lesssim\\
\langle h\rangle_B\rme^{M_1^2M_2^2/2\gamma_B}[1+O(\rme^{-\beta M_2^2/2})]\{1+O[(M_1\sqrt{t})^{-(r-N+2)}]\}.
\end{multline}
From this expression, it follows that
\begin{equation}\label{AlmostFinalSteadyStateConvergenceTypeB}
\langle h\rangle_{B,t}=\langle h\rangle_B\{1+O[(M_1\sqrt{t})^{-(r-N+2)}]+O[M_1^2]+O[M_1^2M_2^2]\},
\end{equation}
by assuming that the conditions
\begin{equation}\label{ConditionsSteadyStateTypeB}
M_1^2\ll 1,\quad t M_1^2\gg 1,\quad\beta M_2^2\gg 1,\text{ and }M_1^2M_2^2\ll 1
\end{equation}
are satisfied. Setting $M_1\propto t^{-\alpha}$ and $M_2\propto \beta^{-\alpha}$ with $0<\alpha<1/2$, one can always find sufficiently large values of $t$ and $\beta$ such that all the conditions \eqref{ConditionsSteadyStateTypeB} are satisfied. With this choice of $M_1$ and $M_2$, Eq.~\eqref{AlmostFinalSteadyStateConvergenceTypeB} becomes
\begin{equation}\label{FinalSteadyStateConvergenceTypeB}
\langle h\rangle_{B,t}=\langle h\rangle_B\{1+O[t^{-(1/2-\alpha)(r-N+2)}]+O[\max(t^{-2\alpha},(\beta t)^{-2\alpha})]\},
\end{equation}
where $0<\alpha<1/2$, $r>N-2$ and $(\beta t)^{-\alpha}\ll 1$.

The expectation $\langle h\rangle_{A,t}$ can be treated similarly. First, it must be noted that the particle-particle interaction term in Eq.~\eqref{EquationLogSteadyDistributionTypeA} does not depend on the component of $\bv$ that is perpendicular to the root system of type $A$, namely $v_\perp=\bv\cdot\bone/\sqrt{N}$. Recall also that $\bv_\parallel = \bv-(\bv\cdot\bone)\bone/N$. Then, one may write
\begin{equation}\label{PotentialTypeADecomposition}
F_A(\bv)=\frac{v_\parallel^2}{2}-\sum_{1\leq i<j\leq N}\log|v_{j\parallel}-v_{i\parallel}|+\frac{v_\perp^2}{2}=F_A(\bv_\parallel)+\frac{v_\perp^2}{2},
\end{equation}
because $v^2=v_\parallel^2+v_\perp^2$. 

The argument of the exponentials in Eq.~\eqref{ScaledDistributionTypeALargeBeta} is transformed as follows,
\begin{multline}
-\beta [F_A(\bv)-K_A]-\frac{x^2}{2t}+\frac{x_\parallel^2 v^2_\parallel}{2t(\gamma_A +\varepsilon_\beta)}+\sqrt{\frac{\beta}{t}}x_\perp v_\perp\\
=-\beta [F_A(\bv_\parallel)-K_A]-\frac{x^2_\parallel}{2t}+\frac{x_\parallel^2 v^2_\parallel}{2t(\gamma_A +\varepsilon_\beta)}-\frac{1}{2}\Big(\sqrt{\beta}v_\perp-\frac{x_\perp}{\sqrt{t}}\Big)^2.
\end{multline}

With the variable substitution $\bu=\bx/\sqrt{t}$, $\langle h\rangle_{A,t}$ becomes
\begin{multline}\label{CompleteExpectationTypeA}
\langle h\rangle_{A,t}=\Big(\frac{\beta t}{2\pi}\Big)^{N/2}N!\int_{C_A}\int_{C_A}h(\bv)\rme^{-\beta [F_A(\bv_\parallel)-K_A]-u_\parallel^2/2+u_\parallel^2 v_\parallel^2/2(\gamma_A+\varepsilon_\beta)}\\
\times\exp\Big[-\frac{1}{2}\Big(\sqrt{\beta}v_\perp-u_\perp\Big)^2\Big]\mu_A(\sqrt{t}\bu)\ud\bu\ud\bv.
\end{multline}

Let us focus on the integral over $u_\perp$ first. For this purpose, consider a positive and integrable function $f(x)$ with polynomial decay at infinity,
\begin{equation}
f(x)\stackrel{|x|\text{ large}}{\approx} \frac{C}{x^r},
\end{equation}
where $C$ is a positive constant and $r>0$. Consider now the integral
\begin{equation}\label{EquationSimpleIntegral}
\int_{\RR}f(\sqrt{t}x)\rme^{-(\sqrt{\beta}y-x)^2/2}\ud x
\end{equation}
as a simple representation of the integral over $u_\perp$ in Eq.~\eqref{CompleteExpectationTypeA}. Using the variable substitution $x=\sqrt{\beta} u$ and setting $0<\epsilon\ll1$ gives
\begin{equation}
\int_{\RR}f(\sqrt{t}x)\rme^{-(\sqrt{\beta}y-x)^2/2}\ud x=\sqrt{\beta}\Big[\int_{|u|\geq \epsilon}+\int_{|u|< \epsilon}\Big]f(\sqrt{\beta t}u)\rme^{-\beta(y-u)^2/2}\ud u.
\end{equation}

Because $f(x)$ is a positive function, the outer integral can be bounded as follows,
\begin{equation}
\int_{|u|\geq \epsilon}f(\sqrt{\beta t}u)\rme^{-\beta(y-u)^2/2}\ud u\leq \int_{|u|\geq \epsilon} \frac{C}{(\sqrt{\beta t}u)^r}\ud u=O[(\beta t)^{-r/2}\epsilon^{-r+1}].
\end{equation}

By the mean value theorem, there exists a number $u_*$ such that $|u_*|<\epsilon$ and
\begin{equation}
\int_{-\epsilon}^\epsilon f(\sqrt{\beta t}u)\rme^{-\beta(y-u)^2/2}\ud u=\rme^{-\beta(y-u_*)^2/2}\int_{-\epsilon}^\epsilon f(\sqrt{\beta t}u)\ud u.
\end{equation}

Then, by making $\beta$ and $t$ large enough that $\sqrt{\beta t}\epsilon\gg 1$ while keeping $\epsilon\ll 1$, one obtains
\begin{multline}
\int_{\RR}f(\sqrt{t}x)\rme^{-(\sqrt{\beta}y-x)^2/2}\ud x=\rme^{-\beta(y-u_*)^2/2}\int_{-\epsilon\sqrt{\beta}}^{\epsilon\sqrt{\beta}} f(\sqrt{t}x)\ud x\\
+O[t^{-r/2}(\beta \epsilon^2)^{-(r-1)/2}].
\end{multline}
Setting $\epsilon\propto (\beta t)^{-\alpha}$ with $0<\alpha<1/2$ satisfies the assumptions $\sqrt{\beta t}\epsilon\gg 1$ and $\epsilon\ll 1$ for $\beta$ and $t$ sufficiently large. 

Finally, the integral \eqref{EquationSimpleIntegral} gives
\begin{equation}\label{EquationPerpendicularCorrectionTypeA}
\int_{\RR}f(\sqrt{t}x)\rme^{-(\sqrt{\beta}y-x)^2/2}\ud x=\frac{\rme^{-\beta y^2/2}}{\sqrt{t}}\int_{\RR}f(x^\prime)\ud x^\prime
+O[t^{-1/2}(\beta t)^{-(1/2-\alpha)(r-1)}]
\end{equation}
after making $u_*\approx 0$, extending the domain of integration over $x$ to $\RR$ at the expense of a correction term of order $O[(\beta t)^{-r/2}\epsilon^{-r+1}]$ and using the variable substitution $\sqrt{t}x=x^\prime$. With this, the integral over $u_\perp$ in Eq.~\eqref{CompleteExpectationTypeA} can be evaluated directly, and the integral over $\bu_\parallel$ is evaluated in the same way as in the derivation of Eq.~\eqref{FinalSteadyStateConvergenceTypeB}. The only difference is that the integral over $\bu_\parallel$ is an $(N-1)$-dimensional integral, so $N$ must be replaced by $N-1$ in Eq.~\eqref{FinalSteadyStateConvergenceTypeB}. 

Combining the integrals over $\bu_\parallel$ and $u_\perp$ through the use of Eq.~\eqref{FinalSteadyStateConvergenceTypeB} with $N\to N-1$ and Eq.~\eqref{EquationPerpendicularCorrectionTypeA} gives
\begin{equation}
\langle h\rangle_{A,t}=\langle h\rangle_{A}\{1+O[t^{-(1/2-\alpha)(r-N+3)}]+O[t^{-1/2}(\beta t)^{-(1/2-\alpha)(r-1)}]\}.
\end{equation}
With this, the derivation is complete.\qquad\qquad\qquad\qquad\qquad\qquad\qquad\qquad$\square$

\subsection{$\beta\to\infty$ and $\nu\to\infty$ regimes}

Let us focus first on the freezing distribution for the interacting Brownian motions. The extrema of the function $F_A(\bv)$ are located at $\bv=\bh_N$ or any of its permutations, and all extrema are local minima. This is proved as follows. The first-order partial derivatives of $F_A(\bv)$ relative to $\bv$ are
\begin{equation}
\frac{\partial}{\partial v_i}F_A(\bv)=v_i-\sum_{\substack{j:j\neq i\cr j=1}}^N\frac{1}{v_i-v_j}
\end{equation}
Hence, the extrema of $F_A(\bv)$ occur at points $\bv$ which obey the relation
\begin{equation}
v_i=\sum_{\substack{j:j\neq i\cr j=1}}^N\frac{1}{v_i-v_j}.\label{extcond}
\end{equation}
The second order derivatives of $F_A(\bv)$ are
\begin{multline}
\frac{\partial^2}{\partial v_j\partial v_i}\Big[\frac{v^2}{2}-\sum_{1\leq i<j\leq N}\log|v_j-v_i|\Big]=\frac{\partial}{\partial v_j}\Big[v_i-\sum_{l:l\neq i}\frac{1}{v_i-v_l}\Big]\\
=
\left\{
\begin{array}{cl}
1+\sum_{l:l\neq i}\frac{1}{(v_i-v_l)^2} & \textrm{if }i=j,\\
-\frac{1}{(v_i-v_j)^2} & \textrm{if }i\neq j.
\end{array}
\right.\label{secondder}
\end{multline}
The Hessian matrix formed by the $N\times N$ second order derivatives above is positive definite for all vectors $\bv$ such that $v_i\neq v_j$ for $i\neq j$. To show this, we consider an arbitrary real vector $\bu$ and calculate the quadratic form associated to \eqref{secondder}.
\begin{equation}
\sum_{1\leq i,j\leq N}u_i\frac{\partial^2F_A(\bv)}{\partial v_j\partial v_i}u_j=\sum_{i=1}^Nu_i^2+\frac{1}{2}\sum_{1\leq i<j\leq N}\frac{(u_i-u_j)^2}{(v_i-v_j)^2}\geq 0\label{quadraticform}
\end{equation}
Here, the equality holds only when all the ${u_i}$ are equal to zero. Hence, all extrema given by \eqref{extcond} are minima. Suppose that there exists a vector $\bz$ which satisfies Eq.~\eqref{extcond}. Given $\bz$, any of its permutations solve Eq.~\eqref{extcond}:
\begin{equation}
z_{\rho(i)}=\sum_{\substack{j:j\neq i\cr j=1}}^N\frac{1}{z_{\rho(i)}-z_{\rho(j)}}
\end{equation}
for any $\rho\in S_N$. 

Equation~\eqref{extcond} implies that $\{z_i\}_{i=1,\ldots,N}$ must be the roots of the $N$th Hermite polynomial, a fact that is shown as follows. Multiply Eq.~\eqref{extcond} by $\prod_{\substack{l:l\neq i\cr l=1}}^N(z_i-z_j)$. The result is
\begin{equation}
z_{i}\prod_{\substack{l:l\neq i\cr l=1}}^N(z_i-z_l)=\sum_{\substack{j:j\neq i\cr j=1}}^N\prod_{\substack{l:l\neq i,j\cr l=1}}^N(z_i-z_l).\label{eqforzprod}
\end{equation}
Now, define the polynomial whose roots are $\{z_i\}_{i=1,\ldots,N}$ by
\begin{equation}
p_1(x)=c_1 \prod_{n=1}^N(x-z_n),
\end{equation}
with $c_1$ a non-zero constant. The first two derivatives of this polynomial are:
\begin{equation}
p_1^\prime(x)=\frac{\ud}{\ud x}p_1(x)=c_1\sum_{j=1}^N\prod_{\substack{n:n\neq j\cr n=1}}^N(x-z_n)
\end{equation}
and
\begin{equation}
p_1^{\prime\prime}(x)=\frac{\ud^2}{\ud x^2}p_1(x)=2c_1\sum_{1\leq j<l\leq N}\prod_{\substack{n:n\neq j,l\cr n=1}}^N(x-z_n).
\end{equation}
At any of the values $z_i$, $p_1^{\prime\prime}(x)$ behaves as follows.
\begin{equation}
p_1^{\prime\prime}(z_i)=2c_1\sum_{\substack{j:j\neq i\cr j=1}}^N\prod_{\substack{n:n\neq i,j\cr n=1}}^N(z_i-z_n)
\end{equation}
We insert \eqref{eqforzprod} to obtain
\begin{equation}
p_1^{\prime\prime}(z_i)=2c_1z_i\prod_{\substack{n:n\neq i\cr n=1}}^N(z_i-z_n)=2z_ip_1^\prime(z_i).\label{hzeros}
\end{equation}
It is known \cite{szego} that the differential relation on the zeros of the polynomial $p_1(x)$ is only fulfilled by the $N$th Hermite polynomial. Indeed, it solves the differential equation
\begin{equation}
H^{\prime\prime}_N(x)-2xH^\prime_N(x)+2NH_N(x)=0,
\end{equation}
which reduces to \eqref{hzeros} when $x=h_{i,N},$ with $i=1,\ldots,N$ and $h_{i,N}$ is the $i$th root of $H_N(x)$. Hence, $p_1(x)\propto H_N(x),$ and $\bz=\bh_{N}$, as desired.

The value of the function $F_A(\bv)-K_A$ at its minima is
\begin{multline}\label{MinimumPotentialTypeAExpression}
F_A(\bh_N)-K_A=\frac{h_N^2}{2}-\sum_{1\leq i<j\leq N}\log|h_{j,N}-h_{i,N}|\\
-\frac{N}{4}(N-1)(1+\log 2)+\frac{1}{2}\sum_{i=1}^Ni\log i.
\end{multline}
From Eq.~\eqref{extcond}, the squared norm of $\bh_N$ can be calculated as follows,
\begin{equation}\label{MinimumPotentialTypeAPartOne}
h_N^2=\sum_{\substack{1\leq i\neq j\leq N}}\frac{h_{i,N}}{h_{i,N}-h_{j,N}}=\frac{1}{2}\sum_{\substack{1\leq i\neq j\leq N}}\frac{h_{i,N}-h_{j,N}}{h_{i,N}-h_{j,N}}=\frac{N}{2}(N-1)=\gamma_A.
\end{equation}

The term $\sum_{1\leq i<j\leq N}\log|h_{j,N}-h_{i,N}|$ is calculated following Szeg{\"o} \cite{szego}. Using 
\begin{equation}\label{hermitederivative}
H_N^\prime(h_{i,N})=\lim_{x\to h_{i,N}}\frac{H_N(x)}{x-h_{i,N}}=2^N\prod_{\substack{j:j\neq i\cr j=1}}^N(h_{i,N}-h_{j,N}),
\end{equation}
one may write
\begin{multline}
\prod_{1\leq i<j\leq N}(h_{j,N}-h_{i,N})^2=(-1)^{N(N-1)/2}\prod_{1\leq i\neq j\leq N}(h_{j,N}-h_{i,N})\\
=\frac{(-1)^{N(N-1)/2}}{2^{N^2}}\prod_{i=1}^NH_N^\prime(h_{i,N})\\
=\frac{(-1)^{N(N-1)/2}N^N}{2^{N(N-1)}}\prod_{i=1}^NH_{N-1}(h_{i,N}).
\end{multline}
The last equality follows from the derivative relation
\begin{equation}\label{hermitederivativex}
H_N^\prime(x)=2NH_{N-1}(x).
\end{equation}
Let us focus on the last product:
\begin{multline}
\prod_{i=1}^NH_{N-1}(h_{i,N})=2^{N(N-1)}\prod_{i=1}^N\prod_{j=1}^{N-1}(h_{i,N}-h_{j,N-1})\\
=2^{N(N-1)}\prod_{j=1}^{N-1}\prod_{i=1}^{N}(h_{j,N-1}-h_{i,N})=\prod_{j=1}^{N-1}H_N(h_{j,N-1}).
\end{multline}
From the recurrence relation
\begin{equation}
H_N(x)=2xH_{N-1}(x)-2(N-1)H_{N-2}(x),
\end{equation}
it follows that $H_N(h_{j,N-1})=-2(N-1)H_{N-2}(h_{j,N-1})$. Therefore, the product above becomes
\begin{equation}
\prod_{i=1}^NH_{N-1}(h_{i,N})=[-2(N-1)]^{N-1}\prod_{j=1}^{N-1}H_{N-2}(h_{j,N-1}).
\end{equation}
Mathematical induction on the last relation gives
\begin{eqnarray}
\prod_{i=1}^NH_{N-1}(h_{i,N})&=&(-2)^{N(N-1)/2}\prod_{j=1}^{N-1}j^j.
\end{eqnarray}
Therefore,
\begin{multline}
\prod_{1\leq i<j\leq N}(h_{j,N}-h_{i,N})^2=\frac{(-1)^{N(N-1)/2}N^N}{2^{N(N-1)}}\prod_{i=1}^NH_{N-1}(z_{i,N})\\
=\frac{1}{2^{N(N-1)/2}}\prod_{j=1}^{N}j^j.
\end{multline}
Taking the logarithm of the above expression gives
\begin{equation}\label{MinimumPotentialTypeAPartTwo}
2\sum_{1\leq i<j\leq N}\log|h_{j,N}-h_{i,N}|=\sum_{i=1}^Ni\log i-\frac{N}{2}(N-1)\log 2.
\end{equation}

Inserting Eqs.~\eqref{MinimumPotentialTypeAPartOne} and \eqref{MinimumPotentialTypeAPartTwo} into Eq.~\eqref{MinimumPotentialTypeAExpression} finally yields
\begin{equation}
F_A(\bh_N)-K_A=0.
\end{equation}

Using all of the previous information about $F_A(\bv)$, one may take the freezing limit $\beta\to\infty$ from the low-temperature approximation of the scaled distribution $f_A(t,\sqrt{\beta t}\bv)(\beta t)^{N/2}$ given in Eq.~\eqref{ScaledDistributionTypeALargeBeta}, which starts from the arbitrary probability distribution $\mu(\bx)$ defined on $C_A$. Because $F_A(\bv)-K_A$ has its minima at any possible permutation of the vector $\bh_N$ (which we assume is arranged in increasing order), the following limit is obtained,
\begin{equation}\label{FreezingLimitSteadyStateDistributionTypeB}
\lim_{\beta\to\infty}\rme^{-\beta [F_A(\bv)-K_A]}\Big(\frac{\beta}{2\pi}\Big)^{N/2}N!= \sum_{\rho\in S_N}\delta^{(N)}(\bv-\rho\bh_N).
\end{equation}

The only calculation that remains is the freezing limit $\beta\to\infty$ of the integral over $\bx$ in Eq.~\eqref{ScaledDistributionTypeALargeBeta}. The integral is given by the expression
\begin{equation}
\int_{C_A}\exp\Big[-\frac{x^2}{2t}+\frac{x_\parallel^2 v^2_\parallel}{2t(\gamma_A +\varepsilon_\beta)}+\sqrt{\frac{\beta}{t}}x_\perp v_\perp\Big]\mu_A(\bx)\ud\bx.
\end{equation}
Note that $h_{N\perp}=\bh_N\cdot\bone/\sqrt{N}=0$ because the roots of the Hermite polynomials are distributed symmetrically with respect to the origin, meaning that
\begin{equation}
\sum_{i=1}^N h_{i,N}=0. 
\end{equation}
Using Eq.~\eqref{PotentialTypeADecomposition}, one may take a term $-\beta v_\perp^2/2$ from the function $F_A(\bv)$ to rewrite the integral as follows,
\begin{equation}
\int_{C_A}\exp\Big[-\frac{x^2_\parallel}{2t}+\frac{x_\parallel^2 v^2_\parallel}{2t(\gamma_A +\varepsilon_\beta)}-\frac{1}{2}\Big(\frac{x_\perp}{\sqrt t}-\sqrt\beta v_\perp\Big)\Big]\mu_A(\bx)\ud\bx.
\end{equation}
From Eqs.~\eqref{MinimumPotentialTypeAPartOne} and \eqref{CorrectedGammaAtInfinity}, it follows that
\begin{equation}
-\frac{x^2_\parallel}{2t}+\frac{x_\parallel^2 v^2_\parallel}{2t(\gamma_A +\varepsilon_\beta)}\stackrel{\beta\to\infty}{\longrightarrow}0,
\end{equation}
because Eq.~\eqref{FreezingLimitSteadyStateDistributionTypeB} forces $\bv_\parallel$ to be equal to $\bh_{N\parallel}=\bh_N$ in the freezing limit $\beta\to\infty$. 

Consequently, only the freezing limit of the terms involving $v_\perp$ and $x_\perp$ remains to be calculated. From Eq.~\eqref{FreezingLimitGeneralPerpendicularPart} it follows that, in the sense of distributions,
\begin{equation}
\sqrt{\frac{\beta}{2\pi}}\rme^{-\beta (v_\perp-x_\perp/\sqrt{\beta t})^2/2}\stackrel{\beta\to\infty}{\longrightarrow}\delta(v_\perp).
\end{equation}
Therefore, the scaled distribution becomes
\begin{equation}
f_A(t,\sqrt{\beta t}\bv)(\beta t)^{N/2}\stackrel{\beta\to\infty}{\longrightarrow}\sum_{\rho\in S_N}\delta^{(N)}(\bv-\rho\bh_N)=\delta(v_\perp)\sum_{\rho\in S_N}\delta^{(N-1)}(\bv_\parallel-\rho\bh_N).
\end{equation}
However, $f_A(t,\by)$ is normalized in $C_A$, meaning that only the delta function located at $\bh_N$ needs to be taken into account and all the other terms can be neglected. Therefore,
\begin{equation}
f_A(t,\sqrt{\beta t}\bv)(\beta t)^{N/2}\stackrel{\beta\to\infty}{\longrightarrow}\delta^{(N)}(\bv-\bh_N),
\end{equation}
and the calculation is complete.

The freezing distribution for the interacting Bessel processes is given in a similar manner. The extrema of $F_B(\bv,\nu)$ are located at $\bor_{\nu-1/2,N}$, its permutations and sign changes, where $(\bor_{\nu-1/2,N})^2=\bl_{\nu-1/2,N}$. Furthermore, all its extrema are local minima. To prove this, one must first consider the extrema of $F_B(\bv,\nu)$, which are given by
\begin{equation}\label{minconditionb}
v_i^2=\nu+1/2+\sum_{\substack{j:j\neq i\cr j=1}}^N\frac{2v_i^2}{v_i^2-v_j^2}.
\end{equation}
The second derivatives of $F_B(\bv,\nu)$ are given by
\begin{multline}
\frac{\partial^2}{\partial v_j \partial v_i}F_B(\bv,\nu)=\delta_{ij}\Big(1+\frac{2\nu+1}{2v_i^2}\Big)\\+2\Big[\delta_{ij}\sum_{\substack{l:l\neq i\cr l=1}}^N\frac{v_i^2+v_l^2}{(v_i^2-v_l^2)^2}-(1-\delta_{ij})\frac{2v_iv_j}{(v_i^2-v_j^2)^2}\Big].
\end{multline}
The Hessian of $F_B(\bv,\nu)$ is positive definite, because for an arbitrary vector $\bu\in\RR^N$ the following expression is non-negative,
\begin{multline}
\sum_{1\leq i,j\leq N}u_iu_j\frac{\partial^2}{\partial v_j \partial v_i}F_B(\bv,\nu)=\sum_{i=1}^Nu_i^2\Big(1+\frac{2\nu+1}{2v_i^2}\Big)\\
+\sum_{1\leq i\neq j\leq N}\frac{(u_iv_i-u_jv_j)^2+(u_iv_j-u_jv_i)^2}{(v_i^2-v_j^2)^2}\geq 0.
\end{multline}
Therefore, all extrema are minima. 

Now, setting $\bos=(\bu)^2$ in \eqref{minconditionb} yields
\begin{equation}\label{minimizationonr}
s_i=\nu+1/2+\sum_{\substack{j:j\neq i\cr j=1}}^N\frac{2s_i}{s_i-s_j}.
\end{equation}
Let us define the following polynomial,
\begin{equation}
p_2(x)=c_2\prod_{j=1}^N(x-s_j),
\end{equation}
and denote by $p_2^\prime(x)$ and $p_2^{\prime\prime}(x)$ its first and second derivatives, respectively. Evaluating them at $x=s_i$, they become
\begin{IEEEeqnarray}{rCl}
p_2^\prime(s_i)&=&c_2\prod_{\substack{n:n\neq i\cr n=1}}^N(s_i-s_n)\ \textrm{and}\\
p_2^{\prime\prime}(s_i)&=&2c_2\sum_{\substack{j:j\neq i\cr j=1}}^N\prod_{\substack{n:n\neq i,j\cr n=1}}^N(s_i-s_n).
\end{IEEEeqnarray}
Multiplying \eqref{minimizationonr} by $p_N^\prime(s_i)$ yields
\begin{equation}
s_ip_2^{\prime\prime}(s_i)+(\nu+1/2 -s_i)p_2^\prime(s_i)=0,\quad i=1,\ldots,N.
\end{equation}
Comparing this equation with the differential equation obeyed by the Laguerre polynomials \cite{szego},
\begin{equation}
xL_N^{(\alpha)\prime\prime}(x)+(\alpha+1-x)L_N^{(\alpha)\prime}(x)+NL_N^{(\alpha)}(x)=0,
\end{equation}
it follows that $p_2(x)$ must be proportional to $L_N^{(\nu-1/2)}(x)$, and the set $\{s_i\}_{i=1}^N$ must be the set of roots of $L_N^{(\nu-1/2)}(x)$, $\{l_{i,\nu-1/2,N}\}_{i=1}^N$. This, in turn, means that the minima of $F_B(\bv,\alpha+1/2)$ lie at $\bv=(\sqrt{l_{1,\alpha,N}},\ldots,\sqrt{l_{N,\alpha,N}})$ with $\alpha=\nu-1/2$.

Let us define $\bor_{\nu-1/2,N}$ such that
\begin{equation}
(\bor_{\nu-1/2,N})^2=\bl_{\nu-1/2,N}.
\end{equation}
At its minima, the function $F_B(\bv,\nu)-K_B$ takes the value
\begin{multline}\label{minimumvalue}
F_B(\bor_{\nu-1/2,N},\nu)-K_B=\frac{r^2}{2}-\frac{2\nu+1}{4}\sum_{i=1}^N\log r_i^2-\sum_{1\leq i<j\leq N}\log|r_j^2-r_i^2|\\
-\frac{N}{2}(N+\nu-1/2)+\frac{1}{2}\sum_{i=1}^N i\log i+\frac{1}{2}\sum_{i=1}^N(\nu+i-1/2) \log (\nu+i-1/2).
\end{multline}
In this expression, the subindices $N$ and $\nu-1/2$ have been omitted for the sake of brevity, and they will be omitted henceforth except for the cases in which confusion may arise. Because $\bor$ obeys Eq.~\eqref{minconditionb}, it follows that its squared norm is
\begin{equation}\label{partone}
r_{\nu-1/2,N}^2=\sum_{i=1}^N\Big(\nu+1/2+\sum_{\substack{j:j\neq i\cr j=1}}^N\frac{2r_i^2}{r_i^2-r_j^2}\Big)=N(\nu+1/2)+N(N-1)=\gamma_B.
\end{equation}

The second term can be calculated from
\begin{equation}
\sum_{i=1}^N\log r_i^2=\sum_{i=1}^N\log l_{i,\alpha,N}=\log(N!L_N^{(\alpha)}(0)),
\end{equation}
where $\alpha=\nu-1/2$ and $L_N^{(\alpha)}(0)=\frac{1}{N!}\prod_{i=1}^N(\alpha+i)$ \cite{szego}. This gives
\begin{equation}
\sum_{i=1}^N\log r_i^2=\sum_{i=1}^N\log(\alpha+i).\label{parttwo}
\end{equation}

Finally, the third term is calculated following \cite{szego} and in a manner similar to the case of the function $F_A(\bv)$. Consider the expression
\begin{equation}
\prod_{1\leq i<j\leq N}(l_{j,\alpha,N}-l_{i,\alpha,N})^2=(-1)^{N(N+1)/2}(N!)^N\prod_{i=1}^NL_{N}^{(\alpha)\prime}(l_{i,\alpha,N}).
\end{equation}
Using the derivative relation $xL_{N}^{(\alpha)\prime}(x)=NL_{N}^{(\alpha)}(x)-(N+\alpha)L_{N-1}^{(\alpha)}(x)$ combined with Eq.~\eqref{parttwo}, one obtains
\begin{equation}
\prod_{1\leq i<j\leq N}(l_{j,\alpha,N}-l_{i,\alpha,N})^2=\frac{(-1)^{N(N-1)/2}(N!)^N(\alpha+N)^N}{\prod_{j=1}^N(\alpha+j)}\prod_{i=1}^NL_{N-1}^{(\alpha)}(l_{i,\alpha,N}).\label{tobeinserted}
\end{equation}
The product of Laguerre polynomials can be rewritten as
\begin{equation}
\prod_{i=1}^NL_{N-1}^{(\alpha)}(l_{i,\alpha,N})=\frac{N^N}{N!}\prod_{i=1}^{N-1}L_{N}^{(\alpha)}(l_{i,\alpha,N-1}),
\end{equation}
and using the recursion relation $NL_{N}^{(\alpha)}(x)=(-x+2N+\alpha-1)L_{N-1}^{(\alpha)}(x)-(N+\alpha-1)L_{N-2}^{(\alpha)}(x)$ on the expression above yields
\begin{equation}
\prod_{i=1}^NL_{N-1}^{(\alpha)}(l_{i,\alpha,N})=\frac{(-1)^{N-1}(N-1+\alpha)^{N-1}}{(N-1)!}\prod_{i=1}^{N-1}L_{(N-1)-1}^{(\alpha)}(l_{i,\alpha,N-1}).
\end{equation}
Mathematical induction on the last equation gives
\begin{equation}
\prod_{i=1}^NL_{N-1}^{(\alpha)}(l_{i,\alpha,N})=(-1)^{N(N-1)/2}\prod_{i=1}^{N-1}\left(\frac{\alpha+i}{N-i}\right)^i.
\end{equation}
Inserting this expression into Eq.~\eqref{tobeinserted} and taking logarithms on both sides gives
\begin{equation}
2\sum_{1\leq i<j\leq N}\log|l_{j,\alpha,N}-l_{i,\alpha,N}|=\sum_{i=1}^N[(i-1)\log(\alpha+i)+i\log i].\label{partthree}
\end{equation}

Finally, substituting Eqs.~\eqref{partone}, \eqref{parttwo} and \eqref{partthree} into Eq.~\eqref{minimumvalue} gives
\begin{equation}
F_B(\bor_{\nu-1/2,N},\nu)=0.
\end{equation}

Because the root system of type $B$ spans $\RR^N$, the calculation of the freezing limit $\beta\to\infty$ of the scaled distribution $f_B(t,\sqrt{\beta t}\bv)(\beta t)^{N/2}$ is simpler than the case of the root system of type $A$. In this case, the steady state distribution follows the freezing limit
\begin{equation}
\lim_{\beta\to\infty}\rme^{-\beta [F_B(\bv)-K_B]}(2\beta)^{N/2}N!= \sum_{\rho\in W_B}\delta^{(N)}(\bv-\rho\bor_{\nu-1/2,N}),
\end{equation}
because $F_B(\bv)-K_B>0$ whenever $\bv\neq\bor_{\nu-1/2,N}$ or its orbit in $W_B$. It is assumed that all the components of $\bor_{\nu-1/2,N}$ are positive and that they are arranged in increasing order so that $\bor_{\nu-1/2,N}\in C_B$. 

By Eq.~\eqref{RadialDunklKernelTypeBFreezing}, the integral over $\bx$ is given by
\begin{equation}
\int_{C_B}\exp\Big[-\frac{x^2}{2t}\Big(1-\frac{v^2}{\gamma_B+\varepsilon_\beta}\Big)\Big]\mu_B(\bx)\ud\bx,
\end{equation}
and as $\beta\to\infty$, $\varepsilon_\beta\to0$ and $v^2\to r_{\nu-1/2,N}^2=\gamma_B$. Therefore,
\begin{multline}
f_B(t,\sqrt{\beta t}\bv)(\beta t)^{N/2}=\rme^{-\beta [F_B(\bv)-K_B]}(2\beta)^{N/2}N!\\
\times\int_{C_B}\exp\Big[-\frac{x^2}{2t}\Big(1-\frac{v^2}{\gamma_B+\varepsilon_\beta}\Big)\Big]\mu_B(\bx)\ud\bx\\
\stackrel{\beta\to\infty}{\longrightarrow}\sum_{\rho\in W_B}\delta^{(N)}(\bv-\rho\bor_{\nu-1/2,N})\int_{C_B}\mu_B(\bx)\ud\bx\\
=\sum_{\rho\in W_B}\delta^{(N)}(\bv-\rho\bor_{\nu-1/2,N}).
\end{multline}
However, $f_B(t,\sqrt{\beta t}\bv)(\beta t)^{N/2}$ is only defined and normalized in $C_B$, meaning that the delta functions that are outside of $C_B$ can be neglected. Therefore,
\begin{equation}
\lim_{\beta\to\infty}f_B(t,\sqrt{\beta t}\bv)(\beta t)^{N/2}=\delta^{(N)}(\bv-\bor_{\nu-1/2,N}),
\end{equation}
as desired.

In the limit $\nu\to\infty$ of the distribution $f_B(t,\sqrt{\beta\nu t}\bv)(\beta \nu t)^{N/2}$, one must consider the behavior of the expression $F_B(\sqrt{\nu}\bv,\nu)-K_B$ when $\nu$ is much larger than $N-1/2$. The expression
\begin{multline}
F_B(\sqrt{\nu}\bv,\nu)-K_B=\nu\frac{v^2}{2}-\frac{2\nu+1}{4}\Big(N\log \nu+\sum_{i=1}^N\log v_i^2\Big)-\sum_{1\leq i<j\leq N}\log|v_j^2-v_i^2|\\
-\frac{N}{2}(N-1)\log \nu-\frac{N}{2}(N+\nu-1/2)+\frac{1}{2}\sum_{i=1}^N i\log i\\
+\frac{1}{2}\sum_{i=1}^N(\nu+i-1/2) \log (\nu+i-1/2)
\end{multline}
can be approximated by
\begin{multline}
F_B(\sqrt{\nu}\bv,\nu)-K_B\approx\nu\Big[\frac{v^2}{2}-\frac{1}{2}\sum_{i=1}^N\log v_i^2-\frac{N}{2}\Big]-\sum_{1\leq i<j\leq N}\log\nu|v_j^2-v_i^2|
\end{multline}
when $\nu\gg N-1/2$. The expression in parentheses is the function $\tilde{F}_B(\bv)$ defined in Eq.~\eqref{PotentialFTildeTypeB}. Therefore, using Eq.~\eqref{RadialDunklKernelTypeBNuInfinity}, the scaled distribution in this case is given by
\begin{multline}
f_B(t,\sqrt{\beta\nu t}\bv)(\beta \nu t)^{N/2}\approx\rme^{-\beta \nu\tilde{F}_B(\bv)}\prod_{1\leq i<j\leq N}|\nu(v_j^2-v_i^2)|^\beta(2\beta\nu)^{N/2}N!\\
\times\int_{C_B}\rme^{-x^2/2t}\FZ{2/\beta}\Bigg(\frac{(\bx)^2}{2t},(\bv)^2\Bigg)\mu_B(\bx)\ud\bx.
\end{multline}

The function $\tilde{F}_B(\bv)$ has the following first- and second-order derivatives,
\begin{IEEEeqnarray}{rCl}
\frac{\partial \tilde{F}_B}{\partial v_i}&=&v_i-\frac{1}{v_i},\\
\frac{\partial^2 \tilde{F}_B}{\partial v_j \partial v_i}&=&\delta_{ij}\Big(1+\frac{1}{v_i^2}\Big).
\end{IEEEeqnarray}
Consequently, the Hessian of $\tilde{F}_B(\bv)$ is positive definite and all extrema are minima. In addition, the minima lie on all vectors $\bv$ such that $v_i=\pm1$, and the minimum value of $\tilde{F}_B(\bv)$ is zero. Then, for large values of $\beta\nu$ the following approximation holds,
\begin{equation}
\rme^{-\beta \nu\tilde{F}_B(\bv)}\approx\prod_{i=1}^N\sum_{z_i=\pm1}\exp\Big[-\beta \nu(v_i-z_i)^2\Big].
\end{equation}

Consider now the integral
\begin{multline}
\mathcal{E}=\int_{\bar{C}_B}h(\bv)\prod_{i=1}^N\sum_{z_i=\pm1}\rme^{-\beta \nu(v_i-z_i)^2}\prod_{1\leq i<j\leq N}|\nu(v_j^2-v_i^2)|^\beta(2\beta\nu)^{N/2}N!\\
\times\int_{\bar{C}_B}\rme^{-x^2/2t}\FZ{2/\beta}\Bigg(\frac{(\bx)^2}{2t},(\bv)^2\Bigg)\mu_B(\bx)\ud\bx\ud\bv,
\end{multline}
where $h(\bv)$ is a test function with polynomial growth at infinity. Define the following subset of the closure of $C_B$, $\mathcal{D}_\epsilon=\{\by\in \bar{C}_B: 1-\epsilon\leq y_1\leq \ldots\leq y_N\leq 1+\epsilon\}$, where $0<\epsilon\ll 1$. At very large values of $\beta\nu$, one has
\begin{multline}
\int_{\bar{C}_B\backslash\mathcal{D}_\epsilon}h(\bv)\prod_{i=1}^N\sum_{z_i=\pm1}\rme^{-\beta \nu(v_i-z_i)^2}\prod_{1\leq i<j\leq N}|\nu(v_j^2-v_i^2)|^\beta(2\beta\nu)^{N/2}N!\\
\times\int_{\bar{C}_B}\rme^{-x^2/2t}\FZ{2/\beta}\Bigg(\frac{(\bx)^2}{2t},(\bv)^2\Bigg)\mu_B(\bx)\ud\bx\ud\bv\\
=O[\rme^{-\beta\nu\epsilon^2}],
\end{multline}
because the Gaussian term dominates the integrand away from $(\bv)^2=\bone$. Therefore, if $\epsilon$ is chosen small while keeping $\beta\nu\epsilon^2$ very large, this part of the integral can be neglected. For this purpose, set $\epsilon\propto \nu^{-\alpha}$ with $0<\alpha<1/2$. Then, the integral over $\mathcal{D}_\epsilon$ is simplified using the mean value theorem as
\begin{multline}
\int_{\mathcal{D}_\epsilon}h(\bv)\prod_{i=1}^N\sum_{z_i=\pm1}\rme^{-\beta \nu(v_i-z_i)^2}\prod_{1\leq i<j\leq N}|\nu(v_j^2-v_i^2)|^\beta(2\beta\nu)^{N/2}N!\\
\times\int_{\bar{C}_B}\rme^{-x^2/2t}\FZ{2/\beta}\Bigg(\frac{(\bx)^2}{2t},(\bv)^2\Bigg)\mu_B(\bx)\ud\bx\ud\bv\\
=(2\beta\nu)^{N/2}N!\prod_{1\leq i<j\leq N}|\nu(v_{j*}^2-v_{i*}^2)|^\beta\int_{\mathcal{D}_\epsilon}h(\bv)\prod_{i=1}^N\sum_{z_i=\pm1}\rme^{-\beta \nu(v_i-z_i)^2}\\
\times\int_{\bar{C}_B}\rme^{-x^2/2t}\FZ{2/\beta}\Bigg(\frac{(\bx)^2}{2t},(\bv)^2\Bigg)\mu_B(\bx)\ud\bx\ud\bv,
\end{multline}
where $\bv_*\in\mathcal{D}_\epsilon$. Then, the components of $\bv_*$ have the property that
\begin{equation}
v_{i*}=1+O(\epsilon),
\end{equation}
and consequently
\begin{equation}
v_{j*}^2-v_{i*}^2=2O(\epsilon)+O(\epsilon^2)=O(\epsilon).
\end{equation}
Thus, the order of magnitude of the product of differences is given by
\begin{equation}
\prod_{1\leq i<j\leq N}|\nu(v_{j*}^2-v_{i*}^2)|^\beta=\prod_{1\leq i<j\leq N}|O(\nu\epsilon)|^\beta=O(\nu^{(1-\alpha)\beta N(N-1)/2}).
\end{equation}
This means that as $\nu\to\infty$, the product of differences tends to infinity. Therefore, it makes sense to write 
\begin{multline}
\lim_{\nu\to\infty}\mathcal{E}\propto\int_{C_B}h(\bv)\prod_{i=1}^N\sum_{z_i=\pm1}\delta(v_i-z_i)\\
\times\int_{C_B}\rme^{-x^2/2t}\FZ{2/\beta}\Bigg(\frac{(\bx)^2}{2t},(\bv)^2\Bigg)\mu_B(\bx)\ud\bx\ud\bv\\
=\int_{C_B}h(\bv)\prod_{i=1}^N\sum_{z_i=\pm1}\delta(v_i-z_i)\ud\bv\int_{C_B}\rme^{-x^2/2t}\FZ{2/\beta}\Bigg(\frac{(\bx)^2}{2t},\bone\Bigg)\mu_B(\bx)\ud\bx\\
=h(\bone)\int_{C_B}\rme^{-x^2/2t}\FZ{2/\beta}\Bigg(\frac{(\bx)^2}{2t},\bone\Bigg)\mu_B(\bx)\ud\bx.
\end{multline}

From Eqs.~(2.8) and (3.2b) in \cite{bakerforrester97}, it is known that
\begin{equation}
\FZ{2/\beta}\Bigg(\frac{(\bx)^2}{2t},\bone\Bigg)=\exp\Bigg(\frac{x^2}{2t}\Bigg),
\end{equation}
which finally gives
\begin{equation}
\lim_{\nu\to\infty}\mathcal{E}\propto h(\bone)\int_{C_B}\rme^{-x^2/2t}\rme^{x^2/2t}\mu_B(\bx)\ud\bx=h(\bone),
\end{equation}
or, in the sense of distributions,
\begin{equation}
\lim_{\nu\to\infty}f_B(t,\sqrt{\beta \nu t}\bv)(\beta \nu t)^{N/2}\ud\bv\propto\delta^{(N)}(\bv-\bone)\ud\bv.
\end{equation}
The proportionality constant is one because both members of the expression are normalized to one in $C_B$.\qquad\qquad\qquad\qquad\qquad\qquad\qquad\qquad\qquad\quad\ $\square$
\chapter{Summary of results and future prospects}\label{conclusions}

In the present thesis, the behavior of the interacting Brownian motions and Bessel processes in the steady state and freezing regimes was investigated through the use of Dunkl operator theory. After the brief review of Dunkl theory given in Chapter~\ref{preliminaries}, the correspondence between the Calogero-Moser systems and Dunkl processes was proved in Chapter~\ref{CalogeroMoserCorrespondence} (Prop.~\ref{correspondencer}). This correspondence served as motivation for the fact that Dunkl processes, after given an appropriate scaling, converge to a steady state and have a well-defined freezing limit. 

In Chapters~\ref{general_steady} and \ref{general_freezing}, the main results of this thesis were proved. The first result is that the scaled final distribution of a Dunkl process that starts from an initial distribution with finite second moments will converge to a specific steady-state distribution (Thm.~\ref{TheoremSteadyState}). The second result is that the scaled final distribution of a Dunkl process that starts from an arbitrary initial distribution freezes to a configuration that is given by the peak set of the root system $R$ instantaneously (Thm.~\ref{TheoremFreezingLimit}). The proof of these results depended on several calculations involving the intertwining operator, in particular the action of $V_\beta$ on linear polynomials (Lemma~\ref{LemmaFirstOrderV}), and on the exponential function in the freezing limit (Lemma~\ref{FreezingLimitDunklKernel}) as well as other approximations. While a finite lower bound was given for the time required for Dunkl processes to converge to the steady state, it seems that there must be a better estimation of the relaxation time in view of the fact that the freezing configuration is achieved instantaneously. This fact suggests that the relaxation time should be inversely proportional to the inverse temperature. This improvement on the estimation of the relaxation time is left as an open problem.

Because both the interacting Brownian motions and Bessel processes are particular cases of Dunkl processes, it follows from Thms.~\ref{TheoremSteadyState} and \ref{TheoremFreezingLimit} that these two systems of interacting particles have well-defined steady-state and freezing regimes. Chapter~\ref{ParticularCases} was devoted to these particular cases. The interacting Brownian motions converge to a steady state in which their scaled distribution coincides with the $\beta$-Hermite ensembles of random matrices, and freeze to a scaled distribution given by delta functions centered at the zeroes of the Hermite polynomials (Prop.~\ref{LimitingRegimesA}). Similarly, the interacting Bessel processes converge to a steady state in which their scaled distribution coincides with the $\beta$-Laguerre ensembles of random matrices, and freeze to a scaled distribution given by delta functions centered at the zeroes of the Laguerre polynomials; in addition, in the limit where the Bessel index tends to infinity (which, translated to the Dirac field of QCD corresponds to the case where the topological charge tends to infinity), all the particles converge to the same scaled position (Prop.~\ref{LimitingRegimesB}).

Prior to the derivation of Props.~\ref{LimitingRegimesA} and \ref{LimitingRegimesB}, the behavior of the interacting Brownian motions and Bessel processes in the steady-state and freezing regimes was studied using numerical simulations. As setup for the proof of these propositions, the action of $V_\beta$ on symmetric polynomials was derived in Prop.~\ref{PropositionVBetaOnSymmetricPolynomials}, and the freezing limit of the generalized Bessel functions of type $A$ and $B$ was obtained in Prop.~\ref{FreezingLimitOfGeneralizedBesselFunctions}. Because the expressions for the action of $V_\beta$ found in Chapter~\ref{ParticularCases} only apply to symmetric polynomials, it is of interest to examine its action on non-symmetric polynomials. This is a problem that should be tackled in the near future. 

The general results in Thms.~\ref{TheoremSteadyState} and \ref{TheoremFreezingLimit} correspond to two regimes where the probability distribution of the process is balanced in such a way that the probability that the Dunkl process is in one particular Weyl chamber is evenly distributed among the chambers. This means that the process density is invariant under reflections along $\balpha\in R$, and the information about the jumps is lost. It is of great interest to study the physical nature of the jumps performed by Dunkl processes and their effect on the relaxation to the steady state, which is a problem that has not been solved yet. In particular, it is of interest to see if the behavior of the jumps in Dunkl processes has a relationship with a physical phenomenon.

In addition, the numerical results from Chapter~\ref{ParticularCases} (Figs.~\ref{FigureFreezingA} and \ref{FigureFreezingB} in particular) seem to suggest the existence of a transition from a disordered to an ordered phase as $\beta\to\infty$. However, in order to verify the existence of a phase transition the calculation of other physically relevant quantities (e.g., correlation functions) is required. Because the intertwining operator is responsible for the time evolution of Dunkl processes, the calculation of correlations in equilibrium should require different techniques from the ones used in this work. However, the results related to the intertwining operator obtained here open the possibility of studying dynamical and multi-time correlations for these processes. As a consequence, further study of the intertwining operator is essential to investigate whether these processes undergo phase transitions out of equilibrium. This is another topic that has not been addressed and that we would like to study as a continuation of this work.

Finally, it is worth noting that the interacting particle systems studied here seem to have little relationship with actual experiments. Because in the present most of the applications of random matrix theory correspond to ensembles where $\beta=2$, it is not unlikely that many other possible applications have been overlooked because the necessary tools for the study of cases where $\beta>0$ are incomplete. Hopefully, this work will be a stepping stone towards a better understanding of the models treated here for $\beta>0$, and towards finding out what makes the cases $\beta=2$ so special both in terms of their applications in physics and of their mathematical properties.

\appendix
\chapter{Proof of the kernel-reproducing formula}\label{TheUsefulIntegral}

The objective of this appendix is to give a proof of the integral \eqref{GaussianIntegralDunklKernel}, following \cite{dunkl91}, \cite{rosler08} and \cite{dejeu06}. This formula requires the proof of several facts, so the first section is concerned with the tools necessary for the proof. The actual proof of the formula is given in the second section.

\section{Preparations}

The first tool required for the proof is the inner product between polynomials known as the Fischer product, which is defined as follows. Consider two polynomials of $N$ variables, $p(\bx)$ and $q(\bx)$. The expression $p(\bnabla)$ denotes the operator that is obtained by replacing the coordinates $\{x_i\}_{1\leq i\leq N}$ with their partial derivatives $\{\partial/\partial x_i\}_{1\leq i\leq N}$. The Fischer product is defined as
\begin{equation}
(p,q)_0:=p(\bnabla)q(\bx)|_{\bx=\bzero}.\label{FischerProduct}
\end{equation}
Note that monomials are orthogonal under this product, because the expression
\begin{equation}
\prod_{i=1}^N\frac{\partial^{\lambda_i}}{\partial x_i^{\lambda_i}}x_i^{\mu_i}\Big|_{\bx=\bzero}
\end{equation}
vanishes unless the multi-indices $\lambda$ and $\mu$ are equal. Therefore, this product is symmetric, i.e., 
\begin{equation}
(p,q)_0=(q,p)_0.
\end{equation}
It follows that the Fischer product of homogeneous polynomials of different degrees is equal to zero.


Denote the Dunkl gradient by $\bT=(T_1,\ldots, T_N)^T$. The Dunkl generalization of the Fischer product is given by
\begin{equation}
(p,q)_\beta:=p(\bT)q(\bx)|_{\bx=\bzero}.\label{DunklFischerProduct}
\end{equation}
Like the Fischer product of Eq.~\eqref{FischerProduct}, this Fischer product is symmetric, and homogeneous polynomials of different degrees are orthogonal under it.

An important property of this product is that, denoting the $j$th term of the Taylor expansion of the exponential $\exp(\bx\cdot\by)$ by
\begin{equation}
E^{(j)}(\bx,\by):=\frac{(\bx\cdot\by)^j}{j!},
\end{equation}
one has that for any homogeneous polynomial $p(\bx)$ of degree $n$, the expression
\begin{equation}
(V_\beta E^{(n)}(\bx,\cdot),p)_\beta=p(\bx)
\end{equation}
holds. To prove this, $V_\beta$ must be shown to be one-to-one. This fact follows from the existence of its inverse, which is given by
\begin{equation}
U_\beta f(\bx):=\exp(\bx\cdot\bT^{(y)})f(\by)|_{\by=\bzero}\label{InverseIntertwiningOperator}
\end{equation}
for an arbitrary analytical function $f(\bx)$. The superscript $(y)$ indicates the variable acted upon whenever confusions may arise. To prove that $U_\beta$ is the inverse of $V_\beta$, it suffices to verify that $U_\beta$ satisfies the equation
\begin{multline}
\frac{\partial}{\partial x_i}U_\beta f(\bx)=\frac{\partial}{\partial x_i}\exp(\bx\cdot\bT^{(y)})f(\by)|_{\by=\bzero}=\exp(\bx\cdot\bT^{(y)})T_i^{(y)}f(\by)|_{\by=\bzero}\\
=U_\beta [T_i f(\bx)].
\end{multline}
This is the inverse of Eq.~\eqref{EquationVDefinition}, meaning that
\begin{equation}
U_\beta V_\beta f(\bx)=V_\beta U_\beta f(\bx)=f(\bx).
\end{equation}
Consequently, $V_\beta$ is one-to-one, and $U_\beta$ is linear and preserves the degree of homogeneous polynomials. 

Now, the Taylor expansion of a function $f(\bx)$ at the point $\by$ can be written as follows:
\begin{equation}
f(\bx)=\exp[\bx\cdot\bnabla^{(y)}]f(\by).
\end{equation}
For the homogeneous polynomial $p(\bx)$, this becomes
\begin{equation}
p(\bx)=E^{(n)}(\bx,\bnabla^{(y)})p(\by).
\end{equation}
Applying $V_\beta^{(y)}$ and then $V_\beta^{(x)}$ on both sides gives
\begin{equation}
V_\beta^{(x)}p(\bx)=V_\beta^{(x)}E^{(n)}(\bx,\bT^{(y)})V_\beta^{(y)}p(\by).
\end{equation}
Because $V_\beta$ is one-to-one, one may replace $V_\beta p(\bx)$ with an arbitrary homogeneous polynomial, say, $q(\bx)$. Also, this equation is valid for any $\by$, so taking $\by=\bzero$ gives
\begin{equation}\label{PartialSumDunklKernel}
q(\bx)=V_\beta^{(x)}E^{(n)}(\bx,\bT^{(y)})q(\by)|_{\by=\bzero}=(V_\beta E^{(n)}(\bx,\cdot),q)_\beta,
\end{equation}
as claimed.

The second tool required for the proof of Eq.~\eqref{GaussianIntegralDunklKernel} is the following theorem due to Dunkl (\cite{dunkl91}, Thm.~3.10). Here, $\Delta_\beta$ denotes the Dunkl Laplacian $\sum_{i=1}^N T_i^2$.
\begin{proposition}\label{NiceProposition}
For arbitrary polynomials $p(\bx)$ and $q(\bx)$, the Fischer product \eqref{DunklFischerProduct} can be written as
\begin{equation}
(p,q)_\beta=\frac{1}{c_\beta}\int_{\RR^N}[\rme^{-\Delta_\beta/2}p(\bx)][\rme^{-\Delta_\beta/2}q(\bx)]\rme^{-x^2/2}w_\beta(\bx)\ud\bx.
\end{equation}
\end{proposition}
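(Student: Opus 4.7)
The plan is to recognise both sides as symmetric bilinear forms on $\CCC[\bx]$ that are fixed by the same small piece of data. Write $A(p,q):=(p,q)_\beta$ and
\[
B(p,q):=\frac{1}{c_\beta}\int_{\RR^N}[\rme^{-\Delta_\beta/2}p][\rme^{-\Delta_\beta/2}q]\rme^{-x^2/2}w_\beta(\bx)\ud\bx.
\]
Both are manifestly bilinear and symmetric, and the defining formula for $c_\beta$ gives $A(1,1)=B(1,1)=1$. I aim to prove the adjunction $F(x_i p,q)=F(p,T_i q)$ for both $F=A$ and $F=B$; combined with the base value and a short induction on the degree of $p$, this pins down $A=B$ on all of $\CCC[\bx]\times\CCC[\bx]$.

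The adjunction for $A$ is immediate: treating $x_i p$ as the polynomial $u_i p(\bu)$, its Dunkl realisation is $(x_i p)(\bT)=T_i\,p(\bT)$, and since Dunkl operators mutually commute, $A(x_i p,q)=T_i p(\bT)q|_{\bx=\bzero}=p(\bT)T_i q|_{\bx=\bzero}=A(p,T_i q)$. The adjunction for $B$ rests on a conjugation identity I would derive first, namely
\[
\rme^{-\Delta_\beta/2}\,x_i\,\rme^{\Delta_\beta/2}=x_i-T_i,
\]
which follows from Baker--Campbell--Hausdorff once I establish $[\Delta_\beta,x_i]=2T_i$. For this commutator I would compute $[T_j,x_i]=\delta_{ij}+2\sum_{\balpha\in R_+}k(\balpha)\alpha_i\alpha_j/\alpha^2\cdot\sigma_{\balpha}$ directly from \eqref{DunklOperatorR}, using $x_j-(\sigma_{\balpha}\bx)_j=2(\balpha\cdot\bx)\alpha_j/\alpha^2$. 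Expanding $[\Delta_\beta,x_i]=\sum_j(T_j[T_j,x_i]+[T_j,x_i]T_j)$, the reflection contributions group into $2\sum_{\balpha\in R_+}k(\balpha)(\alpha_i/\alpha^2)(T_{\balpha}\sigma_{\balpha}+\sigma_{\balpha}T_{\balpha})$ with $T_{\balpha}:=\balpha\cdot\bT$, which vanishes because $\sigma_{\balpha}T_{\balpha}=-T_{\balpha}\sigma_{\balpha}$ (the Dunkl operator along $\balpha$ anticommutes with the reflection it generates). Only the $2T_i$ from the $\delta_{ij}$ terms survives. Since Dunkl operators commute, higher BCH brackets vanish and the conjugation truncates after one commutator.

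With the conjugation identity, $\rme^{-\Delta_\beta/2}(x_i p)=(x_i-T_i)\tilde p$ where $\tilde p:=\rme^{-\Delta_\beta/2}p$. To verify $B(x_i p,q)=B(p,T_i q)$, I would substitute this into $B(x_i p,q)$ and apply three ingredients in turn: (i) antisymmetry of $T_i$ under the weighted integral, $\int(T_i f)g\,w_\beta\,\ud\bx=-\int f(T_i g)\,w_\beta\,\ud\bx$, a standard integration by parts using the explicit $\balpha$-dependence of $w_\beta$; (ii) the product rule \eqref{DunklProductRule}, valid since $G:=\rme^{-x^2/2}$ is $W$-invariant, yielding $T_i(\tilde q\,G)=G(T_i\tilde q-x_i\tilde q)$ via $T_i G=-x_i G$; and (iii) the commutation $T_i\rme^{-\Delta_\beta/2}=\rme^{-\Delta_\beta/2}T_i$. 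After this manipulation the $x_i\tilde p\tilde q$ contributions cancel and what remains is $\frac{1}{c_\beta}\int\tilde p\,G\,\widetilde{T_i q}\,w_\beta\,\ud\bx=B(p,T_i q)$.

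Armed with the adjunction on both sides, symmetry yields $B(1,x_i q)=B(x_i q,1)=B(q,T_i\cdot 1)=0$, so $B(1,q)$ depends only on the constant term and equals $q(\bzero)\cdot B(1,1)=q(\bzero)=A(1,q)$. Iterating the adjunction then gives $B(x_\mu,q)=B(1,T_\mu q)=(T_\mu q)(\bzero)=A(x_\mu,q)$ for every monomial $x_\mu$, and bilinearity extends the equality to all $(p,q)$. The main obstacle is the commutator computation $[\Delta_\beta,x_i]=2T_i$: the reflection bookkeeping in $[T_j,x_i]$ is delicate, but it collapses cleanly once the anticommutation $\sigma_{\balpha}T_{\balpha}=-T_{\balpha}\sigma_{\balpha}$ is invoked.
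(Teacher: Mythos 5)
Your proof is correct, and its skeleton---symmetry of both bilinear forms, the common adjunction $F(x_ip,q)=F(p,T_iq)$, a base case at $p=1$, and bilinearity---is exactly the architecture of the paper's proof (which follows de Jeu). Two sub-steps are handled differently. First, you obtain $[\Delta_\beta,x_i]=2T_i$ by computing $[T_j,x_i]=\delta_{ij}+2\sum_{\balpha\in R_+}k(\balpha)(\alpha_i\alpha_j/\alpha^2)\sigma_{\balpha}$ and killing the reflection contributions with the anticommutation $\sigma_{\balpha}T_{\balpha}=-T_{\balpha}\sigma_{\balpha}$, whereas the paper computes $\frac{1}{2}\Delta_\beta[x_if]$ directly from the explicit formula \eqref{EquationDunklLaplacian}; both routes are sound, but note that the anticommutation you invoke is the $W$-equivariance $\rho T_{\bxi}\rho^{-1}=T_{\rho\bxi}$ specialized to $\rho=\sigma_{\balpha}$, $\bxi=\balpha$, which the paper never states and which deserves its own line of verification from \eqref{DunklOperatorR}. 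Second, and more substantively, your base case is genuinely different and cleaner: you deduce $B(1,q)=q(\bzero)$ purely from symmetry plus the already-established adjunction (since $B(1,x_iq)=B(q,T_i1)=0$, the functional $B(1,\cdot)$ annihilates every polynomial with vanishing constant term), whereas the paper evaluates $\mathcal{M}_\beta(1,q)$ by recognizing it as a Dunkl transform evaluated at $\bxi=\bzero$ and exchanging $q(\bT)$ across the integral via \eqref{AntisymmetricIntegral}. Your version stays entirely inside the algebra of the two bilinear forms and avoids the transform machinery; the paper's version costs an extra appeal to the Dunkl kernel but establishes the base case independently of the adjunction. Either way the conclusion by iteration on monomials and bilinearity is the same, so the argument goes through.
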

\begin{proof}
The proof of this proposition is not at all trivial, and the first proof given by Dunkl is rather long and technical. The simpler and shorter proof due to de Jeu (\cite{dejeu06}, pages 4230-4231) will be followed here. The content from this point until Eq.~\eqref{CoolSimpleRelationship} concerns several relationships that will be necessary for the proof.

First, let us show that, for $f(\bx)$ a rapidly decreasing function at infinity and $g(\bx)$ a continuous, bounded and differentiable function, and for all $i=1,\ldots,N$,
\begin{equation}\label{AntisymmetricIntegral}
\int_{\RR^N}[T_i f(\bx)]g(\bx)w_\beta(\bx)\ud\bx=-\int_{\RR^N}f(\bx)[T_i g(\bx)]w_\beta(\bx)\ud\bx.
\end{equation}
The integral on the l.h.s.\ gives
\begin{multline}
\int_{\RR^N}[T_i f(\bx)]g(\bx)w_\beta(\bx)\ud\bx\\
=\int_{\RR^N}\Big[\frac{\partial }{\partial x_i} f(\bx)+\frac{\beta}{2}\sum_{\balpha\in R_+}\alpha_i\kappa(\balpha)\frac{f(\bx)-f(\sigma_{\balpha}\bx)}{\balpha\cdot\bx}\Big]g(\bx)w_\beta(\bx)\ud\bx.
\end{multline}
The derivative term can be treated by integrating by parts:
\begin{multline}
\int_{\RR^N}\Big[\frac{\partial }{\partial x_i} f(\bx)\Big]g(\bx)w_\beta(\bx)\ud\bx=-\int_{\RR^N}f(\bx)\Big[\frac{\partial }{\partial x_i}[g(\bx)w_\beta(\bx)] \Big]\ud\bx\\
=-\int_{\RR^N}f(\bx)\Big[w_\beta(\bx)\frac{\partial }{\partial x_i}g(\bx)+w_\beta(\bx)g(\bx)\beta\sum_{\balpha\in R_+}\frac{\alpha_i\kappa(\balpha)}{\balpha\cdot\bx}\Big]\ud\bx,
\end{multline}
where
\begin{equation}
\frac{\partial}{\partial x_i}w_\beta(\bx)=w_\beta(\bx)\beta\sum_{\balpha\in R_+}\frac{\alpha_i\kappa(\balpha)}{\balpha\cdot\bx}.
\end{equation}
Using the substitution $\bx^\prime=\sigma_{\balpha} \bx$, the difference term becomes
\begin{multline}
\frac{\beta}{2}\sum_{\balpha\in R_+}\alpha_i\kappa(\balpha)\int_{\RR^N}\Big[\frac{f(\bx)-f(\sigma_{\balpha}\bx)}{\balpha\cdot\bx}\Big]g(\bx)w_\beta(\bx)\ud\bx=\\
\frac{\beta}{2}\sum_{\balpha\in R_+}\alpha_i\kappa(\balpha)\Big[\int_{\RR^N}\frac{f(\bx)g(\bx)w_\beta(\bx)}{\balpha\cdot\bx}\ud\bx-\int_{\RR^N}\frac{f(\sigma_{\balpha}\bx)g(\bx)w_\beta(\bx)}{\balpha\cdot\bx}\ud\bx\Big]=\\
\frac{\beta}{2}\sum_{\balpha\in R_+}\alpha_i\kappa(\balpha)\Big[\int_{\RR^N}\frac{f(\bx)g(\bx)w_\beta(\bx)}{\balpha\cdot\bx}\ud\bx+\int_{\RR^N}\frac{f(\bx)g(\sigma_{\balpha}\bx)w_\beta(\bx)}{\balpha\cdot\bx}\ud\bx\Big].
\end{multline}
In the last line, $\bx^\prime$ has been written as $\bx$ for simplicity. Also, the fact that $w_\beta(\sigma_{\balpha}\bx)=w_\beta(\bx)$ has been used, and  the Jacobian for this variable substitution is equal to 1 because the reflection operator $\sigma_{\balpha}$ is represented by an orthogonal matrix. Adding the derivative and the difference terms, yields the desired result.

The second step is to prove the following commutation relation:
\begin{equation}
\Big[x_i,\frac{1}{2}\Delta_\beta\Big]f(\bx)=\frac{x_i}{2}\Delta_\beta f(\bx)-\frac{1}{2}\Delta_\beta[x_if(\bx)]=-T_if(\bx)
\end{equation}
for $i=1,\ldots,N.$ From simple calculations one can obtain the following relations:
\begin{IEEEeqnarray}{rCl}
\balpha\cdot\bnabla[x_i f(\bx)]&=&\alpha_i f(\bx)+x_i\balpha\cdot\bnabla f(\bx),\\
\frac{1}{2}\Delta[x_i f(\bx)]&=&\frac{\partial}{\partial x_i}f(\bx)+\frac{x_i}{2}\Delta f(\bx),\\
\frac{\alpha^2}{2}\frac{(1-\sigma_{\balpha})[x_i f(\bx)]}{(\balpha\cdot\bx)^2}&=&x_i\frac{\alpha^2}{2}\frac{(1-\sigma_{\balpha})f(\bx)}{(\balpha\cdot\bx)^2}+\alpha_i\frac{f(\sigma_{\balpha}\bx)}{\balpha\cdot\bx}.
\end{IEEEeqnarray}
Equation~\eqref{EquationDunklLaplacian} combined with the three previous expressions gives
\begin{multline}
\frac{1}{2}\Delta_\beta[x_i f(\bx)]=\frac{\partial}{\partial x_i}f(\bx)+\frac{x_i}{2}\Delta f(\bx)\\
+\frac{\beta}{2}\sum_{\balpha\in R+}\kappa(\balpha)\Big[\alpha_i\frac{f(\bx)}{\balpha\cdot\bx}+x_i\frac{\balpha\cdot\bnabla f(\bx)}{(\balpha\cdot\bx)}-x_i\frac{\alpha^2}{2}\frac{(1-\sigma_{\balpha})f(\bx)}{(\balpha\cdot\bx)^2}-\alpha_i\frac{f(\sigma_{\balpha}\bx)}{\balpha\cdot\bx}\Big]\\
=T_i f(\bx)+\frac{x_i}{2}\Delta_\beta f(\bx),
\end{multline}
which is the desired result.

A consequence of this commutation relation is that
\begin{equation}\label{UsefulCommutator}
[x_i,\rme^{-\Delta_\beta/2}]=T_i \rme^{-\Delta_\beta/2}.
\end{equation}
This is because, using the relation $[x_i,\Delta_\beta/2]=-T_i$ and the mathematical induction method on $n$, it can be proved that 
\begin{equation}
\Big[x_i,\Big(\frac{\Delta_\beta}{2}\Big)^{n}\Big]=-nT_i\Big(\frac{\Delta_\beta}{2}\Big)^{n-1}.
\end{equation}
Taking the sum $\sum_{n=0}^\infty (-1)^n (n!)^{-1}$ on both sides yields
\begin{equation}
\Big[x_i,\sum_{n=0}^\infty \frac{1}{n!}\Big(-\frac{\Delta_\beta}{2}\Big)^{n}\Big]=T_i\sum_{n=1}^\infty \frac{1}{(n-1)!}\Big(-\frac{\Delta_\beta}{2}\Big)^{n-1},
\end{equation}
which gives the claimed result.

Using Eq.~\eqref{UsefulCommutator}, the following useful relationship can be derived:
\begin{multline}
T_i[\rme^{-x^2/2}(\rme^{-\Delta_\beta/2}p(\bx))]=T_i(\rme^{-x^2/2})[\rme^{-\Delta_\beta/2}p(\bx)]+\rme^{-x^2/2}T_i[\rme^{-\Delta_\beta/2}p(\bx)]\\
=-x_i\rme^{-x^2/2}[\rme^{-\Delta_\beta/2}p(\bx)]+\rme^{-x^2/2}T_i[\rme^{-\Delta_\beta/2}p(\bx)]\\
=-\rme^{-x^2/2}\{\rme^{-\Delta_\beta/2}[x_ip(\bx)]+T_i\rme^{-\Delta_\beta/2}p(\bx)\}+\rme^{-x^2/2}T_i[\rme^{-\Delta_\beta/2}p(\bx)]\\
=-\rme^{-x^2/2}\rme^{-\Delta_\beta/2}[x_ip(\bx)].\label{ThreeTermRotation}
\end{multline}
The first line follows from Eq.~\eqref{DunklProductRule} and the fact that $\rme^{-x^2/2}$ is $W$-invariant, and the third line follows from Eq.~\eqref{UsefulCommutator}. As a final preparation, let us consider the case where $q(\bx)$ is homogeneous and $p(\bx)$ is an arbitrary polynomial. Then, replacing $\bx$ for $\bT$ in $q(\bx)$ and using Eq.~\eqref{ThreeTermRotation}, it follows that
\begin{equation}\label{CoolRelationship}
q(\bT)[\rme^{-x^2/2}(\rme^{-\Delta_\beta/2}p(\bx))]=(-1)^{\text{deg }q}\rme^{-x^2/2}\rme^{-\Delta_\beta/2}[p(\bx)q(\bx)],
\end{equation}
and in particular, when $p(\bx)=1$,
\begin{equation}
q(\bT)\rme^{-x^2/2}=(-1)^{\text{deg }q}\rme^{-x^2/2}\rme^{-\Delta_\beta/2}q(\bx).\label{CoolSimpleRelationship}
\end{equation}

The actual proof is as follows. Consider now the following integral for two arbitrary polynomials $p(\bx)$ and $q(\bx)$:
\begin{equation}
\mathcal{M}_\beta(p,q)=\frac{1}{c_\beta}\int_{\RR^N}[\rme^{-\Delta_\beta/2}p(\bx)][\rme^{-\Delta_\beta/2}q(\bx)]\rme^{-x^2/2}w_\beta(\bx)\ud\bx.
\end{equation}
Let us show that its value coincides with $(p,q)_\beta$ when $p(\bx)=1$. On one hand, by definition,
\begin{equation}
(1,q)_\beta=q(\bzero).
\end{equation}
On the other hand, for $q(\bx)$ homogeneous,
\begin{multline}
\mathcal{M}_\beta(1,q)=\frac{1}{c_\beta}\int_{\RR^N}[\rme^{-\Delta_\beta/2}q(\bx)]\rme^{-x^2/2}w_\beta(\bx)\ud\bx\\
=\frac{1}{c_\beta}\int_{\RR^N}(-1)^{\text{deg }q}[q(\bT^{(x)})\rme^{-x^2/2}]w_\beta(\bx)\ud\bx,
\end{multline}
due to Eq.~\eqref{CoolSimpleRelationship}. This integral can be expressed as a Dunkl transform,
\begin{equation}
\mathcal{E}(\bxi)=\frac{1}{c_\beta}\int_{\RR^N}[V_\beta\rme^{-\rmi \bx\cdot\bxi}](-1)^{\text{deg }q}[q(\bT^{(x)})\rme^{-x^2/2}]w_\beta(\bx)\ud\bx,
\end{equation}
evaluated at $\bxi=\bzero.$ This function can be transformed as follows:
\begin{multline}
\mathcal{E}(\bxi)=\frac{1}{c_\beta}\int_{\RR^N}[q(\bT^{(x)})V_\beta\rme^{-\rmi \bx\cdot\bxi}]\rme^{-x^2/2}w_\beta(\bx)\ud\bx\\
=\frac{1}{c_\beta}\int_{\RR^N}[q(-\rmi\bxi)V_\beta\rme^{-\rmi \bx\cdot\bxi}]\rme^{-x^2/2}w_\beta(\bx)\ud\bx.
\end{multline}
The first equality follows from Eq.~\eqref{AntisymmetricIntegral}, and the second follows from the definition of the Dunkl kernel. Setting $\bxi=\bzero$ gives
\begin{equation}
\mathcal{M}_\beta(1,q)=\mathcal{E}(\bzero)=\frac{q(\bzero)}{c_\beta}\int_{\RR^N}\rme^{-x^2/2}w_\beta(\bx)\ud\bx=q(\bzero).
\end{equation}
The last equality is due to the definition of $c_\beta$. This equation can be extended to non-homogeneous polynomials because $\mathcal{M}_\beta(p,q)$ is bilinear. Therefore, for arbitrary $q(\bx)$,
\begin{equation}
\mathcal{M}_\beta(1,q)=(1,q)_\beta=q(\bzero).\label{FischerProductsAlmostEqual}
\end{equation}

The product $(p,q)_\beta$ has the following property by definition: setting $r_i(\bx)=x_i p(\bx)$, one has
\begin{equation}
(r_i,q)_\beta=r_i(\bT^{(y)})q(\by)|_{\by=\bzero}=p(\bT^{(y)})T_i^{(y)}q(\by)|_{\by=\bzero}=(p,T_iq)_\beta.
\end{equation}
This means that, because $(p,q)_\beta$ is bilinear,
\begin{equation}
(p,q)_\beta=(1,p(\bT)q)_\beta
\end{equation}
for arbitrary polynomials $p(\bx)$ and $q(\bx)$. The integral $\mathcal{M}_\beta(p,q)$ shares this property:
\begin{multline}
\mathcal{M}_\beta(r_i,q)=\frac{1}{c_\beta}\int_{\RR^N}\{\rme^{-\Delta_\beta/2}[x_ip(\bx)]\}[\rme^{-\Delta_\beta/2}q(\bx)]\rme^{-x^2/2}w_\beta(\bx)\ud\bx\\
=\frac{1}{c_\beta}\int_{\RR^N}\{-T_i[\rme^{-x^2/2}(\rme^{-\Delta_\beta/2}p(\bx))]\}[\rme^{-\Delta_\beta/2}q(\bx)]w_\beta(\bx)\ud\bx\\
=\frac{1}{c_\beta}\int_{\RR^N}\rme^{-x^2/2}[\rme^{-\Delta_\beta/2}p(\bx)]\{T_i[\rme^{-\Delta_\beta/2}q(\bx)]\}w_\beta(\bx)\ud\bx\\
=\mathcal{M}_\beta(p,T_i q).
\end{multline}
The second line requires Eq.~\eqref{ThreeTermRotation}, while the third line follows from Eq.~\eqref{AntisymmetricIntegral}. Once more, due to the bilinearity of $\mathcal{M}_\beta(p,q)$, one can write
\begin{equation}
\mathcal{M}_\beta(p,q)=\mathcal{M}_\beta(1,p(\bT)q),
\end{equation}
for any polynomials $p(\bx)$ and $q(\bx)$. Then, it follows that
\begin{equation}
\mathcal{M}_\beta(p,q)=\mathcal{M}_\beta(1,p(\bT)q)=(1,p(\bT)q)_\beta=(p,q)_\beta
\end{equation}
due to Eq.~\eqref{FischerProductsAlmostEqual}, proving the statement.
\end{proof}

\section{Proof}

Consider now the expansion to $n$th order of the Dunkl kernel $V_\beta\rme^{\bx\cdot\by}$:
\begin{equation}
L_\beta^{(n)}(\bx,\by):=\sum_{j=0}^n\frac{1}{j!}V_\beta(\bx\cdot\by)^j.
\end{equation}
For an arbitrary polynomial $p(\bx)$ of degree $m\leq n$, Prop.~\ref{NiceProposition} gives
\begin{multline}
(L_\beta^{(n)}(\by,\cdot),p)_\beta=\frac{1}{c_\beta}\int_{\RR^N}[\rme^{-\Delta_\beta/2}L_\beta^{(n)}(\bx,\by)][\rme^{-\Delta_\beta/2}p(\bx)]\rme^{-x^2/2}w_\beta(\bx)\ud\bx\\=p(\by)
\end{multline}
by Eq.~\eqref{PartialSumDunklKernel}. Then, one has
\begin{equation}
-\frac{\Delta_\beta}{2}L_\beta^{(n)}(\bx,\by)=-\frac{y^2}{2}L_\beta^{(n-2)}(\bx,\by).
\end{equation}
Taking the limit $n\to\infty$ yields
\begin{equation}
\lim_{n\to\infty}\rme^{-\Delta_\beta/2}L_\beta^{(n)}(\bx,\by)=\rme^{-y^2/2}V_\beta\rme^{\bx\cdot\by},
\end{equation}
which in turn gives
\begin{equation}
p(\by)=\frac{1}{c_\beta}\int_{\RR^N}[\rme^{-y^2/2}V_\beta\rme^{\bx\cdot\by}][\rme^{-\Delta_\beta/2}p(\bx)]\rme^{-x^2/2}w_\beta(\bx)\ud\bx.
\end{equation}
Setting $p(\bx)=L_\beta^{(m)}(\bx,\bz)$ and taking the limit $m\to\infty$ gives
\begin{equation}
V_\beta\rme^{\by\cdot\bz}=\frac{1}{c_\beta}\int_{\RR^N}[\rme^{-y^2/2}V_\beta\rme^{\bx\cdot\by}][\rme^{-z^2/2}V_\beta\rme^{\bx\cdot\bz}]\rme^{-x^2/2}w_\beta(\bx)\ud\bx.
\end{equation}
Moving the Gaussians of $\by$ and $\bz$ to the l.h.s. yields Eq.~\eqref{GaussianIntegralDunklKernel}.\qquad\qquad$\square$

\bibliographystyle{ieeetr}
\bibliography{./tex/thesis_biblio}
 
\end{document}